\documentclass[usenames,dvipsnames,12pt]{article}
\usepackage{hyperref}
\usepackage{amssymb, amsthm,amsmath}
\usepackage{stmaryrd,ulem}
\usepackage{float}
\usepackage{amsmath}
\usepackage{graphicx,psfrag,epsf}
\usepackage{enumerate}
\usepackage{natbib}
\usepackage{url} 
\usepackage{tikz}
\usepackage{pgfplots}
\usepackage{subcaption}
\usepackage{a4wide}
\usepackage[linecolor=blue!60!,backgroundcolor=blue!10!,textwidth=2.5cm,textsize=scriptsize]{todonotes}
\usepackage{comment}
\usepackage{enumitem}

\input{macro}

\usepackage[utf8]{inputenc}
\usetikzlibrary{patterns}
\usetikzlibrary{shapes}
\usepackage[linesnumbered,ruled]{algorithm2e}

\usepackage{bm}
\usepackage{dsfont}
\usepackage{amssymb}
\usepackage{amsfonts}
\usepackage{amsthm}
\usepackage{mathtools}
\usepackage{xcolor}
\usepackage{multirow}

\newtheorem{theorem}{Theorem}[section]
\newtheorem{lemma}[theorem]{Lemma}
\newtheorem{proposition}[theorem]{Proposition}
\newtheorem{corollary}[theorem]{Corollary}
\newtheorem{remark}{Remark}[section]

\newtheorem{assumption}{Assumption}

\newtheorem{definition}{Definition}

\newcommand{\IMS}{{\mbox{{\tiny IMS}}}}
\newcommand{\MS}{{\mbox{{\tiny MS}}}}

\newcommand{\ind}[1]{{\mathbf{1}\{#1\}}}
\newcommand{\range}[1]{[#1]}
\newcommand{\R}{\mathbb{R}}
\newcommand{\E}{\ensuremath{\mathbb E}}

\renewcommand{\P}{\ensuremath{\mathbb P}}
\newcommand{\cH}{{\mathcal{H}}}

\newcommand{\FDR}{\mathrm{FDR}}

\newcommand{\ABH}{\mathrm{ABH}}
\newcommand{\Pow}{\mathrm{Pow}}

\usepackage{booktabs}

\SetKwComment{Comment}{/* }{ */}

\title{
On min-Storey estimators for multiple testing and conformal novelty detection
}
\date{}

\author{Gao, Zijun and Roquain, Etienne}

\begin{document}

\maketitle

\begin{abstract}
In a multiple testing task, finding an appropriate estimator of the proportion $\pi_0$ of non-signal in the data to boost power of false discovery rate (FDR) controlling procedures is a long-standing research theme, sometimes referred to as 'adaptive FDR control'. The interest in this theme has been reinforced in the recent years with conformal novelty detection, for which it turns out that similar tools can be used in combination with any 'blackbox' machine learning algorithm. Nevertheless, perhaps surprisingly, finding a solution for 'adaptive FDR control' that is optimal in a broad sense is still an open problem. This paper fills this gap by introducing new $\pi_0$-estimators, referred to as min-Storey (MS) and interval-min-Storey (IMS), which are built upon the so-called 'Storey estimator'. Plugging these estimators in the adaptive Benjamini-Hochberg (BH) procedure is shown to deliver FDR control both in the independent and conformal settings. In addition, these methods satisfy an optimal power property over any (regular) alternative distribution.  
The excellent behaviors of the new adaptive procedures are illustrated with numerical experiments  both in the independent and conformal models 
for various distribution structures. 
\end{abstract}


\section{Introduction} \label{sec:intro}

\subsection{Background}

Since the introduction of the pioneer paper \cite{BH1995}, false discovery rate (FDR) is one prominent research theme of statistics influencing a wide range of applied and theoretical developments\footnote{As a synthetic indication of its influence, this body of work received the Rousseeuw Prize for Statistics in 2024.}.
The central result is that, prescribing a nominal level $\alpha$ and denoting by $\pi_0$ the proportion of non-signal in the data, the Benjamini-Hochberg (BH) procedure allows to control the FDR at level $\pi_0 \alpha$ under independence of the $p$-values, or more broadly when these $p$-values are positively regressively dependent on each one of a subset (PRDS) \citep{BY2001}. Recently, an important connection with machine learning -- and more precisely to conformal novelty detection -- has been made by showing that the conformal $p$-values do meet the PRDS condition, and thus can be used into the BH procedure to control the FDR at level $\pi_0 \alpha$ \cite{bates2023testing,marandon2024adaptive}. 

A well-known issue is that, since $\pi_0 \alpha$ can be less than $\alpha$, applying BH at level $\alpha$ induces some loss in the actually achieved FDR. Obviously, one easy solution is to apply BH at level $\alpha/\pi_0$ in the first place to end up with a control at level $\alpha$. However, the latter is not possible in general as $\pi_0$ is generally unknown, and the latter method is classically referred to as {\it  oracle} (adaptive) BH procedure. A well-established line of research is to design a $\pi_0$-estimator $\hat{\pi}_0$ so that the BH procedure applied at level $\alpha/\hat{\pi}_0$ still ensures an FDR at most $\alpha$, see for instance \cite{BH2000,benjamini2006adaptive,SC2007,Sar2008,BR2009,FDR2009,liang2012adaptive,heesen2015inequalities,li2019multiple,guo2020adaptive,dohler2023unifiedclassnullproportion}  among others and \cite{bates2023testing,marandon2024adaptive,gao2025adaptive} for the specific conformal setting.  

Despite the rich literature, our paper studies the question of how to construct optimal $\pi_0$ estimators while preserving finite-sample FDR control, which remains less well understood.
Multiple challenges arise therein. 
First, an optimal $\pi_0$ estimator should adapt to the empirical $p$-value distribution so as to remain optimal over a broad class of null and non-null structures, whereas existing optimality (and even consistency) results typically require the $U(0,1)$ null and a decreasing non-null density that may not hold both in one-sided and two-sided testing, 
see \cite{Neu2013} (e.g., Lemma~2 therein) and the related works \cite{GW2002,GW2004,Chi2007}. More formally, it has been shown therein that the $\lambda$-Storey BH procedure converges to the oracle BH procedure by taking $\lambda$ converging to $1$ at some well chosen rate when the alternative $p$-value density is decreasing with a limit equal to $0$ at $1$. 
Second, a proportion of existing estimators rely on a capping argument \citep{li2019multiple,gao2025adaptive}, in which $\hat{\pi}_0$ is based only on $p$-values above some $\kappa$ and the final BH threshold is capped at $\kappa$, potentially leading to power sub-optimality; by contrast, \citep{BR2009,bates2023testing,marandon2024adaptive} consider monotonicity-based approaches in which $\hat{\pi}_0$ is non-decreasing in each individual $p$-value, and no capping is required.
Third, we seek to control the finite-sample FDR, in contrast to the null proportion estimators in \citep{SC2007,heller2021optimal,SC2009,tony2019covariate,rebafka2022powerful,jin2008proportion,cai2010optimal,carpentier2021estimating}, which establish asymptotic FDR or mFDR control. 

\subsection{Adaptation for super-uniform null $p$-values}

In the most general setting where the null $p$-values are only assumed to be super-uniform, the so-called `Storey estimator' \citep{Storey2002} is given by
\begin{equation}
    \label{equStorey}
    \hat{\pi}_{0,\lambda} := \frac{1+\sum_{i=1}^m \ind{p_i> \lambda}}{m(1-\lambda)},
\end{equation}
where $\lambda\in (0,1)$ is a free parameter. The '$1+$' above is used to ensure that BH at level $\alpha/\hat{\pi}_{0,\lambda}$, referred to here as $\lambda$-Storey procedure for short, controls the FDR at level $\alpha$ for any $m\geq 1$ and $\lambda\in (0,1)$ under independence  \citep{benjamini2006adaptive,BR2009}. In the conformal case, for which the $p$-values are discrete, the latter adaptive procedure still controls the FDR at level $\alpha$, but the definition of the estimator $\hat{\pi}_{0,\lambda} $ should be made slightly more conservative, with $\ind{p_i\geq \lambda}$ instead of $\ind{p_i> \lambda}$ in \eqref{equStorey} with $(n+1)\lambda\in [n+1]$ \cite{bates2023testing,marandon2024adaptive}. 

Choosing $\lambda$ when using $\lambda$-Storey procedure is a puzzling question, and no real consensus has been made through the years: in simulations, no choice dominates others and $\lambda=1/2$ is often presented as a good default value, while $\lambda=\alpha$ is an alternative more robust to dependence \cite{BR2009}. Nevertheless, in an independent setting where the alternative $p$-value density is decreasing, taking $\lambda$ approaching $1$ when $m$ tends to infinity appears to be a more relevant choice to balance  the bias and the variance of $\hat{\pi}_{0,\lambda}$, with a threshold converging to an oracle at a rate depending on the regularity of the smoothness of alternative density near $1$ \citep{neuvial2013asymptotic}.  

An alternative strategy is to make a dynamic choice of $\lambda$, that is to choose $\lambda=\hat{\lambda}$ driven by the data themselves. While this makes the mathematical analysis more challenging, martingale theory can be used in combination with the $\kappa$-capping trick (see (ii) above) to prove that the right-boundary estimator (essentially introduced in \cite{BH2000}) and defined by
\begin{equation}
    \label{equ:rightboundary}
    \hat{\lambda} = \min\{\lambda_j \::\:  j\in J,\: \hat{\pi}_{0,\lambda_{j}}\leq \hat{\pi}_{0,\lambda_{j+1}}\},
\end{equation}
(where $(\lambda_j,j\in J)\subset (\kappa,1)$ is some pre-specified grid) can be used into a $\hat{\lambda}$-Storey procedure to control the FDR at level $\alpha$ both in the independent and conformal settings \cite{gao2025adaptive}.  In the independence case, it is proved therein that it asymptotically mimics the best $\lambda$-Storey procedure (for $\lambda$ in the grid) under some distributional assumptions (mixture density is strictly convex), and provided that the asymptotic threshold is below the capping $\kappa$.

\paragraph{Contribution 1: Min-Storey (MS) procedure}

Our first contribution is to improve this $\hat{\lambda}$-Storey procedure, by considering the smallest Storey estimator $\hat{\pi}^{\MS}_{0}= c\cdot \min_\lambda \hat{\pi}_{0,\lambda}$ where $c\geq 1$ is a correction factor. 
Since $\hat{\pi}^{\MS}_{0,\lambda}$ is by definition nondecreasing in each $p$-value, we employ the 
monotonicity-based approach
above to show that $c=c(m)$ can be chosen in a way that the corresponding adaptive BH procedure, called the {\it min-Storey (MS) procedure}, has an FDR below $\alpha$. Interestingly, we prove that the factor $c$ is close to $1$ when $m$ gets large. 
As a consequence, we are able to show that the MS procedure has a power close to the best adaptive BH procedure when the null distribution is super-uniform, without relying on any assumption on the alternative distribution (Corollary~\ref{coroptimalMS}): from an intuitive point of view, $\lambda$-Storey estimators are suitable in this case because $\hat{\pi}_{0,\lambda}$ are always biased upwards under null super-uniformity, which guards against having a minimum too low and thus a compensating factor $c$ too large. By contrast, the right boundary is suboptimal when the  $p$-value density has two local minima, and stopping at different minima leads to different estimates, see Remark~\ref{rem:RBsuboptimal} and settings (a.2), (a.3), (b.2) and (b.3) in our numerical experiments (Section~\ref{sec:simulation}), which clearly shows the superiority of MS procedure.
In particular, our numerical experiments in Section~\ref{sec:simulation} show that MS behaves approximately as the best  Storey-procedures, across varying $\lambda$ values.

\subsection{Adaptation for uniform null $p$-values and conformal $p$-values}

Second, we deal with the case where the $p$-values are either exactly uniformly distributed under the null, or are conformal $p$-values. The alternative  distribution is still left arbitrary.
Because of possible large alternative $p$-values, the $\lambda$-Storey estimator \eqref{equStorey} is too conservative in general (and thus so are the MS and right boundary approaches). As already introduced in \citep{celisse2010cross,neuvial2013asymptotic}, an improved version is 
\begin{equation}
    \label{equStoreytwosided}
    \hat{\pi}_{0,[\lambda,\mu]} := \frac{1+\sum_{i=1}^m \ind{p_i \in[\lambda,\mu]}}{m(\mu-\lambda)},
\end{equation}
where $0<\lambda\leq \mu\leq 1$ are free parameters.
Compared to $\hat{\pi}_{0,\lambda}$, the estimator $\hat{\pi}_{0,[\lambda,\mu]}$ is a two sided version which has the ability to focus on an interval $[\lambda,\mu]$ for which the density of the alternative $p$-values is low. In this case, since the distribution of the null $p$-values is (approximately) uniform, we have 
$$
\hat{\pi}_{0,[\lambda,\mu]}  \approx \frac{\sum_{i\in \cH_0} \ind{p_i \in[ \lambda,\mu]}}{m(\mu-\lambda)} \approx \pi_0,
$$
provided that alternative $p$-values have a small probability to belong to $[\lambda,\mu]$.
However, when $\mu<1$, it is apparent that $\hat{\pi}_{0,[\lambda,\mu]}$ is not monotone with regard to each $p$-value; this is in accordance with the fact that larger $p$-values might indicate fewer null in the possible situation where there are large alternative $p$-values. By following the 
{monotonicity-based approach}
above, using BH at level $\alpha/\hat{\pi}_{0,[\lambda,\mu]}$ and capping the threshold at $\lambda$, a procedure that one can refer to as $(\lambda,\mu)$-Storey procedure, does control the FDR both under independence and in the conformal setting: this is new to our knowledge and established for completeness in Corollary~\ref{corcontrolIStorey}.
Again, the choice of the parameters is important in the $(\lambda,\mu)$-Storey procedure, as it should target an interval $[\lambda,\mu]$ where there are only few alternative $p$-values. 

\paragraph{Contribution 2: Interval-min-Storey (IMS) procedure}

Our second contribution is to achieve the choice of $[\lambda,\mu]$ in a data-driven way, by considering the smallest $(\lambda,\mu)$-Storey estimator $\hat{\pi}^{\IMS}_{0}:= c\cdot \min_{I\subset [\hat{\kappa},1]:|I|\geq \epsilon} \hat{\pi}_{0,I}$ where $c\geq 1$ is again a correction factor, $\epsilon>0$ should be chosen carefully and $\hat{\kappa}$ is a capping, that we are able to set in a data-driven way. 
We employ the strategy (i) above to show that $c=c(m,\epsilon)$ can be chosen in a way that the corresponding adaptive BH procedure (with a $\hat{\kappa}$-capping), called the {\it interval min-Storey  (IMS) procedure}, has an FDR below $\alpha$. 
We show that the IMS procedure has a power close to the best adaptive BH procedure both in the settings where the null distribution is exactly uniform and in the conformal setting, without relying on any assumption on the alternative distribution (Corollary~\ref{corgenunif}). 
Obviously, any procedure which is suitable for super-uniform null $p$-value distribution can also be used in this context, and our power result shows that they are all dominated by the IMS procedure. Figure~\ref{fig:intro} below reports an example illustrating the superiority of IMS procedure.
In particular, our numerical experiments in Section~\ref{sec:simulation} show that IMS can outperform MS for a (exactly) uniform null distribution, when the capping in IMS is not effective (e.g., for $\alpha$ small enough).

\begin{figure}[t]
\centering
\begin{minipage}{0.35\textwidth}
  \centering
  \includegraphics[clip, trim = 0cm 0cm 19cm 0cm, height = 1\textwidth]{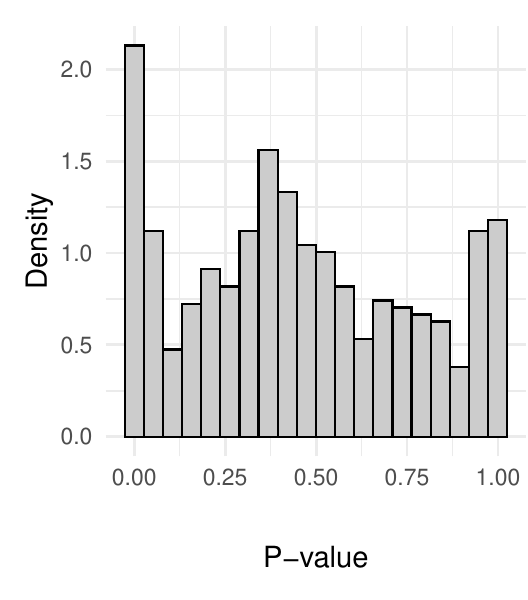}
  \label{fig:myfigure}
\end{minipage}
\begin{minipage}{0.5\textwidth}
  \centering
  \label{tab:mytable}
  \vspace{0.2em}
\begin{tabular}{ccc}
\toprule
Method & $\hat{\pi}_0$ & \# Rejections \\
\midrule
std   & 1.00 (0.0000)  & 99.7 (0.66) \\
BY    & 0.91 (0.0006) & 97.3 (0.66) \\
S.S   & 0.85 (0.0036)  & 105.6 (0.73) \\
S.M   & 0.82 (0.0026)  & 106.8 (0.74) \\
S.L   & 0.96 (0.0010) & 101.2 (0.69) \\
AS    & 0.96 (0.0011)  & 101.0 (0.69) \\ 
\textbf{MS}    & \textbf{0.79 (0.0027)}  & \textbf{107.6 (0.75)} \\
\textbf{IMS}   & \textbf{0.71 (0.0032)}  & \textbf{112.6 (0.81)} \\
\bottomrule
\end{tabular}
\end{minipage}
\caption{
Comparison of null proportion estimators for conformal $p$-values.
The left panel shows the empirical distribution of the conformal $p$-values.
The test and calibration samples both have size $1000$.
Null and calibration scores are drawn independently from $U(0,1)$; for the alternatives, we generate independent scores $S_i = \Phi(X_i)$ with $X_i \sim 0.5\mathcal N(1,9)+0.5\mathcal N(0.2,0.09)$.
The choice of the alternative mixture produces a mid-range bump (weak signals from the component $\mathcal N(0.2,0.09)$) in the $p$-value histogram and an excess of large $p$-values (the high-variance component $\mathcal N(1,9)$ yields significantly negative $X_i$ and thus considerably small $S_i$).
The true null proportion $\pi_0 = 0.5$.
The right panel shows various null proportion estimators $\hat\pi_0$ and the resulting number of rejections $R$ from the associated adaptive BH 
($400$ repeats; standard errors in parentheses). 
Our proposal MS and IMS produce the smallest $\pi_0$-estimation and the largest rejection number in average.
\label{fig:intro}}

\end{figure}

\subsection{Comparison with previous optimality results in the literature}

In the multiple testing literature, several optimality results have been established. Notably, the seminal work of \cite{SC2007} opened a line of research that identified the local false discovery rate (lfdr) as the optimal test statistics for marginal FDR (mFDR) control. Later, the picture has been completed in \cite{heller2021optimal} with the exact optimal solution for FDR control (and not mFDR control). Many following-up studies have been considered: for instance \cite{SC2009,tony2019covariate,rebafka2022powerful} among others.
These approaches work by first identifying an oracle procedure, relying on the known parameters of the underlying $p$-value distribution, and then by estimating these parameters and by using a plug-in procedure. The asymptotic optimality is proved by assuming that the parameter estimators are consistent. In particular, these results exclude regions of the parameter set where the estimation is too difficult. In addition, it does not provide a finite sample control of the FDR for the plug-in procedure.   

Our work is markedly different, because the test statistics are assumed to be fixed once for all (possibly by independent previous experiments) and the oracle procedure is the BH procedure taken at level $\alpha/\pi_0$ with these scores, or more generally any BH procedure taken at level $\alpha/\hat{\pi}_0$ with an estimator $\hat{\pi}_0$ that concentrates around some value while providing a finite sample FDR control, see Section~\ref{sec:estimclass} for the exact definition of the estimator class. 
In other words, our work deliberately ignore the optimization with regard to the test statistics (or score) and only focus on the $\pi_0$-estimation optimization. This makes the problem at stake here markedly different from the line of research described above.   

We show that this $\hat{\pi}_0$-optimization setting allows to draw a  clear picture of what is the optimal solution in the two cases (super-uniform or uniform/conformal) described in the previous section, without making any (strong) assumption on the alternative distribution of the test statistics (or scores). This is crucial for practice, for which the data or potentially complex and the alternatives are mixture of `good' and `bad' cases. Especially in the conformal case, this might come from a score function which is only partially efficiently designed, which is typically when observations are in high dimension.  
In particular, our power optimality results also hold in situations where the oracle BH makes only few rejections and can be applied indifferently in `difficult' or `easy' configurations. They are also uniform with regard to the level $\alpha$ ($\alpha$ can depends on the sample size(s)). Our power result formulation is inspired from the study in \cite{roquain2022false} which was made for another purpose (unspecified null distribution under sparsity).

Finally, we would like to mention other, perhaps less related, studies: some are optimizing the power among $p$-value weighted BH procedures  
\citep{RW2009,durand2019adaptive,ignatiadis2021covariate}, other are focusing on obtaining optimal $\pi_0$ estimator, without providing a finite sample FDR control \citep{jin2008proportion,cai2010optimal,carpentier2021estimating}. 

\section{Preliminaries}

\subsection{Settings}\label{sec:settings}

We consider a multiple testing setting for which $m$ null hypotheses are tested with $p$-values $p_1$, $\dots$, $p_m$ and $\cH_0 \subset [m]$  are the indices corresponding to the true nulls/non novelties/inliers. We denote $\cH_1=[m]\backslash \cH_0$ the indices of the false nulls/novelties/outliers and $\pi_0$ (resp. $\pi_1$) the total proportion of nulls (resp. alternatives).

We consider two  specific settings:
\begin{itemize}
    \item {\it Independent multiple testing}: $p_1$, $\dots$, $p_m$ are mutually independent, and for $i\in \cH_0$, the distribution of $p_i$ is marginally super-uniform under the null:
\begin{equation}
    \label{equ:superunif}
    \forall i\in \cH_0 ,\: \forall t\in [0,1], \:\P(p_i\leq t)\leq t.
\end{equation}
    The marginal distribution of $p_i$ when $i\in \cH_1$ is left arbitrary.
    \item {\it Conformal novelty detection}: each $p$-value $p_i$, $i\in [m]$,  is of the form
    \begin{equation}
        \label{equ-confpvalues}
        p_i = (n+1)^{-1} \Big(1+\sum_{j=1}^n \ind{S_j\geq S_{n+i}}\Big),
    \end{equation}
    where the non-conformity scores $S_1,\dots,S_{n+m}$ are random variables  such that 
    $(S_1,\dots,S_{n+m})$ have no ties almost surely and 
    $(S_j, j\in [n])\cup (S_{n+i},i\in \cH_0)$ are exchangeable conditionally on $(S_{n+i},i\in \cH_1)$.
\end{itemize}
In the conformal setting, note that the $p$-values are dependent because they are based on the same calibration sample $(S_1,\dots,S_{n})$. The dependence is however specific, and turns out to be PRDS \cite{bates2023testing,marandon2024adaptive}. 

\subsection{Adaptive BH procedures and FDR}

An estimator $\hat{\pi}_0$ of $\pi_0$ is a measurable function of the $p$-values  $p_1$, $\dots$, $p_m$, which is valued in $[0,+\infty)$. For such an estimator, the {\it adaptive BH procedure} at level $\alpha\in (0,1)$, using a capping $\kappa\in (0,1]$, and using the estimator  $\hat{\pi}_0$ is denoted $\ABH_\alpha(\hat{\pi}_0,\kappa)$ and is defined as rejecting the nulls corresponding to the indices $\{i\in [m]\::\: p_i\leq \hat{t}\}$ with the threshold $\hat{t}=\kappa \wedge (\alpha \hat{k}/ (m \hat{\pi}_0))$ (convention $1/0=+\infty$) and the number of rejections is 
\begin{equation}
    \label{equ-lhat}
    \hat{k}=\max\{k\in [0,m]\::\: p_{(k)}\leq \kappa \wedge (\alpha k/ (m \hat{\pi}_0))\},
\end{equation}
where $0=:p_{(0)}\leq p_{(1)}\leq \dots\leq p_{(m)}$ are the ordered $p$-values. When $\kappa=1$, we consider that the adaptive BH procedure does not use any capping and denote $\ABH_\alpha(\hat{\pi}_0)$ for $\ABH_\alpha(\hat{\pi}_0,1)$. Also the adaptive BH procedure $\ABH_\alpha({\pi}_0,\kappa)$ that uses the true value ${\pi}_0$ is called the  oracle adaptive BH procedure. 

The false discovery rate (FDR) of $\ABH_\alpha(\hat{\pi}_0,\kappa)$ is defined as
\begin{equation}
    \label{equ:FDR}
\FDR(\ABH_\alpha(\hat{\pi}_0,\kappa)):=\E\Bigg[\frac{\sum_{i\in \cH_0}\ind{p_i\leq \hat{t}}}{\hat{k}\vee 1}\Bigg],
\end{equation}
where $\hat{t}$ and $\hat{k}$ are defined above.
The power of $\ABH_\alpha(\hat{\pi}_0,\kappa)$ is defined as 
\begin{equation}
    \label{equ:pow}
\Pow(\ABH_\alpha(\hat{\pi}_0,\kappa)):=\E\Bigg[\frac{\sum_{i\in \cH_1}\ind{p_i\leq \hat{t}}}{m}\Bigg].
\end{equation}

\section{Extending known results}

In this section we provide a general FDR bound for adaptive BH procedure, which extends the existing literature by providing statements unifying independent/conformal settings and monotonic/capping assumptions.

\subsection{General FDR bound for adaptive BH procedures}

FDR bounds can be obtained under those conditions by considering $\tilde{\mbf{p}}^{0,i}$ a modification of the $p$-value family $\mbf{p}=(p_j)_{j\in [m]}$ defined as follows for some $i\in \cH_0$:
\begin{itemize}
    \item independence setting: $\tilde{\mbf{p}}^{0,i}$ denotes the $p$-value family $\mbf{p}$ in which $p_i$ is replaced by $0$.  
    \item conformal setting: $\tilde{\mbf{p}}^{0,i}$ denotes the $p$-value family $\mbf{p}$ in which $p_i$ is replaced by $0$ 
    and $p_j$ is replaced by
    \begin{equation}
    \label{equ:configconf}
        \tilde{p}^{0,i}_j =
      \frac{ \sum_{k\in [n]\cup\{n+i\}}  \ind{S_k \ge S_{n+j}}}{n+1}, j\in [m]\backslash\{ i\}.
\end{equation}
\end{itemize}

Let us consider the following assumptions.

\begin{assumption}\label{ass:monotone}
    \mbox{$\hat{\pi}_0(\mbf{p})$ is coordinate-wise nondecreasing in each $p$-value $p_i$.}
\end{assumption}
\begin{assumption}\label{ass:capping}
    \mbox{$\hat{\pi}_0(\mbf{p})$ does not depend on the value of $p_i$ when $p_i\leq \kappa$, which means that}\\
    \mbox{for any $\mbf{p},\mbf{p}'$, $\hat{\pi}_0(\mbf{p})=\hat{\pi}_0(\mbf{p}')$ if for any $i\in [m]$, either $p'_i\leq p_i$ if $p_i\leq \kappa$ or $p'_i=p_i$ if $p_i>\kappa$.}
\end{assumption}

The following general FDR bound holds.

\begin{theorem}\label{th-gen}
Consider a $p$-value family $\mbf{p}=(p_i)_{i\in [m]}$ following either the independent setting or the conformal setting, some estimator $\hat{\pi}_0=\hat{\pi}_0(\mbf{p})$ of $\pi_0$ and $\tilde{\mbf{p}}^{0,i}$ the $p$-value family modified as above when $i\in \cH_0$. Then the adaptive BH procedure satisfies 
\begin{equation}
    \label{equ:controlgen}
    \FDR(\ABH_\alpha(\hat{\pi}_0,\kappa))\leq (\alpha/m)\sum_{i\in \cH_0} \E[1/\hat{\pi}_0(\tilde{\mbf{p}}^{0,i})]
\end{equation}
in either of the two following cases:
\begin{itemize}
    \item[(i)] if $\hat{\pi}_0$ satisfies Assumption~\ref{ass:monotone} and $\kappa=1$ (no capping);
\item[(ii)] if $\hat{\pi}_0$ satisfies  Assumption~\ref{ass:capping} with a capping $\kappa\in (0,1)$.
\end{itemize}
\end{theorem}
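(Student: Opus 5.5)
We follow the classical leave-one-out / superuniformity-lemma route of \cite{BR2009} and \cite{marandon2024adaptive}. The plan is to write the FDR as a sum over $i\in\cH_0$, replace $\hat{k}$ by a quantity that does not depend on $p_i$, and then use either independence or the conformal exchangeability to integrate out $p_i$.

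The first step is the decomposition. For $i\in\cH_0$, let $\hat{k}_i$ denote the number of rejections of $\ABH_\alpha(\hat{\pi}_0(\tilde{\mbf{p}}^{0,i}),\kappa)$ computed on the family $\tilde{\mbf{p}}^{0,i}$, but using the estimator evaluated at $\tilde{\mbf{p}}^{0,i}$. The standard BH self-consistency argument shows that on the event $\{p_i\leq \hat{t}\}$ we have $\hat{k}=\hat{k}_i$, provided the estimator entering the threshold is the same in both. This is where the two cases differ.

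In case (ii), Assumption~\ref{ass:capping} gives $\hat{\pi}_0(\mbf{p})=\hat{\pi}_0(\tilde{\mbf{p}}^{0,i})$ whenever $p_i\leq\kappa$. In the conformal case one also has to check that the modified $\tilde p_j$'s only decrease coordinates that are at most $\kappa$ and leave the others unchanged. Indeed, $\tilde p^{0,i}_j$ differs from $p_j$ by $\pm1/(n+1)$ only through the indicator $\ind{S_{n+i}\ge S_{n+j}}$ versus the '$1+$'. On $\{p_i\le\hat t\}$ the ordering of scores gives $\tilde p_j\le p_j$, and the change happens only when $S_{n+j}\ge$ something related to $S_{n+i}$, i.e.\ when $p_j\le p_i\le\kappa$. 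So equality holds, and $\hat{k}=\hat{k}_i$ on the event.

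In case (i), with $\kappa=1$, we cannot assert equality. Monotonicity (Assumption~\ref{ass:monotone}), together with $\tilde{\mbf{p}}^{0,i}\leq \mbf{p}$ coordinatewise on the relevant event, gives $\hat{\pi}_0(\tilde{\mbf{p}}^{0,i})\le\hat{\pi}_0(\mbf{p})$. Hence $\hat t\le \alpha\hat k/(m\hat\pi_0(\tilde{\mbf{p}}^{0,i}))$, and the rejection set on $\tilde{\mbf{p}}^{0,i}$ with the smaller estimator contains that of $\mbf{p}$. Then we bound
\[
\frac{\ind{p_i\le\hat t}}{\hat k\vee1}\le \frac{\ind{p_i\le \alpha\hat k/(m\hat\pi_0(\tilde{\mbf p}^{0,i}))}}{\hat k\vee 1}.
\]
This quantity is then handled by the usual PRDS-type argument, using that $\hat k$ is nonincreasing in $p_i$ once the estimator is frozen.

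The second step integrates out $p_i$. In case (ii), $\ind{p_i\le\hat t}/(\hat k\vee1)\le \ind{p_i\le \alpha\hat k_i/(m\hat\pi_0(\tilde{\mbf p}^{0,i}))}/\hat k_i$, where $\hat k_i$ and $\tilde{\mbf p}^{0,i}$ are functions of $(p_j)_{j\neq i}$. Under independence, super-uniformity then gives expectation at most $\alpha/(m)\,\E[1/\hat\pi_0(\tilde{\mbf p}^{0,i})]$. In the conformal setting, $\tilde{\mbf p}^{0,i}$ is a function of the unordered set of calibration scores together with $S_{n+i}$ and of the $S_{n+j}$. Conditionally on this set, exchangeability makes $p_i$ uniform on $\{1,\dots,n+1\}/(n+1)$, which gives the same bound. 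In case (i), the function $\hat k$ is nonincreasing in $p_i$, so we apply the superuniformity lemma for nonincreasing functions. Under independence this is Lemma~3.2 of \cite{BR2009}; in the conformal case it is its analog from \cite{marandon2024adaptive}, conditioning on the same unordered set and using that the other conformal $p$-values are monotone in the rank of $S_{n+i}$ (this is PRDS).

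The main obstacle is the conformal part of case (i). Changing $p_i$ there also moves the other $p_j$'s, so one must verify carefully that $\tilde{\mbf p}^{0,i}$ is exactly the configuration at the extreme rank and that the joint family is monotone in the rank of $S_{n+i}$. I would first work out the conditional distribution given the multiset $\{S_k: k\in[n]\cup\{n+i\}\}$ and the $S_{n+j}$, $j\ne i$, then check the monotonicity of the map from rank to $\mbf p$ directly from \eqref{equ-confpvalues}. Summing over $i\in\cH_0$ yields \eqref{equ:controlgen}.
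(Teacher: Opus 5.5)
Your proposal is correct and follows essentially the same route as the paper. It uses the decomposition over $i\in\cH_0$ and Lemma~\ref{lemmaSU} with Assumption~\ref{ass:capping} to make $\hat{k}$ and $\hat{\pi}_0$ functions of $\tilde{\mbf{p}}^{0,i}$ alone in case (ii). In case (i) it replaces $\hat{\pi}_0(\mbf{p})$ by $\hat{\pi}_0(\tilde{\mbf{p}}^{0,i})$ inside the indicator and then applies Lemma~\ref{lem:BR2009}; in the conformal setting everything is done conditionally on $W_i$ in \eqref{equ:wi}, using the nondecreasing maps $\Psi_{ij}$ of Lemma~\ref{lem:conformal}.

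Two small precisions. First, in case (i), $\hat{k}=\hat{k}(\mbf{p},\hat{\pi}_0(\mbf{p}))$ is nonincreasing in $p_i$ because Assumption~\ref{ass:monotone} applies to the non-frozen $\hat{\pi}_0(\mbf{p})$; freezing the estimator inside $\hat{k}$ would enlarge the denominator and give no upper bound. Second, $\tilde{\mbf{p}}^{0,i}$ is not exactly the extreme-rank configuration, since it can be smaller by $1/(n+1)$ in some coordinates. The argument only needs $\tilde{\mbf{p}}^{0,i}\leq\mbf{p}$ coordinatewise with equality on $\{j: p_j>p_i\}$, and the $W_i$-measurability of $\tilde{\mbf{p}}^{0,i}$.
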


Theorem~\ref{th-gen} is proved in Section~\ref{proof:th-gen}. The part (i) of Theorem~\ref{th-gen} can be considered as well known: the independence setting reduces to Theorem~11 of \cite{BR2009} and equation (5) in \cite{benjamini2006adaptive}, while the conformal setting is close to  Corollary~3.7 of \cite{marandon2024adaptive} (with a slight change in the $p$-value configuration). 
The capping part is close to arguments given in \cite{li2019multiple}, although dealing with the capping in the conformal setting is  new to our knowledge.

\subsection{Application to Storey-like estimators}

The bound \eqref{equ:controlgen} can be applied for the Storey estimator $\hat{\pi}_{0,\lambda}$ defined in \eqref{equStorey}. Since it satisfies  Assumption~\ref{ass:monotone}, no capping is needed.

\begin{corollary}\label{corcontrolStorey}
    Let $\lambda\in (0,1)$ and assume in addition $(n+1)\lambda \in [n+1]$ in the conformal setting.
    For the $\lambda$-Storey estimator $\hat{\pi}_{0,\lambda}$ defined by \eqref{equStorey}, the adaptive BH procedure controls the FDR at level $\alpha$, that is,
    $$
 \FDR(\ABH_\alpha(\hat{\pi}_{0,\lambda}))\leq \alpha,
    $$
    without any capping and in both the independent and conformal settings.
\end{corollary}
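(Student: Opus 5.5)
We apply Theorem~\ref{th-gen}(i) with $\kappa=1$. The first step is to check Assumption~\ref{ass:monotone}: $\hat{\pi}_{0,\lambda}$ is a nondecreasing function of each indicator $\ind{p_i>\lambda}$ (or $\ind{p_i\geq\lambda}$ in the conformal case), and each indicator is nondecreasing in $p_i$. Theorem~\ref{th-gen} then gives
\[
\FDR(\ABH_\alpha(\hat{\pi}_{0,\lambda}))\leq \frac{\alpha}{m}\sum_{i\in\cH_0}\E\big[1/\hat{\pi}_{0,\lambda}(\tilde{\mbf p}^{0,i})\big],
\]
so it suffices to show $\E[1/\hat{\pi}_{0,\lambda}(\tilde{\mbf p}^{0,i})]\leq m/m_0$ for each $i\in\cH_0$, where $m_0=|\cH_0|$.

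\textbf{Independent setting.} In $\tilde{\mbf p}^{0,i}$ the coordinate $p_i$ is set to $0$, so $\ind{\tilde p_i>\lambda}=0$. We lower bound the count by its null part:
\[
1/\hat{\pi}_{0,\lambda}(\tilde{\mbf p}^{0,i})\leq \frac{m(1-\lambda)}{1+B},\qquad B=\sum_{j\in\cH_0\setminus\{i\}}\ind{p_j>\lambda}.
\]
By independence and super-uniformity, $B$ stochastically dominates a $\mathrm{Bin}(m_0-1,1-\lambda)$ variable. Since $x\mapsto 1/(1+x)$ is decreasing, we may replace $B$ by that binomial. The classical identity $\E[1/(1+\mathrm{Bin}(N,q))]=(1-(1-q)^{N+1})/((N+1)q)\leq 1/((N+1)q)$, applied with $N=m_0-1$ and $q=1-\lambda$, gives the bound $m(1-\lambda)/(m_0(1-\lambda))=m/m_0$. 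Summing over $i\in\cH_0$ yields $\FDR\leq\alpha$.

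\textbf{Conformal setting.} This is the main obstacle. For $j\in\cH_0\setminus\{i\}$, the modified $p$-values $\tilde p^{0,i}_j$ are computed against the enlarged calibration set $[n]\cup\{n+i\}$ of size $n+1$. Condition on the unordered set of these $n+1$ calibration-type scores together with the alternative scores. By exchangeability, the remaining $m_0-1$ null test scores are then exchangeable with the $n+1$ calibration-type scores. Let $k=(n+1)(1-\lambda)$, an integer by the assumption $(n+1)\lambda\in[n+1]$. Then $\tilde p^{0,i}_j\geq\lambda$ exactly when $S_{n+j}$ is at most the $k$-th smallest of those $n+1$ scores. The count $B$ of such $j$ therefore has a P\'olya-urn / beta-binomial law: conditionally on $U\sim\mathrm{Beta}(k,n+2-k)$, $B$ is $\mathrm{Bin}(m_0-1,U)$, in the same spirit as the computations in \cite{marandon2024adaptive}.

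We then apply the binomial bound conditionally on $U$:
\[
\E\big[1/(1+B)\big]\leq \E\big[1/(m_0U)\big]=\frac{1}{m_0}\cdot\frac{n+1}{k-1}.
\]
This gives $(n+1)/(k-1)$ where we need $1/(1-\lambda)=(n+1)/k$, so the naive bound is slightly too weak. I would therefore use the exact expression $\E[(1-(1-U)^{m_0})/(m_0U)]$, or redo the P\'olya-urn computation directly, to get $\E[1/(1+B)]\leq (n+1)/(m_0 k)$. Both the '$1+$' in the numerator of the estimator and the $\geq$ convention in the conformal indicator should be what makes this inequality exact. Checking this inequality is the key technical step.

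Multiplying by $m(1-\lambda)=mk/(n+1)$ gives $\E[1/\hat{\pi}_{0,\lambda}(\tilde{\mbf p}^{0,i})]\leq m/m_0$, and summing over $i\in\cH_0$ concludes $\FDR\leq\alpha$ in the conformal setting as well.
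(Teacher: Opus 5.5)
Your reduction via Theorem~\ref{th-gen}(i) and your independent-setting argument match the paper's proof. The conformal part, however, has a genuine gap: it rests on an off-by-one error, and the repair you propose cannot work.

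Write $\ell=(n+1)\lambda$, so that $k=n+1-\ell$. Since $(n+1)\tilde p^{0,i}_j$ is an integer, $\tilde p^{0,i}_j\geq\lambda$ means that at least $\ell$ of the $n+1$ calibration-type scores are $\geq S_{n+j}$. Equivalently, $S_{n+j}$ is at most the $\ell$-th largest of them, which is the $(k+1)$-th smallest, not the $k$-th smallest. Your $k$-th smallest describes the event $\tilde p^{0,i}_j>\lambda$, which is exactly the strict convention that Remark~\ref{rmk:conformal+1} shows to be invalid. So the mixing variable is $U\sim\mathrm{Beta}(k+1,n+1-k)$, not $\mathrm{Beta}(k,n+2-k)$.

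Under your law, the step you flag as key is false. We have $m_0\,\E[1/(1+B)]=\E[(1-(1-U)^{m_0})/U]$, which increases to $\E[1/U]=(n+1)/(k-1)$ as $m_0\to\infty$, and this is $+\infty$ when $k=1$. Hence no refinement of the binomial bound can give $\E[1/(1+B)]\leq(n+1)/(m_0k)$. Concretely, $n=1$, $\lambda=1/2$, $m=m_0=10$ gives $m(1-\lambda)\E[1/(1+B)]=\sum_{r=2}^{11}1/r\approx 2.02>1$.

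With the correct law, no extra work is needed. Your naive conditional bound $\E[1/(1+B)]\leq\E[1/(m_0U)]$, combined with $\E[1/\mathrm{Beta}(a,b)]=(a+b-1)/(a-1)$, gives exactly $(n+1)/(m_0k)$. Multiplying by $m(1-\lambda)=mk/(n+1)$ then yields $m/m_0$. This is the paper's argument.

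The paper also sets up the binomial-mixture structure more cleanly than your conditioning on the unordered calibration set, under which the remaining null scores are no longer exchangeable with those fixed scores. It first reduces, using no ties and exchangeability, to iid $U(0,1)$ scores. It then conditions on $(S_1,\dots,S_n,S_{n+i})$. Given these, the $\tilde p^{0,i}_j$, $j\in\cH_0\setminus\{i\}$, are iid with conditional probability of $\{\tilde p^{0,i}_j\geq\lambda\}$ equal to the $\ell$-th largest of these $n+1$ scores.
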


The result in Corollary~\ref{corcontrolStorey} is well known \cite{benjamini2006adaptive,bates2023testing,marandon2024adaptive}.
The short proof is given in Section~\ref{secproofcorcontrolStorey} for illustration and pedagogical purpose.

The general bound also allows to deal with the Interval-Storey estimator $\hat{\pi}_{0,[\lambda,\mu]}$ defined in \eqref{equStoreytwosided}. While it does not satisfy Assumption~\ref{ass:monotone}, a result can still be established with a capping, because it satisfies Assumption~\ref{ass:capping} with any capping $\kappa\leq \lambda$.

\begin{corollary}\label{corcontrolIStorey}
    Let $\lambda,\mu\in (0,1)$ with $\lambda\leq\mu$ and assume in addition that $p_i$ is marginally uniform when $i\in \cH_0$ in the independent setting and that $(n+1)\lambda, (n+1)\mu \in [n+1]$ in the conformal setting. Consider a capping threshold $\kappa\in (0,\lambda]$. 
    For the Interval-Storey estimator $\hat{\pi}_{0,[\lambda,\mu]}$ defined in \eqref{equStoreytwosided}, the adaptive BH procedure with capping $\kappa$ controls the FDR at level $\alpha$, that is,
    $$
 \FDR(\ABH_\alpha(\hat{\pi}_{0,[\lambda,\mu]},\kappa))\leq \alpha,
    $$
   in both the independent and conformal settings.
\end{corollary}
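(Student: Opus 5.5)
The plan is to apply Theorem~\ref{th-gen}(ii) and then bound the expectation on the right-hand side of \eqref{equ:controlgen}. The first step is to check that $\hat{\pi}_{0,[\lambda,\mu]}$ satisfies Assumption~\ref{ass:capping} for any $\kappa\leq\lambda$. Take $p_i\leq \kappa\le \lambda$ and replace it by any $p_i'\leq p_i$. In the independent setting, where the interval indicator is $\ind{p_i\in[\lambda,\mu]}$, both values give indicator $0$ except in the boundary case $p_i=\lambda=\kappa$. There I would use the convention that $[\lambda,\mu]$ is read as $(\lambda,\mu]$, or note that this event has probability zero under continuity, or require $\kappa<\lambda$. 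In the conformal setting the grid condition $(n+1)\lambda\in[n+1]$ handles this boundary in the same spirit as for Corollary~\ref{corcontrolStorey}. Theorem~\ref{th-gen}(ii) then gives
\[
\FDR\leq (\alpha/m)\sum_{i\in\cH_0}\E\big[1/\hat{\pi}_{0,[\lambda,\mu]}(\tilde{\mbf{p}}^{0,i})\big].
\]

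Next, under $\tilde{\mbf{p}}^{0,i}$ the coordinate $p_i$ is set to $0\notin[\lambda,\mu]$. Writing $N_i=\sum_{j\in\cH_0\setminus\{i\}}\ind{\tilde p^{0,i}_j\in[\lambda,\mu]}$ and dropping the alternatives from the count (this only decreases the denominator), we get
\[
1/\hat{\pi}_{0,[\lambda,\mu]}(\tilde{\mbf{p}}^{0,i})\leq m(\mu-\lambda)/(1+N_i).
\]
In the independent setting the $p_j$ for $j\neq i$ are unchanged. With exact uniformity, $N_i\sim \mathrm{Bin}(m_0-1,q)$ where $q=\mu-\lambda$. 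The standard identity $\E[1/(1+\mathrm{Bin}(N,q))]=(1-(1-q)^{N+1})/((N+1)q)\leq 1/((N+1)q)$ gives $\E[1/\hat\pi_0]\leq m/m_0$. Summing over the $m_0$ nulls yields $\FDR\le\alpha$. This is exactly where uniformity, and not just super-uniformity, is needed: super-uniformity gives no lower control of $\P(p_j\in[\lambda,\mu])$.

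The conformal setting is the main obstacle. The modified $p$-values $\tilde p^{0,i}_j$, $j\in\cH_0\setminus\{i\}$, are ranks of the scores $S_{n+j}$ among the $n+1$ "calibration" scores $\{S_k,k\in[n]\cup\{n+i\}\}$, and these are exchangeable with $(S_{n+j})_{j\in\cH_0\setminus\{i\}}$. I would condition on the unordered set of all $n+m_0$ null scores (given the alternatives). The $m_0-1$ test nulls then form a uniformly random subset, and $\tilde p^{0,i}_j\in[\lambda,\mu]$ holds exactly when the rank of $S_{n+j}$ relative to the $n+1$ calibration points lies in a window of $(n+1)(\mu-\lambda)+1$ or so rank positions. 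Here the grid conditions make the count exact, and the boundary inclusion should be aligned so that the count is $(n+1)(\mu-\lambda)$. Under this exchangeability, $N_i$ follows a Pólya-urn (beta-binomial-type) law. For such laws I would prove $\E[1/(1+N_i)]\leq 1/(m_0(\mu-\lambda))$, either by the known computation for the conformal Storey estimator in \cite{marandon2024adaptive,bates2023testing}, adapted from the tail $[\lambda,1]$ to the window $[\lambda,\mu]$, or by a direct Beta-mixture representation $N_i\mid Q\sim\mathrm{Bin}(m_0-1,Q)$, the binomial identity above, and Jensen on $\E[1/Q]$-type terms. Only the length of the window matters for this law, not its location, so the Storey computation should transfer verbatim. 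Summing over $i\in\cH_0$ then finishes the proof.
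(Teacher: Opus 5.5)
Your route is the paper's: apply Theorem~\ref{th-gen}(ii) and show $\E[1/\hat\pi_{0,[\lambda,\mu]}(\tilde{\mbf{p}}^{0,i})]\le m/m_0$. In the independent case you use the binomial inverse moment, exactly as the paper does. In the conformal case the paper uses your ``Beta-mixture'' option: it takes the scores iid uniform, conditions on $(S_k)_{k\in[n]\cup\{n+i\}}$, and gets $N_i\sim\mathcal{B}(m_0-1,Q)$ with $Q=S_{((n+1)\lambda)}-S_{((n+1)\mu+1)}$ (decreasing order statistics). It concludes with the exact identity $\E[1/\beta(a,b)]=(a+b-1)/(a-1)$. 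That final bound is attained with no room to spare, so two points in your conformal sketch need fixing.

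\textbf{The window count.} Keep the ``$+1$''. With both endpoints on the grid, the closed window contains $(n+1)(\mu-\lambda)+1$ rank positions. Hence $Q\sim\beta((n+1)(\mu-\lambda)+1,\,n+1-(n+1)(\mu-\lambda))$ and $\E[1/Q]=1/(\mu-\lambda)$, which gives
\[
\E\Big[\frac{m(\mu-\lambda)}{1+N_i}\Big]\le \frac{m(\mu-\lambda)}{m_0}\,\E[1/Q]=\frac{m}{m_0}.
\]
Your phrase ``aligned so that the count is $(n+1)(\mu-\lambda)$'', or the half-open reading $(\lambda,\mu]$, would leave only $(n+1)(\mu-\lambda)$ positions. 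Then $\E[1/Q]=(n+1)/((n+1)(\mu-\lambda)-1)>1/(\mu-\lambda)$, which is infinite when $(n+1)(\mu-\lambda)=1$, and the argument breaks as in Remark~\ref{rmk:conformal+1}.

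\textbf{The tool.} Jensen cannot be used: it gives $\E[1/Q]\ge 1/\E[Q]$, which is the wrong direction. You need the exact Beta inverse moment.

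\textbf{The boundary $\kappa=\lambda$.} You are right that Assumption~\ref{ass:capping} fails literally there; the paper asserts it for all $\kappa\le\lambda$ without comment. In the independent case your null-continuity remark suffices, since only the null coordinate $p_i$ is modified.

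In the conformal case, however, the grid condition does not ``handle'' it as in Corollary~\ref{corcontrolStorey}. It is precisely what creates the atom $\P(p_i=\lambda)=1/(n+1)$. On that atom, passing to $\tilde{\mbf{p}}^{0,i}$ strictly lowers the count, and the half-open reading is unavailable by the window-count point above.

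You have two ways out:
\begin{itemize}
\item Take $\kappa<\lambda$.
\item Treat the atom directly. Write $\hat\pi_0=\hat\pi_{0,[\lambda,\mu]}$. On $\{p_i=\lambda\}$ one has $\hat\pi_0(\mbf{p})\ge\hat\pi_0(\tilde{\mbf{p}}^{0,i})$. If $p_i=\lambda$ can be rejected, then $\lambda\le\alpha\hat k/(m\hat\pi_0(\mbf{p}))$. This forces both $1/\hat k\le\alpha/(m\lambda\hat\pi_0(\tilde{\mbf{p}}^{0,i}))$ and the $\tilde{\mbf{p}}^{0,i}$-based threshold to equal $\lambda$. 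Conditionally on $W_i$, the $i$-th FDR term is then at most $(\lambda-\tfrac{1}{n+1})\tfrac{\alpha}{m\lambda\hat\pi_0(\tilde{\mbf{p}}^{0,i})}+\tfrac{1}{n+1}\tfrac{\alpha}{m\lambda\hat\pi_0(\tilde{\mbf{p}}^{0,i})}=\tfrac{\alpha}{m\hat\pi_0(\tilde{\mbf{p}}^{0,i})}$. This is the same per-null bound as Theorem~\ref{th-gen}(ii) provides.
\end{itemize}
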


The result in Corollary~\ref{corcontrolIStorey} is new to our knowledge. A short proof is provided in Section~\ref{secproofcorcontrolIStorey}. 

\begin{remark}\label{rmk:conformal+1}
    Considering $p_i\geq \lambda$ and not $p_i>\lambda$ in the $\lambda$-Storey estimator \eqref{equStorey} is necessary to obtain
    that $\E[1/\hat{\pi}_0(\tilde{\mbf{p}}^{0,i})]\leq 1$ in the conformal case. For instance, consider $n = 1$ and $m = m_0 = 10$, $\lambda = 1/2$ (so that $(n+1)\lambda\in [n+1]$), then
\begin{align*}
    \mathbb{E}\left[\frac{m(1-\lambda)}{ 1+ \sum_{i=1}^m \ind{p_i> \lambda}}\right]
    = \mathbb{E}\left[\frac{5}{ 1+ \sum_{i=1}^{10} \ind{p_i=1}}\right]
    = \frac{1}{11}  \sum_{j=0}^{10}\frac{5}{1+j}
    \approx 1.37 > 1.
\end{align*}
\end{remark}

\section{New procedures with FDR control}

\subsection{Minimum-Storey adaptive procedure}

\begin{definition}\label{def:MS}
   The Minimum-Storey procedure (MS) of parameters $\epsilon,\underline{\pi}_0\in (0,1)$ and of level $\alpha\in (0,1)$ is the adaptive BH procedure $\ABH_\alpha(\hat{\pi}^{\MS}_{0,\epsilon,\underline{\pi}_0})$ (no capping) with the $\pi_0$-estimator given by 
   \begin{equation}
       \label{equ:pi0hatMS}
       \hat{\pi}^{\MS}_{0,\epsilon,\underline{\pi}_0} = \underline{\pi}_0 \vee \bigg(C(m,\epsilon,\underline{\pi}_0)  \inf_{\lambda \in [0,1-\epsilon]}
      \frac{1 \vee \sum_{i=1}^{m} \mathbf{1}\{p_i \geq \lambda\}}{m(1-\lambda)} \big)\bigg),
   \end{equation}
   where $C(m,\epsilon,\underline{\pi}_0)$ is a normalizing factor given in \eqref{equ:gMS} below. 
\end{definition}

The normalizing factor $C(m,\epsilon,\underline{\pi}_0)$ has a complex expression given in Section~\ref{sec:ghcomputing}. It can be  computed by using a Monte-Carlo method and hence can be tabulated once for all given $m,\epsilon,\underline{\pi}_0$. 
It is also proved to be close to $1$ when $m$ gets large, and we refer the reader to Section~\ref{sec:ghcomputing} for more details on these properties. Also, the infimum in \eqref{equ:pi0hatMS} is a minimum that can be easily computed as
\begin{align}
\inf_{\lambda \in [0,1-\epsilon]}
      \frac{1 \vee \sum_{i=1}^{m} \mathbf{1}\{p_i \geq \lambda\}}{m(1-\lambda)} &= 1 \wedge \min_{\lambda\in \{p_i, i\in [m]\}\cap (0,1-\epsilon)}
      \frac{1 \vee \sum_{i=1}^{m} \mathbf{1}\{p_i > \lambda\}}{m(1-\lambda)},\label{equ:simplifyinfMS}
\end{align}
so that the MS procedure is fully accessible from the data.

An important property of $\hat{\pi}^{\MS}_{0,\epsilon,\underline{\pi}_0}$ is that it satisfies Assumption~\ref{ass:monotone}, so that Theorem~\ref{th-gen} (i) can be used to ensure FDR control.

\begin{theorem}\label{thcontrolMS}
   For any choice of parameters $\epsilon,\underline{\pi}_0\in (0,1)$, the MS procedure taken at level $\alpha\in (0,1)$ controls the FDR at level $\alpha$, that is,  $\FDR(\ABH_\alpha(\hat{\pi}^{\MS}_{0,\epsilon,\underline{\pi}_0}))\leq \alpha$ both in the independent and conformal settings.
\end{theorem}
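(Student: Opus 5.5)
The plan is to verify Assumption~\ref{ass:monotone} for $\hat{\pi}^{\MS}_{0,\epsilon,\underline{\pi}_0}$, invoke Theorem~\ref{th-gen}~(i) (no capping, $\kappa=1$), and then show
\[
\E\big[1/\hat{\pi}^{\MS}_{0,\epsilon,\underline{\pi}_0}(\tilde{\mbf{p}}^{0,i})\big]\le m/m_0
\]
for each $i\in\cH_0$. Summing this bound over $\cH_0$ in \eqref{equ:controlgen} then gives $\FDR\le\alpha$.

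\textbf{Monotonicity.} For each fixed $\lambda$, the count $\sum_i\ind{p_i\ge\lambda}$ is nondecreasing in every $p_j$. Taking $1\vee\cdot$, an infimum over $\lambda$, multiplication by the constant $C\ge 1$, and a maximum with $\underline{\pi}_0$ all preserve coordinatewise monotonicity. So Assumption~\ref{ass:monotone} holds, and \eqref{equ:controlgen} applies in both the independent and conformal settings.

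\textbf{Reduction to a null-only quantity.} Because the estimator is nondecreasing, I will lower bound it by discarding the alternatives. Write $N_0(\lambda)=\sum_{j\in\cH_0\setminus\{i\}}\ind{\tilde p^{0,i}_j\ge\lambda}$; the $i$-th coordinate of $\tilde{\mbf{p}}^{0,i}$ equals $0$ and so never contributes. Then
\[
\hat{\pi}^{\MS}(\tilde{\mbf{p}}^{0,i})\ge \underline{\pi}_0\vee\Big(C\inf_{\lambda\le 1-\epsilon}\frac{1\vee N_0(\lambda)}{m(1-\lambda)}\Big).
\]
Under independence, the $m_0-1$ null $p$-values other than $p_i$ are independent and super-uniform. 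Each can be stochastically lowered to an exact uniform, since the right-hand side is monotone. The variables $(1-\lambda)^{-1}N_0(\lambda)$ then form the scaled empirical survival process of $m_0-1$ i.i.d.\ uniforms. In the conformal setting, the $\tilde p^{0,i}_j$ for $j\in\cH_0\setminus\{i\}$ are the conformal $p$-values computed against the augmented calibration set $[n]\cup\{n+i\}$ of size $n+1$. By exchangeability, their joint law is that of a null-only conformal family, which is free of the alternatives.

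\textbf{Definition of $C$.} The constant $C(m,\epsilon,\underline{\pi}_0)$ from \eqref{equ:gMS} must be exactly the quantity that makes
\[
\E\Big[\Big(\underline{\pi}_0\vee C\inf_{\lambda\le1-\epsilon}\frac{1\vee N_0(\lambda)}{m(1-\lambda)}\Big)^{-1}\Big]\le \frac{m}{m_0}
\]
for every $m_0\in[m]$. I expect it to be defined as a supremum over $m_0$, and over the uniform versus conformal null laws, of a normalized version of this expectation; alternatively, $C$ is chosen so that a function such as $g$ in \eqref{equ:gMS} is at most $1$. Given that definition, the bound above is immediate. The proof then reduces to checking that the worst case over null laws is attained by the exact uniform law (or the conformal law), which follows from the stochastic-ordering argument above. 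Where a uniform worst case is needed, the conformal law can be handled separately, or one can note that conformal ranks are uniform on a grid that dominates in the relevant direction.

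\textbf{Main obstacle.} I expect the hardest step to be this last matching. The expectation must be bounded uniformly over $m_0$, including small $m_0$. There $N_0$ can be $0$, which is why the floor $1\vee\cdot$ and the cutoff $\underline{\pi}_0$ are needed to keep $1/\hat{\pi}_0$ bounded by $1/\underline{\pi}_0$. The bound must also hold in the conformal setting, where the grid $\lambda\in[0,1-\epsilon]$ interacts with the discrete $p$-values. As Remark~\ref{rmk:conformal+1} shows, the use of $\ge\lambda$ rather than $>\lambda$ is what makes the conformal case work. I will try first to show monotonicity of the conformal expectation in $n$, or to bound it directly by the continuous uniform case using $\ind{\tilde p\ge\lambda}\ge\ind{U\ge\lambda}$ under a coupling.
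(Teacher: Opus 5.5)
Your architecture is the paper's: monotonicity of $\hat{\pi}^{\MS}_{0,\epsilon,\underline{\pi}_0}$ gives Theorem~\ref{th-gen}~(i), you discard the alternatives by monotonicity, and you pass to a least favorable null law by stochastic domination (independent case) or exchangeability (conformal case). The gap is the final step. That is the only place where the specific constant of the statement enters, and you declare it ``immediate'' under a guessed, partly circular definition of $C$; taking $C$ to be whatever makes the target inequality hold says nothing about the $C$ of \eqref{equ:gMS}.

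With the actual definition $C(m,\epsilon,\underline{\pi}_0)=\max_{\underline{\pi}_0 m\le s\le m}c(s,\epsilon)$, the missing argument is a case split on $m_0$. If $m_0\le \underline{\pi}_0 m$, the floor alone gives $1/\hat{\pi}^{\MS}_{0,\epsilon,\underline{\pi}_0}(\tilde{\mathbf{p}}^{0,i})\le 1/\underline{\pi}_0\le m/m_0$. If $m_0\ge \underline{\pi}_0 m$, drop the floor, pull out the deterministic factor $C$, restrict the count to $\cH_0$ and rescale $m$ into $m_0$:
\[
\frac{1}{\hat{\pi}^{\MS}_{0,\epsilon,\underline{\pi}_0}(\tilde{\mathbf{p}}^{0,i})}\le \frac{m}{m_0}\cdot\frac{1}{C(m,\epsilon,\underline{\pi}_0)}\,\sup_{\lambda\in[0,1-\epsilon]}\frac{m_0(1-\lambda)}{1\vee \sum_{j\in\cH_0}\ind{\tilde{p}^{0,i}_j\ge\lambda}} .
\]
After your reduction, the expectation of the supremum is at most $c(m_0,\epsilon)$. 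This is at most $C(m,\epsilon,\underline{\pi}_0)$ precisely because $m_0$ lies in the range of the maximum. The floor $\underline{\pi}_0$ and the restriction $s\ge\underline{\pi}_0 m$ in $C$ work together as one device, so the ``uniformity over small $m_0$'' you list as the main obstacle never has to be confronted.

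Your suggested route for the conformal case would fail, so you need the option you mention only in passing: treating the conformal law separately. No coupling with $\ind{\tilde{p}^{0,i}_j\ge\lambda}\ge\ind{U\ge\lambda}$ exists.

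\begin{itemize}
\item By exchangeability of the $n+2$ scores involved, $(n+1)\tilde{p}^{0,i}_j$ is uniform on $\{0,1,\dots,n+1\}$, so $\P(\tilde{p}^{0,i}_j=0)=1/(n+2)>0$.
\item For $\lambda\downarrow(k-1)/(n+1)$ one has $\P(\tilde{p}^{0,i}_j\ge\lambda)=(n+2-k)/(n+2)$. This is strictly below the limit $(n+2-k)/(n+1)$ of $1-\lambda$.
\end{itemize}

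The MS infimum ranges over all real $\lambda\in[0,1-\epsilon]$, so it picks up exactly these points. Hence the null-only conformal estimator is biased downward relative to the uniform one, and a constant calibrated on continuous uniforms is too small. This matches the larger conformal constants in Figure~\ref{fig:adjustment.constant}. Monotonicity in $n$, even if proved, points in the same unhelpful direction.

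The paper needs no such comparison. In the conformal setting, $Q_{0,s}$ is defined as the law of $(0,(\tilde{p}^{0,1}_j)_j)$ under i.i.d.\ uniform scores. So $c(s,\epsilon)$ and $C$ are computed under the conformal law itself, and exchangeability alone identifies the law of $(\tilde{p}^{0,i}_j)_{j\in\cH_0}$ with $Q_{0,m_0}$.
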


The proof is provided in Section~\ref{sec:proofthcontrolMS}. 
In the MS procedure, the parameters $\epsilon$ and $\underline{\pi}_0$ are free and any choice provides finite sample FDR control. Hence, they can be chosen to make the power as large as possible. Our power analysis (Section~\ref{sec:optimality}) suggests to take $\epsilon$ realizing a bias variance trade-off as $\epsilon \propto m^{-1/4}$ (up to a log factor) and  $\underline{\pi}_0\leq \pi_0$. Thus taking $\underline{\pi}_0$ slowly tending to zero is an option, while $\underline{\pi}_0 = 0.5$ is generally enough in practice, unless there is a belief that the null proportion is very low (Section~\ref{sec:simulation}). 

In Section~\ref{sec:optimality} (Corollary~\ref{coroptimalMS}), we show that the MS procedure satisfies an optimality property over all (regular) distributions with a super-uniform null. 

\subsection{Interval-Minimum-Storey adaptive procedures}

When the null distribution is imposed to be (exactly) uniform, we consider the following procedure.

\begin{definition}\label{def:IMS}
   The Interval-Minimum-Storey procedure (IMS) of parameters $\epsilon,\underline{\pi}_0\in (0,1)$, with capping $\kappa\in [0,1-\epsilon]$ and of level $\alpha\in (0,1)$ is the adaptive BH procedure $\ABH_\alpha(\hat{\pi}^{\IMS}_{0,\epsilon,\underline{\pi}_0,\kappa},\kappa)$ with capping $\kappa$ and the $\pi_0$-estimator given by 
   \begin{equation}
       \label{equ:pi0hatIMS}
       \hat{\pi}^{\IMS}_{0,\epsilon,\underline{\pi}_0,\kappa} = \underline{\pi}_0 \vee \bigg(D(m,\epsilon,\underline{\pi}_0)  \inf_{I\subset [\kappa,1]: |I|\geq \epsilon}
      \frac{1 \vee \sum_{i=1}^{m} \mathbf{1}\{p_i \in I\}}{m|I|} \bigg),
   \end{equation}
   for which the infimum is taken over all open  intervals $I$ contained in $[\kappa,1]$ with size at least $\epsilon$ 
   and 
   where $D(m,\epsilon,\underline{\pi}_0)$ is given in \eqref{equ:gIMS} below.
\end{definition}

Again, the normalizing factor $D(m,\epsilon,\underline{\pi}_0)$ can be  computed by using a Monte-Carlo method and is close to $1$ when $m$ gets large, see Section~\ref{sec:ghcomputing}. The infimum in \eqref{equ:pi0hatIMS} is a minimum:
\begin{align}
      \inf_{I\subset [\kappa,1]: |I|\geq \epsilon}
      \frac{1 \vee \sum_{i=1}^{m} \mathbf{1}\{p_i \in I\}}{m|I|}  &= \min_{
      \substack{I=(a,b), \kappa\leq a\leq b-\epsilon\\ a,b\in \{p_i, i\in [m]\}\cup\{\kappa,1\}}
      }
      \frac{1 \vee \sum_{i=1}^{m} \mathbf{1}\{p_i \in I\}}{m|I|} ,\label{equ:simplifyinfIMS}
   \end{align}
so that the IMS procedure is fully computable from the data.

The estimator $\hat{\pi}^{\IMS}_{0,\epsilon,\underline{\pi}_0,\kappa}$ is not monotone anymore, but satisfies Assumption~\ref{ass:capping}. Hence, we can use Theorem~\ref{th-gen} (ii) and  $\kappa$-capping to ensure the FDR control. 

\begin{theorem}\label{thcontrolIMS}
   For any choice of parameters $\epsilon,\underline{\pi}_0\in (0,1)$, the IMS procedure with capping $\kappa\in [0,1-\epsilon]$ taken at level $\alpha\in (0,1)$ controls the FDR at level $\alpha$, that is,  $\FDR(\ABH_\alpha(\hat{\pi}^{\IMS}_{0,\epsilon,\underline{\pi}_0,\kappa},\kappa))\leq \alpha$ both in the conformal setting and in the independent setting for which the null $p$-values are additionally assumed marginally uniform.
\end{theorem}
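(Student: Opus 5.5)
The plan is to apply Theorem~\ref{th-gen}~(ii). The first task is to check that $\hat{\pi}^{\IMS}_{0,\epsilon,\underline{\pi}_0,\kappa}$ satisfies Assumption~\ref{ass:capping} with capping $\kappa$. Every interval $I$ in the infimum \eqref{equ:pi0hatIMS} is open and contained in $[\kappa,1]$, so $I\subset(\kappa,1]$. Hence each count $\sum_i \ind{p_i\in I}$ depends only on the $p$-values exceeding $\kappa$. Moving a $p_i\le\kappa$ to a smaller value, for instance $0$, changes none of these counts. Consequently the estimator is unchanged, and \eqref{equ:controlgen} holds. The case $\kappa=0$ is degenerate, since then the threshold is $0$ and there are no rejections. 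It therefore remains to show
$$
\E\big[1/\hat{\pi}^{\IMS}_{0}(\tilde{\mbf p}^{0,i})\big]\le m/m_0\quad\text{for each } i\in\cH_0,
$$
or at least that the average over $i\in\cH_0$ satisfies this bound.

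Next I would reduce to a null-only statistic. Lower-bounding each count by its restriction to null indices only increases $1/\hat\pi_0$, so
$$
\frac{1}{\hat\pi_0(\tilde{\mbf p}^{0,i})}\le \frac{1}{\underline{\pi}_0}\wedge\frac{1}{D}\sup_{|I|\ge\epsilon,\,I\subset(\kappa,1]}\frac{m|I|}{1\vee\sum_{j\in\cH_0\setminus\{i\}}\ind{\tilde p^{0,i}_j\in I}}.
$$
This lower-bounding step is where exact uniformity enters. In the independent setting the remaining $m_0-1$ null $p$-values are i.i.d.\ $U(0,1)$ and independent of the alternatives. Enlarging the supremum to all intervals in $[0,1]$ of length at least $\epsilon$ then yields a pivotal quantity $G$ depending only on $(m_0,\epsilon)$. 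Multiplying by $m_0/m$, the target becomes
$$
\E\Big[\tfrac{m_0}{m\underline{\pi}_0}\wedge \tfrac{1}{D}\sup_{|I|\ge\epsilon}\tfrac{m_0|I|}{1\vee N_I}\Big]\le 1,
$$
where $N_I$ counts the $m_0-1$ uniforms in $I$. This suggests defining $D(m,\epsilon,\underline{\pi}_0)$ in \eqref{equ:gIMS} as the smallest constant making this expectation at most $1$ uniformly over $m_0\in[m]$. The truncation by $\underline{\pi}_0$ keeps the expectation finite and makes $D$ well defined; it also handles small $m_0$, because then $m_0/(m\underline{\pi}_0)$ is small. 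With that definition, the desired bound is exactly the defining property of $D$.

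In the conformal setting, the modified $p$-values $\tilde p^{0,i}_j$, $j\in\cH_0\setminus\{i\}$, are ranks computed within the exchangeable family $(S_k)_{k\in[n]\cup\{n+i\}}\cup(S_{n+j})_{j\in\cH_0\setminus\{i\}}$. Their joint law is therefore the law of conformal null $p$-values with $n+1$ calibration points, which is free of the alternatives. One option is for $D$ to also take the supremum over this conformal law. Alternatively, and this is the step I expect to be the main obstacle, one shows that the conformal counts are dominated by the uniform case. Conditionally on the calibration sample, the counts $N_I$ are binomial with success probability equal to the empirical calibration mass of $I$, and mixing over the calibration sample should only help via a convexity or Jensen argument. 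Because the conformal $p$-values lie on the grid $\{k/(n+1)\}$, intervals have to be handled carefully, much as in Remark~\ref{rmk:conformal+1}. If a clean stochastic comparison fails, I would fall back on defining $D$ as the maximum of the two constants, uniform and conformal, and note that both tend to $1$ as $m$ grows.
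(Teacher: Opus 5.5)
Your overall route is the paper's: verify Assumption~\ref{ass:capping} (open intervals inside $[\kappa,1]$ never contain a $p$-value $\le\kappa$), apply Theorem~\ref{th-gen}~(ii), discard the non-null counts, and identify the law of the resulting null-only statistic.

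\textbf{Gap: the constant $D$.} The procedure in the statement comes with a fixed constant, $D(m,\epsilon,\underline{\pi}_0)=\max_{\underline{\pi}_0 m\le s\le m} d(s,\epsilon)$ from \eqref{equ:gIMS}, and you never show that this $D$ works. Instead you redefine $D$ as the smallest constant that makes your target inequality true. That makes the conclusion tautological, and it concerns a different procedure. The missing verification is a case split on $m_0$:
\begin{itemize}
\item If $m_0\le \underline{\pi}_0 m$, the floor $1/\hat{\pi}_0\le 1/\underline{\pi}_0$ gives $\FDR\le \alpha m_0/(m\underline{\pi}_0)\le\alpha$.
\item If $m_0\ge\underline{\pi}_0 m$, bound the minimum in your target by its second term. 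The vector $(\tilde p^{0,i}_j)_{j\in\cH_0}$, with $\tilde p^{0,i}_i=0$, has law $Q_{0,m_0}$. Your supremum over open intervals of length at least $\epsilon$ is attained at endpoints in the data together with $\{0,1\}$. So that term has expectation exactly $d(m_0,\epsilon)/D$, which is at most $1$ because $m_0$ lies in the range over which $D$ is a maximum.
\end{itemize}
You already have both ingredients (you note that the floor handles small $m_0$); they just need to be tied to the actual definition of $D$.

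\textbf{The conformal case.} The step you flag as the main obstacle disappears. In the conformal setting $Q_{0,s}$ is by definition the law of $(0,(\tilde p^{0,1}_j)_j)$ under i.i.d.\ uniform scores. Exchangeability and the absence of ties then give $(\tilde p^{0,i}_j)_{j\in\cH_0}\sim Q_{0,m_0}$ directly, and the argument is word for word the independent one. This is your first option, and it is what the paper does (so $D$ depends on $n$).

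You should abandon the alternative you sketch, which tries to dominate the conformal statistic by the uniform one via a Jensen argument over the calibration sample. Conditioning on the calibration sample and then mixing makes the counts $N_I$ more dispersed, which works against you rather than for you. Indeed, the paper's Monte Carlo evaluation (Figure~\ref{fig:adjustment.constant}) shows $d(s,\epsilon)$ is larger for conformal than for uniform $p$-values. So the domination you hope for fails, and that route cannot justify a uniform-calibrated $D$.
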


The proof is provided in Section~\ref{sec:proofthcontrolIMS}. 

In $\ABH_\alpha(\hat{\pi}^{\IMS}_{0,\epsilon,\underline{\pi}_0,\kappa},\kappa)$, there is a trade-off when choosing $\kappa$ in $\ABH_\alpha(\hat{\pi}^{\IMS}_{0,\epsilon,\underline{\pi}_0,\kappa},\kappa)$: the capping $\kappa$ should be small enough to ensure a good $\pi_0$-estimation, while $\kappa$ should be large enough to not deteriorate too much the threshold. We propose below an automatic (data-driven) choice of $\kappa$.

\begin{definition}\label{def:IMSkappahat}
   The Interval-Minimum-Storey procedure with data-driven $\hat{\kappa}$ (IMS-$\hat{\kappa}$) of parameters $\epsilon,\underline{\pi}_0\in (0,1)$ and of level $\alpha\in (0,1)$ is the adaptive BH procedure $\ABH_\alpha(\hat{\pi}^{\IMS}_{0,\epsilon,\underline{\pi}_0,\hat{\kappa}},\hat{\kappa})$ with capping $\hat{\kappa}$
   given by 
   \begin{equation}
    \label{kappahat}
\hat{\kappa} := \sup\{\kappa \in [0,1-\epsilon]\::\: \hat{F}_m(\kappa)\geq \kappa  \hat{\pi}^{\IMS}_{0,\epsilon,\underline{\pi}_0,\kappa}/\alpha\},
\end{equation}
 for which $\hat{F}_m(t)=m^{-1}\sum_{i=1}^m \ind{p_i\leq t}$ is the $p$-value family eCDF and $\hat{\pi}^{\IMS}_{0,\epsilon,\underline{\pi}_0,\kappa}$ is defined by \eqref{equ:pi0hatIMS}.
\end{definition}
Key properties for $\hat{\kappa}$ are gathered in Lemma~\ref{lem:kappahat}. 
In words, the procedure IMS-$\hat{\kappa}$ has a rejection set containing all the rejection sets of the $\kappa$-based IMS procedures $\{\ABH_\alpha(\hat{\pi}^{\IMS}_{0,\epsilon,\underline{\pi}_0,\kappa},\kappa),\kappa\in [0,1-\epsilon]\}$, that is,
\begin{equation}
    \label{equkappahatthebest}
    \ABH_\alpha(\hat{\pi}^{\IMS}_{0,\epsilon,\underline{\pi}_0,\hat{\kappa}},\hat{\kappa})=\bigcup_{\kappa\in [0,1-\epsilon]}\ABH_\alpha(\hat{\pi}^{\IMS}_{0,\epsilon,\underline{\pi}_0,\kappa},\kappa),
\end{equation}
where the procedures in \eqref{equkappahatthebest} are identified with their rejection sets.
Also, $\ABH_\alpha(\hat{\pi}^{\IMS}_{0,\epsilon,\underline{\pi}_0,\hat{\kappa}},\hat{\kappa})$ reduces to the procedure rejecting $p$-values below $\hat{\kappa}$.
In addition, the following result shows that the FDR control is maintained.

\begin{figure}[tbp]
        \centering
        \begin{minipage}{0.32\textwidth}
                \centering
                \includegraphics[clip, trim = 0cm 0cm 0cm 0cm, width = 1\textwidth]{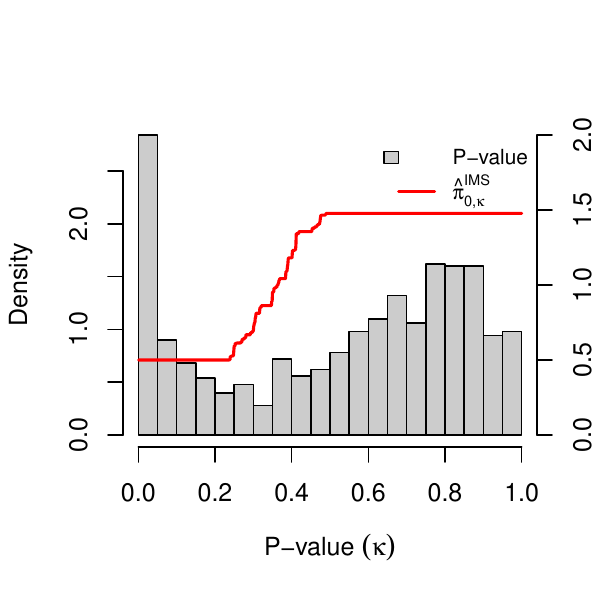}
        \end{minipage}\quad
                       \begin{minipage}{0.32\textwidth}
                \centering
                \includegraphics[clip, trim = 0cm 0cm 0cm 0cm, width = 1\textwidth]{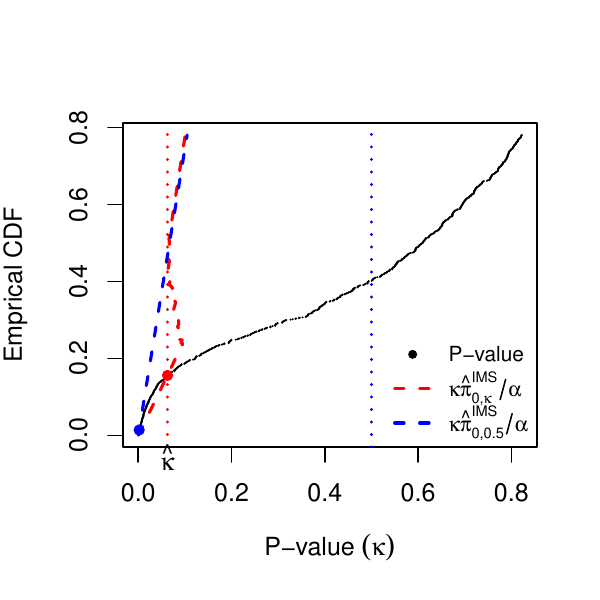}
        \end{minipage}
        \begin{minipage}{0.32\textwidth}
                \centering
                \includegraphics[clip, trim = 0cm 0cm 0cm 0cm, width = 1\textwidth]{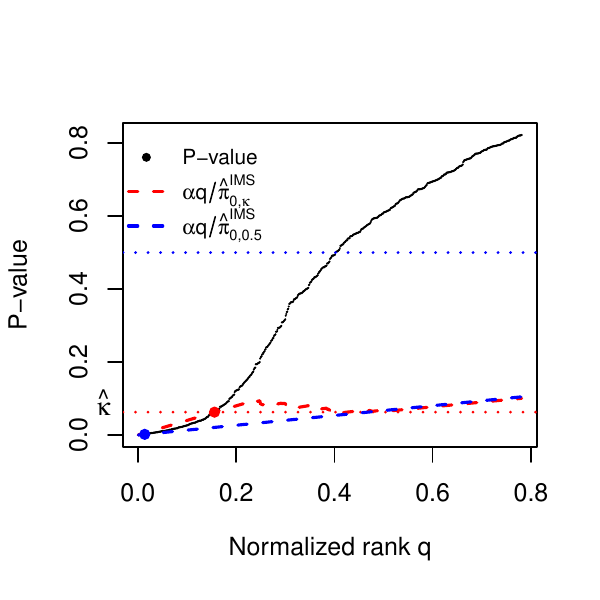}
        \end{minipage}
            \caption{Visualization of IMS with $\hat \kappa$. 
            The left panel shows the empirical distribution of the $p$-values together with $\hat{\pi}_{0,\epsilon,\underline{\pi}_0,\kappa}^{\mathrm{IMS}}$ nondecreasing in $\kappa$.
            The middle panel plots the empirical CDF against the $p$-value level or $\kappa$.
            It shows that $\hat{\kappa}$ is the largest value for which the empirical CDF at $\kappa$ exceeds $\kappa\hat{\pi}_{0,\epsilon,\underline{\pi}_0,\kappa}^{\mathrm{IMS}}/\alpha$.
            The red dot marks the rejection threshold $\hat{\kappa}$, which is larger than its blue
            counterpart under the fixed choice $\kappa=0.5$.
            The right panel is obtained by flipping the x and y axes of the middle panel,
            thereby plotting the ordered $p$-values against the normalized rank $q$, i.e., the rank of p-value divided by $m$. The red
            dashed curve denotes
            $\alpha q / \hat{\pi}_{0,\epsilon,\underline{\pi}_0,\kappa}^{\mathrm{IMS}}$ with $q = \hat F_m(\kappa)$,
            while the blue dashed curve corresponds to
            $\alpha q / \hat{\pi}_{0,\epsilon,\underline{\pi}_0,0.5}^{\mathrm{IMS}}$.}
        \label{fig:visualization.IMS.kappa}
\end{figure}

\begin{theorem}\label{thcontrolIMSkappahat}
   For any choice of parameters $\epsilon,\underline{\pi}_0\in (0,1)$, the IMS-$\hat{\kappa}$ procedure taken at level $\alpha\in (0,1)$ controls the FDR at level $\alpha$, that is,  $\FDR(\ABH_\alpha(\hat{\pi}^{\IMS}_{0,\epsilon,\underline{\pi}_0,\hat{\kappa}},\hat{\kappa}))\leq \alpha$ both in the conformal setting and in the independent setting for which the null $p$-values are additionally assumed marginally uniform. 
\end{theorem}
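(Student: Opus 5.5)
Because $\hat{\kappa}$ is data-driven, Theorem~\ref{th-gen} (ii) does not apply directly. The estimator $\hat{\pi}^{\IMS}_{0,\epsilon,\underline{\pi}_0,\hat{\kappa}}$ and the cap $\hat{\kappa}$ both depend on every $p$-value, so Assumption~\ref{ass:capping} fails. I would instead use the characterization stated after Definition~\ref{def:IMSkappahat}: the procedure rejects exactly $\{i: p_i\le \hat{\kappa}\}$, and it has $\hat{k}=m\hat{F}_m(\hat{\kappa})\ge m\hat{\kappa}\hat{\pi}^{\IMS}_{0,\hat{\kappa}}/\alpha$ (dropping the fixed parameters $\epsilon,\underline{\pi}_0$ from the notation). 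This gives
\[
\FDR\le \sum_{i\in\cH_0}\E\Big[\frac{\alpha\,\ind{p_i\le \hat{\kappa}}}{m\,\hat{\kappa}\,\hat{\pi}^{\IMS}_{0,\hat{\kappa}}}\Big].
\]
Here I use the supremum property of $\hat{\kappa}$ from Lemma~\ref{lem:kappahat} (presumably attained, or approached via right-continuity of $\hat F_m$).

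The key step is a leave-one-out / stopping-time argument applied to each null $i$. I would verify that on the event $\{p_i\le\hat{\kappa}\}$, the pair $(\hat{\kappa},\hat{\pi}^{\IMS}_{0,\hat{\kappa}})$ is unchanged when $\mbf{p}$ is replaced by $\tilde{\mbf{p}}^{0,i}$. The argument runs as follows. For $\kappa\ge p_i$, the estimator $\hat{\pi}^{\IMS}_{0,\kappa}$ only counts $p$-values in intervals inside $[\kappa,1]$. It therefore does not see $p_i$, and in the independent case it is unchanged when $p_i$ is set to $0$. Similarly, $\hat{F}_m(\kappa)$ is unchanged for $\kappa\ge p_i$. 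Hence the set in \eqref{kappahat} restricted to $[p_i,1-\epsilon]$ is the same, and its supremum $\hat{\kappa}\ge p_i$ coincides with $\tilde{\kappa}_i:=\hat{\kappa}(\tilde{\mbf{p}}^{0,i})$. Conversely, $p_i\le\tilde{\kappa}_i$ implies $p_i\le\hat{\kappa}$ by the same reasoning, so $\{p_i\le\hat{\kappa}\}=\{p_i\le\tilde{\kappa}_i\}$. In the conformal case I must also check that the modification \eqref{equ:configconf} of the other $p$-values leaves $\hat{F}_m(\kappa)$ and the interval counts unchanged for $\kappa\ge p_i$. This should follow because $\tilde{p}^{0,i}_j$ differs from $p_j$ only when $S_{n+i}\ge S_{n+j}$, that is, only for $p_j\ge p_i$, which corresponds to a shift on the $(n+1)^{-1}$ grid. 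Handling this is exactly the role of the conformal modification in Theorem~\ref{th-gen}, and I expect it to be the main obstacle. It may require working with the configuration-level description and the grid conventions rather than treating the modification as a black box.

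Given this identity, each term becomes $\E\big[\alpha\,\ind{p_i\le \tilde{\kappa}_i}/(m\tilde{\kappa}_i\tilde{\pi}_i)\big]$, where $\tilde{\kappa}_i$ and $\tilde{\pi}_i$ are functions of $\tilde{\mbf{p}}^{0,i}$. In the independent case, $\tilde{\mbf{p}}^{0,i}$ is independent of $p_i$, and conditioning together with uniformity gives $\P(p_i\le\tilde\kappa_i\mid\tilde{\mbf{p}}^{0,i})=\tilde\kappa_i$. In the conformal case, exchangeability of the calibration scores with the null score conditionally on the configuration yields the same bound. In both cases the term is at most $(\alpha/m)\E[1/\hat{\pi}^{\IMS}_{0,\tilde{\kappa}_i}(\tilde{\mbf{p}}^{0,i})]$.

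The final step bounds $\sum_{i\in\cH_0}\E[1/\hat{\pi}^{\IMS}_{0,\tilde\kappa_i}(\tilde{\mbf{p}}^{0,i})]\le m$. Since $\hat{\pi}^{\IMS}_{0,\kappa}$ is nondecreasing in $\kappa$, because the infimum is over a shrinking family, I have $\hat{\pi}^{\IMS}_{0,\tilde\kappa_i}\ge\hat{\pi}^{\IMS}_{0,0}$ evaluated at $\tilde{\mbf{p}}^{0,i}$, whose infimum is over all intervals in $[0,1]$. I would then invoke the same bound used to prove Theorem~\ref{thcontrolIMS}: the normalizing factor $D(m,\epsilon,\underline{\pi}_0)$ in \eqref{equ:gIMS} is defined so that the uncapped ($\kappa=0$) estimator satisfies $\sum_{i\in\cH_0}\E[1/\hat{\pi}^{\IMS}_{0,0}(\tilde{\mbf{p}}^{0,i})]\le m$ uniformly over configurations. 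Summing the terms gives $\FDR\le\alpha$.
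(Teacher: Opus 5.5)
Your plan is essentially the paper's proof, step by step:
\begin{itemize}
\item bound the FDR using that the IMS-$\hat{\kappa}$ threshold is $\hat{\kappa}$ and that $\hat{k}\geq m\hat{\kappa}\,\hat{\pi}^{\IMS}_{0,\epsilon,\underline{\pi}_0,\hat{\kappa}}/\alpha$ (Lemma~\ref{lem:kappahat} (ii)--(iii));
\item show that on $\{p_i\leq\hat{\kappa}\}$ the pair $(\hat{\kappa},\hat{\pi}^{\IMS}_{0,\epsilon,\underline{\pi}_0,\hat{\kappa}})$ is unchanged when $\mbf{p}$ is replaced by $\tilde{\mbf{p}}^{0,i}$ (this is Lemma~\ref{lem:kappahat2});
\item condition on $\tilde{\mbf{p}}^{0,i}$, using its independence from $p_i$;
\item conclude with $\hat{\pi}^{\IMS}_{0,\epsilon,\underline{\pi}_0,\kappa}\geq\hat{\pi}^{\IMS}_{0,\epsilon,\underline{\pi}_0,0}$ and the $\underline{\pi}_0$/$D(m,\epsilon,\underline{\pi}_0)$ bound from Theorem~\ref{thcontrolIMS}.
\end{itemize}
Two small points. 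Attainment of the supremum in \eqref{kappahat} comes from left-continuity of $\kappa\mapsto\hat{\pi}^{\IMS}_{0,\epsilon,\underline{\pi}_0,\kappa}$ and monotonicity of $\hat F_m$, not from right-continuity of $\hat F_m$. Also, the paper writes the denominator as $(\alpha/(m\hat{\pi}))\vee\hat{\kappa}$ to avoid a $0/0$ when $\hat{\kappa}=0$; in your version you need the convention that the term vanishes together with the indicator.

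The one step that is wrong as written is your conformal check: the direction is reversed. From \eqref{equ-confpvalues} and \eqref{equ:configconf}, for $j\neq i$,
\[
\tilde{p}^{0,i}_j=p_j-(n+1)^{-1}\ind{S_{n+j}>S_{n+i}}.
\]
So $p_j$ is lowered only when $S_{n+j}>S_{n+i}$, which forces $p_j\leq p_i$. Conversely, if $p_j>p_i$ then $S_{n+j}<S_{n+i}$ and $\tilde{p}^{0,i}_j=p_j$.

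Had the shift hit the $p$-values with $p_j\geq p_i$, as you state, then $\hat F_m(\kappa)$ and the interval counts for $\kappa\geq p_i$ would change, and the invariance would fail. Your sentence therefore undermines the very claim it is meant to support.

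With the correct direction, $\mbf{p}'=\tilde{\mbf{p}}^{0,i}$ satisfies $p'_j\leq p_j$ when $p_j\leq p_i$ and $p'_j=p_j$ when $p_j>p_i$. This is exactly the hypothesis of Lemma~\ref{lem:kappahat2}. Your independent-case reasoning then carries over verbatim: for $\kappa\geq p_i$, $\hat F_m(\kappa)$ is unchanged, and open intervals inside $[\kappa,1]$ only contain $p$-values exceeding $p_i$. No grid-level bookkeeping is needed, so the anticipated ``main obstacle'' disappears.
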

Theorem~\ref{thcontrolIMSkappahat} is proved in Section~\ref{proof:thcontrolIMSkappahat}. 
In Section~\ref{sec:optimality} (Corollary~\ref{corgenunif}), we show that the IMS-$\hat{\kappa}$ procedure satisfies an optimality property both in the  independent setting with a uniform null distribution and in the conformal setting. Again, the choice of the parameters $\epsilon,\underline{\pi}_0\in (0,1)$ is free for FDR control, but should be chosen wisely to ensure an optimized power. For instance, our power analysis suggests to take $\epsilon \propto m^{-1/4}$ (independence) or $\epsilon \propto m^{-1/8}$ (conformal), up to a log factor, and  $\underline{\pi}_0$ slowly tending to zero.

\subsection{Normalizing factors}\label{sec:ghcomputing}

To introduce $ C(m,\epsilon,\underline{\pi}_0)$ and $D(m,\epsilon,\underline{\pi}_0)$, let us first consider a least favorable distribution $Q_{0,s}$ for a family of $s\in [m]$ null $p$-values $\mbf{q}=(q_i)_{i\in [s]}$ defined as
\begin{itemize}
    \item independence setting:  $Q_{0,s} = \delta_0\times U(0,1)^{\otimes s-1}$, that is, $q_1=0$ and $(q_i)_{i\in [2,s]}$ are iid $U(0,1)$;
    \item conformal setting: $Q_{0,s}$ is the distribution of $(0,(\tilde{p}^{0,1}_j)_{j\in [s]})$ given by  \eqref{equ:configconf} when $\cH_0=[s]$ and when all scores $S_1,\dots,S_{n+s}$ are iid $U(0,1)$. 
\end{itemize}
Then we define in each case the normalizing factors
\begin{align}
    C(m,\epsilon,\underline{\pi}_0)&:= \max_{\underline{\pi}_0 m \leq s\leq m} c(s,\epsilon), \:\:\:c(s,\epsilon):= 
      \mathbb{E}_{\mbf{q}\sim Q_{0,s}}\left[1\vee \max_{\lambda\in \{q_i, i\in [s]\}\cap (0,1-\epsilon)}
      \frac{s(1-\lambda)}{1 \vee \sum_{i=1}^{s} \mathbf{1}\{q_i >  \lambda\}}\right]     
      ;\label{equ:gMS}\\
      D(m,\epsilon,\underline{\pi}_0)&:= \max_{\underline{\pi}_0 m \leq s\leq m} d(s,\epsilon),\:\:\: d(s,\epsilon):=
      \mathbb{E}_{\mbf{q}\sim  Q_{0,s}}\left[\max_{\substack{I=(a,b), 0\leq a\leq b-\epsilon\\ a,b\in \{q_i, i\in [s]\}\cup\{0,1\}}}
      \frac{s |I|}{1 \vee \sum_{i=1}^{s} \mathbf{1}\{q_i \in I\}}\right].\label{equ:gIMS}
\end{align}
In the conformal case, the quantities $C(m,\epsilon,\underline{\pi}_0)$ and $D(m,\epsilon,\underline{\pi}_0)$ also depend on $n$, but we remove it in the notation for short.

Quantities $c(s,\epsilon)$ in \eqref{equ:gMS} and $d(s,\epsilon)$ in \eqref{equ:gIMS} can be easily approximated by using a naive Monte-Carlo scheme. We provide a numerical evaluation of $c(s,\epsilon)$ and $d(s,\epsilon)$ over a range of $s$ and $n$ in Figure~\ref{fig:adjustment.constant} in the appendix, see Section~\ref{appe:sec:constant}.
Nevertheless, evaluating $C(s,\epsilon,\underline{\pi}_0)$ and $D(s,\epsilon,\underline{\pi}_0)$ might be costly because $c(s,\epsilon)$ and $d(s,\epsilon)$ should be evaluated for all $s\in [\underline{\pi}_0 m,m]$, which increases rapidly the computation burden when $m$ gets large (cost quadratic in $m$ times the number of Monte Carlo simulations). 
Fortunately,  $c(s,\epsilon)$ and $d(s,\epsilon)$ are provably decreasing when $s$ is large enough. To see this, 
let for $\epsilon\in (0,1)$,
$$
N(\epsilon)= 1+\bigg( \frac{2\log(1/\epsilon)}{\log(1/(1-\epsilon))}\bigg) \vee \bigg(2 + (1/8)\log(1/(1-\epsilon))\bigg) ,
$$
which is such that $N(\epsilon)\sim 2 \epsilon^{-1}\log(1/\epsilon)$ and $N(e^{-\epsilon^2/8}) \sim (1/4) \log(1/\epsilon)$ when $\epsilon=o(1)$.
\begin{proposition}\label{prop:Neps}
    For any $\epsilon,\underline{\pi}_0\in (0,1)$, we have 
    \begin{equation}
        \label{equ-shortcut}
            \mbox{$C(m,\epsilon,\underline{\pi}_0)=c(\lceil m\underline{\pi}_0 \rceil,\epsilon)$ and $D(m,\epsilon,\underline{\pi}_0)=d(\lceil m\underline{\pi}_0 \rceil,\epsilon)$}
    \end{equation}
 in each of the two following cases:
    \begin{itemize}
        \item in the independent setting, if $m\geq [N(\epsilon)\vee (N(\epsilon')\vee (2/\epsilon)]/\underline{\pi}_0$.    In particular, this occurs for $m$ large enough if $\epsilon$, $\underline{\pi}_0$ are allowed to depend on $m$ in such a way  that $\epsilon\propto m^{-\beta} (\log m)^{\gamma} $ and $\underline{\pi}_0\propto 1/(\log m)^{\gamma}$ for $\beta\in (0,1)$, $\gamma\geq 0$;
        \item in the conformal setting, if $\epsilon\geq \sqrt{\frac{16\log(3(m+1)^4)+ 8 \log m}{ \underline{\pi}_0(m\wedge (n+1))}}
        $.    In particular, this occurs for $(n+1)\wedge m$ large enough if $\epsilon$, $\underline{\pi}_0$ are allowed to depend on $(n,m)$ in such a way  that $1\gg \epsilon \gg ((n+1)\wedge m)^{-1/2} (\log m)^{(1+\gamma)/2} $ and $\underline{\pi}_0\propto 1/(\log m)^{\gamma}$ for $\gamma\geq 0$.
    \end{itemize} 
 \end{proposition}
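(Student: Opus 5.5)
Since $C(m,\epsilon,\underline{\pi}_0)=\max_{\lceil\underline{\pi}_0 m\rceil\le s\le m}c(s,\epsilon)$ (and likewise for $D$), the identity \eqref{equ-shortcut} follows once we show that $s\mapsto c(s,\epsilon)$ and $s\mapsto d(s,\epsilon)$ are nonincreasing on $s\ge s_0:=\lceil \underline{\pi}_0 m\rceil$. The condition on $m$ guarantees that $s_0$ lies above the relevant thresholds, $N(\epsilon)$ etc.\ in the independent case and the conformal lower bound on $\epsilon\sqrt{s}$ in the other case. The plan is therefore to prove, for $s$ above these thresholds, the one-step comparison $c(s+1,\epsilon)\le c(s,\epsilon)$, and the same for $d$.

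\textbf{Independent setting.} I would write
\[
c(s,\epsilon)=\E[1\vee M_s], \qquad M_s=\max_\lambda \frac{s(1-\lambda)}{1\vee N_s(\lambda)},
\]
where $N_s(\lambda)=\#\{i\in[2,s]: q_i>\lambda\}$ counts uniform points. The natural route is a martingale/reverse-martingale structure. The process $\lambda\mapsto N_s(\lambda)/(s-1)$ is, after normalization by $(1-\lambda)$, a backward martingale in $\lambda$. Moreover, the family $\{(s-1)(1-\lambda)/N_s(\lambda)\}$ indexed by $s$ is related through exchangeability: conditionally on the empirical measure of $s$ uniforms, a random subsample of size $s-1$ gives the conditional expectation of the count ratio. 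I would couple the $(s+1)$-sample and the $s$-sample by deleting a uniformly random point. Then $\E[N_s(\lambda)\mid \text{$(s+1)$-sample}]=\frac{s-1}{s}N_{s+1}(\lambda)$. Since $x\mapsto 1/x$ is convex and the max of convex functionals is convex, Jensen's inequality pushes in the direction $c(s,\epsilon)\ge$ a version of $c(s+1,\epsilon)$, up to the prefactor $s/(s-1)$ versus $(s+1)/s$ and the boundary effects coming from the $1\vee$ truncations and from the grid $\{q_i\}$ changing. Quantifying these error terms is where $N(\epsilon)$ enters. The event that $N_s(\lambda)$ is small for some $\lambda\le 1-\epsilon$ has probability at most about $(1-\epsilon)^{s}$ times a polynomial factor. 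The condition $s\ge 2\log(1/\epsilon)/\log(1/(1-\epsilon))$ makes this exponentially small, namely $\le\epsilon^2$, so that it is dominated by the strictly positive gain $\asymp 1/s^2$ coming from the ratio $\frac{s}{s-1}$ versus $\frac{s+1}{s}$ together with the strict Jensen gap. I would handle $d(s,\epsilon)$ identically, intervals replacing half-lines, using $|I|\ge\epsilon$ to control low counts. The additional requirement $s\ge 2/\epsilon$ ensures that each admissible interval contains, in expectation, at least two points.

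\textbf{Conformal setting.} Here $Q_{0,s}$ comes from $n+s$ iid uniform scores, and the same deletion coupling applies to the test scores, conditionally on the calibration scores. The dependence through the calibration sample is controlled by a DKW-type concentration. With probability $1-1/(3(m+1)^4)$-ish, all empirical counts on intervals of length $\ge\epsilon$ are within $\sqrt{\log(\cdot)/(m\wedge(n+1))}$ of their means. Under the stated lower bound on $\epsilon$, this makes every ratio bounded by $2$ on the good event, and the bad event contributes at most $O(m^{-3})$. That is smaller than the monotone gain, which gives monotonicity for $s\ge s_0$.

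\textbf{Main obstacle.} The hard step is making the Jensen/exchangeability comparison rigorous despite the data-dependent grid of $\lambda$ values and the truncations $1\vee\cdot$. I would first try to replace the data grid by the full continuum $\lambda\in[0,1-\epsilon]$, using \eqref{equ:simplifyinfMS} to show that the suprema coincide. I would then treat the truncation as a rare event bounded via binomial tails, which is exactly what produces the explicit constants in $N(\epsilon)$.
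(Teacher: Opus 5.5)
Your mechanism is the paper's. Deleting one of the $s$ uniform points at random and applying Jensen to the convex map $(N(\lambda))_\lambda\mapsto\sup_\lambda s(1-\lambda)/N(\lambda)$ is exactly the paper's combination of:
\begin{itemize}
\item the leave-one-out identity;
\item $s^2/\sum_i a_i\le\sum_i 1/a_i$ and $\sup\sum\le\sum\sup$;
\item exchangeability.
\end{itemize}
Your gain $s^2/(s-1)$ versus $s+1$ is the paper's $s^2$ versus $(s+1)(s-1)$. The backward-martingale remark plays no role. The data-dependent grid is not the real obstacle either, since the paper simply takes the sup over the fixed range $\lambda\in(0,1-\epsilon]$.

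\textbf{The step that fails is the error accounting meant to produce the threshold.} To apply Jensen to $1/(1\vee N_s(\lambda))$ you need $\E[1\vee N_s(\lambda)\mid\cdot\,]=\frac{s-1}{s}N_{s+1}(\lambda)$. This holds only when $N_{s+1}(\lambda)\ge 2$ for all admissible $\lambda$, i.e.\ when at least two of the $s$ uniforms lie in $[1-\epsilon,1]$. Off this event the $(s+1)$-statistic can be as large as $s+1$. So the loss is only bounded by $(s+1)\P(\mathrm{Bin}(s,\epsilon)\le 1)\le (s+1)^2(1-\epsilon)^{s-1}$. The gain is $\frac{1}{s^2-1}\E[\sup_\lambda (s+1)(1-\lambda)/N_{s+1}(\lambda)]$, which the paper bounds below by $\epsilon/(s(s-1))$ (or $1/(s(s-1))$ if one uses $\lambda\downarrow 0$). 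One therefore needs $(s+1)^4(1-\epsilon)^{s-1}\lesssim\epsilon$, which is the paper's set $\mathcal{M}(\epsilon)$.

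Your claim that $s\ge 2\log(1/\epsilon)/\log(1/(1-\epsilon))$ suffices is false. At that value $(1-\epsilon)^s=\epsilon^2$. For small $\epsilon$ this already exceeds your gain $1/s^2\approx\epsilon^2/(4\log^2(1/\epsilon))$, even before the factor $(s+1)^2$ is restored. Concretely, for $\epsilon=0.1$ your threshold is about $45$, whereas $(s+1)^4(0.9)^{s-1}\le 0.1$ holds only for $s\ge 230$. The paper's one-line passage from $\mathcal{M}(\epsilon)$ to the explicit $N(\epsilon)$ has the same defect. What this route genuinely yields is monotonicity for $s\in\mathcal{M}(\epsilon)$: a threshold of order $\epsilon^{-1}\log(1/\epsilon)$, but with a constant larger than $2$.

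\textbf{The bad events for $d$ and for the conformal case are not specified correctly.} For $d$, ``identically'' hides the real difference. The bad event is that \emph{some} interval of length at least $\epsilon$ contains at most one point, and ``each admissible interval contains two points in expectation'' does not control it. You need a bound uniform over intervals. The paper uses DKW, which gives probability at most $2e^{-s\epsilon^2/8}$ once $s\ge 2/\epsilon$, hence the condition $2(s+1)^3e^{-s\epsilon^2/8}\le\epsilon$. A covering by intervals of length $\epsilon/2$ plus a union bound would also work.

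In the conformal case, the good event you need is not ``every ratio bounded by $2$''. It is again ``every admissible half-line or interval contains at least two of the modified $p$-values $\tilde p^{0,1}_j$''. The paper obtains this from the conformal DKW inequality transferred to these modified $p$-values (Lemma~\ref{lem:changetildeconf}). The discreteness adds a slack $2/(n+2)$, which forces $\epsilon((n+1)\wedge s)\ge 6$. Once the bad events are made explicit in this way and the comparison $(s+1)\P(\text{bad})\le\epsilon/(s(s-1))$ is imposed, your argument becomes the paper's.
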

Proposition~\ref{prop:Neps} is proved in Section~\ref{proofprop:Neps}. This result significantly reduces the computation time of these factors. To compute the normalizing factors in practice, a recommendation  is therefore to check if the conditions of Proposition~\ref{prop:Neps} are satisfied by the chosen $m,\epsilon,\underline{\pi}_0$ (and $n$): if yes, use the computational shortcut  \eqref{equ-shortcut}, if not (which means that $m$ is not too large),  use the original expressions \eqref{equ:gMS}-\eqref{equ:gIMS}. 
We however note that this recommendation might be a bit theoretically-guided and overcautious because $c(s,\epsilon)$ and $d(s,\epsilon)$ seems decreasing  in $s$ even for small $s$ on the ranges we empirically tried, see Figure~\ref{fig:adjustment.constant} in Section~\ref{appe:sec:constant}.

The next issue is to theoretically bound the factors $C(m,\epsilon,\underline{\pi}_0)$ and $D(m,\epsilon,\underline{\pi}_0)$  to ensure that the new introduced procedures are not too conservative. The next result shows that these factors are close to $1$ when $m$ (and $n$) grows.

\begin{proposition}
    \label{rateCD}
     The following holds when $m$ tends to infinity while $\epsilon$, $\underline{\pi}_0$ are allowed to depend on $m$ (and $n$):
     \begin{itemize}
     \item in the independent case, for $\epsilon \gg \sqrt{(\log m)/(m\underline{\pi}_0)}$, we have for $m$ large enough 
     $$C(m,\epsilon,\underline{\pi}_0)\vee D(m,\epsilon,\underline{\pi}_0)-1 \leq 9/(\sqrt{m\underline{\pi}_0} \:\epsilon);$$
       \item  in the conformal case, 
       for $1\gg \epsilon \gg \sqrt{(\log m)/((n\wedge m)\underline{\pi}_0)}$, $n\gg \log m$, $\underline{\pi}_0\gg 1/m$, 
       we have for $n\wedge m$ large enough 
       $$C(m,\epsilon,\underline{\pi}_0)\vee D(m,\epsilon,\underline{\pi}_0)-1 \leq 34  /(\sqrt{(n\wedge m)\underline{\pi}_0} \:\epsilon^2).$$
 \end{itemize}
\end{proposition}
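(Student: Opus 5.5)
Since $C=\max_{s\ge \underline{\pi}_0 m} c(s,\epsilon)$ and $D=\max_{s\ge \underline{\pi}_0 m} d(s,\epsilon)$, it suffices to bound $c(s,\epsilon)-1$ and $d(s,\epsilon)-1$ uniformly over $s\ge s_0:=\lceil \underline{\pi}_0 m\rceil$, by a quantity decreasing in $s$, and then evaluate it at $s_0$. Moreover, the maximum over intervals $I=(a,b)$ with $b=1$ dominates the one-sided Storey maximum, since $\{q_i>\lambda\}$ and $\{q_i\in(\lambda,1)\}$ coincide a.s. up to the endpoint. Hence $c(s,\epsilon)\le d(s,\epsilon)$, possibly after a harmless $1\vee$, and it is enough to treat $d$.

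The plan is a uniform deviation bound over intervals. Let $\hat G_s(I)=s^{-1}\sum_{i\le s}\ind{q_i\in I}$ under $Q_{0,s}$, and set $\Delta_s=\sup_{I}|\hat G_s(I)-|I||$ over all intervals. On the event $\{\Delta_s\le \delta\}$ with $\delta<\epsilon$, every interval with $|I|\ge\epsilon$ satisfies
$$\frac{s|I|}{1\vee s\hat G_s(I)}\le \frac{|I|}{|I|-\delta}\le \frac{\epsilon}{\epsilon-\delta}\le 1+\frac{2\delta}{\epsilon}$$
for $\delta\le \epsilon/2$. On the complement we use the crude bound $s|I|/(1\vee\cdot)\le s$. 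This gives
$$d(s,\epsilon)-1\le \E\big[(2\Delta_s/\epsilon)\wedge s\big] + s\,\P(\Delta_s>\epsilon/2).$$

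\textbf{Independent case.} The point $q_1=0$ perturbs $\hat G_s$ by at most $1/s$, and the remaining $s-1$ points are i.i.d. uniform. The interval discrepancy is at most twice the Kolmogorov--Smirnov distance. By DKW, $\P(\Delta_s>t+2/s)\le 2e^{-(s-1)t^2/2}$, which yields $\E\Delta_s\lesssim \sqrt{\pi/(2s)}\cdot 2$. With the tail term $s\,e^{-s\epsilon^2/8}$ made negligible by the assumption $\epsilon\gg\sqrt{\log m/(m\underline{\pi}_0)}$, we get $d-1\le (2/\epsilon)\E\Delta_s+o(1/(\sqrt s\epsilon))$. Tracking the constants (for instance $2\cdot 2\sqrt{\pi/2}\approx 5$ plus slack) should give the bound $9/(\sqrt{m\underline{\pi}_0}\,\epsilon)$.

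\textbf{Conformal case.} Conditionally on the calibration scores, the test $p$-values $\tilde p_j$ are i.i.d. draws from the discrete distribution whose CDF is $\hat F_{n}$-like: the empirical CDF of the $n+1$ calibration scores (including $S_{n+1}$), evaluated at a uniform point. I decompose $\Delta_s\le \Delta^{\text{test}}+\Delta^{\text{cal}}$. The first term is controlled by conditional DKW at rate $s^{-1/2}$. The second is the deviation of the conditional probability of an interval from its length, which is governed by the uniform empirical-process deviation of the $n+1$ calibration points, at rate $n^{-1/2}$ via DKW, plus a $1/(n+1)$ discretization. So $\Delta_s=O_P((n\wedge s)^{-1/2})$, with exponential tails under $n\gg\log m$.

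The main obstacle is the worse rate $\epsilon^{-2}$ in the conformal case, which signals that the direct ratio argument is not what the authors do. I expect it arises from a Lipschitz/inversion step: conformal $p$-values are ranks, so an interval $I$ in $p$-value space corresponds to a random interval in score space whose length error must be inverted, and dividing by $|I|\ge\epsilon$ twice occurs when bounding $|I|/(\hat G(I)-\text{err})$ with err itself relative. I will first try the simple decomposition above. If that argument fails because the union bound over the $O((n+1)^2)$ discrete intervals inflates the tail term, I will instead accept the weaker $\epsilon^{-2}$ rate by using Hoeffding on each fixed grid interval and a union bound with $\log(3(m+1)^4)$ as in Proposition~\ref{prop:Neps}, and fit the constant 34.
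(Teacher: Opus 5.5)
Your argument is correct. In the independent case it is essentially the paper's proof of Proposition~\ref{prop:boundingCD}:
\begin{itemize}
\item both bound the interval discrepancy by twice the Kolmogorov distance of the $s-1$ uniforms ($q_1=0$ lies in no open $I$), apply DKW, and use that the ratio exceeds $1+y$ only if the deviation exceeds $\epsilon y/(1+y)$;
\item the paper integrates this tail in a layer-cake formula split at a level $c_s$, whereas you linearize $\epsilon/(\epsilon-\delta)\le 1+2\delta/\epsilon$ on $\{\Delta_s\le\epsilon/2\}$ and pay $s\,\P(\Delta_s>\epsilon/2)$ elsewhere.
\end{itemize}
Your leading constant $4\sqrt{\pi/2}\approx 5$ fits under $9$. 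The tail term is negligible uniformly in $s\ge s_0$, since $s\epsilon^2\ge m\underline{\pi}_0\epsilon^2\gg\log m$. Your reduction $c\le d$ is also fine; the paper simply bounds $c\vee d$ by the same supremum.

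In the conformal case your first route differs from the paper's, and it works, so keep it. The paper applies the exchangeable conformal DKW inequality, transferred to $(\tilde p^{0,1}_j)_j$ via Lemma~\ref{lem:changetildeconf}. You instead condition on the $n+1$ calibration scores; then the $q_j$, $j\ge2$, are i.i.d.\ with CDF
$$G(t)=V_{(\lfloor (n+1)t\rfloor+1)}\quad (t<1),\qquad V_k=1-S_k,\quad V_{(1)}<\dots<V_{(n+1)}.$$
Three facts close the argument:
\begin{itemize}
\item DKW holds for such a discrete $G$, so it applies conditionally to the test points.
\item Because the empirical CDF of the $V_k$ equals $k/(n+1)$ at $V_{(k)}$, you get $\sup_t|G(t)-t|\le \sup_u|\hat F^V_{n+1}(u)-u|+1/(n+1)$, and DKW applies to the calibration sample.
\item Every interval is controlled simultaneously through these two sup-norms, so no union bound over grid intervals arises.
\end{itemize}
The result is
$$d(s,\epsilon)-1\le (1+o(1))\,8\sqrt{\pi/2}\big/\big(\epsilon\sqrt{(n+1)\wedge(s-1)}\big).$$
The tail term $O\big(s\,e^{-c((n+1)\wedge(s-1))\epsilon^2}\big)$ is negligible because $(n\wedge s)\epsilon^2\ge (n\wedge m)\underline{\pi}_0\epsilon^2\gg\log m$. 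This is an $\epsilon^{-1}$ rate, which implies the stated $34/(\sqrt{(n\wedge m)\underline{\pi}_0}\,\epsilon^2)$ since $\epsilon\le1$. The paper's route reuses a ready-made exchangeable conformal DKW; yours is more elementary and sharper here.

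Your guess about the origin of $\epsilon^{-2}$ is wrong: it is not an inversion step. The transferred conformal DKW bound carries a polynomial prefactor $1+2\sqrt{2\pi}\,x\,\tau_{n+1,s}/(n+s+1)^{1/2}$. The paper bounds that prefactor through the layer-cake variable rather than the actual deviation $\epsilon x/(1+x)$, and integrates it against the Gaussian tail. This produces a term of order $(c+1)^2/(\tau_{n+1,s}^{1/2}\epsilon^2)$, which is where $\epsilon^{-2}$ comes from. Your fallback is not only unnecessary but problematic. A union bound over the $O(n^2)$ grid intervals brings in a $\log n$ factor that the hypotheses do not control when $n\gg m$. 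Unconditional Hoeffding is also unavailable, because the $q_j$ share the calibration sample.
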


Proposition~\ref{rateCD} is proved in Section~\ref{proof:rateCD}, for which more accurate non-asymptotic bounds are provided (see Proposition~\ref{prop:boundingCD}).

Our numerical experiments suggest that the approximation factors are  larger in the conformal case (Figure~\ref{fig:adjustment.constant}). This is corroborated by our bounds, for which an additional $\epsilon$ appears in the denominator. While this is not an issue when choosing a constant value for $\epsilon$, the convergence rate is modified when considering $\epsilon$ tending to zero, which is typically the case to obtain an optimal power, see Section~\ref{sec:optimality}.

\section{Optimality with convergence rates}\label{sec:optimality}

\subsection{Models}

In this section, we show that our procedures are optimal  over specific parameter classes when $m$ gets large.
For this, let us consider the two following settings inspired from the two-group model \cite{ETST2001}:  
\begin{itemize}
    \item Independent two-group setting with parameters $(m,\pi_0,F_0,F_1)$: the $p$-values are generated from the hierarchical model where $H_i$, $i\in [m]$, are iid $\sim \mathcal{B}(1-\pi_0)$, and conditionally on $\mbf{H}=(H_i)_{i\in [m]}$, the $p$-values $\mbf{p}=(p_i)_{i\in [m]}$ are independently generated as $p_i\sim F_0$ if $H_i=0$ and $p_i\sim F_1$ if $H_i=1$. Unconditionally, the $p$-values are thus iid of common distribution given by the mixture $F=\pi_0 F_0 + \pi_1 F_1$;    
    \item Conformal two-group setting with parameters $(m,n,\pi_0,F_1)$: the scores $(S_i)_{i\in [n+m]}$ are generated as $(1-q_i)_{i\in [n+m]}$ for $(q_{i})_{i\in [n]}$ iid $\sim I$ and $(q_{n+i})_{i\in [m]}$ following the independent two-group setting with parameters $(m,\pi_0,I,F_1)$, where $I(x):=x\ind{x\in[0,1]}+\ind{x>1}$, $x\in [0,1]$, is the CDF of the uniform\footnote{Assuming that the scores are uniform under the null is not restrictive: any null score distribution (with a positive continuous density on $[0,1]$) can be described by changing $F_1$.} distribution on $(0,1)$. Then the conformal $p$-values are computed as usual according to \eqref{equ-confpvalues}. 
\end{itemize}
In both cases, we denote $\cH_0=\{i\in [m]\::\: H_i=0\}$ and $m_0=|\cH_0|$ (which are now random quantities) and we always suppose that $F_0$ and $F_1$ are continuous, so there are no ties in $p_i$ in the independent case, and no ties in the scores in the conformal case\footnote{The marginal distributions of the conformal $p$-values are not given by $F_0$ and $F_1$ in the conformal case: the conformal $p$-values \eqref{equ-confpvalues} have always discrete marginal distributions.}. Also we assume $F_0(x)\leq x$ for all $x\in [0,1]$ (super-uniformity under the null). This condition is denoted $F_0 \succeq U(0,1)$.
In this models, the distribution is indexed by the parameter $\theta=(\pi_0,F_0,F_1)$. 
To emphasize the data generation process parameter $\theta$, we denote the probabilities and expectations by $\P_\theta$ and $\E_\theta$ respectively throughout this section.

The results will be established with rates depending on $t=m$ (independent) or $t=(n,m)$ (conformal), with `$t$ large' meaning `$n\wedge m$ large' in the conformal case.

\subsection{Estimator class}\label{sec:estimclass}

The optimality  is established in this section for a class of $\pi_0$-estimators that satisfies two constraints. Given a parameter range $\Theta$ for $\theta=(\pi_0,F_0,F_1)$, the $\pi_0$-estimator $\tilde{\pi}_0$ is said 
\begin{itemize}
    \item[(i)] FDR-conservative over $\Theta$ after $\kappa$-capping (for some $\kappa\in(0,1]$) if: 
    \begin{equation}
        \label{equFDRconservative}
       \forall \alpha\in (0,1),\forall m\geq 2, \forall \theta\in \Theta, \: \FDR_\theta(\ABH_\alpha(\tilde{\pi}_0,\kappa))\leq \alpha;
    \end{equation}
      \item[(ii)] Concentrating over $\Theta$ if: for $t$ large enough,
 \begin{equation}
        \label{equconcentration}
  \forall \theta\in \Theta, 
\P_\theta(|\tilde{\pi}_0-\bar{\pi}_0(\theta)|\leq \eta_t(\theta))\geq 1-\delta_t(\theta),
    \end{equation}
    where $\bar{\pi}_0(\theta)>0$ (not necessarily equal to $\pi_0(\theta)$ and not depending on $t$) and $\eta_t(\theta)$, $\delta_t(\theta)$ are sequences converging to zero when $t$ tends to infinity ($\theta$ being fixed)\footnote{The convergence of $\tilde{\pi}_0$ does not necessarily imply the convergence of the corresponding adaptive BH's threshold, and can be strictly weaker in some settings.}. The convergence is with regard to $t=m$ in the independent case, $t=n\wedge m$ in the conformal case. 
\end{itemize}
  When the $\pi_0$-estimator $\tilde{\pi}_0$ satisfies (i) with $\kappa=1$, it is only said FDR-conservative over $\Theta$.

\subsection{A sufficient condition for optimality}

The quantity to target in our models is the identifiable part of $\pi_0$ which is given by\footnote{The maximum will be well defined in each of our considered cases.}
\begin{equation}\label{pi0stargen}
    \pi^*_0(\theta) := \max\{\pi_0\in (0,1]\: :\: \exists F^*_0,F^*_1 \mbox{ with $\theta^*=(\pi^*_0,F^*_0,F^*_1)\in \Theta$ and } F=F^*\},
\end{equation}
where $F=\pi_0 F_0 + (1-\pi_0)F_1 $ and $F^*= \pi^*_0 F^*_0 + (1-\pi^*_0)F^*_1$. This means that the configuration $\theta$ and $\theta^*$ are indistinguishable when observing the $p$-value family.

As Lemma~\ref{lemoptim} shows, $\pi^*_0(\theta)$ is optimal in the sense that any $\pi_0$-estimator satisfying \eqref{equFDRconservative} and \eqref{equconcentration} should concentrate around a value $\bar{\pi}_0(\theta)\geq \pi_0^*(\theta)$. 
As a result, any $\hat{\pi}_0$ estimator satisfying
\begin{equation}
        \label{sufficientcondition}
  \forall \theta\in \Theta, 
\P_\theta(\hat{\pi}_0\leq \pi_0^*(\theta)+\eta^*_t(\theta))\geq 1-\delta^*_t(\theta),
    \end{equation}
when $t$ is large enough, and for some rates $\eta^*_t(\theta)$ and $\delta^*_t(\theta)$ converging to zero when $t$ tends to infinity has an approximately maximum power over the considered $\pi_0$-estimators. 

\begin{theorem}
    \label{genthoptimal}
    Consider either the independent two-group setting or the conformal two-group setting. For any parameter space $\Theta$ for $(\pi_0,F_0,F_1)$, consider any $\pi_0$-estimator $\hat{\pi}_0$ satisfying \eqref{sufficientcondition}  with $\pi_0^*$ defined by \eqref{pi0stargen}. 
    Then the following holds: 
    \begin{itemize}
        \item[(i)] optimality without capping: 
for any  $\pi_0$-estimator $\tilde{\pi}_{0}$ that is FDR conservative  over $\Theta$ in the sense of \eqref{equFDRconservative} and  concentrating over $\Theta$  in the sense of \eqref{equconcentration}, we have  for all $\theta \in \Theta$ and for $t$ large enough
\begin{equation}\label{equoptimalnocapping}
    \sup_{\alpha\in (0,1]}\{\Pow_\theta(\ABH_{\alpha'}(\tilde{\pi}_0))-\Pow_\theta(\ABH_\alpha(\hat{\pi}_{0}))\}\leq \delta^*_t(\theta) + \delta_t(\theta),
\end{equation}
where $\alpha'=\alpha \big(1-\eta^*_t(\theta)/\pi_0^*(\theta)\big) (1-\eta_t(\theta)/(\pi_0^*(\theta)-\eta_t(\theta)))$.
        In addition, this applies to the oracle choice $\tilde{\pi}_0=\pi_{0}^*(\theta)$ with $ \eta_t(\theta)=\delta_t(\theta)=0$ and $\alpha'=\alpha \big(1-\eta^*_t(\theta)/\pi_0^*(\theta)\big) $.    
\item[(ii)] optimality with capping $\kappa\in (0,1)$:  
for any  $\pi_0$-estimator $\tilde{\pi}_{0}$ that is FDR conservative (possibly after $\kappa'$-capping  over $\Theta$ in the sense of \eqref{equFDRconservative}) and  concentrating over $\Theta$  in the sense of \eqref{equconcentration}, we have  for all $\theta \in \Theta$, and for $t$ large enough
\begin{equation}\label{equoptimalnocapping}
   \sup_{0<\alpha \leq \kappa\pi_0^*(\theta)}\{ \Pow_\theta(\ABH_{\alpha'}(\tilde{\pi}_0))-\Pow_\theta(\ABH_\alpha(\hat{\pi}_{0},\kappa))\}\leq \delta^*_t(\theta) + \delta_t(\theta),
\end{equation}
where $\alpha'=\alpha \big(1-\eta^*_t(\theta)/\pi_0^*(\theta)\big) (1-\eta_t(\theta)/(\pi_0^*(\theta)-\eta_t(\theta)))$.
         In addition, this applies to the oracle choice $\tilde{\pi}_0=\pi_{0}^*(\theta)$ with $ \eta_t(\theta)=\delta_t(\theta)=0$ and $\alpha'=\alpha \big(1-\eta^*_t(\theta)/\pi_0^*(\theta)\big) $.       
     \end{itemize}
\end{theorem}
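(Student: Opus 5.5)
The argument rests on a deterministic monotonicity property of the adaptive BH procedure, combined with a high-probability event on which the estimators are correctly ordered. For a fixed $p$-value family, the rejection set of $\ABH_\alpha(\hat\pi_0,\kappa)$ depends on $(\alpha,\hat\pi_0)$ only through the ratio $\alpha/\hat\pi_0$. It is nondecreasing in this ratio, because the threshold curve $k\mapsto \kappa\wedge(\alpha k/(m\hat\pi_0))$ rises with it. Hence, whenever
\[
\alpha'/\tilde\pi_0\le \alpha/\hat\pi_0
\quad\text{and}\quad
\alpha'/\tilde\pi_0\le \kappa m/\hat k',
\]
the rejection set of $\ABH_{\alpha'}(\tilde\pi_0)$ is contained in that of $\ABH_\alpha(\hat\pi_0,\kappa)$. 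Here $\hat k'$ denotes the number of rejections of $\ABH_{\alpha'}(\tilde\pi_0)$, and the second inequality ensures the capping is inactive at its threshold. In that case the number of true discoveries is pointwise smaller. The power difference is then at most $\P_\theta(\Omega^c)$, where $\Omega$ is the event on which this comparison holds, since the power integrand lies in $[0,1]$.

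\textbf{Step 1: Lemma~\ref{lemoptim}.} Let $\tilde\pi_0$ be FDR-conservative and concentrating. Then $\bar\pi_0(\theta)\ge \pi_0^*(\theta)$. I take this as given, since it is stated just before the theorem. Its proof would go as follows: pick $\theta^*$ indistinguishable from $\theta$ (same mixture $F$, hence the same law of the $p$-values, or of the scores in the conformal case). If $\bar\pi_0<\pi_0^*$, the procedure at level $\alpha$ behaves like the oracle at level $\alpha\pi_0^*/\bar\pi_0>\alpha$ under $\theta^*$. With a suitable $\alpha$, this forces the FDR above $\alpha$ for large $t$, which contradicts~\eqref{equFDRconservative}.

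\textbf{Step 2: the good event.} Set
\[
\Omega=\{\hat\pi_0\le \pi_0^*+\eta^*_t\}\cap\{\tilde\pi_0\ge \bar\pi_0-\eta_t\},
\]
so that $\P_\theta(\Omega^c)\le\delta^*_t+\delta_t$ by~\eqref{sufficientcondition} and~\eqref{equconcentration}. On $\Omega$, Step~1 gives $\tilde\pi_0\ge\pi_0^*-\eta_t$. By the definition of $\alpha'$,
\[
\frac{\alpha'}{\tilde\pi_0}\le \alpha\,\frac{(1-\eta_t^*/\pi_0^*)\,\bigl(1-\eta_t/(\pi_0^*-\eta_t)\bigr)}{\pi_0^*-\eta_t}.
\]
The algebra to check is that $1-\eta_t/(\pi_0^*-\eta_t)=(\pi_0^*-2\eta_t)/(\pi_0^*-\eta_t)$. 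So the right-hand side is at most $\alpha(1-\eta^*_t/\pi_0^*)/\pi_0^*$, with slack. I then need $(1-x/\pi_0^*)/\pi_0^*\le 1/(\pi_0^*+x)$ with $x=\eta^*_t$, which is the inequality $1-x^2/\pi_0^{*2}\le 1$ and holds. Combined with $\hat\pi_0\le\pi_0^*+\eta^*_t$, this yields $\alpha'/\tilde\pi_0\le\alpha/\hat\pi_0$ on $\Omega$. This settles part~(i). For the oracle choice $\tilde\pi_0=\pi_0^*$, the same computation applies with $\eta_t=\delta_t=0$.

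\textbf{Step 3: the capping in part (ii).} This is the main obstacle. We must make sure the cap $\kappa$ on $\hat t$ never cuts below the threshold of $\ABH_{\alpha'}(\tilde\pi_0)$. That threshold is $\alpha'\hat k'/(m\tilde\pi_0)\le \alpha'/\tilde\pi_0$, because $\hat k'\le m$. On $\Omega$, Step~2 shows that this is at most $\alpha(1-\eta_t^*/\pi_0^*)/\pi_0^*\le\alpha/\pi_0^*\le\kappa$, using the restriction $\alpha\le\kappa\pi_0^*$. Hence every $p_i\le \tilde t'$ satisfies $p_i\le\kappa$. Together with the self-consistency characterization of BH, this gives $\hat k\ge\hat k'$ and thus inclusion of the rejection sets. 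If $\tilde\pi_0$ itself carries a $\kappa'$-capping in its FDR guarantee, this does not matter: the comparison only uses the uncapped $\ABH_{\alpha'}(\tilde\pi_0)$, which rejects at least as much.

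Finally, I take the supremum over $\alpha$. This is harmless because $\Omega$ does not depend on $\alpha$, which concludes both parts.
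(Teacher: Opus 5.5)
Your proof is correct and is essentially the paper's argument: the same good event built from \eqref{sufficientcondition}, \eqref{equconcentration} and Lemma~\ref{lemoptim}, the same algebra giving $\alpha'/\tilde\pi_0\le\alpha/\hat\pi_0$, and the same monotonicity of ABH in that ratio. The one difference is in (ii): the paper replaces $\hat\pi_0$ by $\hat\pi_0\vee\pi_0^*$, which makes the $\kappa$-cap inactive for every $k$ once $\alpha\le\kappa\pi_0^*$, and then invokes (i), whereas you check directly that the competitor's threshold stays below $\kappa$ on the good event --- both routes work. One caveat, shared with the paper: when $\tilde\pi_0$ is only FDR-conservative after $\kappa'$-capping, Lemma~\ref{lemoptim} gives $\bar\pi_0\ge\pi_0^*$ only when $F_0=I$ on $\Theta$, so that capping does matter for your Step~1, even though it is irrelevant to the comparison step.
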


Theorem~\ref{genthoptimal} is proved in Section~\ref{secproofgenthoptimal}. 
Case (i) is intended to be used when the $\pi_0$-estimator $\hat{\pi}_{0}$ is FDR conservative over $\Theta$ in the sense of \eqref{equFDRconservative} without any capping. It says that the power of $\ABH_\alpha(\hat{\pi}_{0})$ is approximately better than any other FDR controlling ABH procedure of the form $\ABH_{\alpha'}(\tilde{\pi}_{0})$ with $\alpha'\approx \alpha$. By contrast, if $\hat{\pi}_{0}$ is only FDR conservative over $\Theta$ with a capping $\kappa\in(0,1)$,  case (ii) should be used. It says that the power of $\ABH_\alpha(\hat{\pi}_{0},\kappa)$ is approximately better than any other FDR controlling ABH procedure of the form $\ABH_{\alpha'}(\tilde{\pi}_{0})$ with $\alpha'\approx \alpha$. For this, the $\kappa$-capping should not be effective in the threshold of $\ABH_\alpha(\hat{\pi}_{0},\kappa)$, which explains why the $\alpha$ range considered for optimality is reduced to $\alpha \leq \kappa\pi_0^*(\theta)$ in this case.

\subsection{Application to standard settings}\label{sec:applisettings}

\begin{table}[h!]
    \centering
    \begin{tabular}{|c||c|c|c|c|c|}
    \hline
     Settings   & $\Theta$ & $\pi_0^*(\theta) $ & Procedures 
     & $\eta^*_t$ & $\delta^*_t$\\ \hline\hline
   Indep super-unif  & $ \Theta_{\succeq U}$ \eqref{equThetasuperunif} & $\inf_{\lambda\in [0,1)} \frac{1-F(\lambda)}{1-\lambda} $ &MS 
   & $\propto \frac{\sqrt{\log m }}{m^{1/4}}$ & $ 1/m$\\
      Indep uniform     &  $\Theta_{U}$ \eqref{equThetaunif} & $\min_{\lambda\in [0,1]} f(\lambda) $ & 
      IMS-$\hat{\kappa}$
      & $\propto  \frac{\sqrt{\log m }}{m^{1/4}}$ & $ 1/m$\\
       Conformal     &  $\Theta_{U}$ \eqref{equThetaunif} & $\min_{\lambda\in [0,1]} f(\lambda) $ & IMS-$\hat{\kappa}$
       & $\propto \frac{\sqrt{\log m }}{(n\wedge m)^{1/8}} $&$ 1/m$
       \\\hline
    \end{tabular}
    \caption{Summary table for the optimality results when applying Theorem~\ref{genthoptimal}.
    }
    \label{tab:summary}
\end{table}

\paragraph{MS optimality in the independent super-uniform setting}

We consider the independent two-group model with a parameter $\theta=(\pi_0,F_0,F_1)$ that belongs to the set
\begin{equation}
    \label{equThetasuperunif}
    \Theta_{\succeq U} :=\bigg\{\theta=(\pi_0,F_0,F_1): \pi_0\in (0,1], F_0 \succeq U(0,1), 
    F= \pi_0 F_0+ (1-\pi_0) F_1 \mbox{ has density Lipschitz on $[0,1]$}
\bigg\} .
\end{equation}
Hence, $\Theta_{\succeq U}$ corresponds to the situation where the null is super-uniform, the alternative is left arbitrary and some density regularity is assumed. 

 \begin{lemma}
     \label{lem:pi0starsuperunif}
For the independent two-group model with parameter space $\Theta=\Theta_{\succeq U}$,       $\pi^*_0(\theta)$ in \eqref{pi0stargen} is given by 
\begin{equation}\label{pi0star}
    \pi_0^*(\theta) := \inf_{\lambda\in [0,1)} \frac{1-F(\lambda)}{1-\lambda} \in [ \pi_0,1].
\end{equation}
 \end{lemma}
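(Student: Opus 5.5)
We prove the two inequalities that together identify the maximum in \eqref{pi0stargen}. Write $F=\pi_0F_0+(1-\pi_0)F_1$ for the mixture, $f$ for its Lipschitz density, and $\pi^\dagger:=\inf_{\lambda\in[0,1)}(1-F(\lambda))/(1-\lambda)$. The argument has three steps: an upper bound valid for every admissible decomposition, a lower bound obtained by constructing a decomposition attaining $\pi^\dagger$, and the bracketing $\pi^\dagger\in[\pi_0,1]$.

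\emph{Upper bound.} Let $\theta^*=(\pi_0',F_0^*,F_1^*)\in\Theta_{\succeq U}$ with $F^*=F$. For every $\lambda\in[0,1)$, super-uniformity gives $1-F_0^*(\lambda)\geq 1-\lambda$. Since $1-F_1^*(\lambda)\geq 0$,
$$1-F(\lambda)=\pi_0'(1-F_0^*(\lambda))+(1-\pi_0')(1-F_1^*(\lambda))\geq \pi_0'(1-\lambda).$$
Dividing by $1-\lambda$ and taking the infimum over $\lambda$ yields $\pi_0'\leq\pi^\dagger$. Applying this to the true $\theta$ itself gives $\pi_0\leq\pi^\dagger$. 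The choice $\lambda=0$ gives $\pi^\dagger\leq 1-F(0)=1$, since $F$ is continuous. This settles the bracketing $\pi^\dagger\in[\pi_0,1]$, and in particular $\pi^\dagger>0$.

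\emph{Lower bound (attainment).} We construct $\theta^*=(\pi^\dagger,F_0^*,F_1^*)\in\Theta_{\succeq U}$ with $F^*=F$.
\begin{itemize}
\item If $\pi^\dagger=1$, take $F_0^*=F$ and $F_1^*$ arbitrary (say uniform). Then $1-F(\lambda)\geq 1-\lambda$ for all $\lambda$, so $F\succeq U(0,1)$.
\item If $\pi^\dagger<1$, set $F_0^*=F$ and
$$F_1^*:=\frac{F-\pi^\dagger F}{1-\pi^\dagger}=F.$$
Then $F^*=\pi^\dagger F+(1-\pi^\dagger)F=F$, and $F_1^*=F$ is a valid CDF.
\end{itemize}
In both cases we must check that $F_0^*=F$ is super-uniform, that is, $F(\lambda)\leq\lambda$ for all $\lambda$. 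This does not hold automatically, so we instead use a decomposition that works in general. Define
$$G(\lambda):=1-\frac{1-F(\lambda)}{\pi^\dagger}.$$
By definition of $\pi^\dagger$, we have $1-G(\lambda)\geq 1-\lambda$, hence $G(\lambda)\leq\lambda$, and $G(1)=1$. The candidates are then
$$F_0^*=G\vee 0,\qquad F_1^*=\frac{F-\pi^\dagger F_0^*}{1-\pi^\dagger}.$$

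\emph{Main obstacle: monotonicity of $G$.} The hardest point is that $G$ (equivalently, the function $\lambda\mapsto 1-F(\lambda)$) is already monotone, so $G$ is a nondecreasing function bounded by $\lambda$. The issue is that $G$ may be negative near $0$, and truncating it at $0$ breaks the identity $F^*=F$ there. Moreover, $F_1^*$ must be nondecreasing, which requires $f\geq\pi^\dagger(F_0^*)'$.

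To handle this, we use the standard trick of making the null carry the tail. Let $F_0^*$ place mass $(1-F(\lambda))/\pi^\dagger$ on $(\lambda,1]$ only through its tail, with density $f/\pi^\dagger$ above a point $\lambda_0$, plus an atom-free uniform part below $\lambda_0$. Here $\lambda_0$ is chosen so that $F_0^*$ is a CDF with $F_0^*(\lambda)\leq\lambda$. Concretely, we take $F_0^*$ with density $f_0^*=\min(f/\pi^\dagger,c)$ for a constant $c\geq 1$ chosen so that its total mass is $1$. One then checks that
\begin{itemize}
\item $F_0^*\succeq U$, using the infimum property;
\item $F_1^*$ has a nonnegative density $(f-\pi^\dagger f_0^*)/(1-\pi^\dagger)$;
\item the mixture density equals $f$, which is Lipschitz, so $\theta^*\in\Theta_{\succeq U}$.
\end{itemize}
Verifying the super-uniformity of this explicit $F_0^*$ is where we expect the real work to lie. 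If it fails, we fall back on choosing $F_0^*$ as the least super-uniform distribution dominated by $F/\pi^\dagger$ in the tail order, and we prove that it is admissible via the infimum characterization.

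Once attainment is established, the supremum over admissible decompositions is a maximum equal to $\pi^\dagger$, which proves $\pi_0^*(\theta)=\pi^\dagger\in[\pi_0,1]$.
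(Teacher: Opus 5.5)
Your upper bound is correct: every admissible $\theta^*$ has $\pi_0'\le\pi^\dagger$, hence $\pi_0\le\pi^\dagger\le 1$. It spells out the maximality half that the paper leaves implicit. The gap is in the attainment step.

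The construction you wrote down and then abandoned, $F_0^*=G\vee 0$ with $F_1^*=(F-\pi^\dagger F_0^*)/(1-\pi^\dagger)$, is exactly the paper's choice $F_0^*=(1-(1-F)/\pi_0^*)_+$, $F_1^*=1\wedge (F/(1-\pi_0^*))$. It works, and the obstacle you raise against it is not real. For $\pi^\dagger<1$ (your treatment of $\pi^\dagger=1$ is fine), check it as follows:
\begin{itemize}
\item The identity $\pi^\dagger F_0^*+(1-\pi^\dagger)F_1^*=F$ holds by the very definition of $F_1^*$, so truncating $G$ at $0$ cannot break it.
\item Since $F$ is continuous and nondecreasing, $\{G\le 0\}=\{F\le 1-\pi^\dagger\}$ is an initial interval $[0,x_0]$. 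There $F_1^*=F/(1-\pi^\dagger)$. On $(x_0,1]$ one has $\pi^\dagger F_0^*=F-(1-\pi^\dagger)$, hence $F_1^*=1$.
\item So $F_1^*$ is a continuous CDF. Your condition $f\ge\pi^\dagger (F_0^*)'$ holds with equality on $(x_0,1]$ and trivially on $[0,x_0]$.
\item $F_0^*$ is a continuous CDF with $F_0^*(0)=0$, $F_0^*(1)=1$, and $F_0^*(\lambda)=G(\lambda)\vee 0\le\lambda$ by the infimum property.
\item The mixture is $F$ itself, so its density is Lipschitz and $\theta^*\in\Theta_{\succeq U}$.
\end{itemize}

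The construction you commit to instead, $f_0^*=\min(f/\pi^\dagger,c)$, is not super-uniform in general. So the step you flag as \emph{where the real work lies} actually fails. If $f<\pi^\dagger$ on a set of positive measure, then $\int_0^1\min(f/\pi^\dagger,1)<1$, so normalization forces $c>1$. If in addition $f>\pi^\dagger$ near $0$, then $f_0^*>1$ near $0$ and $F_0^*(\lambda)>\lambda$ for small $\lambda>0$. This is exactly the basin-shaped configuration the paper cares about. For instance (before smoothing), take $f=3$ on $[0,0.1]$, $f=0.2$ on $(0.1,0.5]$ and $f=1.24$ on $(0.5,1]$. Then $\pi^\dagger=7/9$ (attained at $\lambda=0.1$), $c\approx 1.495$, and $F_0^*(0.1)\approx 0.1495>0.1$.

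Your fallback, \emph{the least super-uniform distribution dominated by $F/\pi^\dagger$ in the tail order}, is not specified precisely enough to check. Made precise, it is the survival function $1-F_0^*=1\wedge\big((1-F)/\pi^\dagger\big)$, which is the very construction you discarded. The proof goes through once you keep $G\vee 0$ and drop everything after it.
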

The proof is provided in Section~\ref{proof:lem:pi0starsuperunif}. Hence Theorem~\ref{genthoptimal} shows that $\pi_0^*(\theta)$ is the oracle null proportion when working on $\Theta_{\succeq U}$.  Note that $\pi_0^*(\theta)=\pi_0$ is  possible if $f_0(x)=1$ for $x\in [0,1]$ and $f_1(x)=0$ for $x\in [\delta,1]$ for $0<\delta\leq 1$ for instance. The latter is related to the purity condition \cite{GW2004}. However, this assumption is not required  to apply our approach.

In \eqref{pi0star}, either the infimum is reached at some $\lambda^*\in [0,1)$, in which case it is a minimum and we have $1-F(\lambda^*)=\pi_0^*(\theta) (1-\lambda^*)$, or the infimum is reached when $\lambda\to 1$, in which case we let $\lambda^*=1$ and we have $\pi_0^*(\theta)=F'(1)=f(1)$. 
We summarize this by letting
\begin{equation}
    \label{equlambdastar}
    \lambda^* := \min\bigg\{ \lambda\in [0,1] \::\: \frac{1-F(\lambda)}{1-\lambda}=\pi_0^*(\theta)\bigg\},
\end{equation}
with some notation abuse when $\lambda^*=1$.

The form of $\pi_0^*(\theta)$ suggests that the MS estimator $ \hat{\pi}^{\MS}_{0,\epsilon,\underline{\pi}_0}$ should be close to $\pi_0^*(\theta)$. This is proved formally in the following result.

\begin{proposition}\label{propsufficientcondsuperunif}
    Choosing  $\epsilon = m^{-1/4}$ and $0.5 \wedge (162/\log(m))\leq \underline{\pi}_0\ll 1 $ provides  for any $\theta\in \Theta_{\succeq U}$, for $m$ large enough,
\begin{equation}
\label{concpi0hatindepcorboundaryoptim}
\P_\theta\bigg(   \hat{\pi}^{\MS}_{0,\epsilon,\underline{\pi}_0} \geq  \pi_0^*(\theta) +  4\frac{\sqrt{\log m }}{m^{1/4}}  \bigg)\leq 1/m.
\end{equation}
\end{proposition}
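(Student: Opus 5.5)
The estimator is a maximum of two terms, so I will bound each separately by $\pi_0^*(\theta)+4\sqrt{\log m}/m^{1/4}$ on an event of probability at least $1-1/m$.

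\emph{The first term.} Since $\underline{\pi}_0\le \pi_0\le \pi_0^*(\theta)$ for $m$ large (Lemma~\ref{lem:pi0starsuperunif} and $\underline{\pi}_0\ll 1$), the term $\underline{\pi}_0$ is harmless. It remains to control
\[
C(m,\epsilon,\underline{\pi}_0)\inf_{\lambda\in[0,1-\epsilon]} \frac{1\vee \sum_i\ind{p_i\ge\lambda}}{m(1-\lambda)} .
\]

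\emph{The normalizing factor.} I will use Proposition~\ref{rateCD}, whose condition $\epsilon\gg\sqrt{\log m/(m\underline{\pi}_0)}$ holds because $\epsilon=m^{-1/4}$ and $\underline{\pi}_0\gtrsim 1/\log m$. It gives
\[
C-1\le \frac{9}{\sqrt{m\underline{\pi}_0}\,\epsilon}\lesssim \frac{\sqrt{\log m}}{m^{1/4}}.
\]
The constant $162$ is presumably calibrated so that $9/\sqrt{\underline{\pi}_0}\le 9\sqrt{\log m/162}$, giving $C-1\le \sqrt{\log m}/(\sqrt 2\, m^{1/4})$. If needed, I will use the non-asymptotic Proposition~\ref{prop:boundingCD} to track constants.

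\emph{The infimum.} I pick a deterministic $\lambda_0\in[0,1-\epsilon]$ at which $(1-F(\lambda_0))/(1-\lambda_0)$ is within $O(\epsilon)$ of $\pi_0^*(\theta)$, and split into two cases.
\begin{itemize}
\item If $\lambda^*\le 1-\epsilon$, take $\lambda_0=\lambda^*$; the ratio equals $\pi_0^*$ exactly.
\item If $\lambda^*>1-\epsilon$, take $\lambda_0=1-\epsilon$. By the mean value theorem,
\[
\frac{1-F(1-\epsilon)}{\epsilon}=f(\xi)\quad\text{for some } \xi\in[1-\epsilon,1].
\]
Also $\pi_0^*$ equals either $f(1)$, or $(1-F(\lambda^*))/(1-\lambda^*)=f(\xi')$ with $\xi'\in[\lambda^*,1]$. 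Lipschitzness of $f$ with constant $L$ then gives a difference of at most $L\epsilon=Lm^{-1/4}$. This is $o(\sqrt{\log m}/m^{1/4})$, so it is absorbed for $m$ large, with $\theta$ fixed.
\end{itemize}

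\emph{Concentration at $\lambda_0$.} Hoeffding (or Bernstein) applied to $\sum_i\ind{p_i\ge\lambda_0}$, a Binomial$(m,1-F(\lambda_0^-))$ variable (continuity of $F$ makes $\ge$ versus $>$ irrelevant), gives with probability at least $1-1/m$:
\[
\frac{1\vee\sum_i\ind{p_i\ge\lambda_0}}{m}\le 1-F(\lambda_0)+\sqrt{\frac{\log m}{2m}}+\frac1m .
\]
Dividing by $1-\lambda_0\ge\epsilon=m^{-1/4}$ yields an error of order $\sqrt{\log m}\,m^{-1/4}/\sqrt2$. This is where the rate is set, and it explains the choice $\epsilon=m^{-1/4}$: the Lipschitz bias is of order $\epsilon$, and the variance term is of order $\sqrt{\log m/m}/\epsilon$.

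\emph{Combining.} Multiplying,
\[
C\cdot(\pi_0^*+\text{err})\le \pi_0^*+(C-1)+C\cdot\text{err},
\]
using $\pi_0^*\le1$. Summing the three contributions — the factor $C-1$, the Hoeffding error, and the Lipschitz bias — with constants at most about $1/\sqrt2+1/\sqrt2+o(1)$ times $\sqrt{\log m}/m^{1/4}$ stays below $4\sqrt{\log m}/m^{1/4}$ for $m$ large.

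The main obstacle is the second bullet of the infimum step, the case where the infimum defining $\pi_0^*$ is approached as $\lambda\to1$ (or at $\lambda^*>1-\epsilon$). Only there is the Lipschitz regularity in $\Theta_{\succeq U}$ needed. A secondary difficulty is checking the constants in Proposition~\ref{rateCD} against the lower bound $162/\log m$ on $\underline{\pi}_0$.
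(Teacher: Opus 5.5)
Your proposal is correct and follows essentially the same route as the paper (via Proposition~\ref{propi0hatindepinf}). You evaluate the infimum at $\lambda^*$, or at $1-\epsilon$ when $\lambda^*>1-\epsilon$ using the mean value theorem and the Lipschitz density, apply a one-point Hoeffding/DKW bound, and control $C-1$ through Proposition~\ref{rateCD}, where $162=2\cdot 9^2$ is calibrated exactly as you guessed. The only difference is bookkeeping: the paper absorbs $C-1$ into a single parameter $\Delta_m$ and ties the deviation $\delta=\epsilon\Delta_m\pi_0^*$ to it, giving leading constant $2\sqrt2$, whereas you add the three error sources separately and obtain the slightly better constant $\sqrt2$.
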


The proof of Proposition~\ref{propsufficientcondsuperunif} is provided in Section~\ref{proofpropsufficientcondsuperunif}.
Hence, $\hat{\pi}^{\MS}_{0,\epsilon,\underline{\pi}_0}$ satisfies \eqref{sufficientcondition} with $\eta^*_t(\theta)=4 (\log m )^{1/2}m^{-1/4}$ and $\delta^*_t(\theta)=1/m$. Since the MS procedure $\ABH_\alpha(\hat{\pi}^{\MS}_{0,\epsilon,\underline{\pi}_0})$ controls the FDR by Theorem~\ref{thcontrolMS}, Theorem~\ref{genthoptimal} (i) ensures that it has approximately maximum power over $\Theta_{\succeq U} $ in the following sense.

\begin{corollary}[Optimality of MS procedure for a super-uniform null distribution]
   \label{coroptimalMS} 
In the independent two-group model with parameter set $\Theta_{\succeq U}$ \eqref{equThetasuperunif}, for any  $\pi_0$-estimator that is FDR conservative over $\Theta_{\succeq U}$ in the sense of \eqref{equFDRconservative} and 
 consistent with concentration over $\Theta_{\succeq U}$  in the sense of \eqref{equconcentration}, the MS procedure with parameters $\epsilon = m^{-1/4}$ and $\underline{\pi}_0=0.5 \wedge (162/\log(m))$ is such that  for all $\theta \in \Theta_{\succeq U}$ and for $m$ large enough
\begin{equation}\label{equ:optimalityMSsuperunif}
    \sup_{\alpha\in (0,1]}\{\Pow_\theta(\ABH_{\alpha'}(\hat{\pi}_0))-\Pow_\theta(\ABH_\alpha(\hat{\pi}^{\MS}_{0,\epsilon,\underline{\pi}_0}))\}\leq 1/m + \delta_m(\theta),
\end{equation}
where $\alpha'=\alpha \big(1-(4/\pi_0^*(\theta))(\log m )^{1/2}m^{-1/4}\big) (1-\eta_m(\theta)/(\pi_0^*(\theta)-\eta_m(\theta)))$ and $\pi_0^*(\theta)$ is defined by \eqref{pi0star}.  
\end{corollary}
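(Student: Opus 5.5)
This corollary is obtained by plugging the two ingredients already established into Theorem~\ref{genthoptimal}~(i), with $\Theta=\Theta_{\succeq U}$ and $t=m$ in the independent two-group setting.

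\emph{Sufficient condition.} Proposition~\ref{propsufficientcondsuperunif}, with $\epsilon=m^{-1/4}$ and $\underline{\pi}_0=0.5\wedge(162/\log m)$, gives \eqref{sufficientcondition} for $\hat{\pi}_0=\hat{\pi}^{\MS}_{0,\epsilon,\underline{\pi}_0}$. Here $\eta^*_m(\theta)=4(\log m)^{1/2}m^{-1/4}$, $\delta^*_m(\theta)=1/m$, and $\pi_0^*(\theta)$ is identified through Lemma~\ref{lem:pi0starsuperunif}. The only care needed is to check that this choice of $\underline{\pi}_0$ lies in the admissible range $0.5\wedge(162/\log m)\le\underline{\pi}_0\ll 1$. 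It sits at the lower endpoint, and $162/\log m\to0$, so it qualifies.

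\emph{Applying the theorem.} The competitor $\tilde{\pi}_0$ is assumed FDR conservative over $\Theta_{\succeq U}$ without capping, and concentrating with rates $\eta_m(\theta),\delta_m(\theta)$. Every hypothesis of Theorem~\ref{genthoptimal}~(i) then holds, and \eqref{equoptimalnocapping} yields \eqref{equ:optimalityMSsuperunif} directly. The bound becomes $\delta^*_m+\delta_m=1/m+\delta_m(\theta)$, and the level is
\[
\alpha'=\alpha\big(1-\eta^*_m/\pi_0^*\big)\big(1-\eta_m/(\pi_0^*-\eta_m)\big),
\]
which matches the displayed $\alpha'$ once $\eta^*_m$ is substituted.

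FDR control of MS itself (Theorem~\ref{thcontrolMS}) is not needed as a hypothesis of Theorem~\ref{genthoptimal}. It only shows that MS is itself a legitimate member of the class over which optimality is claimed. I expect no real obstacle. The one point to verify is that ``for $m$ large enough'' is taken as the maximum of the thresholds coming from Proposition~\ref{propsufficientcondsuperunif}, from the concentration of $\tilde{\pi}_0$, and from Theorem~\ref{genthoptimal}. This also guarantees $\eta_m<\pi_0^*$ and $\eta^*_m<\pi_0^*$, which holds because $\pi_0^*(\theta)\ge\pi_0>0$ is fixed.
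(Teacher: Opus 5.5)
Your proposal is correct and follows the paper's own argument. Proposition~\ref{propsufficientcondsuperunif} supplies \eqref{sufficientcondition} with $\eta^*_m=4(\log m)^{1/2}m^{-1/4}$ and $\delta^*_m=1/m$, and Theorem~\ref{genthoptimal}~(i) then yields the bound directly; as you observe, Theorem~\ref{thcontrolMS} is cited in the paper only to show that MS itself belongs to the class of FDR-controlling procedures, not as a hypothesis of the theorem.
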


\begin{remark}
    \label{rem:RBsuboptimal}
    The right-boundary estimator, based on the stopping time \eqref{equ:rightboundary}, is in general sub-optimal on the parameter space $\Theta_{\succeq U}$: we can easily show that (by choosing suitably the $\lambda$-grid with $\lambda_1=\kappa$) it satisfies \eqref{equconcentration} with 
    \begin{equation}\label{equpi0barRBestimator}
   \bar{\pi}_0 := \frac{1-F( \bar\lambda)}{1- \bar\lambda} \geq \pi_0^*,\:\:\:\:\:
    \bar{\lambda} :=
    \sup\bigg\{\lambda\in [\kappa,1) \::\: \forall \lambda'< \lambda,  \frac{1-F(\lambda)}{1-\lambda}< \frac{1-F(\lambda')}{1-\lambda'}\bigg\}\leq \lambda^*.
\end{equation}
In general, $\bar{\pi}_0 > \pi_0^*$, for instance when $\lambda\mapsto (1-F(\lambda))/(1-\lambda)$ has two local minima at $\lambda_1<\lambda_2$ with $\lambda_1=\bar{\lambda}$ and $\lambda_2=\lambda^*$, or when it has a minimum occurring before $\kappa$. In particular, it does not enjoy the optimality property \eqref{equ:optimalityMSsuperunif}. This is exemplified in the simulation settings  (b.2) and (b.3) in Section~\ref{sec:simulation}, for which the right boundary is suboptimal, as the stopping time occurs too early.
\end{remark}

\paragraph{IMS optimality in the independent uniform and conformal settings}

In this section, we impose that the null distribution is (exactly) uniform and we demonstrate that the MS procedure is not optimal in general, whereas the IMS  procedure is, under a mild condition. 

We consider the independent two-group model with a parameter $\theta=(\pi_0,I,F_1)$ that belongs to the set
\begin{equation}
    \label{equThetaunif}
    \Theta_{U} :=\bigg\{\theta=(\pi_0,I,F_1): \pi_0\in (0,1), F_1 \mbox{ has a density $f_1$ Lipschitz on $[0,1]$}\bigg\} ,
\end{equation}
for which  $I(x):=x\ind{x\in[0,1]}+\ind{x>1}$, $x\in [0,1]$, is the CDF of the uniform distribution. Hence,  the alternative is left arbitrary (with a regular density) in $\Theta_{U}$. 

\begin{lemma}\label{lempi0starunif}
    In the independent two-group model with parameter space $\Theta=\Theta_{U}$,  $\pi^*_0(\theta)$ in \eqref{pi0stargen} is given by 
\begin{equation}\label{pi0starunif}
    \pi_0^*(\theta) := \min_{\lambda\in [0,1]} f(\lambda) = f(\lambda^*) \geq \pi_0,\:\:\:\mbox{ for  $\lambda^*=\min\{\lambda\in [0,1]\::\: f(\lambda)=\pi_0^*\}\in [0,1]$,}
\end{equation}
for the mixture density $f(x)=\pi_0  + \pi_1 f_1(x)$, $x\in[0,1]$.
\end{lemma}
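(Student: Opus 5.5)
We need to show that $\pi_0^*(\theta)$ defined in \eqref{pi0stargen} equals $\min_{\lambda\in[0,1]} f(\lambda)$ when $\Theta=\Theta_U$. The plan is to prove two inequalities, one for each direction of the maximum in \eqref{pi0stargen}. The key observation is that in $\Theta_U$ the null is exactly uniform, so any admissible decomposition $F=\pi_0^* I+(1-\pi_0^*)F_1^*$ corresponds at the density level to $f=\pi_0^*+(1-\pi_0^*)f_1^*$ with $f_1^*\geq 0$. First note that $f=\pi_0+\pi_1 f_1$ is Lipschitz, hence continuous on the compact $[0,1]$. Therefore the minimum is attained, $\lambda^*$ is well defined as the smallest minimizer (the set of minimizers is closed), and $\min f\geq \pi_0$ because $f_1\geq 0$.

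\emph{Upper bound.} Let $\theta^*=(\pi_0^*,I,F_1^*)\in\Theta_U$ satisfy $F=F^*$. Differentiating gives $f=\pi_0^*+(1-\pi_0^*)f_1^*$ almost everywhere, and by continuity of both sides (since $f_1^*$ is Lipschitz) this holds everywhere. Because $f_1^*\geq 0$, we get $f(\lambda)\geq \pi_0^*$ for all $\lambda$, so $\pi_0^*\leq \min f$. Hence every feasible value in \eqref{pi0stargen} is at most $\min f$.

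\emph{Attainment.} Set $p:=\min f$. If $p<1$, define $f_1^*:=(f-p)/(1-p)$. This function is nonnegative and Lipschitz (a scaled Lipschitz function), and it integrates to $(1-p)/(1-p)=1$ because $\int f=1$, so it is a density. Then $\theta^*=(p,I,F_1^*)$ reproduces $F$. There is one membership issue: $\Theta_U$ requires $\pi_0\in(0,1)$. We have $p\geq \pi_0>0$. If $p=1$, then $f\geq 1$ with $\int f=1$ forces $f\equiv 1$, so $\pi_1 f_1\equiv 1-\pi_0$ and $f_1\equiv 1$. In that degenerate case the supremum is approached but not attained inside the open range. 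We would handle it either by interpreting the maximum over $(0,1]$ as the paper's footnote suggests, allowing $\pi_0^*=1$ with $F_1^*$ arbitrary (e.g.\ uniform), or by noting that $\theta$ itself gives feasible values $\pi_0'\in(0,1)$ arbitrarily close to $1$ (any $\pi_0'$ with $f_1'=1$), so the value $1=\min f$ is the supremum.

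The only delicate point is this boundary bookkeeping, together with justifying the passage from equality of CDFs to pointwise equality of the continuous densities. The latter follows directly from the fundamental theorem of calculus, since all densities involved are continuous.
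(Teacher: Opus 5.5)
Your proof is correct and uses the same construction as the paper, $f_1^*=(f-\pi_0^*)/(1-\pi_0^*)$ with the convention $F_1^*=I$ when $\pi_0^*=1$; that construction is all the paper writes out. Your explicit upper bound (equal continuous densities force $f\geq \pi_0^*$ for any admissible decomposition) and your observation that $\pi_0^*=1$ occurs only when $f\equiv 1$, where it lies outside the range $\pi_0\in(0,1)$ allowed in $\Theta_U$, supply steps the paper leaves implicit or settles by convention.
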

The proof of Lemma~\ref{lempi0starunif} is provided in Section~\ref{proof:lempi0starunif}.
Note that $\pi_0^*=\pi_0$ is  possible if $f_1(x)=0$ for some $x\in [0,1]$. However, this assumption is not required  to apply our theory.
The form of $\pi_0^*(\theta)$ suggests that the IMS estimator $ \hat{\pi}^{\IMS}_{0,\epsilon,\underline{\pi}_0,\kappa}$ should be close to $\pi_0^*(\theta)$ when $\lambda^*\geq \kappa$. This is proved formally in the following result.

\begin{proposition}\label{propsufficientunifANDconformal}
For all $\theta\in \Theta_U$, letting $\bar{\pi}_0(\theta):=\min_{x\in [\kappa,1]}f(x)$, which is such that $\bar{\pi}_0(\theta)=\pi_0^*(\theta)$ when $\kappa\leq \lambda^*$, we have 
for $t$ large enough
\begin{equation}
\label{concpi0hatindepcorboundaryoptim}
\P\bigg(  \hat{\pi}^{\IMS}_{0,\epsilon,\underline{\pi}_0,\kappa}  \geq  \bar{\pi}_0(\theta) +  \eta_t   \bigg)\leq \delta_t, 
\end{equation}
\begin{itemize}
    \item  by choosing, in the independent two-group model ($t=m$),  $\epsilon = m^{-1/4}$, $\underline{\pi}_0=0.5 \wedge (41/\log(2m)) $, 
$\eta_m = 5\frac{\sqrt{2\log (2m) }}{m^{1/4}}$  and $\delta_m=1/m$;
\item by choosing , in the conformal two-group model ($t=(n, m)$),  $\epsilon = (n\wedge m)^{-1/8}$, $\underline{\pi}_0=0.5 \wedge (1/\log(m)) $,  $n\gg \log m$, $\eta_m = 10 (1+L(f))^{1/2}\frac{\sqrt{\log m }}{(n\wedge m)^{1/8}}$  and $\delta_m=1/m$;
\end{itemize}
\end{proposition}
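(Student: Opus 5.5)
The IMS estimator is a maximum of two terms, so the plan is to bound each term separately by $\bar{\pi}_0(\theta)+\eta_t$. The floor term $\underline{\pi}_0$ is harmless: $\underline{\pi}_0\to 0$ while $\bar{\pi}_0(\theta)\geq \pi_0>0$ is fixed, so $\underline{\pi}_0\leq \bar{\pi}_0(\theta)$ for $t$ large. For the other term, write $D=D(m,\epsilon,\underline{\pi}_0)$. Then
$$
D\inf_{I}\frac{1\vee N(I)}{m|I|}\leq D\,\frac{1\vee N(I_0)}{m|I_0|},\qquad N(I):=\sum_{i=1}^m\ind{p_i\in I},
$$
for one deterministic interval $I_0\subset[\kappa,1]$ with $|I_0|=\epsilon$ that contains a minimizer $x_0$ of $f$ on $[\kappa,1]$. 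Such an interval exists once $\epsilon\leq 1-\kappa$, which holds for $t$ large. Proposition~\ref{rateCD} gives $D-1\lesssim 1/(\sqrt{m\underline{\pi}_0}\,\epsilon)$ in the independent case and $D-1\lesssim 1/(\sqrt{(n\wedge m)\underline{\pi}_0}\,\epsilon^2)$ in the conformal case. With the stated choices of $\epsilon$ and $\underline{\pi}_0$ both bounds are of order $\sqrt{\log m}\,t^{-1/4}$, which is negligible against $\eta_t$. Since $\bar{\pi}_0\leq 1$ and the ratio will be shown to be $O(1)$, the factor $D$ costs only an additive $o(\eta_t)$.

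In the independent case, $N(I_0)\sim\mathrm{Bin}(m,F(I_0))$, where $F(I_0)=\int_{I_0}f$. The Lipschitz property of $f=\pi_0+\pi_1 f_1$, with constant $L$, gives $F(I_0)\leq \epsilon(\bar{\pi}_0+L\epsilon)$, so the bias is $L\epsilon=Lm^{-1/4}$. Hoeffding's inequality gives, with probability at least $1-1/m$,
$$
N(I_0)/m\leq F(I_0)+\sqrt{\log m/(2m)}.
$$
Dividing by $\epsilon$, the fluctuation term is $\sqrt{\log m}\,m^{-1/2}/\epsilon=\sqrt{\log m}\,m^{-1/4}$. The term $1/(m\epsilon)$ coming from $1\vee N$ is negligible. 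Summing the floor, bias, fluctuation and $D$ contributions gives the stated $\eta_m=5\sqrt{2\log(2m)}\,m^{-1/4}$ for $m$ large, once constants are tracked; the $\log(2m)$ allows room for an absolute-value deviation bound.

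In the conformal case, write $p_i=(1+\#\{j\leq n: q_j\leq q_{n+i}\})/(n+1)$, where $q=1-S$ has a uniform null. Then $p_i=\hat{G}_n(q_{n+i})$ up to $O(1/n)$, where $\hat{G}_n$ is the calibration eCDF. Hence $p_i\in I_0=(a,b)$ roughly means $q_{n+i}\in(\hat{G}_n^{-1}(a),\hat{G}_n^{-1}(b))$. By DKW on the calibration sample, with probability at least $1-1/(2m)$ we have $\|\hat{G}_n-\mathrm{id}\|_\infty\leq r_n:=\sqrt{\log(4m)/(2n)}$. On this event the preimage lies in the deterministic enlarged interval $(a-r_n-1/n,\,b+r_n)$. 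Conditionally on the calibration scores, or directly using this deterministic enlargement, $N(I_0)$ is dominated by a binomial with success probability at most $(\epsilon+2r_n+1/n)(\bar{\pi}_0+L\epsilon)$, where $f$ is now the density of the test $q$'s, which is Lipschitz. The test scores are independent of the calibration scores, so Hoeffding again gives a deviation of $\sqrt{\log m/m}$ with probability at least $1-1/(2m)$. Dividing by $\epsilon$ produces the terms $L\epsilon$, $r_n/\epsilon$ and $\sqrt{\log m/m}/\epsilon$. With $\epsilon=(n\wedge m)^{-1/8}$ all three are at most of order $(1+L)\sqrt{\log m}\,(n\wedge m)^{-1/8}$; the binding rate $1/8$ comes from the $\epsilon^{-2}$ in $D$. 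The $(1+L(f))^{1/2}$ factor in $\eta$ absorbs $L$, since $L\epsilon\leq (1+L)^{1/2}\sqrt{\log m}\,\epsilon$ for $t$ large ($L$ fixed).

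The main obstacle is the conformal step. It requires controlling the random map $q\mapsto p$ uniformly enough that one fixed interval $I_0$ suffices. It also requires checking that the edge effects near $\kappa$ and $1$ and the discreteness of the $p$-values do not break the containment, and that the constants from Proposition~\ref{rateCD} fit within the claimed constant $10$. I would first try the coupling on the DKW event described above, and enlarge constants if needed.
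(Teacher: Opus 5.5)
Your proposal is correct.

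\textbf{Independent model.} This half is essentially the paper's argument:
bound the infimum by its value on one fixed interval of length $\epsilon$ containing a minimizer of $f$ on $[\kappa,1]$;
pay the Lipschitz bias $L(f)\epsilon$;
control the count on that interval (you use Hoeffding, the paper uses DKW, which is the same for a single interval);
handle $D(m,\epsilon,\underline{\pi}_0)$ through Proposition~\ref{rateCD}.

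One correction: here the $D$-term is not negligible against $\eta_m$, since both are of order $\sqrt{\log m}\,m^{-1/4}$. With $\underline{\pi}_0=41/\log(2m)$, Proposition~\ref{rateCD} gives $D-1\leq (9/\sqrt{41})\sqrt{\log(2m)}\,m^{-1/4}\approx 1.41\sqrt{\log(2m)}\,m^{-1/4}$. The total excess is then about $(1.41\,\bar{\pi}_0+0.71)\sqrt{\log(2m)}\,m^{-1/4}+L(f)\,m^{-1/4}$. This fits under $\eta_m\approx 7.07\sqrt{\log(2m)}\,m^{-1/4}$, and the constant $41$ is calibrated for exactly this.

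This step needs $\bar{\pi}_0$ bounded by a small constant. Your claim $\bar{\pi}_0\leq 1$ holds when $\kappa\leq\lambda^*$, since then $\bar{\pi}_0=\pi_0^*\leq\int_0^1 f=1$; this is the case used afterwards. For general $\kappa$ one only has $\bar{\pi}_0\leq 1/(1-\kappa)$. The paper's own computation shares this restriction: it silently drops a factor $\bar{\pi}_0/\pi_0^*$.

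\textbf{Conformal model.} Here your route genuinely differs from the paper's.

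The paper reruns the independent proof with the conformal DKW inequality under covariate shift (third item of Theorem~\ref{thDKW}, from a work in preparation), at effective calibration size $n'=n/(1+L(f))^2$. It then inflates $\Delta$ to the scale $\tau_{n',m}^{-1/4}\sqrt{\log m}/(\epsilon\bar{\pi}_0)$ so that it dominates $D-1$. This is what produces the factor $(1+L(f))^{1/2}$ and the leading term $9(1+L(f))^{1/2}\sqrt{\log m}\,(n\wedge m)^{-1/4}/\epsilon$.

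You instead write $p_i=(1+n\hat{G}_n(q_{n+i}))/(n+1)$. You use the classical DKW inequality on the uniform calibration sample to get $\{p_i\in(a,b)\}\subset\{q_{n+i}\in(a-1/n-r_n,\,b+r_n)\}$ simultaneously for all $i$, on an event of probability at least $1-1/(2m)$. You then apply Hoeffding to the independent test sample on this deterministic enlarged interval $J$. This is valid. The Lipschitz bound should read $\bar{\pi}_0+L(f)|J|$ rather than $\bar{\pi}_0+L(f)\epsilon$, which is harmless since $|J|=\epsilon(1+o(1))$.

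Your argument is self-contained, uses only classical inequalities and a single interval, and is sharper. The stochastic terms are $O(\sqrt{\log m}/(\sqrt{n\wedge m}\,\epsilon))$ and $D-1=O(\sqrt{\log m}\,(n\wedge m)^{-1/4})$. So the only term of order $(n\wedge m)^{-1/8}$ is the bias $L(f)\epsilon$, and $(1+L(f))^{1/2}$ merely leaves room to absorb $L(f)$.

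The paper's route buys a one-line reduction to the independent case through a reusable uniform deviation bound. The cost is dependence on that unpublished inequality.

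Both arguments implicitly need a growth condition hidden in ``$t$ large enough''. For example, Proposition~\ref{rateCD} requires $\epsilon\gg\log m/\sqrt{n\wedge m}$ to apply.
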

The proof of Proposition~\ref{propsufficientunifANDconformal} is provided in Section~\ref{proofpropsufficientunifANDconformal}.

Hence, in both settings, we have that $\hat{\pi}^{\IMS}_{0,\epsilon,\underline{\pi}_0,\kappa}$ with the choice $\kappa=\lambda^*$ satisfies \eqref{sufficientcondition}. Following Theorem~\ref{genthoptimal} (ii), this leads to optimality for $\ABH_\alpha(\hat{\pi}^{\IMS}_{0,\epsilon,\underline{\pi}_0,\lambda^*},\lambda^*)$. However, $\lambda^*$ is unknown. Fortunately, the IMS-$\hat{\kappa}$ procedure dominates the latter procedure (because it dominates all capping procedures by Lemma~\ref{lem:kappahat}), which allows to deduce the following optimality result. 

\begin{corollary}[Optimality of IMS-$\hat{\kappa}$ procedure for a uniform null distribution] \label{corgenunif}
    Consider either the independent two-group setting  or the conformal two-group setting with the parameter space $\Theta_{U} $ in \eqref{equThetaunif} with the associated quantities $\pi^*_0$  and $\lambda^*$ given by \eqref{pi0starunif}. 
    Consider the rates $\eta^*_t=5 (2\log (2m) )^{1/2}m^{-1/4}$, $\delta^*_t=1/m$ in the independent two-group setting ($t=m$) and 
    $\eta^*_t=10 (1+L(f))^{1/2}\frac{\sqrt{\log m }}{(n\wedge m)^{1/8}} $, $\delta^*_t=1/m$ in the conformal setting ($t=(n,m)$). 
    Then the power of the IMS-$\hat{\kappa}$ procedure  is approximately better than any other FDR controlling ABH procedure of the form $\ABH_\alpha(\tilde{\pi}_{0})$ for some $\alpha$-range:
for any  $\pi_0$-estimator $\tilde{\pi}_{0}$ that is FDR conservative (possibly after $\kappa$-capping with $\kappa\in (0,1]$) over $\Theta$ in the sense of \eqref{equFDRconservative} and  concentrating over $\Theta$  in the sense of \eqref{equconcentration}, we have  for all $\theta \in \Theta$, and for $t$ large enough
\begin{equation}\label{equoptimalnocapping}
   \sup_{0<\alpha \leq \lambda^*\pi_0^*}\{ \Pow_\theta(\ABH_{\alpha'}(\tilde{\pi}_0))-\Pow_\theta(\ABH_\alpha(\hat{\pi}^{\IMS}_{0,\epsilon,\underline{\pi}_0,\hat{\kappa}},\hat{\kappa}))\}\leq \delta^*_t + \delta_t(\theta),
\end{equation}
where $\alpha'=\alpha \big(1-\eta^*_t/\pi_0^*\big) (1-\eta_t(\theta)/(\pi_0^*-\eta_t(\theta)))$.
         In addition, this applies to the oracle choice $\tilde{\pi}_0=\pi_{0}^*$ with $ \eta_t(\theta)=\delta_t(\theta)=0$ and $\alpha'=\alpha \big(1-\eta^*_t/\pi_0^*\big) $.          
\end{corollary}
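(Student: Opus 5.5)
The plan is to reduce Corollary~\ref{corgenunif} to Theorem~\ref{genthoptimal}~(ii), applied to the fixed-capping procedure $\ABH_\alpha(\hat{\pi}^{\IMS}_{0,\epsilon,\underline{\pi}_0,\lambda^*},\lambda^*)$ with the oracle capping $\kappa=\lambda^*$, and then to pass to IMS-$\hat{\kappa}$ by a domination argument. Fix $\theta\in\Theta_U$ and the parameter choices $\epsilon,\underline{\pi}_0$ of Proposition~\ref{propsufficientunifANDconformal}. If $\lambda^*=0$, the range $0<\alpha\leq \lambda^*\pi_0^*$ is empty and there is nothing to prove. So assume $\lambda^*>0$. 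If moreover $\lambda^*> 1-\epsilon$, I would note that $\epsilon\to 0$, so for $t$ large enough $\lambda^*\leq 1-\epsilon$ unless $\lambda^*=1$. In the boundary case $\lambda^*=1$, I would instead use the capping $\kappa_t=1-\epsilon$ and check that $\min_{[\kappa_t,1]}f=f(1)=\pi_0^*$ still holds. This is the one small technical point to verify, either via the Lipschitz continuity of $f$ or directly from the definition of $\lambda^*$ as the first minimizer: any minimizer on $[\kappa_t,1]$ gives a value $\geq\pi_0^*$, and since $\pi_0^*$ is the global minimum, equality holds at $1$.

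Step 1 is concentration. With $\kappa=\lambda^*$, Proposition~\ref{propsufficientunifANDconformal} gives $\bar{\pi}_0(\theta)=\min_{[\lambda^*,1]}f=\pi_0^*(\theta)$. Hence $\hat{\pi}^{\IMS}_{0,\epsilon,\underline{\pi}_0,\lambda^*}$ satisfies the sufficient condition \eqref{sufficientcondition} with rates $\eta^*_t,\delta^*_t$ exactly as stated in the corollary, in both the independent and conformal two-group models.

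Step 2 applies Theorem~\ref{genthoptimal}~(ii) with $\hat{\pi}_0=\hat{\pi}^{\IMS}_{0,\epsilon,\underline{\pi}_0,\lambda^*}$ and $\kappa=\lambda^*$. That theorem only requires \eqref{sufficientcondition} for $\hat{\pi}_0$; FDR control of the fixed-$\kappa$ IMS is available anyway from Theorem~\ref{thcontrolIMS}, but it is not needed for the power statement. For any $\tilde{\pi}_0$ that is FDR conservative (possibly after capping) and concentrating, this yields, for $t$ large,
\[
\sup_{0<\alpha\leq\lambda^*\pi_0^*}\{\Pow_\theta(\ABH_{\alpha'}(\tilde{\pi}_0))-\Pow_\theta(\ABH_\alpha(\hat{\pi}^{\IMS}_{0,\epsilon,\underline{\pi}_0,\lambda^*},\lambda^*))\}\leq\delta^*_t+\delta_t(\theta),
\]
with the same $\alpha'$ as in the corollary. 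The oracle case $\tilde\pi_0=\pi_0^*$ is inherited verbatim from the theorem.

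Step 3 is domination. By \eqref{equkappahatthebest}, established in Lemma~\ref{lem:kappahat}, the rejection set of IMS-$\hat{\kappa}$ contains that of $\ABH_\alpha(\hat{\pi}^{\IMS}_{0,\epsilon,\underline{\pi}_0,\kappa},\kappa)$ for every $\kappa\in[0,1-\epsilon]$, in particular for $\kappa=\lambda^*$. The power \eqref{equ:pow} counts true rejections among a pointwise larger set, so it is pointwise monotone in the rejection set. Therefore $\Pow_\theta$ of IMS-$\hat{\kappa}$ is at least that of the $\lambda^*$-capped version, and substituting this into the Step~2 bound gives the claim. The main (mild) obstacle is the edge cases for $\lambda^*$ and making sure Lemma~\ref{lem:kappahat}'s domination holds realization-wise rather than only in expectation. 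Since \eqref{equkappahatthebest} is a set identity for each realization, I expect this to go through directly.
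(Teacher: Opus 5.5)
Your argument is the one the paper gives in the text just before the corollary. Proposition~\ref{propsufficientunifANDconformal} with $\kappa=\lambda^*$ gives \eqref{sufficientcondition} with $\bar\pi_0=\pi_0^*$. Theorem~\ref{genthoptimal}~(ii) then gives optimality of the $\lambda^*$-capped IMS over $0<\alpha\le\lambda^*\pi_0^*$. Finally, the realization-wise inclusion \eqref{equkappahatthebest} from Lemma~\ref{lem:kappahat}~(iv) transfers this to IMS-$\hat\kappa$.

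The one thing to fix is your boundary case $\lambda^*=1$. The paper does not treat this case; it plugs in $\kappa=\lambda^*$ even though IMS requires $\kappa\le 1-\epsilon$, so your care here is warranted.

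\begin{itemize}
\item The gap: citing Theorem~\ref{genthoptimal}~(ii) with $\kappa_t=1-\epsilon$ only yields the supremum over $\alpha\le(1-\epsilon)\pi_0^*$. The corollary claims the range $\alpha\le\lambda^*\pi_0^*=\pi_0^*$, so the sliver $((1-\epsilon)\pi_0^*,\pi_0^*]$ is not covered.
\item The fix: rerun the proof of Theorem~\ref{genthoptimal} rather than citing it. Work on the event $\{\hat\pi_0\le\pi_0^*+\eta^*_t\}\cap\{\tilde\pi_0\ge\pi_0^*-\eta_t\}$.
\item On that event, the competitor $\ABH_{\alpha'}(\tilde\pi_0)$ is BH at level $\beta=\alpha'/\tilde\pi_0\le(\alpha/\pi_0^*)(1-\eta^*_t/\pi_0^*)\le 1-\eta^*_t/\pi_0^*$. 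Its threshold is therefore at most $\beta\le 1-\epsilon$, since both rate choices satisfy $\eta^*_t\ge\epsilon$.
\item Also on that event, $\beta\le\alpha/\hat\pi_0$. So the competitor's last rejected $p$-value satisfies $p_{(k')}\le(1-\epsilon)\wedge(\alpha k'/(m\hat\pi_0))$.
\item Hence the $(1-\epsilon)$-capped IMS rejects a superset of the competitor's rejections there. The bound $\delta^*_t+\delta_t(\theta)$ then follows exactly as before.
\end{itemize}
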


\begin{remark}
    \label{rem:RBandMSsuboptimal}
    When considering the parameter space $\Theta_{U} $, both the right-boundary estimator and the MS can be suboptimal. This is because
    $$
\inf_{\lambda\in [0,1)} \frac{1-F(\lambda)}{1-\lambda} \leq  \min_{\lambda\in [0,1]} f(\lambda)  =\pi_0^*(\theta)
    $$
    with a possibly strict inequality, for instance when $F$ is strictly convex near $1$. Explicit examples are provided in our experiments in Section~\ref{sec:simulation}, see panels (a.2) and (a.3) of Figure~\ref{fig:simulation.uniform}, panels (c.2) and (c.3) of Figure~\ref{fig:simulation.conformal.1000} and panel (d.2) of Figure~\ref{fig:simulation.cifar10}.   
\end{remark}

\begin{remark}
    The choice of the parameters $(\epsilon,\underline{\pi}_0)$ in our optimality result is mainly of theoretical interest. It is done up to proportional constants that might be adjusted to increase the performance of the estimator for moderately large $m$. Also, a choice like $\underline{\pi}_0 = 0.5$ (as in our numerical experiments in  Section~\ref{sec:simulation} below) is suitable if the minimum $\pi_0^*$ is expected to occurs above $0.5$.
\end{remark}

\section{Numerical experiments}
\label{sec:simulation}

\subsection{Simulation settings}

Without further specification, we set the FDR level at $\alpha = 0.2$, $m = 1000$.
We consider three sets of data-generating mechanisms.  
\begin{enumerate}
    \item [(a)] Independent $p$-values, $F_0 = U(0,1)$. 
    
    \begin{enumerate}
        \item [(a.1)] Few strong signals. 
        We consider $\pi_0 = 0.9$. Null $p$-values follow $U(0,1)$. Alternative $p$-values follow $\Phi(\mathcal{N}(-2,1))$.
        In this scenario, the null proportion is close to one and the standard BH is not overly-conservative.

    \item [(a.2)] Close-to-one alternatives. 
    We consider $\pi_0 = 0.5$. Null $p$-values follow $U(0,1)$. Alternative $p$-values follow the mixture $0.5 \Phi(\mathcal{N}(-2,0.25))+0.5 \Phi(\mathcal{N}(2,0.25))$, and $f_1$ takes the minimal value at $p=0.5$. This construction yields a non-negligible fraction of alternative $p$-values close to $1$, resulting in a basin-shaped histogram of $p$-values.

    \item [(a.3)] Many signals. 
         We consider $m = 2500$, $\pi_0 = 0.2$. Null $p$-values follow $U(0,1)$. Alternatives follow the mixture $0.5\Phi(N(-1.5,1)) + 0.4\Phi(N(-1.25,1)) + 0.1\Phi(N(3,1))$. BH with the true null proportion will reject all hypotheses.
        
    \end{enumerate}

        \item [(b)] Independent $p$-values, $F_0 \succeq U(0,1)$.
    \begin{enumerate}
    \item [(b.1)] Conservative null. We consider $\pi_0 = 0.6$. Null $p$-values follow $\text{Beta}(3,1)$. Alternative $p$-values follow $\Phi(\mathcal{N}(-1.5,1))$.

    \item [(b.2)] Two local minima with conservative null.
     We consider $\pi_0=0.5$. Null $p$-values follow the mixture $0.5\Phi(\mathcal{N}(0,1)) + 0.5\Phi(\mathcal{N}(0.75,1))$.
     Alternative $p$-values are generated from the mixture $0.7\Phi(\mathcal{N}(-0.25,0.25)) + 0.3\Phi(\mathcal{N}(-2,0.25))$.

     \item [(b.3)] Many signals with conservative null.
        We consider $m = 2500$, $\pi_0 = 0.2$. Null $p$-values follow $\Phi(N(1,1))$. Alternatives follow the mixture $0.5\Phi(N(-1.5,1)) + 0.4\Phi(N(-1.25,1)) + 0.1\Phi(N(3,1))$.
    \end{enumerate}

    \item [(c)] Conformal $p$-values. We set the size of the calibration dataset at $n = 1000$.
    In settings (c.1) to (c.3), we use one minus the $p$-values from settings (a.1) to (a.3) as conformity scores for the cases of few strong signals, close-to-one alternatives, and many signals, respectively.
\end{enumerate}

We display the results with independent $p$-values in Figure~\ref{fig:simulation.uniform} and Figure~\ref{fig:simulation.conservative}, and conformal $p$-values ($n=1000$) in Figure~\ref{fig:simulation.conformal.1000} in the appendix.

\subsection{Methods for comparison}
We compare the BH procedure combined with various null proportion estimators. 
\begin{enumerate}[label=(\roman*)]
    \item Standard BH (std) which does not use a $\pi_0$ estimator.
    \item Adaptive BH with the null proportion estimator in \cite{benjamini2006adaptive} (BY). 
    Particularly, we first apply the standard BH at the reduced level $\alpha/(1+\alpha)$, denote the resulting number of rejections by $R_0$, and estimate the null proportion by $\hat{\pi}^{\mathrm{BY}} := 1 - R_0/m$. We then apply the standard BH again at level $\alpha /((1+\alpha)\hat{\pi}^{\mathrm{BY}})$ to obtain the final rejection set.
    \item Storey's estimator (S), for 
which we consider three tuning parameters: S.S with a small $\lambda=0.2$, 
S.M with a medium $\lambda=0.5$, and S.L with a large $\lambda=0.8$. No capping is required to provide FDR control (Corollary~\ref{corcontrolStorey}).

    \item Adaptive Storey (AS) procedure. We
    use stopping times in the grid $n_B/m \times \{0.2m/n_B,\ldots,0.8 m/n_B\}$ with the bin-size $n_B = 20$.
    The rejection threshold is capped at $0.2$.
    \item Minimal Storey (MS). We use $\lambda_{\max}=0.8$ for MS. We set $\underline{\pi}_0=0.5$ for all settings except (a.3), (b.3), (c.3). In (a.3), (b.3), (c.3) where there is believed to be a small proportion of nulls, we use $\underline{\pi}_0=0.2$. We experiment with a conservative $\underline{\pi}_0=0.5 >\pi_0$ in Figure~\ref{appe:fig:simulation.conservative.underline.pi0} in the appendix.
    No capping is required to provide FDR control (Theorem~\ref{thcontrolMS}).

    \item [(vi)] Interval-based minimal Storey with $\hat \kappa$ (IMS)\footnote{In the current numerical experiments, for independent $p$-values, $m$, $n$ are large enough to ensure the monotonicity of $c(s,\epsilon)$ and $d(s,\epsilon)$ by Proposition~\ref{prop:Neps}. For conformal $p$-values, we precompute $c(s,\epsilon)$ and $d(s,\epsilon)$ over a range of $s$ values and use their maximum.}. We use $\varepsilon = m^{-1/4}$ for independent $p$-values and $\varepsilon = m^{-1/8}$ for conformal $p$-values, and the same $\underline{\pi}_0$ as that of MS.  
    It uses the capping at $\hat \kappa$ to provide FDR control (Theorem~\ref{thcontrolIMSkappahat}).
    \item [(vii)] Oracle (orc). The oracle uses the true $\pi_0$.
\end{enumerate}
All procedures are guaranteed to control the FDR at level $\alpha$, except IMS for conservative nulls (setting (b.1), (b.2), (b.3)) and is thus not compared therein.

\subsection{
Results
}

For independent $p$-values with $U(0,1)$ nulls (Figure~\ref{fig:simulation.uniform}), all methods, except standard BH and BY, perform comparably to the oracle in the setting with a small
number of strong signals (panel (a.1)).
When a subset of alternatives yield large $p$-values (panel (a.2)), for example when the true effect goes opposite to the prespecified direction, IMS is the most powerful; the other Storey-type methods overestimate $\pi_0$ due to treating these large alternative $p$-values as nulls.
When signals are abundant (panel (a.3)), AS is affected by capping;
MS is free from capping and is more powerful than the Storey's method with a fixed $\lambda$ for being adaptive; IMS with $\hat{\kappa}$ is comparably powerful due to its data-driven $\hat{\kappa}$, and we show in the appendix that the capping is more pronounced for IMS with a fixed $\kappa$ (Figure~\ref{appe:fig:IMS.kappa}).
The power comparison based on conformal $p$-values is similar (Figure~\ref{fig:simulation.conformal.1000} in the appendix) except that the correction constants of IMS are more conservative.

For independent $p$-values with conservative nulls
(Figure~\ref{fig:simulation.conservative}), IMS may fail to control the FDR and is therefore excluded from comparison. Across panels (b.1) to (b.3), MS is the most powerful. In panel (b.2), when the $p$-value  histogram exhibits multiple local minima, AS may terminate early and yield a conservative estimator of
the null proportion; by contrast, MS is always able to locate the global minimum.
In panel (b.3), the signal is abundant, and AS is affected by capping, as in panel (a.3).

\begin{figure}[tbp]
        \centering
        \begin{minipage}{0.8\textwidth}
                \centering
                \includegraphics[clip, trim = 0cm 1cm 0cm 0cm, width = 1\textwidth]{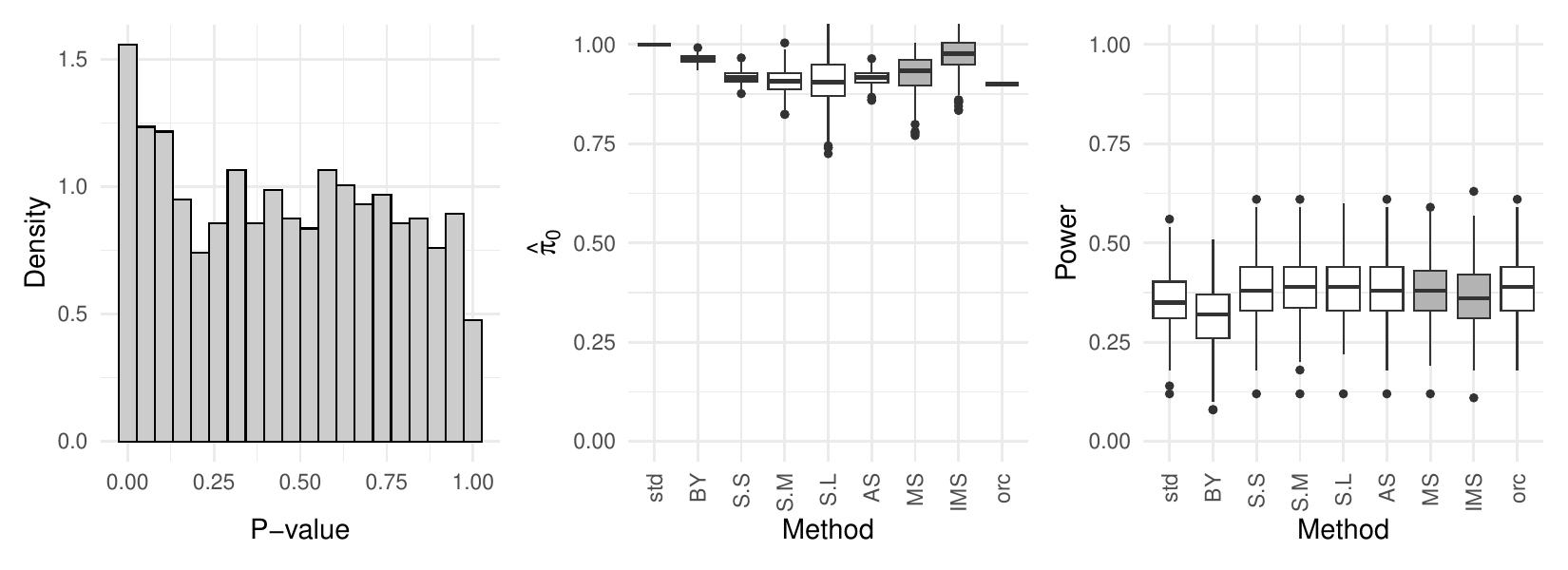}
        \subcaption*{(a.1) Few strong signals}
        \end{minipage}
        \begin{minipage}{0.8\textwidth}
                \centering
                \includegraphics[clip, trim = 0cm 1cm 0cm 0cm, width = 1\textwidth]{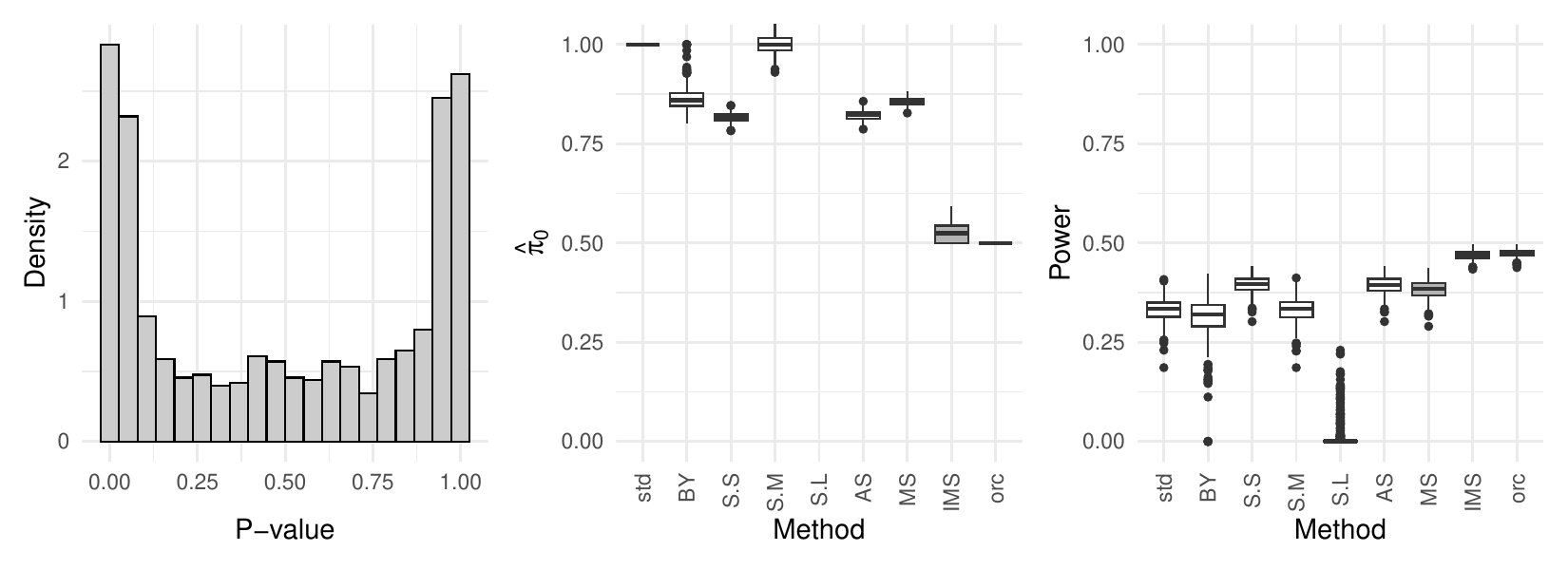}
        \subcaption*{(a.2) Close-to-one alternatives}
        \end{minipage}
     \begin{minipage}{0.8\textwidth}
                \centering
                \includegraphics[clip, trim = 0cm 1cm 0cm 0cm, width = 1\textwidth]{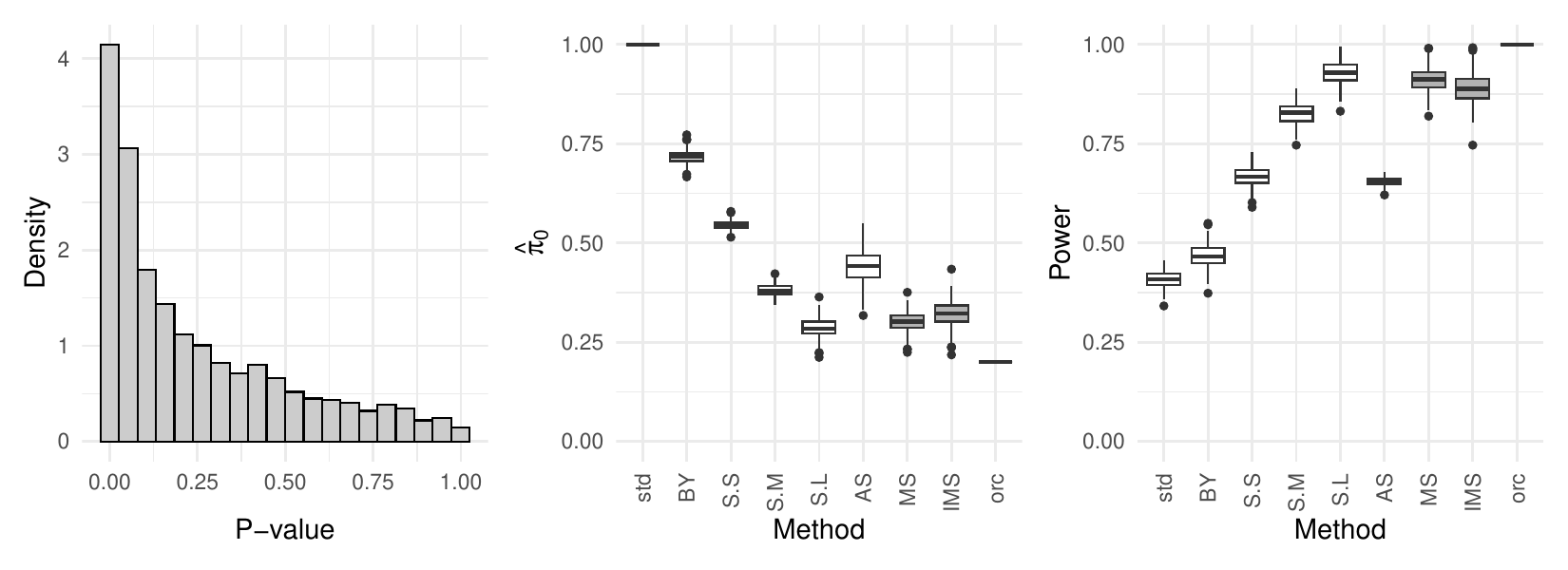}
        \subcaption*{(a.3) Many strong signals}
        \end{minipage}     

            \caption{Comparison of null proportion estimators with \emph{independent} $p$-values, $F_0 = U(0,1)$.
           Column 1 displays the histogram of the $p$-values of a random realization;
           Column 2 contains boxplots of the null proportion estimators; 
           Column 3 displays the power.
            In Column 2 and 3, MS and IMS are in grey.
            Each scenario is repeated $400$ times.}
        \label{fig:simulation.uniform}
\end{figure}

\begin{figure}[tbp]
        \centering
        \begin{minipage}{0.8\textwidth}
                \centering
                \includegraphics[clip, trim = 0cm 1cm 0cm 0cm, width = 1\textwidth]{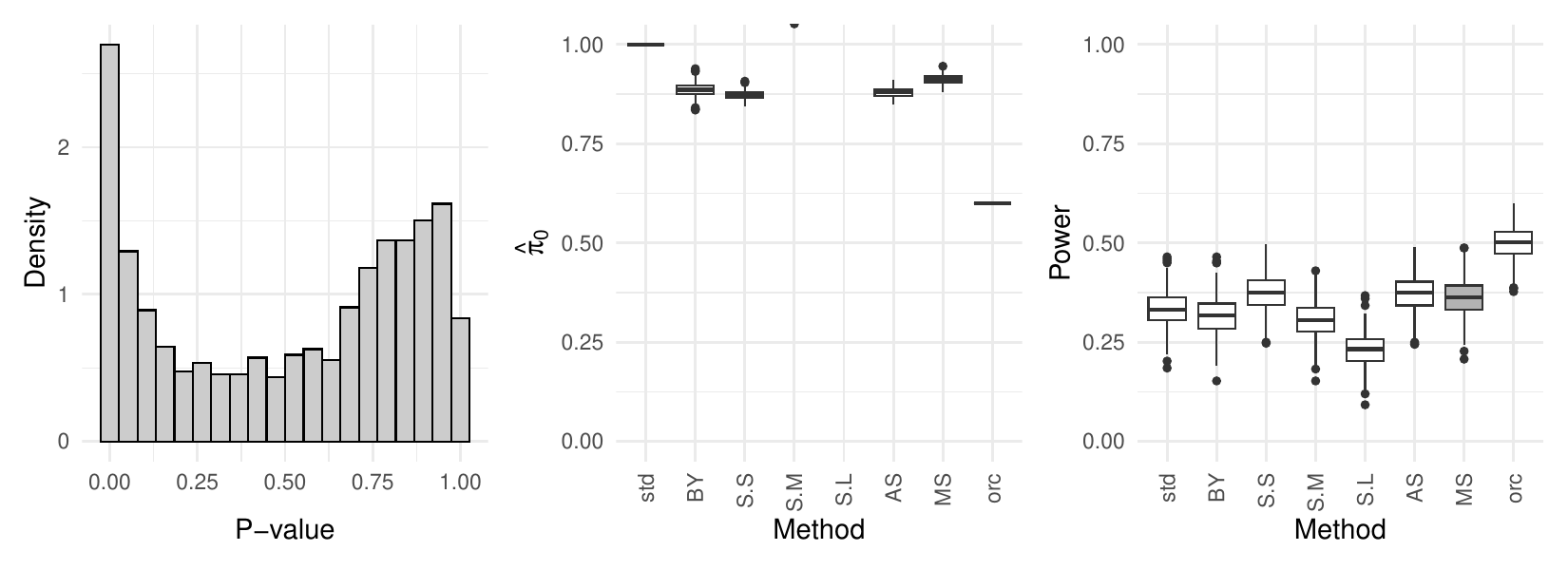}
        \subcaption*{(b.1) Conservative null}
        \end{minipage}
        \begin{minipage}{0.8\textwidth}
                \centering
                \includegraphics[clip, trim = 0cm 1cm 0cm 0cm, width = 1\textwidth]{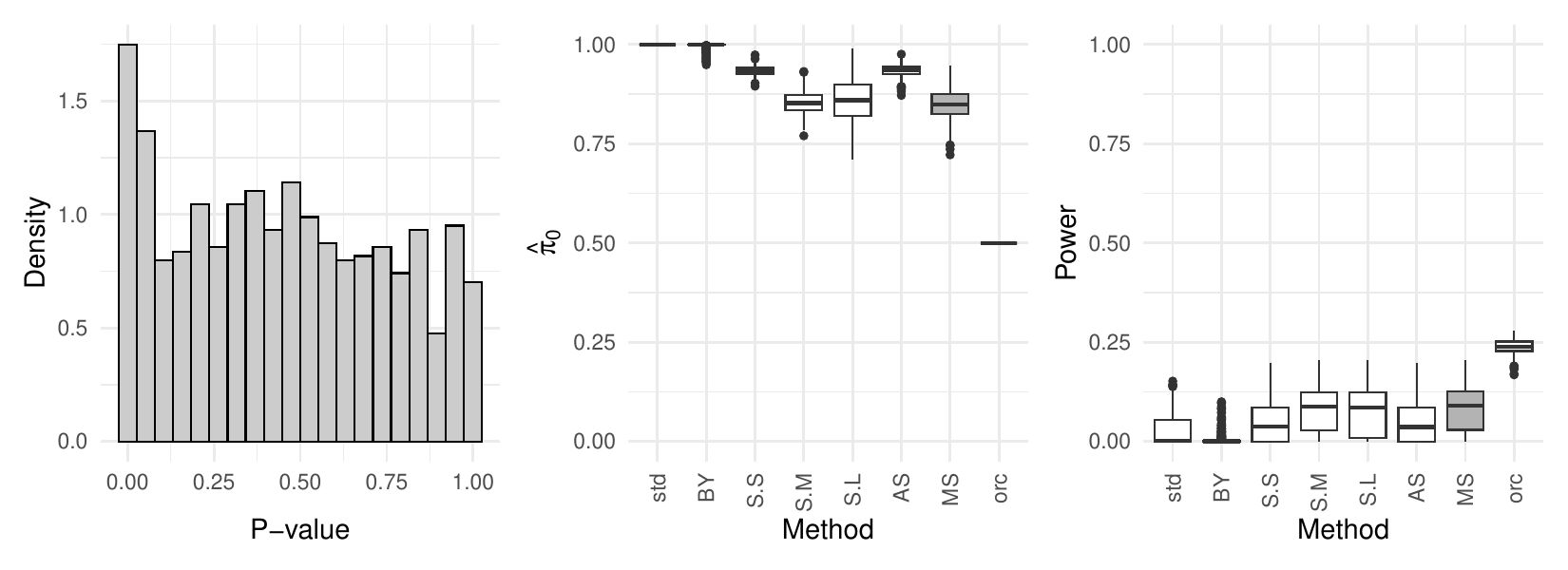}
        \subcaption*{(b.2) Two local minima with conservative null}
        \end{minipage}
        \begin{minipage}{0.8\textwidth}
                \centering
                \includegraphics[clip, trim = 0cm 1cm 0cm 0cm, width = 1\textwidth]{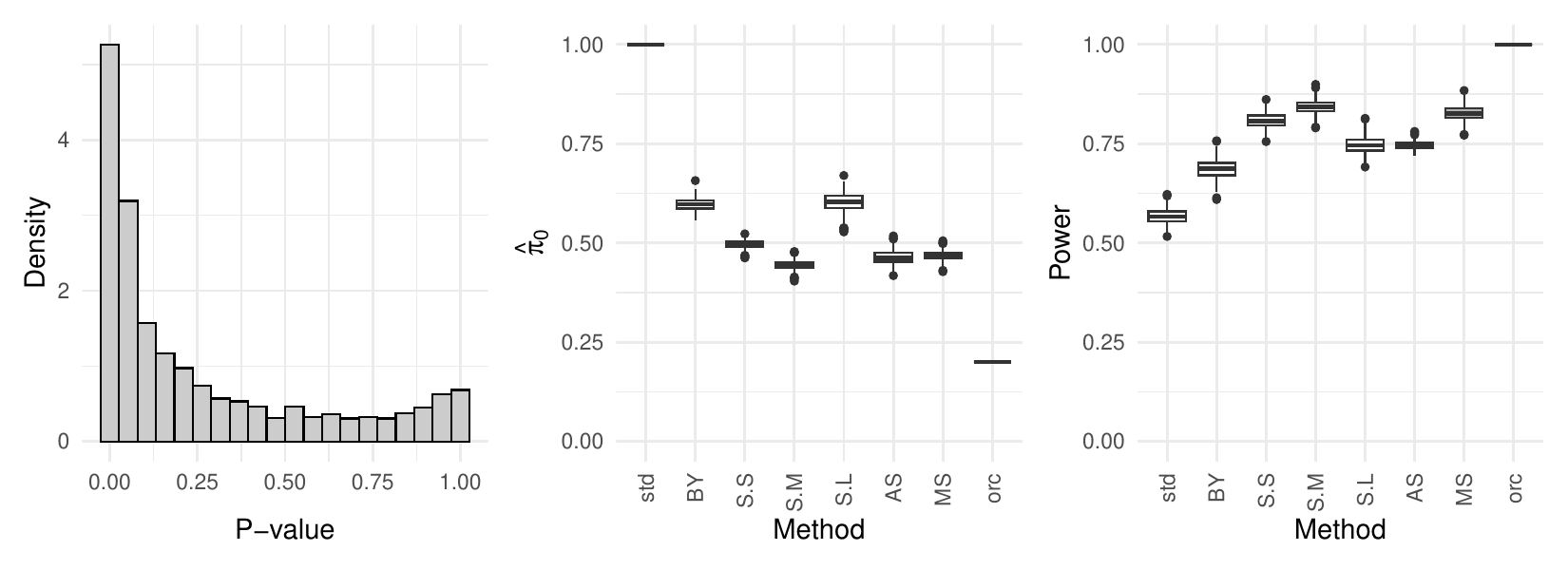}
        \subcaption*{(b.3) Many strong signals with conservative null}
        \end{minipage}

            \caption{Comparison of null proportion estimators with independent $p$-values, $F_0 \succeq U(0,1)$.
           Column 1 displays the histogram of the $p$-values of a random realization;
           Column 2 contains boxplots of the null proportion estimators; 
           Column 3 displays the power.
            In Column 2 and 3, MS is in grey. IMS does not control FDR for conservative nulls and is thus not compared.
            Each scenario is repeated $400$ times.}
        \label{fig:simulation.conservative}
\end{figure}

\section{Real data analysis: CIFAR-10 data}\label{sec:real.data}

We analyze conformal $p$-values generated from the CIFAR-10 dataset following \cite{marandon2024adaptive}. 
We consider two scenarios differing in the configuration of null, alternative, and null proportion.
In each setting, for each observation, the classifier produces predicted probabilities over all $10$ classes. The conformity score is taken to be the sum of the predicted probabilities over the alternative classes. 
Thus, larger scores provide stronger evidence against the null. 

\begin{itemize}
    \item [(d.1)] {Animal v.s. transportation, $\pi_0 = 0.8$.}
    Null classes are animals (cat, deer, dog, horse, bird, frog), and alternative classes are transportation tools (car, truck, airplane, ship). 
    The signal is strong, since pictures of animals and transportation are well separated. 
    The validation set contains $n = 2000$ observations, and the test set contains $m = 1000$ observations, of which $m_0 = 800$ are nulls.

    \item [(d.2)] {Land v.s. non-land, $\pi_0 = 0.2$.} 
    Null classes are land-based objects (car, truck, cat, deer, dog, horse), and alternative classes are objects not typically found on land (airplane, ship, bird, frog). The signal is weaker than in the animal-versus-transportation setting. The validation set contains $n = 2000$ observations, and the test set contains $m = 1000$ observations, of which $m_0 = 200$ are nulls.
\end{itemize}

We compare the null proportion estimators considered in Section~\ref{sec:simulation}. Overall, our proposal MS and IMS tend to produce the highest power (IMS is slightly less powerful because of its larger correction constant).
For AS, in the ``land vs. non-land'' setting, AS may yield a larger null proportion estimate because the $p$-value histogram is bumpy and the stopping time may terminate early. AS's rejection threshold is also influenced by capping, leading to an additional loss of power. 
For S.L., in the ``animal vs. transportation'' setting, S.L is more variable when $n$ is small; for S.M. and S.S., in the ``land vs. non-land'' setting, S.M and S.S are more conservative in estimating the null proportion as $f_1$ obtains the minimal value around $1$.

\begin{figure}[tbp]
        \centering
        \begin{minipage}{0.8\textwidth}
                \centering
                \includegraphics[clip, trim = 0cm 1cm 0cm 0cm, width = 1\textwidth]{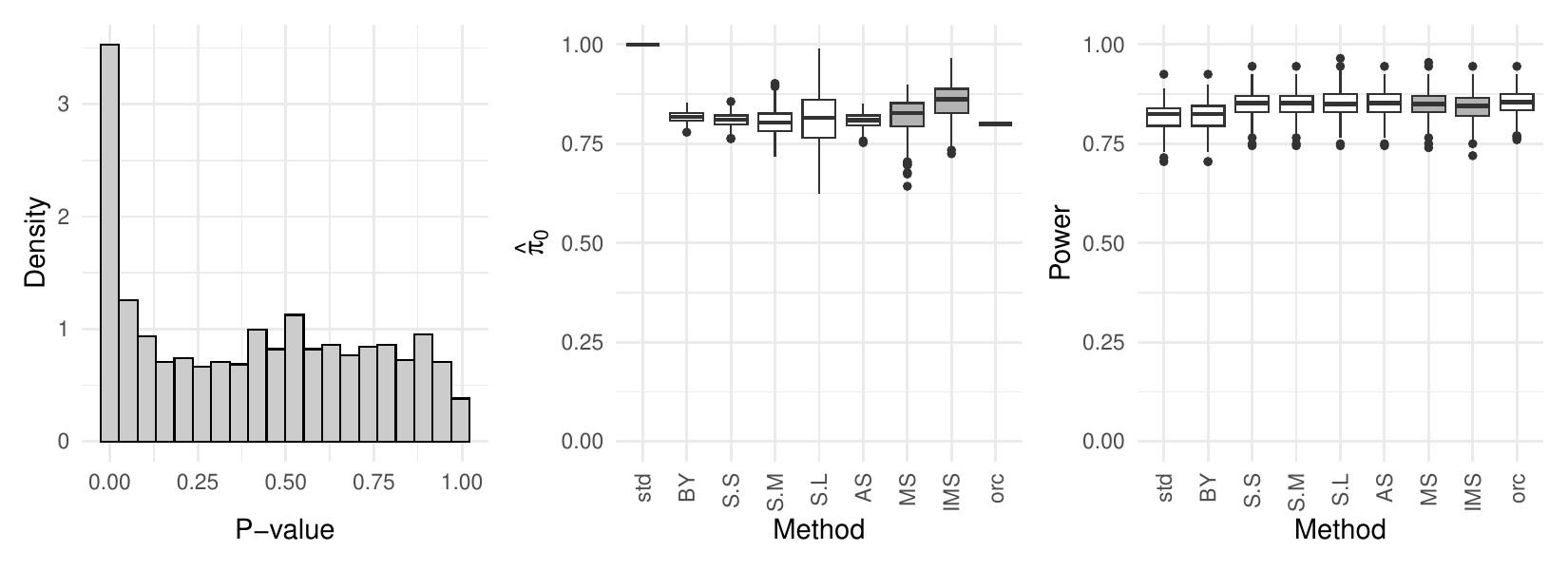}
        \subcaption*{(d.1) Animal v.s. transportation, $\pi_0 = 0.8$.}
        \end{minipage}      
        \begin{minipage}{0.8\textwidth}
                \centering
                \includegraphics[clip, trim = 0cm 1cm 0cm 0cm, width = 1\textwidth]{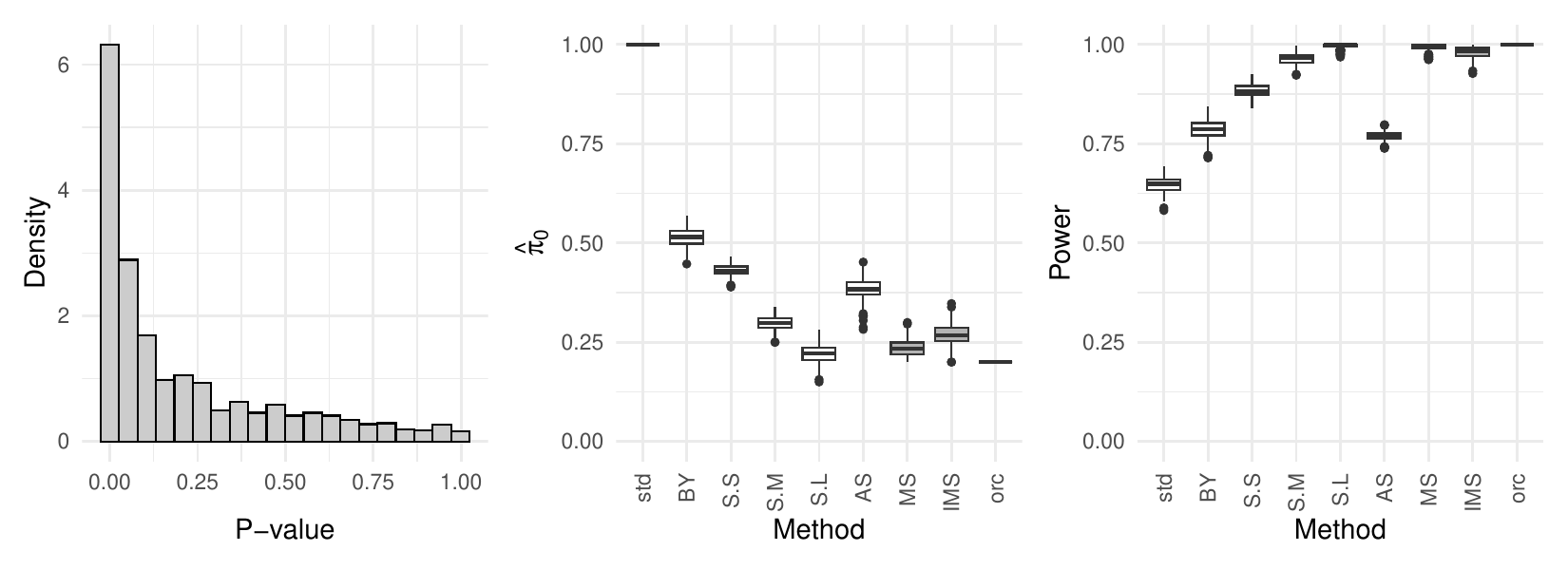}
        \subcaption*{(d.2) Land v.s. non-land,  $\pi_0 = 0.2$} 
        \end{minipage}   

            \caption{Comparison of null proportion estimators with conformal $p$-values derived from CIFAR-10 dataset. 
            For more details, see the caption of Figure~\ref{fig:simulation.uniform}.}
        \label{fig:simulation.cifar10}
\end{figure}

\section{Discussion}

This paper solved a long-standing issue in multiple testing, by identifying optimal $\pi_0$-estimators that provide finite sample FDR control when plugged into BH procedure. This was only partly addressed in previous literature, by making strong distributional 
assumption on the alternative $p$-value distribution (typically, strictly convex density \citealp{gao2025adaptive}). By contrast, we studied cases where no assumption is imposed on the alternative distribution.  
We showed that the MS procedure is suitable for the case of super-uniform null distribution while the IMS procedure is suitable for an exactly uniform null distribution. Beyond the classical multiple testing setting, we also showed that our IMS approach is suitable for a conformal setting where the scores are continuous, making our approach usable in combination with any score function produced by machine learning algorithms.  

Our IMS procedure used a capping $\hat{\kappa}$, which has the advantage to be data-driven, so adjustable from the data. However, while it is needed in our FDR controlling proof, the capping might still be a proof artifact. It impacts the quality of our optimality result:  the IMS procedure is optimal only  for $\alpha$ being in a vicinity of $0$ (Corollary~\ref{corgenunif}). 
A future challenging research direction would be to relax this capping to provide a procedure optimal on the whole range $\alpha\in (0,1)$ in the settings with uniform nulls. 

\section*{Acknowledgement}
The authors acknowledge grants ANR-21-CE23-0035 (ASCAI) and ANR-23-CE40-0018-01 (BACKUP) of the French National Research Agency ANR, the Emergence project MARS of Sorbonne Universit\'e.

\bibliographystyle{plain}
\bibliography{biblio}

@article{jin2008proportion,
  title={Proportion of non-zero normal means: universal oracle equivalences and uniformly consistent estimators},
  author={Jin, Jiashun},
  journal={Journal of the Royal Statistical Society Series B: Statistical Methodology},
  volume={70},
  number={3},
  pages={461--493},
  year={2008},
  publisher={Oxford University Press}
}

@article{rebafka2022powerful,
  title={Powerful multiple testing of paired null hypotheses using a latent graph model},
  author={Rebafka, Tabea and Roquain, {\'E}tienne and Villers, Fanny},
  journal={Electronic Journal of Statistics},
  volume={16},
  number={1},
  pages={2796--2858},
  year={2022},
  publisher={The Institute of Mathematical Statistics and the Bernoulli Society}
}

@article{carpentier2021estimating,
  title={Estimating minimum effect with outlier selection},
  author={Carpentier, Alexandra and Delattre, Sylvain and Roquain, Etienne and Verzelen, Nicolas},
  journal={The Annals of Statistics},
  volume={49},
  number={1},
  pages={272--294},
  year={2021},
  publisher={JSTOR}
}

@article{cai2010optimal,
  title={OPTIMAL RATES OF CONVERGENCE FOR ESTIMATING THE NULL DENSITY AND PROPORTION OF NONNULL EFFECTS IN LARGE-SCALE MULTIPLE TESTING},
  author={CAI, T TONY and JIN, JIASHUN},
  journal={The Annals of Statistics},
  volume={38},
  number={1},
  pages={100--145},
  year={2010}
}

@article{ignatiadis2021covariate,
  title={Covariate powered cross-weighted multiple testing},
  author={Ignatiadis, Nikolaos and Huber, Wolfgang},
  journal={Journal of the Royal Statistical Society Series B: Statistical Methodology},
  volume={83},
  number={4},
  pages={720--751},
  year={2021},
  publisher={Oxford University Press}
}

@article{durand2019adaptive,
  title={Adaptive p-value weighting with power optimality},
  author={Durand, Guillermo},
  journal={Electronic Journal of Statistics},
  volume={13},
  pages={3336--3385},
  year={2019}
}

@article{roquain2022false,
  title={False discovery rate control with unknown null distribution: Is it possible to mimic the oracle?},
  author={Roquain, Etienne and Verzelen, Nicolas},
  journal={The Annals of Statistics},
  volume={50},
  number={2},
  pages={1095--1123},
  year={2022},
  publisher={Institute of Mathematical Statistics}
}

@article{heller2021optimal,
  title={Optimal control of false discovery criteria in the two-group model},
  author={Heller, Ruth and Rosset, Saharon},
  journal={Journal of the Royal Statistical Society Series B: Statistical Methodology},
  volume={83},
  number={1},
  pages={133--155},
  year={2021},
  publisher={Oxford University Press}
}

@article{celisse2010cross,
  title={A cross-validation based estimation of the proportion of true null hypotheses},
  author={Celisse, Alain and Robin, St{\'e}phane},
  journal={Journal of Statistical Planning and Inference},
  volume={140},
  number={11},
  pages={3132--3147},
  year={2010},
  publisher={Elsevier}
}

@article{liang2012adaptive,
  title={Adaptive and dynamic adaptive procedures for false discovery rate control and estimation},
  author={Liang, Kun and Nettleton, Dan},
  journal={Journal of the Royal Statistical Society Series B: Statistical Methodology},
  volume={74},
  number={1},
  pages={163--182},
  year={2012},
  publisher={Oxford University Press}
}

@article{li2019multiple,
  title={Multiple testing with the structure-adaptive Benjamini--Hochberg algorithm},
  author={Li, Ang and Barber, Rina Foygel},
  journal={Journal of the Royal Statistical Society Series B: Statistical Methodology},
  volume={81},
  number={1},
  pages={45--74},
  year={2019},
  publisher={Oxford University Press}
}

@misc{dohler2023unifiedclassnullproportion,
      title={A unified class of null proportion estimators with plug-in FDR control}, 
      author={Sebastian D{\"o}hler and Iqraa Meah},
      year={2023},
      eprint={2307.13557},
      archivePrefix={arXiv},
      primaryClass={stat.ME},
      url={https://arxiv.org/abs/2307.13557}, 
}

@article{guo2020adaptive,
  title={Adaptive controls of FWER and FDR under block dependence},
  author={Guo, Wenge and Sarkar, Sanat},
  journal={Journal of Statistical Planning and Inference},
  volume={208},
  pages={13--24},
  year={2020},
  publisher={Elsevier}
}

@article{heesen2015inequalities,
  title={Inequalities for the false discovery rate (FDR) under dependence},
  author={Heesen, Philipp and Janssen, Arnold},
  year={2015}
}

@misc{Ruellan2026,
  title={Conformal {DKW} inequality under covariate shift},
  author={Malo Ruellan and Etienne Roquain},
  year={2026},
  note={In preparation}
}

@article{benjamini2006adaptive,
  title={Adaptive linear step-up procedures that control the false discovery rate},
  author={Benjamini, Yoav and Krieger, Abba M and Yekutieli, Daniel},
  journal={Biometrika},
  volume={93},
  number={3},
  pages={491--507},
  year={2006},
  publisher={Oxford University Press}
}

@article{gao2025adaptive,
  title={An adaptive null proportion estimator for false discovery rate control},
  author={Gao, Zijun},
  journal={Biometrika},
  volume={112},
  number={1},
  pages={asae051},
  year={2025},
  publisher={Oxford University Press}
}

@article{tony2019covariate,
  title={Covariate-assisted ranking and screening for large-scale two-sample inference},
  author={Tony Cai, T and Sun, Wenguang and Wang, Weinan},
  journal={Journal of the Royal Statistical Society Series B: Statistical Methodology},
  volume={81},
  number={2},
  pages={187--234},
  year={2019},
  publisher={Oxford University Press}
}

@article{marandon2024adaptive,
  title={Adaptive novelty detection with false discovery rate guarantee},
  author={Marandon, Ariane and Lei, Lihua and Mary, David and Roquain, Etienne},
  journal={The Annals of Statistics},
  volume={52},
  number={1},
  pages={157--183},
  year={2024},
  publisher={Institute of Mathematical Statistics}
}

@article{bates2023testing,
  title={Testing for outliers with conformal p-values},
  author={Bates, Stephen and Cand{\`e}s, Emmanuel and Lei, Lihua and Romano, Yaniv and Sesia, Matteo},
  journal={Ann. Statist.},
  volume={51},
  number={1},
  pages={149--178},
  year={2023},
  publisher={Institute of Mathematical Statistics}
}

@Article{SC2007,
 Author = {Wenguang {Sun} and T. Tony {Cai}},
 Title = {{Oracle and adaptive compound decision rules for false discovery rate control}},
 FJournal = {{Journal of the American Statistical Association}},
 Journal = {{J. Am. Stat. Assoc.}},
 ISSN = {0162-1459; 1537-274X/e},
 Volume = {102},
 Number = {479},
 Pages = {901--912},
 Year = {2007},
 Publisher = {Taylor \& Francis, Philadelphia, PA; American Statistical Association (ASA), Alexandria, VA},
 Language = {English},
 MSC2010 = {62-XX}
}

@article{BH1995,
	author = {Benjamini, Yoav and Hochberg, Yosef},
	coden = {JSTBAJ},
	fjournal = {Journal of the Royal Statistical Society. Series B. Methodological},
	issn = {0035-9246},
	journal = {J. Roy. Statist. Soc. Ser. B},
	mrclass = {62J15},
	mrnumber = {MR1325392 (96d:62143)},
	number = {1},
	pages = {289--300},
	title = {Controlling the false discovery rate: a practical and powerful approach to multiple testing},
	volume = {57},
	year = {1995}}

@article{BH2000,
	author = {Benjamini, Yoav and Hochberg, Yosef},
	journal = {J. Behav. Educ. Statist.},
	pages = {60--83},
	title = {On the adaptive control of the false discovery rate in multiple testing with independent statistics},
	volume = {25},
	year = {2000}}

@article{BR2009,
	author = {Blanchard, Gilles and Roquain, Etienne},
	fjournal = {Journal of Machine Learning Research (JMLR)},
	journal = {J. Mach. Learn. Res.},
	pages = {2837--2871},
	title = {Adaptive false discovery rate control under independence and dependence},
	volume = {10},
	year = {2009}}

@article{BY2001,
	author = {Benjamini, Yoav and Yekutieli, Daniel},
	coden = {ASTSC7},
	fjournal = {The Annals of Statistics},
	issn = {0090-5364},
	journal = {Ann. Statist.},
	mrclass = {62J15 (47N30 62H20)},
	mrnumber = {MR1869245 (2002i:62135)},
	mrreviewer = {S. Panchapakesan},
	number = {4},
	pages = {1165--1188},
	title = {The control of the false discovery rate in multiple testing under dependency},
	volume = {29},
	year = {2001}}

@article{Chi2007,
	author = {Chi, Zhiyi},
	coden = {ASTSC7},
	doi = {10.1214/009053607000000037},
	fjournal = {The Annals of Statistics},
	issn = {0090-5364},
	journal = {Ann. Statist.},
	mrclass = {62G10 (60G35 62H15)},
	mrnumber = {MR2351091},
	number = {4},
	pages = {1409--1431},
	title = {On the performance of {FDR} control: constraints and a partial solution},
	url = {http://dx.doi.org/10.1214/009053607000000037},
	volume = {35},
	year = {2007}}

@article{ETST2001,
	author = {Efron, Bradley and Tibshirani, Robert and Storey, John D. and Tusher, Virginia},
	coden = {JSTNAL},
	fjournal = {Journal of the American Statistical Association},
	issn = {0162-1459},
	journal = {J. Amer. Statist. Assoc.},
	mrclass = {62C12 (62P10)},
	mrnumber = {MR1946571},
	number = {456},
	pages = {1151--1160},
	title = {Empirical {B}ayes analysis of a microarray experiment},
	volume = {96},
	year = {2001}}

@article{FDR2009,
	author = {Finner, H. and Dickhaus, T. and Roters, M.},
	fjournal = {The Annals of Statistics},
	journal = {Ann. Statist.},
	number = {2},
	pages = {596--618},
	title = {On the false discovery rate and an asymptotically optimal rejection curve},
	volume = {37},
	year = {2009}}

@article{GW2002,
	author = {Genovese, Christopher and Wasserman, Larry},
	fjournal = {Journal of the Royal Statistical Society. Series B. Statistical Methodology},
	issn = {1369-7412},
	journal = {J. R. Stat. Soc. Ser. B Stat. Methodol.},
	mrclass = {62F03},
	mrnumber = {MR1924303 (2003h:62027)},
	number = {3},
	pages = {499--517},
	title = {Operating characteristics and extensions of the false discovery rate procedure},
	volume = {64},
	year = {2002}}

@article{GW2004,
	author = {Genovese, Christopher and Wasserman, Larry},
	coden = {ASTSC7},
	fjournal = {The Annals of Statistics},
	issn = {0090-5364},
	journal = {Ann. Statist.},
	mrclass = {62H15 (62G10)},
	mrnumber = {MR2065197 (2005k:62149)},
	mrreviewer = {M. Hu{\v{s}}kov{\'a}},
	number = {3},
	pages = {1035--1061},
	title = {A stochastic process approach to false discovery control},
	volume = {32},
	year = {2004}}

@article{Mass1990,
	author = {Massart, P.},
	coden = {APBYAE},
	fjournal = {The Annals of Probability},
	issn = {0091-1798},
	journal = {Ann. Probab.},
	mrclass = {60E15 (60G50 62G30)},
	mrnumber = {MR1062069 (91i:60052)},
	mrreviewer = {Gutti J. Babu},
	number = {3},
	pages = {1269--1283},
	title = {The tight constant in the {D}voretzky-{K}iefer-{W}olfowitz inequality},
	volume = {18},
	year = {1990}}

@article{neuvial2013asymptotic,
  title={Asymptotic results on adaptive false discovery rate controlling procedures based on kernel estimators},
  author={Neuvial, Pierre},
  journal={The Journal of Machine Learning Research},
  volume={14},
  number={1},
  pages={1423--1459},
  year={2013},
  publisher={JMLR. org}
}

@article{Neu2013,
	author = {Neuvial, Pierre},
	fjournal = {Journal of Machine Learning Research (JMLR)},
	issn = {1532-4435},
	journal = {J. Mach. Learn. Res.},
	mrclass = {62J15 (62G07)},
	mrnumber = {3081930},
	pages = {1423--1459},
	title = {Asymptotic results on adaptive false discovery rate controlling procedures based on kernel estimators},
	volume = {14},
	year = {2013}}

@article{RW2009,
	author = {Roquain, E. and van de Wiel, M.},
	fjournal = {Electronic Journal of Statistics},
	journal = {Electron. J. Stat.},
	pages = {678--711},
	title = {Optimal weighting for false discovery rate control},
	volume = {3},
	year = {2009}}

@article{Sar2008,
	author = {Sarkar, S.~K.},
	journal = {Sankhya, Ser. A},
	numner = {2},
	pages = {135--168},
	title = {On Methods Controlling the False Discovery Rate},
	volume = {70},
	year = {2008}}

@article{SC2009,
	author = {Sun, Wenguang and Cai, T. Tony},
	doi = {10.1111/j.1467-9868.2008.00694.x},
	fjournal = {Journal of the Royal Statistical Society. Series B. Statistical Methodology},
	issn = {1369-7412},
	journal = {J. R. Stat. Soc. Ser. B Stat. Methodol.},
	mrclass = {Database Expansion Item},
	mrnumber = {2649603},
	number = {2},
	pages = {393--424},
	title = {Large-scale multiple testing under dependence},
	url = {http://dx.doi.org/10.1111/j.1467-9868.2008.00694.x},
	volume = {71},
	year = {2009}}

@article{Storey2002,
	author = {Storey, John D.},
	fjournal = {Journal of the Royal Statistical Society. Series B. Statistical Methodology},
	issn = {1369-7412},
	journal = {J. R. Stat. Soc. Ser. B Stat. Methodol.},
	mrclass = {62F03},
	mrnumber = {MR1924302 (2003h:62029)},
	number = {3},
	pages = {479--498},
	title = {A direct approach to false discovery rates},
	volume = {64},
	year = {2002}}

@article{gazin2023transductive,
  title={Transductive conformal inference with adaptive scores},
  author={Gazin, Ulysse and Blanchard, Gilles and Roquain, Etienne},
  journal={arXiv preprint arXiv:2310.18108},
  year={2023}
}

\appendix

\section{Proofs}

\subsection{Proof of Theorem~\ref{th-gen}}\label{proof:th-gen}.

Let us denote $\hat{k}$ in \eqref{equ-lhat} by $\hat{k}=\hat{k}(\mbf{p},\hat{\pi}_0(\mbf{p}))$ where
$$
    \hat{k}(\mbf{p},q)=\max\{k\in [0,m]\::\: p_{(k)}\leq \kappa \wedge (\alpha k/ (m q))\}, \:\:\:q>0,
$$
so that the two sources of dependence with regard to $\mbf{p}$ are separated in the notation $\hat{k}=\hat{k}(\mbf{p},\hat{\pi}_0(\mbf{p}))$.

Let us first consider the independent case, for which $\tilde{\mbf{p}}^{0,i}$ is independent of $p_i$ for all $i\in\cH_0$. We have
\begin{align}
\FDR(\ABH_\alpha(\hat{\pi}_0,\kappa))&=\sum_{i\in \cH_0}\E\Bigg[\frac{\ind{p_i\leq \tau_{\hat{k}}}}{\hat{k}\vee 1}\Bigg]\nonumber\\    &\leq \sum_{i\in \cH_0}\E\Bigg[\frac{\ind{p_i\leq \kappa \wedge (\alpha [1\vee \hat{k}(\mbf{p},\hat{\pi}_0(\mbf{p}))]/ (m\hat{\pi}_0(\mbf{p})))}}{1\vee \hat{k}(\mbf{p},\hat{\pi}_0(\mbf{p}))}\Bigg]\label{equintermfirstproof}
\end{align}
Now, if $\hat{\pi}_0$ satisfies Assumption~\ref{ass:monotone} and $\kappa=1$ (case (i)), the last display is at most
\begin{align*} 
(\alpha/m)\sum_{i\in \cH_0}\E\Bigg[\E\Bigg[\frac{\ind{p_i\leq g(p_i;\tilde{\mbf{p}}^{0,i})}}{g(p_i;\tilde{\mbf{p}}^{0,i})}\:\Big|\: \tilde{\mbf{p}}^{0,i}\Bigg]\Bigg],
\end{align*}
where we let $g(p_i;\tilde{\mbf{p}}^{0,i})=\alpha [1\vee \hat{k}(\mbf{p},\hat{\pi}_0(\mbf{p}))]/ m$ which is a nonincreasing function of $p_i$ by Assumption~\ref{ass:monotone}. Since $p_i$ is independent of $\tilde{\mbf{p}}^{0,i}$, Lemma~\ref{lem:BR2009} gives that
\begin{align*}
\FDR(\ABH_\alpha(\hat{\pi}_0,\kappa))&\leq (\alpha/m)\sum_{i\in \cH_0}\E\Bigg[\frac{g(p_i;\tilde{\mbf{p}}^{0,i})/\hat{\pi}_0(\tilde{\mbf{p}}^{0,i})}{g(p_i;\tilde{\mbf{p}}^{0,i})}\Bigg] = (\alpha/m)\sum_{i\in \cH_0}\E\Bigg[\frac{1}{\hat{\pi}_0(\tilde{\mbf{p}}^{0,i})}\Bigg],
\end{align*}
which conclude case (i). If $\hat{\pi}_0$ satisfies  Assumption~\ref{ass:capping} with a capping $\kappa\in (0,1)$ (case (ii)), we have $\hat{\pi}_0(\mbf{p})=\hat{\pi}_0(\tilde{\mbf{p}}^{0,i})$ when $p_i\leq \kappa$. Also, by applying Lemma~\ref{lemmaSU} with $\tau_k= \kappa \wedge (\alpha k/ (m\hat{\pi}_0(\mbf{p})))$, $k\in [m]$, \eqref{equintermfirstproof} can be written as
\begin{align*}
 &\sum_{i\in \cH_0}\E\Bigg[\E\Bigg[\frac{\ind{p_i\leq \kappa \wedge (\alpha [1\vee \hat{k}(\tilde{\mbf{p}}^{0,i},\hat{\pi}_0(\mbf{p}))]/ (m\hat{\pi}_0(\mbf{p})))}}{1\vee \hat{k}(\tilde{\mbf{p}}^{0,i},\hat{\pi}_0(\mbf{p}))}\:\Big|\: \tilde{\mbf{p}}^{0,i}\Bigg]\Bigg]\\
 &=
 \sum_{i\in \cH_0}\E\Bigg[\frac{\ind{p_i\leq \kappa \wedge (\alpha [1\vee \hat{k}(\tilde{\mbf{p}}^{0,i},\hat{\pi}_0(\tilde{\mbf{p}}^{0,i}))]/ (m\hat{\pi}_0(\tilde{\mbf{p}}^{0,i})))}}{1\vee \hat{k}(\tilde{\mbf{p}}^{0,i},\hat{\pi}_0(\tilde{\mbf{p}}^{0,i}))}\:\Big|\: \tilde{\mbf{p}}^{0,i}\Bigg]\Bigg]\\
 &= \sum_{i\in \cH_0}\E\Bigg[\E\Bigg[ \frac{\kappa \wedge (\alpha [1\vee \hat{k}(\tilde{\mbf{p}}^{0,i},\hat{\pi}_0(\tilde{\mbf{p}}^{0,i}))]/ (m\hat{\pi}_0(\tilde{\mbf{p}}^{0,i})))}{1\vee \hat{k}(\tilde{\mbf{p}}^{0,i},\hat{\pi}_0(\tilde{\mbf{p}}^{0,i}))}\Bigg]\leq  (\alpha/m)\sum_{i\in \cH_0}\E\Bigg[\frac{1}{\hat{\pi}_0(\tilde{\mbf{p}}^{0,i})}\Bigg],
\end{align*}
where we used again the independence between $p_i$ and $\tilde{\mbf{p}}^{0,i}$. This finishes the proof for the independent case.

Let us now turn to the conformal case. Let for any $i\in \cH_0$ 
   \begin{equation}
       \label{equ:wi}
W_i = ( \{S_k,k\in [n]\cup\{n+i\}\}, (S_{n+j}, j \in [m]\backslash\{i\})).
   \end{equation}
   By the no-ties assumption and exchangeability of $(S_k,k\in [n]\cup\{n+i\})$ given $(S_{n+j}, j \in [m]\backslash\{i\})$, Lemma~\ref{lem:conformal} provides that $p_i$ is independent of $W_i$, $p_i$ is marginally super-uniform and $(p_j,j\in [m]\backslash\{i\})=(\Psi_{ij}(p_i;W_i))_{j\in [m]\backslash\{i\}}$ with $\Psi_{ij}(p_i;W_i)$ being nondecreasing with regard to $p_i$.  
In addition,  we note that $\tilde{\mbf{p}}^{0,i}$ defined in \eqref{equ:configconf} is such that, for $j\in [m]\backslash\{ i\}$,
   \begin{align*}
\tilde{p}^{0,i}_j &=
      \frac{ \sum_{k\in [n]\cup\{n+i\}}  \ind{S_k \ge S_{n+j}}}{n+1} \leq p_j,
   \end{align*}
   with equality if $S_{n+i}\geq S_{n+j}$. 
   In particular, for any $j\in [m]$ with $p_j>p_i$, we must have $S_{n+i}\geq  S_{n+j}$ and thus  $\tilde{p}^{0,i}_j=p_j$. The requirement of Lemma~\ref{lemmaSU} is thus satisfied with $\mbf{p}'=\tilde{\mbf{p}}^{0,i}$ and we have 
   \begin{equation}
       \label{eqveryusefulindeed}
       p_i\leq \hat{k}(\mbf{p},\hat{\pi}_0(\mbf{p})) \mbox{ entails } \hat{k}(\mbf{p},\hat{\pi}_0(\mbf{p}))=\hat{k}(\tilde{\mbf{p}}^{0,i},\hat{\pi}_0(\mbf{p})).
   \end{equation}
   Note also that $\tilde{\mbf{p}}^{0,i}$ is measurable only with regard to $W_i$.
   
   Now, we consider the case where $\hat{\pi}_0$ satisfies Assumption~\ref{ass:monotone} and $\kappa=1$ (case (i)). 
Coming back to \eqref{equintermfirstproof}, and letting $g(p_i;W_i)=\alpha [1\vee \hat{k}(\mbf{p},\hat{\pi}_0(\mbf{p}))]/ m$ which is a nonincreasing function of $p_i$ by Assumption~\ref{ass:monotone} and because each function $\Psi_{ij}(p_i;W_i)$ is nondecreasing in $p_i$, \eqref{equintermfirstproof} is at most
\begin{align*} 
(\alpha/m)\sum_{i\in \cH_0}\E\Bigg[\E\Bigg[\frac{\ind{p_i\leq g(p_i;W_i)/\hat{\pi}_0(\mbf{p})}}{g(p_i;W_i)}\:\Big|\: W_i\Bigg]\Bigg]\leq (\alpha/m)\sum_{i\in \cH_0}\E\Bigg[\E\Bigg[\frac{\ind{p_i\leq g(p_i;W_i)/\hat{\pi}_0(\tilde{\mbf{p}}^{0,i})}}{g(p_i;W_i)}\:\Big|\: W_i\Bigg]\Bigg]
\end{align*}
by using again Assumption~\ref{ass:monotone}. Now, since $p_i$ is independent of $W_i$, 
Lemma~\ref{lem:BR2009} gives that
\begin{align*}
\FDR(\ABH_\alpha(\hat{\pi}_0,\kappa))&\leq (\alpha/m)\sum_{i\in \cH_0}\E\Bigg[\frac{g(p_i;W_i)/\hat{\pi}_0(\tilde{\mbf{p}}^{0,i})}{g(p_i;W_i)}\Bigg] = (\alpha/m)\sum_{i\in \cH_0}\E\Bigg[\frac{1}{\hat{\pi}_0(\tilde{\mbf{p}}^{0,i})}\Bigg],
\end{align*}
which conclude case (i).
Now, if $\hat{\pi}_0$ satisfies  Assumption~\ref{ass:capping} with a capping $\kappa\in (0,1)$ (case (ii)), we have $\hat{\pi}_0(\mbf{p})=\hat{\pi}_0(\tilde{\mbf{p}}^{0,i})$ when $p_i\leq \kappa$ and thus we follow a reasoning similar to the independent case, with \eqref{equintermfirstproof} and \eqref{eqveryusefulindeed} leading to
\begin{align*}
 \FDR(\ABH_\alpha(\hat{\pi}_0,\kappa))&\leq \sum_{i\in \cH_0}\E\Bigg[\E\Bigg[\frac{\ind{p_i\leq \kappa \wedge (\alpha [1\vee \hat{k}(\tilde{\mbf{p}}^{0,i},\hat{\pi}_0(\mbf{p}))]/ (m\hat{\pi}_0(\mbf{p})))}}{1\vee \hat{k}(\tilde{\mbf{p}}^{0,i},\hat{\pi}_0(\mbf{p}))}\:\Big|\: W_i\Bigg]\Bigg]\\
 &=
 \sum_{i\in \cH_0}\E\Bigg[\E\Bigg[\frac{\ind{p_i\leq \kappa \wedge (\alpha [1\vee \hat{k}(\tilde{\mbf{p}}^{0,i},\hat{\pi}_0(\tilde{\mbf{p}}^{0,i}))]/ (m\hat{\pi}_0(\tilde{\mbf{p}}^{0,i})))}}{1\vee \hat{k}(\tilde{\mbf{p}}^{0,i},\hat{\pi}_0(\tilde{\mbf{p}}^{0,i}))}\:\Big|\: W_i\Bigg]\Bigg]\\
 &= \sum_{i\in \cH_0}\E\Bigg[ \frac{\kappa \wedge (\alpha [1\vee \hat{k}(\tilde{\mbf{p}}^{0,i},\hat{\pi}_0(\tilde{\mbf{p}}^{0,i}))]/ (m\hat{\pi}_0(\tilde{\mbf{p}}^{0,i})))}{1\vee \hat{k}(\tilde{\mbf{p}}^{0,i},\hat{\pi}_0(\tilde{\mbf{p}}^{0,i}))}\Bigg]\leq  (\alpha/m)\sum_{i\in \cH_0}\E\Bigg[\frac{1}{\hat{\pi}_0(\tilde{\mbf{p}}^{0,i})}\Bigg].
\end{align*}

\subsection{Proof of Corollary~\ref{corcontrolStorey}}\label{secproofcorcontrolStorey}

 By Theorem~\ref{th-gen} (i), we only have to show that for $i\in \cH_0$,
$
    \E[1/\hat{\pi}_{0,\lambda}(\tilde{\mbf{p}}^{0,i})] \leq m/m_0
$
in both settings. In the independent case, the latter holds by stochastic domination:
\begin{align*}
     \E[1/\hat{\pi}_{0,\lambda}(\tilde{\mbf{p}}^{0,i})] \leq m (1-\lambda) \E\bigg[\frac{1}{1+\sum_{j\in \cH_0 \backslash\{i\}} \ind{p_j>\lambda}}\bigg] = m (1-\lambda) \E\bigg[\frac{1}{1+\mathcal{B}(m_0-1,1-\lambda)}\bigg] \leq m/m_0,
\end{align*}
by using a classical relation on the binomial random variable (see Lemma~1 in \cite{benjamini2006adaptive}). 
In the conformal setting, by the no ties and exchangeability assumptions, it can be assumed that the scores $(S_j, j\in [n])\cup (S_{n+i},i\in \cH_0)$ are iid uniform when computing quantities involving the distribution of the variables $\tilde{p}^{0,i}_j,j\in \cH_0\backslash\{i\}$. Hence, if we assume so, they are iid conditionally on  $(S_1,\ldots,S_n,S_{n+i})$, and it can be checked that $\P(\tilde{p}^{0,i}_j\geq \lambda\:|\:(S_1,\ldots,S_n,S_{n+i})) = \P(S_{(\lceil \lambda(n+1)\rceil)}\geq S_{n+j}\:|\:(S_1,\ldots,S_n,S_{n+i})) = S_{(\lceil \lambda(n+1)\rceil)}$, where $S_{(1)}\geq \dots\geq S_{(n+1)}$ are the ordered elements of $(S_1,\ldots,S_n,S_{n+i})$. Hence, by using a reasoning similar as above, and that $S_{(\lceil \lambda(n+1)\rceil)}\sim\beta(n+2-\lceil \lambda(n+1)\rceil,\lceil \lambda(n+1)\rceil)$, we have 
\begin{align*}
     \E[1/\hat{\pi}_{0,\lambda}(\tilde{\mbf{p}}^{0,i})] &\leq  m (1-\lambda) \E\bigg[\E\bigg[\frac{1}{1+\mathcal{B}(m_0-1,S_{(\lceil \lambda(n+1)\rceil)})}\bigg] \bigg| (S_1,\ldots,S_n,S_{n+i}) \bigg] \\
     &\leq (1-\lambda) m/m_0 \E\bigg[\frac{1}{\beta(n+2-\lceil \lambda(n+1)\rceil,\lceil \lambda(n+1)\rceil)}\bigg]\leq  \frac{(n+1)(1-\lambda) m/m_0}{n+1-\lceil \lambda(n+1)\rceil} = m/m_0,
\end{align*}
because $(n+1)\lambda\in [n+1]$ and $\E[1/\beta(a,b)]=(a+b-1)/(a-1)$.

\subsection{Proof of Corollary~\ref{corcontrolIStorey}}\label{secproofcorcontrolIStorey}
     By Theorem~\ref{th-gen} (ii), we only have to show that for $i\in \cH_0$,
$
    \E[1/\hat{\pi}_{0,[\lambda,\mu]}(\tilde{\mbf{p}}^{0,i})] \leq m/m_0
$
in both settings. In the independent case, similarly to the previous proof
\begin{align*}
     \E[1/\hat{\pi}_{0,[\lambda,\mu]}(\tilde{\mbf{p}}^{0,i})] =  \E\bigg[\frac{m (\mu-\lambda)}{1+\sum_{j\in \cH_0 \backslash\{i\}} \ind{p_j\in[\lambda, \mu]}}\bigg] = \E\bigg[\frac{m (\mu-\lambda)}{1+\mathcal{B}(m_0-1,\mu-\lambda)}\bigg] \leq m/m_0.
\end{align*}
In the conformal setting, assume that the scores $(S_j, j\in [n])\cup (S_{n+i},i\in \cH_0)$ are iid uniform without loss of generality. The variables $\tilde{p}^{0,i}_j,j\in \cH_0\backslash\{i\}$ are iid conditionally on  $(S_1,\ldots,S_n,S_{n+i})$, and we have similarly to above $\P(\tilde{p}^{0,i}_j\in [\lambda,\mu] \:|\:(S_1,\ldots,S_n,S_{n+i})) = 
\P((n+1)\tilde{p}^{0,i}_j\in [(n+1)\lambda,(n+1)\mu+1) \:|\:(S_1,\ldots,S_n,S_{n+i})) 
= S_{( (n+1)\lambda}-S_{( (n+1)\mu+1)}$, where $S_{(1)}\geq \dots\geq S_{(n+1)}$ are the ordered elements of $(S_1,\ldots,S_n,S_{n+i})$. Since $S_{((n+1)\lambda)}-S_{( (n+1)\mu+1)}\sim\beta( (n+1)(\mu-\lambda)+1, n+1- (n+1)(\mu-\lambda))$, we obtain 
\begin{align*}
     \E[1/\hat{\pi}_{0,[\lambda,\mu]}(\tilde{\mbf{p}}^{0,i})] &\leq   \E\bigg[\E\bigg[\frac{m (\mu-\lambda)}{1+\mathcal{B}(m_0-1,S_{((n+1)\lambda)}-S_{( (n+1)\mu+1)})}\bigg] \bigg| (S_1,\ldots,S_n,S_{n+i}) \bigg] \\
     &\leq  \E\bigg[\frac{(\mu-\lambda) m/m_0}{\beta( (n+1)(\mu-\lambda)+1, n+1- (n+1)(\mu-\lambda))}\bigg]\leq  \frac{(n+1)(\mu-\lambda) m/m_0}{(n+1)(\mu-\lambda)} = m/m_0,
\end{align*}
because $\E[1/\beta(a,b)]=(a+b-1)/(a-1)$.

\subsection{Proof of Theorem~\ref{thcontrolMS}}\label{sec:proofthcontrolMS}

First observe that $\hat{\pi}^{\MS}_{0,\epsilon,\underline{\pi}_0}$ in \eqref{equ:pi0hatMS} satisfies Assumption~\ref{ass:monotone}, so that we can apply Theorem~\ref{th-gen} (i). Hence, we have 
\begin{align*}
    \FDR(\ABH_\alpha(\hat{\pi}^{\MS}_{0,\epsilon,\underline{\pi}_0}))&\leq (\alpha/m)\sum_{i\in \cH_0} \E[1/\hat{\pi}^{\MS}_{0,\epsilon,\underline{\pi}_0}(\tilde{\mbf{p}}^{0,i})].
\end{align*}
Since $\hat{\pi}^{\MS}_{0,\epsilon,\underline{\pi}_0}(\tilde{\mbf{p}}^{0,i})\geq \underline{\pi}_0$, this directly leads to the bound $\alpha (m_0/m)/\underline{\pi}_0\leq \alpha$ when $\pi_0\leq \underline{\pi}_0$. Assume now $\pi_0\geq \underline{\pi}_0$. In that case, the above display is at most 
\begin{align*}
 & \frac{\alpha/m}{C(m,\epsilon,\underline{\pi}_0)}\sum_{i\in \cH_0} \E  \bigg(  \sup_{\lambda \in [0,1-\epsilon]}
      \frac{m(1-\lambda)}{1 \vee \sum_{j\in [m]} \mathbf{1}\{\tilde{{p}}^{0,i}_j \geq \lambda\}} \bigg)\\
&\leq \frac{\alpha/m_0}{C(m,\epsilon,\underline{\pi}_0)}\sum_{i\in \cH_0} \E  \bigg(  \sup_{\lambda \in [0,1-\epsilon]}
      \frac{m_0(1-\lambda)}{1 \vee \sum_{j\in \cH_0} \mathbf{1}\{\tilde{{p}}^{0,i}_j \geq \lambda\}} \bigg)\\
      &\leq \frac{\alpha}{C(m,\epsilon,\underline{\pi}_0)} \E_{\mbf{q}\sim Q_0}  \bigg(  \sup_{\lambda \in [0,1-\epsilon]}
      \frac{m_0(1-\lambda)}{1 \vee \sum_{j\in [m_0]} \mathbf{1}\{q_j \geq \lambda\}} \bigg),
\end{align*}
where the last step is obtained by stochastic domination in the independence setting (because all $p$-values under the null are super-uniform), and by using that the distribution of $(\tilde{{p}}^{0,i}_j)_{j\in \cH_0\backslash\{i\}}$ is the same as if $i=1$ and $\cH_0=[m_0]$ (by exchangeability of the scores $S_{j}, j\in [n]$, $S_{n+i}, i\in \cH_0$ in the conformal setting). Taking the maximum over all possible $m_0\in [\underline{\pi_0}m, m]$ concludes the proof by definition \eqref{equ:gMS} of $C(m,\epsilon,\underline{\pi}_0)$.

\subsection{Proof of Theorem~\ref{thcontrolIMS}}\label{sec:proofthcontrolIMS}
First observe that $\hat{\pi}^{\IMS}_{0,\epsilon,\underline{\pi}_0,\kappa}$ in \eqref{equ:pi0hatMS} satisfies Assumption~\ref{ass:capping}, so that we can apply Theorem~\ref{th-gen} (ii).
Hence, we have 
\begin{align*}
    \FDR(\ABH_\alpha(\hat{\pi}^{\IMS}_{0,\epsilon,\underline{\pi}_0,\kappa},\kappa))&\leq (\alpha/m)\sum_{i\in \cH_0} \E[1/\hat{\pi}^{\IMS}_{0,\epsilon,\underline{\pi}_0,\kappa}(\tilde{\mbf{p}}^{0,i})].
\end{align*}
We conclude as in Section~\ref{sec:proofthcontrolMS} (by using that the null $p$-value are marginally uniform in the independence setting).

\subsection{Proof of Theorem~\ref{thcontrolIMSkappahat}}\label{proof:thcontrolIMSkappahat}

Denote $\hat{\kappa}=\hat{\kappa}(\mbf{p})$ and $\hat{\pi}^{\IMS}_{0,\epsilon,\underline{\pi}_0,\hat{\kappa}}=\hat{\pi}^{\IMS}_{0,\epsilon,\underline{\pi}_0,\hat{\kappa}(\mbf{p})}(\mbf{p})$ to underline the dependence with regard to the $p$-value family $\mbf{p}$. We have both in the independence and conformal setting,
\begin{align}
\FDR(\ABH_\alpha(\hat{\pi}^{\IMS}_{0,\epsilon,\underline{\pi}_0,\hat{\kappa}},\hat{\kappa}))
&=\sum_{i\in \cH_0}\E\Bigg[\frac{\ind{p_i\leq \hat{\kappa}(\mbf{p})}}{ 1\vee \sum_{j\in [m]}\ind{p_j\leq \hat{\kappa}(\mbf{p})}}\Bigg]\nonumber\\   
&\leq \sum_{i\in \cH_0}\E\Bigg[\frac{\ind{p_i\leq \hat{\kappa}(\mbf{p})}}{ 1\vee [m \hat{\kappa} (\mbf{p}) \hat{\pi}^{\IMS}_{0,\epsilon,\underline{\pi}_0,\hat{\kappa}(\mbf{p})}(\mbf{p})/\alpha]}\Bigg]\nonumber\\   
&\leq (\alpha/m) \sum_{i\in \cH_0}\E\Bigg[ \frac{1}{\hat{\pi}^{\IMS}_{0,\epsilon,\underline{\pi}_0,\hat{\kappa}(\mbf{p})}(\mbf{p})} \frac{\ind{p_i\leq \hat{\kappa}(\mbf{p})}}{ (\alpha/(m\hat{\pi}^{\IMS}_{0,\epsilon,\underline{\pi}_0,\hat{\kappa}(\mbf{p})}(\mbf{p}) ))\vee \hat{\kappa}(\mbf{p})}\Bigg]\nonumber,
\end{align}
where we used successively that the threshold of $\ABH_\alpha(\hat{\pi}^{\IMS}_{0,\epsilon,\underline{\pi}_0,\hat{\kappa}},\hat{\kappa})$ is $\hat{\kappa})$ (Lemma~\ref{lem:kappahat} (iii)) and that $\sum_{j\in [m]}\ind{p_j\leq \hat{\kappa}(\mbf{p})}\geq m \hat{\kappa} (\mbf{p}) \hat{\pi}^{\IMS}_{0,\epsilon,\underline{\pi}_0,\hat{\kappa}(\mbf{p})}/\alpha$ (Lemma~\ref{lem:kappahat} (ii)). 
Now we use Lemma~\ref{lem:kappahat2} to write (the requirement of the lemma is satisfied with $\mbf{p}'=\tilde{\mbf{p}}^{0,i}$ as in the proof of Theorem~\ref{th-gen})
\begin{align}
\FDR(\ABH_\alpha(\hat{\pi}^{\IMS}_{0,\epsilon,\underline{\pi}_0,\hat{\kappa}},\hat{\kappa}))
&\leq (\alpha/m) \sum_{i\in \cH_0}\E\Bigg[ \frac{1}{\hat{\pi}^{\IMS}_{0,\epsilon,\underline{\pi}_0,\hat{\kappa}(\tilde{\mbf{p}}^{0,i})}(\tilde{\mbf{p}}^{0,i})} \frac{\ind{p_i\leq \hat{\kappa}(\tilde{\mbf{p}}^{0,i})}}{ (\alpha/(m\hat{\pi}^{\IMS}_{0,\epsilon,\underline{\pi}_0,\hat{\kappa}(\tilde{\mbf{p}}^{0,i})}(\tilde{\mbf{p}}^{0,i}) ))\vee \hat{\kappa}(\tilde{\mbf{p}}^{0,i})}\Bigg]\nonumber\\
&\leq (\alpha/m) \sum_{i\in \cH_0}\E\Bigg[ \frac{1}{\hat{\pi}^{\IMS}_{0,\epsilon,\underline{\pi}_0,\hat{\kappa}(\tilde{\mbf{p}}^{0,i})}(\tilde{\mbf{p}}^{0,i})} \frac{\ind{p_i\leq (\alpha/(m\hat{\pi}^{\IMS}_{0,\epsilon,\underline{\pi}_0,\hat{\kappa}(\tilde{\mbf{p}}^{0,i})}(\tilde{\mbf{p}}^{0,i}) ))\vee\hat{\kappa}(\tilde{\mbf{p}}^{0,i})}}{ (\alpha/(m\hat{\pi}^{\IMS}_{0,\epsilon,\underline{\pi}_0,\hat{\kappa}(\tilde{\mbf{p}}^{0,i})}(\tilde{\mbf{p}}^{0,i}) ))\vee \hat{\kappa}(\tilde{\mbf{p}}^{0,i})}\Bigg]\nonumber
\end{align}
Now using that $\tilde{\mbf{p}}^{0,i}$ is independent of $p_i$ (in both settings), we obtain
\begin{align*}
\FDR(\ABH_\alpha(\hat{\pi}^{\IMS}_{0,\epsilon,\underline{\pi}_0,\hat{\kappa}},\hat{\kappa}))
&\leq (\alpha/m) \sum_{i\in \cH_0}\E\Bigg[ \frac{1}{\hat{\pi}^{\IMS}_{0,\epsilon,\underline{\pi}_0,\hat{\kappa}(\tilde{\mbf{p}}^{0,i})}(\tilde{\mbf{p}}^{0,i})} \Bigg]\\
&\leq \bigg(\frac{\alpha m_0}{ m\underline{\pi}_0} \bigg)\wedge  \bigg[\frac{\alpha/m_0}{D(m,\epsilon,\underline{\pi}_0)}\sum_{i\in \cH_0} \E  \bigg(  \sup_{I\subset [0,1] : |I|\geq \epsilon}
      \frac{m_0 |I|}{1 \vee \sum_{j\in \cH_0} \mathbf{1}\{\tilde{{p}}^{0,i}_j \in I\}} \bigg)\bigg]\\
&\leq \bigg(\frac{\alpha m_0}{ m\underline{\pi}_0} \bigg)\wedge  \bigg[\frac{\alpha d(m_0,\epsilon)}{D(m,\epsilon,\underline{\pi}_0)}\bigg] \leq \alpha ,
\end{align*}
because $(\tilde{p}^{0,i}_j)_{j\in \cH_0}\sim Q_{0,m_0}$ by exchangeability and by the definitions in \eqref{equ:gIMS}.

\subsection{Proof of Proposition~\ref{prop:Neps}}\label{proofprop:Neps}

Let us define
\begin{equation}
    \label{equNeps}
  \mathcal{M}(\epsilon) := 
  \{s\geq 2\::\:  (s+1)^4 (1-\epsilon)^{s-1} \leq \epsilon \}. 
\end{equation}

\begin{proposition}\label{prop:Nepsproof}
    The following holds:
    \begin{itemize}
        \item in the independent setting, $c(s+1,\epsilon)\leq c(s,\epsilon)$ provided that $s\in \mathcal{M}(\epsilon)$ and $d(s+1,\epsilon)\leq d(s,\epsilon)$ if $s\geq \mathcal{M}(\epsilon')\cap [2/\epsilon,+\infty)$ for $\epsilon'=e^{-\epsilon^2/8}$. 
        \item  in  the conformal setting, both $c(s+1,\epsilon)\leq c(s,\epsilon)$  and $d(s+1,\epsilon,\kappa)\leq d(s,\epsilon,\kappa)$ hold if 
        $$\epsilon\geq \sqrt{\frac{16\log(3(s+1)^4) + 8 \log((n+1)\wedge s)}{(n+1)\wedge s}}
        .$$
    \end{itemize} 
\end{proposition}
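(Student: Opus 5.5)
The plan is to compare $c(s+1,\epsilon)$ and $c(s,\epsilon)$ (and similarly $d$) through a leave-one-out coupling of the least favorable distributions $Q_{0,s+1}$ and $Q_{0,s}$.

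\textbf{Coupling and the convexity bound.} In the independent setting, write $\mbf{q}=(0,q_2,\dots,q_{s+1})\sim Q_{0,s+1}$ and, for each $k\in\{2,\dots,s+1\}$, let $\mbf{q}^{(-k)}$ be $\mbf{q}$ with $q_k$ removed, so that $\mbf{q}^{(-k)}\sim Q_{0,s}$. The counts satisfy the leave-one-out identity
$$N_{s+1}(\lambda)=\frac{1}{s}\sum_{k=2}^{s+1}N^{(-k)}(\lambda),\qquad N(\lambda)=\sum_i\ind{q_i>\lambda},$$
because the point $q_1=0$ never exceeds $\lambda$. Convexity of $x\mapsto 1/x$ then bounds the ratio $(s+1)(1-\lambda)/N_{s+1}(\lambda)$ by an average of leave-one-out ratios. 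Taking the maximum over $\lambda$ and then expectations gives the crude bound $c(s+1,\epsilon)\le (1+1/s)\,c(s,\epsilon)$ plus a correction from the event where some count vanishes.

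\textbf{Removing the factor $1+1/s$.} This step is the main obstacle. The factor $1+1/s$ comes from the normalization $s$ versus $s-1$: the expected count is $(s-1)(1-\lambda)$, not $s(1-\lambda)$. My first attempt would be to redo the comparison at fixed $\lambda$ using the exact binomial formula
$$\E\bigl[1/(1+\mathcal{B}(n,p))\bigr]=\frac{1-(1-p)^{n+1}}{(n+1)p}.$$
This shows that, at each fixed $\lambda$, the expected ratio is increasing only through the term $(1-p)^{n+1}$. That term is of order $(1-\epsilon)^{s-1}$ for $\lambda\le 1-\epsilon$.

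To pass from fixed $\lambda$ to the maximum over $\lambda$, I would split according to whether the maximizing $\lambda$ is attained where the leave-one-out counts agree (they differ by at most one). I would union bound over the at most $s+1$ candidate data points $\lambda\in\{q_i\}$ and over pairs of indices. I expect this to produce the polynomial prefactor $(s+1)^4$, so that the remainder is at most $(s+1)^4(1-\epsilon)^{s-1}$. The gain over $c(s,\epsilon)$ is at least of order $\epsilon$, which is what the condition $s\in\mathcal{M}(\epsilon)$ encodes.

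\textbf{The interval estimator $d$.} The same leave-one-out identity holds for $\sum_i\ind{q_i\in I}$. However, the intervals now have length $|I|\ge\epsilon$ but arbitrary location, and the relevant empty-interval probabilities require a discretization of the intervals. I would grid $[0,1]$ at mesh $\epsilon/2$, which is why $s\ge 2/\epsilon$ is needed. I would use a Hoeffding bound $e^{-s\epsilon^2/8}$ on counts, in place of $(1-\epsilon)^{s}$. This replaces $\epsilon$ by $\epsilon'=e^{-\epsilon^2/8}$ in $\mathcal{M}$.

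\textbf{Conformal setting.} Here I would condition on the calibration-plus-one scores $(S_1,\dots,S_n,S_{n+1})$. Given these, the $\tilde p_j$ are iid with a discrete distribution whose CDF is an empirical-quantile transform. The coupling argument then goes through conditionally. The only change is that the probability of a (near) empty region is controlled by a DKW/Massart bound on the calibration empirical process, in addition to the test-sample bound. Requiring both deviations, at confidence of order $(s+1)^{-4}$, to be below $\epsilon/2$ yields the stated condition $\epsilon\ge\sqrt{(16\log(3(s+1)^4)+8\log((n+1)\wedge s))/((n+1)\wedge s)}$.

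Finally, Proposition~\ref{prop:Neps} follows by checking that the stated ranges of $m$ imply $\lceil m\underline{\pi}_0\rceil\in\mathcal{M}(\epsilon)$ (respectively the conformal condition for all $s\ge m\underline{\pi}_0$). Iterating the monotonicity then shows that the maximum in \eqref{equ:gMS}--\eqref{equ:gIMS} is attained at $s=\lceil m\underline{\pi}_0\rceil$.
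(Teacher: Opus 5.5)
Your skeleton is the one the paper uses: couple $Q_{0,s+1}$ with $Q_{0,s}$ by deleting one nonzero coordinate, apply convexity of $x\mapsto 1/x$, and treat separately the event where a ``$1\vee$'' is active. However, the step you single out as the main obstacle comes from an arithmetic slip, and the repair you propose does not prove the result. Each of $q_2,\dots,q_{s+1}$ is missing from exactly one of the $s$ leave-one-out vectors, so $\sum_{k=2}^{s+1}N^{(-k)}(\lambda)=(s-1)N_{s+1}(\lambda)$, not $sN_{s+1}(\lambda)$ (try $s=2$ with both uniforms above $\lambda$). With the correct normalization there is no factor $1+1/s$. On the event that at least two of $q_2,\dots,q_{s+1}$ exceed $1-\epsilon$, no ``$1\vee$'' is active for $\lambda\le 1-\epsilon$, and convexity gives
\[
\frac{(s+1)(1-\lambda)}{N_{s+1}(\lambda)}=\frac{(s^2-1)(1-\lambda)}{\sum_{k}N^{(-k)}(\lambda)}\le\Big(1-\frac{1}{s^2}\Big)\frac{1}{s}\sum_{k=2}^{s+1}\frac{s(1-\lambda)}{N^{(-k)}(\lambda)},
\]
which is a contraction. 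The paper uses exactly this leftover. Since $N^{(-k)}(\lambda)\le s-1$ and $1-\lambda\ge\epsilon$, one has $(1-\lambda)/\sum_k(1\vee N^{(-k)}(\lambda))\ge\epsilon/(s(s-1))$ uniformly in $\lambda$, so replacing $s^2-1$ by $s^2$ absorbs an additive $\epsilon/(s(s-1))$. The complementary event costs at most $s+1$ (a bound on the integrand) times $(1-\epsilon)^s+s\epsilon(1-\epsilon)^{s-1}\le(s+1)(1-\epsilon)^{s-1}$. The requirement $(s+1)^2(1-\epsilon)^{s-1}\le\epsilon/(s(s-1))$ is implied by $(s+1)^4(1-\epsilon)^{s-1}\le\epsilon$, i.e.\ by $s\in\mathcal{M}(\epsilon)$. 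Exchangeability then bounds the remaining term by $c(s,\epsilon)$. So $(s+1)^4$ is $(s+1)\cdot(s+1)\cdot s(s-1)$ from these three sources, and the available gain is of order $\epsilon/s^2$, not of order $\epsilon$.

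Your substitute argument does not close the proof. The exact formula for $\E[1/(1+\mathcal{B}(n,p))]$ compares the two sides at a \emph{fixed} $\lambda$. But $c(s,\epsilon)$ and $c(s+1,\epsilon)$ are expectations of a maximum over data-dependent $\lambda$, and fixed-$\lambda$ monotonicity does not transfer to the expected maximum. The split according to where the maximizer sits and the union bound over candidate points and index pairs are left unspecified, and the $(s+1)^4$ you expect from them is read off the statement rather than derived. Once the identity is fixed, the same computation covers $d$ and the conformal case, with the bad event being that some interval of length at least $\epsilon$ contains at most one of the $s$ points. There the paper does not discretize. Two one-sided DKW bounds at deviation $\epsilon/4$ control all intervals at once and give $2e^{-s\epsilon^2/8}$; $s\ge 2/\epsilon$ only serves to make $1/s-\epsilon\le-\epsilon/2$, not to set a grid mesh. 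A grid with a union bound also works but yields a different condition. In the conformal case the paper uses the transductive conformal DKW inequality, transferred to the modified $p$-values via Lemma~\ref{lem:changetildeconf}. Your conditional calibration-plus-test DKW route is a reasonable alternative, but it would yield a sufficient condition with different constants from the stated one.
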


By concavity of $\log$, we have  $\log(s+1)\leq \log(a)-1+2/a + (s-1)/a$. Choosing $a=16/\log(1/(1-\epsilon))$,  we have  in the independent case that $[N(\epsilon),+\infty)\subset \mathcal{M}(\epsilon)$.
Hence, Proposition~\ref{prop:Nepsproof} implies Proposition~\ref{prop:Neps} and we now prove Proposition~\ref{prop:Nepsproof}.

For all $s\geq 2$, we have 
\begin{align*}
c(s+1,\epsilon)&= \mathbb{E}\left[1\vee \sup_{\lambda \in (0,1-\epsilon]}
      \frac{(s+1)(s-1)(1-\lambda)}{(s-1)\big(1 \vee \sum_{i=2}^{s+1} \mathbf{1}\{q_i \geq  \lambda\}\big)}\right] 
\end{align*}
Note that if $\sum_{i=2}^{s+1} \mathbf{1}\{q_i \geq  1-\epsilon\}\geq 2$, then we have for all $\lambda \in (0,1-\epsilon]$,  $(s-1)\big(1 \vee \sum_{i=2}^{s+1} \mathbf{1}\{q_i \geq  \lambda\}\big)\geq \sum_{i=2}^{s+1} \big(1 \vee \sum_{j=2}^{s+1} \ind{j\neq i}\mathbf{1}\{q_j \geq  \lambda\}\big)$ (because none of the `$\vee 1$' are effective in that case). 
Hence, we have 
\begin{align*}
c(s+1,\epsilon)&\leq  \mathbb{E}\left[1\vee \sup_{\lambda \in (0,1-\epsilon]}
      \frac{(s+1)(s-1)(1-\lambda)}{\sum_{i=2}^{s+1} \big(1 \vee \sum_{j=2}^{s+1} \ind{j\neq i}\mathbf{1}\{q_j \geq  \lambda\}\big)}\right] + (s+1)\: \P\big(\sum_{i=2}^{s+1} \mathbf{1}\{q_i \geq  1-\epsilon\}\leq 1\big)\\
      &=  \mathbb{E}\left[\sup_{\lambda \in (0,1-\epsilon]}
      \frac{(s+1)(s-1)(1-\lambda)}{\sum_{i=2}^{s+1} \big(1 \vee \sum_{j=2}^{s+1} \ind{j\neq i}\mathbf{1}\{q_j \geq  \lambda\}\big)}\right] + \frac{\epsilon}{s(s-1)},
\end{align*}
provided that $s$ satisfies 
\begin{equation}
    \label{equ:conds}
    (s+1)\: \P\big(\sum_{i=2}^{s+1} \mathbf{1}\{q_i \geq  1-\epsilon\}\leq 1\big)\leq \frac{\epsilon}{s(s-1)}.
\end{equation}
Hence, if $s$ satisfies \eqref{equ:conds}, the last display is at most
\begin{align*}
   \mathbb{E}\left[\sup_{\lambda \in (0,1-\epsilon]}
      \frac{s^2(1-\lambda)}{\sum_{i=2}^{s+1} \big(1 \vee \sum_{j=2}^{s+1} \ind{j\neq i}\mathbf{1}\{q_j \geq  \lambda\}\big)}\right] &\leq \mathbb{E}\left[\sup_{\lambda \in (0,1-\epsilon]} \sum_{i=2}^{s+1} 
      \frac{(1-\lambda)}{1 \vee \sum_{j=2}^{s+1} \ind{j\neq i}\mathbf{1}\{q_j \geq  \lambda\}}\right]\\
      &\leq  \mathbb{E}\left[\sup_{\lambda \in (0,1-\epsilon]} 
      \frac{s(1-\lambda)}{1 \vee \sum_{j=3}^{s+1} \mathbf{1}\{q_j \geq  \lambda\}}\right]\\
      &\leq  \mathbb{E}\left[1\vee \sup_{\lambda \in (0,1-\epsilon]} 
      \frac{s(1-\lambda)}{1 \vee \sum_{j=3}^{s+1} \mathbf{1}\{q_j \geq  \lambda\}}\right] = c(s,\epsilon),
\end{align*}
where we used in the first inequality that for any positive sequence $(a_i)_{i\in [k]}$, we have  
$
k^2/\big(\sum_{i=1}^{k} a_i\big)\leq \sum_{i=1}^{k} 1/a_i
$
(by convexity of $x\mapsto 1/x$ on $(0,+\infty)$).
This prove that if $s$ satisfies \eqref{equ:conds}, we have $c(s,\epsilon)\geq c(s+1,\epsilon)$. Let us now analyze  \eqref{equ:conds} in each setting.

In the independent setting, we have $\P\big(\sum_{i=2}^{s+1} \mathbf{1}\{q_i \geq  1-\epsilon\}\leq 1\big)= (1-\epsilon)^{s}+ s \epsilon (1-\epsilon)^{s-1}\leq (s+1)(1-\epsilon)^{s-1}$ hence $s$ satisfies  \eqref{equ:conds} provided that $s\in \mathcal{M}(\epsilon)$ in \eqref{equNeps}. In the conformal case, $s$ satisfies  \eqref{equ:conds} under the more constrained condition \eqref{equ:condsford} below.

Let us now turn to studying the factor $d(\cdot)$. 
For all $s\geq 2$, we have 
\begin{align*}
d(s+1,\epsilon)&= 
\mathbb{E}\left[\sup_{I\subset [0,1]:|I|\geq \epsilon}
      \frac{(s+1) (s-1)|I|}{(s-1) \big(1 \vee \sum_{i=1}^{s+1} \mathbf{1}\{q_i \in I\}\big)}\right].
\end{align*}
Hence, we can follow a similar approach and show that $d(s+1,\epsilon)\leq d(s,\epsilon)$ provided that
$s$ satisfies 
\begin{equation}
    \label{equ:condsford}
    (s+1)\: \P\bigg(\exists I \subset [0,1]:|I|\geq \epsilon,\: \sum_{i=2}^{s+1} \mathbf{1}\{q_i \in I\}\leq 1\bigg)\leq \frac{\epsilon}{s(s-1)}.
\end{equation}
To evaluate \eqref{equ:condsford} in the independent case, we use the DKW inequality \cite{Mass1990} (see Theorem~\ref{thDKW}) to obtain
\begin{align*}
 &\P\bigg(\exists I \subset [0,1]:|I|\geq \epsilon,\: \sum_{i=2}^{s+1} \mathbf{1}\{q_i \in I\}\leq 1\bigg)\\
 &\leq \P\bigg(\exists I \subset [0,1]:|I|= \epsilon,\: \sum_{i=2}^{s+1} \mathbf{1}\{q_i \in I\}\leq 1\bigg)\\
 &\leq    \P\bigg(\exists t\in [0,1-\epsilon]: s^{-1}\sum_{i=2}^{s+1} \mathbf{1}\{t< q_i \leq t+\epsilon\} - \epsilon\leq 1/s-\epsilon\bigg) \\
 &\leq \P\bigg(\exists t\in [0,1-\epsilon]: s^{-1}\sum_{i=2}^{s+1} \mathbf{1}\{t< q_i \leq t+\epsilon\} - \epsilon\leq -\epsilon/2\bigg)\\
 &\leq e^{-2 s (\epsilon/4)^2} + e^{-2 s (\epsilon/4)^2} = 2 e^{- s \epsilon^2/8},
\end{align*}
by assuming $s\geq 2/\epsilon$. This leads to the conditions $2(s+1)^3 e^{- s \epsilon^2/8} \leq \epsilon $ and $s\geq 2/\epsilon$ and gives the result. For the conformal case, we use instead the conformal DKW inequality of \cite{gazin2023transductive} (see Theorem~\ref{thDKW}). 
A subtle modification however here is that we deal with $q_i=\tilde{p}^{0,1}_i$, $i\in [2,s+1]$ as in  \eqref{equ:configconf} (with $m=s+1$ and $i=1$), and not strictly to conformal $p$-values $p_i,$ $i\in [1,s]$ \eqref{equ-confpvalues} (with $m=s$). Lemma~\ref{lem:changetildeconf} shows that the conformal DKW inequality can be easily adapted to that case and provides
\begin{align*}
& \P\bigg(\exists I \subset [0,1]:|I|\leq \epsilon,\: \sum_{i=2}^{s+1} \mathbf{1}\{q_i \in I\}\leq 1\bigg) \\
&\leq    \P\bigg(\exists t\in [0,1-\epsilon]: s^{-1}\sum_{i=2}^{s+1} \mathbf{1}\{t< q_i \leq t+\epsilon\} - \frac{\lfloor (t+\epsilon)(n+1)\rfloor - \lfloor t(n+1)\rfloor}{n+2}\leq 1/s-\epsilon + 2/(n+2)\bigg) \\ 
&\leq 2\bigg(1+\frac{\sqrt{2\pi} \epsilon \tau_{n+1,s}}{2(n+1+s)^{1/2}}\bigg)e^{- \tau_{n+1,s} \epsilon^2/8},
\end{align*}
by assuming $4/(n+2)+2/s\leq \epsilon$ (so that $1/s-\epsilon + 2/(n+2)\leq -\epsilon/2$), with $\tau_{n+1,s}=(n+1)s/(n+1+s)$.
This leads to the conditions $2(s+1)^3 \bigg(1+\sqrt{\pi/2}  \tau^{1/2}_{n+1,s}\epsilon\bigg)e^{- \tau_{n+1,s} \epsilon^2/8} \leq \epsilon $ and $6 \leq \epsilon ((n+1)\wedge s)$. This holds in particular if  $3(s+1)^3  (\tau^{1/2}_{n+1,s}\epsilon) e^{- \tau_{n+1,s} \epsilon^2/8} \leq \epsilon $ and $6 \leq \epsilon ((n+1)\wedge s)$, and thus if $3(s+1)^4  \tau^{1/2}_{n+1,s} e^{- \tau_{n+1,s} \epsilon^2/8} \leq 1 $ and $6 \leq \epsilon ((n+1)\wedge s)$. This leads to the result because $((n+1)\wedge s)/2 \leq \tau_{n+1,s}\leq (n+1)\wedge s$.

\subsection{Proof of Proposition~\ref{rateCD}}\label{proof:rateCD}

It is a direct consequence of the following results.

\begin{proposition}
    \label{prop:boundingCD}
 The following holds for any $\epsilon\in (0,1)$ and $s\geq 2$,
 \begin{itemize}
     \item in the independent case,
     \begin{align}
         c(s,\epsilon) \vee d(s,\epsilon,\kappa) &\leq 1+\inf_{c>0}\big\{4(c+1) /(\sqrt{s-1} \epsilon)+ 3s e^{-(c/(c+1))^2 (s-1) \epsilon^2 /2}\big\}\label{equ-indepcd}\\
         &\leq 1+ 8(c_s+1) /(\sqrt{s} \epsilon)+1/s,
     \end{align}
     by choosing $c_s>0$ with $c_s/(c_s+1)= \epsilon^{-1}\sqrt{2\log(3s^2)/(s-1)}$.
     \item  in the conformal case, for $\epsilon >4/n$, 
      \begin{align}
         c(s,\epsilon) \vee d(s,\epsilon,\kappa) 
         &\leq 
         1+ \frac{6}{\epsilon n-4}+\inf_{c>0}\big\{\frac{ 33 (c+1)^2}{\epsilon^2 ((n+1)\wedge (s-1))^{1/2} } + 17 s^4  e^{- ((n+1)\wedge (s-1)) \epsilon^2 ( c/(c+1))^2 /4}\big\}.
        \label{equ-indepcd}
     \end{align}
 In particular, when $n>4/\epsilon$, and by choosing $c_s>0$ with $c_s/(c_s+1)= \epsilon^{-1}\sqrt{4\log(17 s^5)/(((n+1)\wedge (s-1))}$, we have 
 \begin{equation}
         \label{equ-indepcdsimple}
         c(s,\epsilon) \vee d(s,\epsilon,\kappa) \leq 1+  \frac{6}{\epsilon n-4}+1/s+
         \frac{ 33 (c_s+1)^2}{\epsilon^2 ((n+1)\wedge(s-1))^{1/2} }.
     \end{equation}
 \end{itemize}
\end{proposition}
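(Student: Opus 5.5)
We bound the integrand of $c(s,\epsilon)$ and $d(s,\epsilon)$ pathwise via a uniform deviation event, then integrate. Under $Q_{0,s}$ the first coordinate is $0$ and $q_2,\dots,q_s$ are (in the independent case) iid $U(0,1)$. Since $d(s,\epsilon)\ge c(s,\epsilon)$ up to the $1\vee$ (intervals $(\lambda,1)$ are admissible), it suffices to treat $d$.

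For an interval $I$ with $|I|\geq\epsilon$, write $N(I)=\sum_{i=2}^s\ind{q_i\in I}$ and $\Delta=\sup_{I}|N(I)/(s-1)-|I||$, where the supremum is over all intervals. Because an interval count is a difference of two eCDF values, the DKW inequality (Theorem~\ref{thDKW}) gives $\P(\Delta> 2x)\le 4e^{-2(s-1)x^2}$, or a comparable bound with slightly different constants. The integrand is always at most $s|I|\le s$, since $1\vee N\ge1$. This crude bound is what produces the term $3s\,e^{-\cdots}$ after integration.

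On the good event, fix $c>0$ and set $x=\epsilon c/(c+1)$ (up to a factor 2), so that $\Delta\le x<\epsilon$. Then $N(I)\ge (s-1)(|I|-\Delta)$, and
$$\frac{s|I|}{N(I)}\le \frac{s}{s-1}\cdot\frac{1}{1-\Delta/|I|}\le \frac{s}{s-1}\Big(1+\frac{\Delta}{\epsilon}(c+1)\Big),$$
using $\Delta/|I|\le \Delta/\epsilon\le c/(c+1)$ and $1/(1-u)\le 1+(c+1)u$ for $u\le c/(c+1)$. Taking expectations, $\E[\Delta]\lesssim 1/\sqrt{s-1}$ by integrating the DKW tail, and the factor $s/(s-1)$ contributes an extra $O(1/s)$. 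This yields $1+4(c+1)/(\sqrt{s-1}\,\epsilon)$. On the complement, the integrand is at most $s$, and the complement has probability at most $\approx 3e^{-(c/(c+1))^2(s-1)\epsilon^2/2}$. Summing and taking the infimum over $c$ gives \eqref{equ-indepcd}. The simplified bound follows by plugging in $c_s$, so that the exponential term equals $1/s$, and then absorbing $\sqrt{s-1}$ versus $\sqrt{s}$ into the constant $8$.

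In the conformal case, conditionally on the calibration ordered scores, the $q_j$ are iid with a random discrete distribution. We will use the conformal DKW inequality of \cite{gazin2023transductive}, adapted through Lemma~\ref{lem:changetildeconf}, in place of DKW. Two differences appear.

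First, the centering is the discrete mass $(\lfloor (b)(n+1)\rfloor-\lfloor a(n+1)\rfloor)/(n+2)$ rather than $|I|$. This discretization costs a relative error of order $1/(\epsilon n)$, which produces $6/(\epsilon n-4)$ and requires $\epsilon>4/n$.

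Second, the deviation tail now carries a polynomial prefactor and the rate $(n+1)\wedge(s-1)$. Integrating the tail to bound $\E[\Delta]$ then yields an extra factor $1/\epsilon$, which gives the $(c+1)^2/\epsilon^2$ term.

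The main obstacle I expect is this conformal step. The fluctuation of the random calibration mass must be controlled uniformly over intervals and combined with the relative-error expansion. I would first try to handle this by conditioning on the calibration sample and applying a union bound over the $O(s^2)$ candidate endpoints, which explains the $s^4$ prefactor. The remaining bookkeeping, namely the constants $33$ and $17$ and the choice of $c_s$, is routine.
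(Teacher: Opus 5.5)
Your route is correct, but it is organized differently from the paper's. Let $Z=\sup_{|I|\ge\epsilon}s|I|/(1\vee N(I))$.

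\emph{The paper's route.} It uses the layer-cake bound $\E[Z]-1\le\int_0^{\infty}\P(Z>1+x)\,dx$. At each level $x$ it turns $\{Z>1+x\}$ into a uniform lower deviation of the interval counts of size $\epsilon x/(x+1)$, which is possible because $|I|\ge\epsilon$. It then applies one-sided DKW at the two endpoints. Finally it splits the $x$-integral at $c$: a Gaussian integral on $[0,c]$, using $x/(x+1)\ge x/(c+1)$, and a crude $(s+1)\times$tail on $[c,s+1]$.

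\emph{Your route.} You linearize pathwise on the event $\{\Delta^-\le\epsilon c/(c+1)\}$, where $\Delta^-=\sup_I(|I|-N(I)/(s-1))_+$; only the lower deviation is needed. Off this event you pay $s\,\P(\text{bad})$.

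\emph{Comparison.} The two computations are the same in substance. On $\{Z>1+x\}$ with $x\le c$ one has $\Delta^->\epsilon x/(c+1)$, so $\int_0^c\P(Z>1+x)\,dx\le(c+1)\E[\Delta^-]/\epsilon$, which is exactly your main term. Your version is more transparent.

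\emph{Constants in the independent case.} To reach the constants $4$ and $3$, use $\P(\Delta^->t)\le 2e^{-(s-1)t^2/2}$, which gives $\E[\Delta^-]\le\sqrt{2\pi/(s-1)}$. Your $\P(\Delta>2x)\le 4e^{-2(s-1)x^2}$ yields a constant near $5$ instead. You also need to absorb the $1/(s-1)$ coming from $s/(s-1)$. For very small $s$ the bound holds trivially, since $Z\le s$.

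In the conformal case, several of your explanations do not match what your own argument produces.

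\begin{enumerate}
\item \emph{No extra $1/\epsilon$.} Set $\tau_*=(n+1)\wedge(s-1)$. Integrating the tail of Lemma~\ref{lem:changetildeconf}, polynomial prefactor included, gives $\E[\Delta^-]=O(\tau_*^{-1/2})$, and no $\epsilon$ appears. Your route therefore yields $(c+1)/(\epsilon\sqrt{\tau_*})$, and the stated bound follows a fortiori. The factors $(c+1)^2/\epsilon^2$ and $s^4$ in the statement are artifacts of the paper's proof. It bounds the prefactor by $1+2\sqrt{2\pi}\,x\,\tau_{n+1,s}^{1/2}$, with $x$ the integration variable, then uses $\int_0^\infty xe^{-x^2/(2\sigma^2)}dx=\sigma^2$. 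It also takes a crude product on $[c,s+1]$. Neither factor comes from a union bound.

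\item \emph{The ``main obstacle'' is already handled.} Lemma~\ref{lem:changetildeconf} is an unconditional bound, uniform in $t$, around the deterministic centering $(\lceil (n+1)t\rceil+1)/(n+2)$. So the calibration fluctuation is already inside it, and no conditioning or union bound is needed. A union bound over the data-dependent endpoints $\{q_i\}$ would not be valid as stated anyway. You would have to use the deterministic grid, which brings in $n$-dependent factors.

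\item \emph{Discretization must enter the linearization itself.} For grid intervals the expected mass is at least $|I|(1-\rho)$, with $\rho=2/((n+2)\epsilon)$. So the good event must be $\{\Delta^-\le\epsilon(c/(c+1)-\rho)\}$ for $1/(1-u)\le 1+(c+1)u$ to apply.
\begin{itemize}
\item This changes the exponent by at most $\tau_*\epsilon^2\rho/2\le\epsilon<1$, i.e.\ by a constant factor.
\item If $c/(c+1)\le\rho$, the claim is trivial, because then $17s^4e^{-\tau_*\epsilon^2(c/(c+1))^2/4}\ge 17s^4/e>s$.
\item In your route discretization costs a term $(c+1)\rho$, which is absorbed into $(c+1)^2/(\epsilon^2\sqrt{\tau_*})$.
\end{itemize}
The term $6/(\epsilon n-4)$ in the statement arises differently: the paper integrates probability $1$ over $x\le 4/(\epsilon n-4)$.
\end{enumerate}
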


\begin{proof}
Take $s\geq 1$ and let us start with the independent setting. We have by definition \eqref{equ:gMS}-\eqref{equ:gIMS},  for $U_1,\dots,U_s$ iid $U(0,1)$, 
\begin{align*}
&c(s+1,\epsilon)\vee d(s+1,\epsilon) -1\\
&\leq \mathbb{E}\left[\sup_{I\subset [0,1]:|I|\geq \epsilon}
      \frac{s |I|}{1 \vee \sum_{i=1}^{s} \ind{U_i \in I}}\right] -1\\
      &= (1+1/s)\int_0^{s+1} \P\bigg(\exists I\subset [0,1]:|I|\geq \epsilon : s |I| /(x+1) > \sum_{i=1}^{s} \ind{U_i \in I} \bigg) dx\\
&=(1+1/s)\int_0^{s+1} \P\bigg(\exists I\subset [0,1]:|I|\geq \epsilon : - |I|  x /(x+1) >s^{-1} \sum_{i=1}^{s} \ind{U_i \in I} - |I| \bigg) dx \\
&= (1+1/s)\int_0^{s+1} \P\bigg(\exists u,v\in [0,1], u<v-\epsilon : 
(\hat{F}_s(v)-v) - (\hat{F}_s(u)-u) < -(v-u) x/(x+1) \bigg) dx,
\end{align*}
by denoting $\hat{F}_s$ for the eCDF of $U_1,\dots,U_s$. By using the DKW inequality (see Theorem~\ref{thDKW}), the last display is at most
\begin{align*}
& (1+1/s)\int_0^{s+1} \P\bigg(\exists u,v\in [0,1], u<v-\epsilon : 
(\hat{F}_s(v)-v) - (\hat{F}_s(u)-u) < -\epsilon x/(x+1) \bigg) dx\\
&\leq (1+1/s)\int_0^{s+1} \bigg(\P\bigg(\exists v\in [0,1] : 
\hat{F}_s(v)-v  < - (\epsilon/2) x/(x+1) \bigg) \\
&\:\:\:\:\:\:\:\:\:\:\:\:\:\:\:+ \P\bigg(\exists u\in [0,1] : 
\hat{F}_s(u)-u  >  (\epsilon/2) x/(x+1) \bigg) \bigg)dx\\
&\leq 2(1+1/s)\int_0^{s+1} e^{-(x/(x+1))^2 s \epsilon^2 /2} dx.
\end{align*}
Now, for any $c>0$, we have
\begin{align*}
\int_0^{s+1} e^{-(x/(x+1))^2 s \epsilon^2 /2} dx&=\int_0^{c} e^{-(x/(x+1))^2 s \epsilon^2 /2} dx+\int_c^{s+1} e^{-(x/(x+1))^2 s \epsilon^2 /2} dx\\
&\leq\int_0^{+\infty} e^{-x^2 s \epsilon^2 /(2(c+1)^2)} dx+(s+1) e^{-(c/(c+1))^2 s \epsilon^2 /2} \\
&= (c+1)\sqrt{\pi/2} /(\sqrt{s} \epsilon)+(s+1) e^{-(c/(c+1))^2 s \epsilon^2 /2},
\end{align*}
because $x/(x+1)\geq c/(c+1)$ for $x\geq c$ and $1+x\leq 1+c$ for $x\leq c$ and 
$\int_0^\infty e^{-x^2/(2\sigma^2)}dx= \sqrt{\pi\sigma^2/2}$.  This leads to the conclusion in the independent case.

The conformal setting can be handled similarly, by using the modified conformal DKW inequality of Lemma~\ref{lem:changetildeconf}: this gives 
\begin{align*}
&c(s+1,\epsilon)\vee d(s+1,\epsilon) -1\\
&= (1+1/s)\int_0^{s+1} \P\bigg(\exists u,v\in [0,1], u<v-\epsilon : 
\bigg(s^{-1} \sum_{j=2}^{s+1} \ind{q_j \leq v}- \frac{\lceil (n+1)v \rceil+1}{n+2}\bigg)\\
&\:\:\:\:\:\:\:\:- \bigg(s^{-1} \sum_{j=2}^{s+1} \ind{q_j \leq u} - \frac{\lceil (n+1)u \rceil+1}{n+2}\bigg)< -\epsilon x/(x+1) + 2/(n+2) \bigg) dx\\
&\leq 6/(\epsilon n-4) + 2(1+1/s)\int_{4/(\epsilon n-4)}^{s+1}\bigg(1+\frac{2\sqrt{2\pi}x \tau_{n+1,s}}{(n+s+1)^{1/2}}\bigg)e^{- \tau_{n+1,s} \epsilon^2 ( x/(x+1))^2 /2} dx\\
&\leq 6/(\epsilon n-4) + 2(1+1/s)\int_{4/(\epsilon n-4)}^{s+1}\bigg(1+2\sqrt{2\pi}x \tau^{1/2}_{n+1,s}\bigg)e^{- \tau_{n+1,s} \epsilon^2 ( x/(x+1))^2 /2} dx
\end{align*}
by using $-\frac{\lceil (n+1)v \rceil+1}{n+2} + \frac{\lceil (n+1)u \rceil+1}{n+2} \leq -(v-u) + 2/(n+2)\leq -\epsilon + 2/(n+2)$ and the fact that $\epsilon x/(x+1)>4/(n+2)$ for $x>4/(\epsilon n-4)$.

Now, similarly to above, we obtain for all $c>0$,
\begin{align*}
&c(s+1,\epsilon)\vee d(s+1,\epsilon) -1-6/(\epsilon n-4) \\
&\leq  2(1+1/s)\int_{0}^{c}\bigg(1+2\sqrt{2\pi}x \tau^{1/2}_{n+1,s}\bigg)e^{- \tau_{n+1,s} \epsilon^2 ( x/(x+1))^2 /2} dx\\
 &+ 2(1+1/s)\int_{c}^{s+1}\bigg(1+2\sqrt{2\pi}x \tau^{1/2}_{n+1,s}\bigg)e^{- \tau_{n+1,s} \epsilon^2 ( x/(x+1))^2 /2} dx\\
&\leq  2(1+1/s)\int_{0}^{c}\bigg(1+2\sqrt{2\pi}x \tau^{1/2}_{n+1,s}\bigg)e^{- \tau_{n+1,s} \epsilon^2 ( x/(c+1))^2 /2} dx\\
&+2(1+1/s)(s+1) \bigg(1+2\sqrt{2\pi}(s+1) \tau^{1/2}_{n+1,s}\bigg) e^{-(c/(c+1))^2 \tau_{n+1,s} \epsilon^2 /2}. 
\end{align*}
 Now we use $\int_0^\infty x e^{-x^2/(2\sigma^2)}dx= \sigma^2$ to obtain the desired bound. 
 \end{proof}

\section{Proof for Section~\ref{sec:optimality}}

\subsection{Optimality of $\pi_0^*$}

\begin{lemma}\label{lemoptim}
    Consider either the independent two-group setting or the conformal two-group setting. For any parameter space $\Theta$ for $(\pi_0,F_0,F_1)$, consider any $\pi_0$-estimator that is FDR conservative over $\Theta$ after $\kappa$-capping in the sense of \eqref{equFDRconservative} (for some $\kappa\in(0,1]$) and concentrating over $\Theta$ around $\bar{\pi}_0(\theta)>0$ in the sense of \eqref{equconcentration}.
For any configuration $\theta=(\pi_0,F_0,F_1)\in \Theta$, consider $\pi^*_0(\theta)$ in \eqref{pi0stargen} (and assume that the maximum is well defined).
 Then we have $$\bar{\pi}_0(\theta)\geq \pi^*_0$$ in the two following cases:
 \begin{itemize}
 \item $\kappa=1$ (no capping in \eqref{equFDRconservative});
 \item $\kappa<1$ and for all $\theta = (\pi_0,F_0,F_1)\in \Theta$, $F_0=I$.
 \end{itemize}
\end{lemma}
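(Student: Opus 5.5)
We argue by contradiction. Fix $\theta=(\pi_0,F_0,F_1)\in\Theta$ and suppose $\bar{\pi}_0(\theta)<\pi_0^*:=\pi^*_0(\theta)$. By \eqref{pi0stargen}, there is $\theta^*=(\pi_0^*,F_0^*,F_1^*)\in\Theta$ with $F^*=F$. In both two-group models, the joint law of the observed $p$-values depends on the parameter only through the mixture $F$. In the conformal model, the scores of the test points are iid with a law determined by $F$, independently of the uniform calibration scores. Hence $\P_\theta$ and $\P_{\theta^*}$ coincide on the $\sigma$-field generated by $\mbf{p}$.

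Two consequences follow. First, concentration \eqref{equconcentration} under $\theta$ transfers to $\theta^*$: with probability at least $1-\delta_t(\theta)$ under $\P_{\theta^*}$, we have $\tilde{\pi}_0\leq \bar{\pi}_0(\theta)+\eta_t(\theta)\leq \pi_0^*-\gamma$, where $\gamma:=(\pi_0^*-\bar{\pi}_0(\theta))/2>0$ and $t$ is large. Second, every rejection-based quantity (threshold, $\hat{k}$) has the same law under both parameters. The FDR, however, depends on the hidden labels, and we compute it under $\theta^*$.

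The aim is then to choose $\alpha$ so that $\ABH_\alpha(\tilde{\pi}_0,\kappa)$ has FDR exceeding $\alpha$ under $\theta^*$ for $t$ large, contradicting \eqref{equFDRconservative}. In the two-group model, conditionally on rejecting all $p$-values below a threshold $u$, each rejected $p$-value is null with probability $\pi_0^*F_0^*(u)/F(u)$. The natural heuristic is
\[
\FDR_{\theta^*}\approx \pi_0^*F_0^*(\hat t)/F(\hat t),
\]
where $\hat t$ is the threshold, which should concentrate near the BH-type fixed point $u^*$ of $F(u)=u\,\tilde{\pi}_0/\alpha$, capped at $\kappa$.

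\emph{Case $F_0=I$, with or without capping.} The FDR is close to $\pi_0^*u^*/F(u^*)=\pi_0^*\alpha/\tilde{\pi}_0\geq \alpha\pi_0^*/(\pi_0^*-\gamma)>\alpha$, provided $u^*\leq\kappa$ and $F(u^*)$ stays bounded away from $0$. To arrange this, pick $\alpha$ close to $1$ and compare with the level $\alpha'=\alpha\tilde{\pi}_0/\pi_0^*$. Alternatively, and more robustly, take $\alpha$ so large that $\alpha/\tilde{\pi}_0\geq 1$ uniformly. Then the uncapped procedure rejects every hypothesis and the FDR equals $\E[m_0/m]=\pi_0^*>\alpha$ whenever $\alpha<\pi_0^*$ and $\pi_0^*-\gamma<\alpha$. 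Such an $\alpha$ exists because $\pi_0^*-\gamma<\pi_0^*$. Without capping this argument is clean, including in the first bullet, since it needs no assumption on $F_0$. The FDR is $\pi_0^*$ on the concentration event, up to a $\delta_t$ loss, so it exceeds $\alpha$ for $t$ large.

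\emph{Capping case $\kappa<1$.} Rejecting everything is not possible, since only $p$-values $\leq\kappa$ can be rejected. Here we use $F_0=I$: choose $\alpha\in(\pi_0^*-\gamma,\pi_0^*)$ as before. On the concentration event the BH line $\alpha k/(m\tilde{\pi}_0)\geq k/m$ dominates the diagonal. We show that the threshold $\hat t$ is $\kappa$ up to $O(m^{-1/2})$ fluctuations. This follows if $F(\kappa)\geq\kappa\tilde{\pi}_0/\alpha$, which holds when $F(u)\geq u\pi_0^*$ pointwise, because $\pi_0^*\leq f$ under uniform nulls. We expect this pointwise inequality to hold for the relevant $\Theta$; otherwise we use a fixed point with $F(u^*)>0$ and apply the DKW inequality. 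The null count below $\hat t$ is then about $m\pi_0^*\hat t$, by DKW applied to the null eCDF, and the total is $mF(\hat t)\leq m\hat t\tilde{\pi}_0/\alpha$. This gives $\mathrm{FDP}\gtrsim \pi_0^*\alpha/\tilde{\pi}_0>\alpha$.

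The main obstacle is this last step: making the FDP lower bound rigorous. That requires bounding the ratio of null count to total rejections uniformly over random thresholds, and handling the conformal dependence, for which we invoke the conformal DKW inequality (Theorem~\ref{thDKW}, Lemma~\ref{lem:changetildeconf}).
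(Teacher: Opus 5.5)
Your uncapped case ($\kappa=1$) is correct and is exactly the paper's argument. You transfer the concentration event to $\theta^*$ by indistinguishability and take $\alpha\in(\bar\pi_0(\theta),\pi_0^*)$. Then $\alpha/\tilde\pi_0\ge 1$ on that event, every hypothesis is rejected, and $\FDR_{\theta^*}\ge\pi_0^*-\delta_t>\alpha$.

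The capping case has a genuine gap, and it is not only a matter of rigor. First, with $\alpha\in(\pi_0^*-\gamma,\pi_0^*)$, the ratio $\tilde\pi_0/\alpha$ is close to a number in $(\bar\pi_0/\pi_0^*,1)$. That number need not be $\le\pi_0^*$, so $F\ge\pi_0^* I$ does not give $F(\kappa)\ge\kappa\tilde\pi_0/\alpha$. Second, and more seriously, when the threshold is capped, $\hat t=\kappa\le\alpha\hat k/(m\tilde\pi_0)$ gives $\hat F_m(\kappa)=\hat k/m\ge\kappa\tilde\pi_0/\alpha$. So the number of rejections is at least, not at most, $m\hat t\tilde\pi_0/\alpha$. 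The FDP is then $\approx\pi_0^*\kappa/F(\kappa)$, not $\gtrsim\pi_0^*\alpha/\tilde\pi_0$, and this can sit far below your $\alpha$.

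Concretely, take $\pi_0^*=\kappa=1/2$, $\bar\pi_0=0.4$, and alternatives concentrated on $[0,0.05]$, so $F(\kappa)\approx 3/4$. For every $\alpha\in(0.45,0.5)$ the threshold equals $\kappa$ with high probability, and the FDR tends to $1/3<\alpha$. No contradiction arises. Taking $\alpha$ close to $1$ only makes this worse. Your fixed-point variant yields $\pi_0^*\alpha/\tilde\pi_0>\alpha$ only in the uncapped regime, and you cannot guarantee that regime without tying $\alpha$ to $\kappa/F(\kappa)$.

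The missing idea is to scale the level by $\kappa/F(\kappa)$, as the paper does. Take $\alpha=\frac{\bar\pi_0+\pi_0^*}{2}\cdot\frac{\kappa}{F(\kappa)}$. This lies in $(0,1)$ precisely because $F(\kappa)\ge\pi_0^*\kappa$ when $F_0^*=I$; that is where uniform nulls enter, not through a pointwise comparison with the BH line. Then $\kappa\bar\pi_0/\alpha=\frac{2\bar\pi_0}{\bar\pi_0+\pi_0^*}F(\kappa)<F(\kappa)$. So, by continuity of $F$, for $\epsilon>0$ small and $t$ large we have $\kappa(\bar\pi_0+\eta_t)/\alpha<F(\kappa-\epsilon)-c$ on the concentration event, for some $c>0$. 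A DKW bound then gives $\hat t\ge\kappa-\epsilon$ with probability $1-o(1)$. Consequently
\[
\mathrm{FDP}\ge\frac{\#\{i\in\cH_0:p_i\le\kappa-\epsilon\}}{\#\{i:p_i\le\kappa\}}\to\frac{\pi_0^*(\kappa-\epsilon)}{F(\kappa)},
\]
which exceeds $\alpha$ for $\epsilon$ small, contradicting FDR conservativeness. Numerator and denominator are evaluated at the deterministic points $\kappa-\epsilon$ and $\kappa$. So the uniform ratio control over random thresholds, which you flag as the main obstacle, is not needed. You only need uniform convergence in probability of the full and null empirical CDFs to $F$ and $I$, via the conformal DKW inequality in the conformal model.
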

 
\begin{proof}
    Assume $\bar{\pi}_0:=\bar{\pi}_0(\theta)< \pi^*_0$ for some configurations $\theta=(\pi_0,F_0,F_1)\in \Theta$ and $\theta^*=(\pi^*_0,F^*_0,F^*_1)\in \Theta$. 
Note that by \eqref{equconcentration} and since the distribution of the $p$-value family is the same under $\theta$ and $\theta^*$, we have
$$
\P_{\theta^*}(\tilde{\pi}_0\leq \bar{\pi}_0+ \eta_m)=\P_\theta(\tilde{\pi}_0\leq \bar{\pi}_0+ \eta_m)\geq 1-\delta_m,
$$
by denoting $\eta_m:=\eta_m(\theta)$ and $\delta_m(\theta):=\delta_m$.

Now first consider the case (i) where $\kappa=1$ (no capping).  Let $\alpha=(\bar{\pi}_0+\pi_0^*)/2\in (\bar{\pi}_0,\pi_0^*)$.
By definition of the ABH rejection number $\hat{k}$, we have 
\begin{align*}
\P_{\theta^*}(\hat{k}=m) 
&\geq \P_{\theta^*}(p_{(m)} \leq \alpha /\tilde{\pi}_0)\\
&\geq \P_{\theta^*}( 1 \leq \alpha /(\bar{\pi}_0+\eta_m))- \delta_m\\
&= \ind{\alpha\geq \bar{\pi}_0+\eta_m}- \delta_m\\
&\geq 1-\delta_m
\end{align*}
for $m$ large enough because $\alpha>\bar{\pi}_0$. In other words, the procedure $\ABH_\alpha(\tilde{\pi}_0)$ rejects all null hypotheses with probability at least $1-\delta_m$. Hence, we have
\begin{align*}
    \FDR_{\theta^*}(\ABH_\alpha(\tilde{\pi}_0)) &\geq \E_{\theta^*}[m_0/m] - \delta_m = \pi_0^*- \delta_m >\alpha
\end{align*}
for $m$ large enough because $\pi_0^*>\alpha$. This contradicts \eqref{equFDRconservative} for $\kappa=1$.

Now consider the case $\kappa<1$ and thus $F_0(t)=F^*_0(t)=t$ for all $t\in [0,1]$ by assumption. Hence,  both in the independent and conformal two-group models, we have both $\|\hat{F}_m-F\|_\infty$ and $\|\hat{F}_{0,m}-I\|_\infty$ converges in probability to $0$ for $\hat{F}_m (t):= m^{-1} \sum_{i=1}^m \ind{p_i\leq t}$ and $\hat{F}_{0,m} (t):= m_0^{-1} \sum_{i\in \cH_0} \ind{p_i\leq t}$. The convergence is with regard to $m$ in the independent case and with regard to both $n,m$ in the conformal case (see e.g. Theorem~\ref{thDKW}).

 Let $\alpha= (\bar{\pi}_0/2+\pi_0^*/2)\kappa/F(\kappa)$ which lies in $(0,1)$ because $\pi_0^*\kappa/F(\kappa)\leq 1$. 
Let $t^*=\kappa \bar{\pi}_0/\alpha <\kappa$ and consider the threshold $\hat{t}=t_{\hat{k}}$ of $\ABH_\alpha(\tilde{\pi}_0,\kappa)$ for $t_k=\kappa \wedge (\alpha k/ (m \tilde{\pi}_0))$, $1\leq k\leq m$. 
Note that $\hat{t}$ is equivalently given by
$$
\hat{t}=\max\{t_k\in [0,1]\::\: \hat{F}_m (t_k) \geq t_k \tilde{\pi}_0/\alpha, 1\leq k\leq m\}.
$$
Let $\epsilon\in (0,\kappa)$ and let $\hat{t}_\epsilon= \min\{t_k \::\: t_k\geq \kappa  -\epsilon,1\leq k\leq m\}$. For $m$ large enough, this minimum exists on the event $\Omega_m=\{\tilde{\pi}_0\leq \bar{\pi}_0+\eta_m\}$ because it yields $\alpha/ \tilde{\pi}_0 \geq \alpha/(\bar{\pi}_0+\eta_m) \geq \kappa -\epsilon$ because $\alpha\geq \bar{\pi}_0$. Moreover, $\hat{t}_\epsilon\geq \kappa  -\epsilon$ on $\Omega_m$. Hence,
\begin{align*}
\P_{\theta^*}(\hat{t} < \hat{t}_\epsilon,\Omega_m )
&\leq \P_{\theta^*}(  \hat{F}_m(\hat{t}_\epsilon ) < \hat{t}_\epsilon  \tilde{\pi}_0/\alpha,\Omega_m) \\
&\leq \P_{\theta^*}(  \hat{F}_m(\hat{t}_\epsilon ) < \hat{t}_\epsilon  (\bar{\pi}_0+\eta_m)/\alpha,\Omega_m)\\
&= \P_{\theta^*}(  \|\hat{F}_m - F\|_\infty >  F(\hat{t}_\epsilon)-\hat{t}_\epsilon (\bar{\pi}_0+\eta_m)/\alpha   ,\Omega_m).
\end{align*}
Now, by definition of $\alpha$, we have $F(\hat{t}_\epsilon)-\hat{t}_\epsilon (\bar{\pi}_0+\eta_m)/\alpha\geq F(\kappa-\epsilon)-\kappa (\bar{\pi}_0+\eta_m)/\alpha>c$ on $\Omega_m$ for $m$ large enough and $c>0$ being some constant depending on $\epsilon$.
Hence, we obtain
\begin{align*}
\P_{\theta^*}(\hat{t} < \hat{t}_\epsilon)
&\leq \P_{\theta^*}(  \|\hat{F}_m - F\|_\infty > c)- \delta_m=o(1).
\end{align*}
\end{proof}

\subsection{Proof of Theorem~\ref{genthoptimal}}\label{secproofgenthoptimal}

Let us prove first (i). By \eqref{sufficientcondition}, with probability at least $1-\delta_t^*(\theta)$, we have 
$\hat{\pi}_0\leq \pi_0^*(\theta)+\eta^*_t(\theta)$ and thus on this event $\alpha/\hat{\pi}_{0}\geq \alpha/(\pi_0^*(\theta)+\eta^*_t(\theta))\geq  \alpha''/\pi_0^*$ for 
$$
\alpha'' = \alpha \big(1-\eta^*_t(\theta)/\pi_0^*(\theta)\big) \leq \pi_0^*(\theta) \alpha/(\pi_0^*(\theta)+\eta^*_t(\theta)),
$$
by using that $1-t\leq 1/(1+t)$ for all $t\in[0,1)$. Since the power is nondecreasing in the threshold, this gives the domination for the oracle ABH part.
As for the estimator $\tilde{\pi}_0$, we have by combining Lemma~\ref{lemoptim} and \eqref{equconcentration} we have with probability at least $1-\delta_t(\theta)$, $\tilde{\pi}_0\geq \pi_0^*(\theta)-\eta_t(\theta)$. This means on this event 
 $\alpha''/\pi^*_0(\theta)\geq \alpha''/(\tilde{\pi}_0+ \eta_t(\theta))\geq  \alpha'/\tilde{\pi}_0$ because 
$$
\alpha'=\alpha''(1-\eta_t(\theta)/(\pi_0^*(\theta)-\eta_t(\theta)))\leq \alpha''\tilde{\pi}_0 /(\tilde{\pi}_0+\eta_t(\theta))
$$
when $\tilde{\pi}_0\geq \pi_0^*(\theta)-\eta_t(\theta)$.

For (ii), let us first note that $\Pow_\theta(\ABH_\alpha(\hat{\pi}_{0},\kappa))\geq \Pow_\theta(\ABH_\alpha(\hat{\pi}'_{0},\kappa))$ for the random variable $\hat{\pi}'_{0}=\hat{\pi}_{0} \vee \pi_0^*(\theta)$.
Now, provided that $\alpha\leq \kappa \pi_0^*$, we have for $m$ large enough that for all $k\in [m]$, $\alpha k/(m\hat{\pi}'_0)\leq \alpha k /(m \pi_0^*)\leq \kappa$, which means that the capping has no effect: $\ABH_\alpha(\hat{\pi}'_{0},\kappa)$ reduces to $\ABH_\alpha(\hat{\pi}'_{0})$. As a result, we can apply the reasoning in (i) because $\hat{\pi}'_{0}$ also satisfies \eqref{sufficientcondition}.

\subsection{Proof of Lemma~\ref{lem:pi0starsuperunif}}\label{proof:lem:pi0starsuperunif}
 
We define the parameters corresponding  to $\pi_0^*(\theta) $ by
\begin{align}\label{eq:counter.example}
F^*_0(x)=\bigg(1-\dfrac{1-F(x)}{\pi_0^*(\theta)}\bigg)_+ ;\qquad
F^*_1(x):=1\wedge (F(x)/(1-\pi_0^*(\theta))),
\end{align}
with the convention that $F_1^*(x)=x$ for all $x\in [0,1]$ if  $\pi^*_0(\theta)=1$. We easily check that $\theta^*$ is a valid configuration, that is, $\theta^*\in \Theta_{\succeq U}$: $F^*_0$ is nondecreasing, $F^*_0(0)= 0$ and $F^*_0(x)\leq 1-(1-\lambda^*)(1-F(x))/(1-F(\lambda^*))\leq x$ (with $F^*_0(\lambda^*)=\lambda^*$) and $F^*_0(1)=1$ (if $\pi^*_0=1$, this is still well defined because $F_1^*$ is not needed). 

\subsection{Proof of Proposition~\ref{propsufficientcondsuperunif}}\label{proofpropsufficientcondsuperunif}

Proposition~\ref{propsufficientcondsuperunif} is a direct consequence of the following result.

\begin{proposition}[Concentration of $\hat{\pi}^{\MS}_{0,\epsilon,\underline{\pi}_0}$ below $\pi_0^*$]\label{propi0hatindepinf}
Consider $\theta\in \Theta_{\succeq U}$ with a corresponding $\pi_0^*\in [0,1]$ and $L(f)$ the Lipschitz constant of the corresponding mixture density $f=\pi_0 f_0 + \pi_1 f_1$.
Assume $\pi_0^*> \underline{\pi}_0$. Then the following holds:
\begin{itemize}
    \item[(i)] For $\epsilon>0$ and $a=\ind{\lambda^*> 1-\epsilon}\in \{0,1\}$, for any $\Delta_m\in [0,1]$ with $C(m,\epsilon,\underline{\pi}_0)-1\leq \Delta_m$ and $\pi_0^* \epsilon  > (1+\Delta_m)/m$, we have
\begin{equation}
\label{concpi0hatindepboundary}
\P_\theta(   \hat{\pi}^{\MS}_{0,\epsilon,\underline{\pi}_0} \geq  \pi_0^* + 4\pi_0^*\Delta_m+2 a L(f) \epsilon  )\leq e^{-2 m \epsilon^2\Delta^2_m (\pi_0^*)^2  }.
\end{equation}
\item[(ii)] Choosing an estimator $\hat{\pi}^{\MS}_{0,\epsilon,\underline{\pi}_0}$ with any parameter $(\epsilon,\underline{\pi}_0)$ such that $\epsilon \gg \sqrt{(\log m)/m}$
and $162/\log(m)\leq \underline{\pi}_0\ll 1 $  provides for $m$ large enough, 
\begin{equation}
\label{concpi0hatindepcorboundary}
\P_\theta\bigg(   \hat{\pi}^{\MS}_{0,\epsilon,\underline{\pi}_0} \geq  \pi_0^* + 4 \sqrt{\frac{\log m }{2 m   \epsilon^2}}+2 a L(f) \epsilon  \bigg)\leq 1/m.
\end{equation}
 \end{itemize}
 \end{proposition}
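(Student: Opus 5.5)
The plan is to bound the infimum in \eqref{equ:pi0hatMS} by its value at one deterministic, well-chosen $\lambda_0$. Concentration then only has to be handled at that single point, and the factor $C$ is controlled through Proposition~\ref{rateCD}.

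\textbf{Choice of $\lambda_0$ and the floor.} Set $\lambda_0:=\lambda^*$ if $\lambda^*\leq 1-\epsilon$ (case $a=0$), and $\lambda_0:=1-\epsilon$ otherwise (case $a=1$), with $\lambda^*$ as in \eqref{equlambdastar}. In both cases $\lambda_0\in[0,1-\epsilon]$, so
$$
\hat{\pi}^{\MS}_{0,\epsilon,\underline{\pi}_0}\leq \underline{\pi}_0\vee\Big(C(m,\epsilon,\underline{\pi}_0)\,\frac{1\vee N}{m(1-\lambda_0)}\Big),\qquad N:=\sum_{i=1}^m\ind{p_i\geq \lambda_0}.
$$
Since $\underline{\pi}_0<\pi_0^*$, the floor $\underline{\pi}_0$ lies below the target bound and can be ignored.

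\textbf{Deterministic bias bound.} I will show $(1-F(\lambda_0))/(1-\lambda_0)\leq \pi_0^*+aL(f)\epsilon$.
\begin{itemize}
\item If $a=0$, this holds with equality to $\pi_0^*$ by the definition of $\lambda^*$.
\item If $a=1$, $(1-F(1-\epsilon))/\epsilon$ is the average of $f$ over $[1-\epsilon,1]$. Also $\pi_0^*$ is either the average of $f$ over $[\lambda^*,1]\subset[1-\epsilon,1]$ or equals $f(1)$. In either case there is a point $x\in[1-\epsilon,1]$ with $f(x)\leq\pi_0^*$. Since $f$ is Lipschitz, every value of $f$ on $[1-\epsilon,1]$ is at most $f(x)+L(f)\epsilon$, and so is the average.
\end{itemize}

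\textbf{Stochastic term and combination.} By continuity of $F$, $N\sim\mathcal{B}(m,1-F(\lambda_0))$. Hoeffding's inequality gives $N/m\leq 1-F(\lambda_0)+\epsilon\Delta_m\pi_0^*$ with probability at least $1-e^{-2m\epsilon^2\Delta_m^2(\pi_0^*)^2}$. Dividing by $1-\lambda_0\geq\epsilon$ yields
$$
\frac{N}{m(1-\lambda_0)}\leq \pi_0^*(1+\Delta_m)+aL(f)\epsilon .
$$
The `$1\vee$' is harmless: $1/(m(1-\lambda_0))\leq 1/(m\epsilon)<\pi_0^*/(1+\Delta_m)$ by the assumption $\pi_0^*\epsilon>(1+\Delta_m)/m$. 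Multiplying by $C\leq 1+\Delta_m$ and using $\Delta_m\leq1$, so that $(1+\Delta_m)^2\leq 1+3\Delta_m$ and $(1+\Delta_m)aL(f)\epsilon\leq 2aL(f)\epsilon$, gives the bound $\pi_0^*+3\pi_0^*\Delta_m+2aL(f)\epsilon$. This proves \eqref{concpi0hatindepboundary}, with some room to spare.

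\textbf{Part (ii).} Take $\Delta_m=\sqrt{\log m/(2m\epsilon^2)}/\pi_0^*$. Then the exponential bound in (i) equals exactly $1/m$ and $4\pi_0^*\Delta_m=4\sqrt{\log m/(2m\epsilon^2)}$. Moreover $\Delta_m\to0$ and $\pi_0^*\epsilon m\to\infty$ because $\epsilon\gg\sqrt{\log m/m}$, so $\Delta_m\leq 1$ and the condition $\pi_0^*\epsilon>(1+\Delta_m)/m$ hold for $m$ large. It remains to verify $C(m,\epsilon,\underline{\pi}_0)-1\leq\Delta_m$. Proposition~\ref{rateCD} gives $C-1\leq 9/(\sqrt{m\underline{\pi}_0}\,\epsilon)$. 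Because $\pi_0^*\leq1$, it suffices that $9\leq\sqrt{\underline{\pi}_0\log m/2}$, i.e.\ $\underline{\pi}_0\log m\geq162$, which is exactly the assumed lower bound on $\underline{\pi}_0$. Finally $\pi_0^*>\underline{\pi}_0$ holds for large $m$ since $\underline{\pi}_0\to0$.

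\textbf{Main obstacle.} The delicate step is applying Proposition~\ref{rateCD}. It requires $\epsilon\gg\sqrt{\log m/(m\underline{\pi}_0)}$, which is slightly stronger than the stated $\epsilon\gg\sqrt{\log m/m}$ when $\underline{\pi}_0\to0$. If this gap matters, I would instead invoke the non-asymptotic bound of Proposition~\ref{prop:boundingCD}, with $s=\lceil m\underline{\pi}_0\rceil$ after applying Proposition~\ref{prop:Neps}. I would then check directly that its exponential remainder and its $c_s$ factor are absorbed into the constant $9$ under the given rates; in the regime $\epsilon=m^{-1/4}$ of Proposition~\ref{propsufficientcondsuperunif} there is ample slack.
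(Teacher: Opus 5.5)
Your proof is correct and follows the paper's argument essentially step by step: evaluate the infimum at $\lambda^*$ (or at $1-\epsilon$ when $\lambda^*>1-\epsilon$), control the single binomial count by Hoeffding (the paper's use of DKW gives the same bound at one point), bound the bias through the Lipschitz property of $f$, and in (ii) take $\Delta_m=\sqrt{\log m/(2m\epsilon^2)}/\pi_0^*$ and invoke Proposition~\ref{rateCD} with $\underline{\pi}_0\log m\geq 162$. Your unified $\lambda_0$ and explicit treatment of the $1\vee$ term even give the constant $3$ in place of $4$.

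The caveat you flag is genuine, and it is also silently present in the paper's proof, which applies Proposition~\ref{rateCD} without checking $\epsilon\gg\sqrt{\log m/(m\underline{\pi}_0)}$. However, your fallback via Proposition~\ref{prop:boundingCD} would not repair it on the full stated range. For every $c>0$ the remainder there is at least $3s\,e^{-(s-1)\epsilon^2/2}$, and with $s\approx m\underline{\pi}_0$ this blows up when, for example, $\underline{\pi}_0=162/\log m$ and $\epsilon^2=(\log m)^{3/2}/m$. The honest fix is to add the hypothesis $\epsilon\gg\sqrt{\log m/(m\underline{\pi}_0)}$, which the choice $\epsilon=m^{-1/4}$ of Proposition~\ref{propsufficientcondsuperunif} satisfies.
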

\begin{proof}
First consider the case where $\lambda^*\leq 1-\epsilon$.
    We have for $x>0$, provided that $\pi_0^*  (1-\lambda^*) > (1+\Delta_m)/m$,
\begin{align*}
\P(   \hat{\pi}^{\MS}_{0,\epsilon,\underline{\pi}_0} \geq \pi_0^* + x )&= \P\bigg(\underline{\pi}_0 \vee \bigg(C(m,\epsilon,\underline{\pi}_0)  \inf_{\lambda \in [0,1-\epsilon]}
      \frac{1 \vee \sum_{i=1}^{m} \mathbf{1}\{p_i \geq \lambda\}}{m(1-\lambda)} \big)\bigg)\geq \pi_0^* + x\bigg)  \\ 
      &\leq \P\bigg(  
     (1+\Delta_m) \frac{1 \vee\sum_{i=1}^{m} \mathbf{1}\{p_i \geq \lambda^*\}}{m(1-\lambda^*)} \geq \pi_0^* + x\bigg) \\
   &\leq \P\bigg(  
     (1+\Delta_m) \frac{\sum_{i=1}^{m} \mathbf{1}\{p_i \geq \lambda^*\}}{m(1-\lambda^*)} \geq \pi_0^* + x\bigg) \\
        &\leq \P\bigg(  
     (1+\Delta_m) m^{-1}\sum_{i=1}^{m} \mathbf{1}\{p_i \geq \lambda^*\} \geq \pi_0^*(1-\lambda^*) + x (1-\lambda^*)\bigg) 
\end{align*}
 Now by the DKW inequality (Theorem~\ref{thDKW}), for any $\delta>0$,
\begin{align*}
\P(   \hat{\pi}^{\MS}_{0,\epsilon,\underline{\pi}_0} \geq \pi_0^* + x )
        &\leq e^{-2 m \delta^2}
\end{align*}
provided that $\delta$ is chosen so that 
$
     (1+\Delta_m) (1-F(\lambda^*) +\delta) \leq \pi_0^*(1-\lambda^*) + x (1-\lambda^*)$. 
This gives the condition
$$
\delta\leq \frac{\pi_0^*(1-\lambda^*) +  x (1-\lambda^*)}{1+\Delta_m} - (1-F(\lambda^*))= (1-\lambda^*)\frac{x-\Delta_m \pi_0^*  }{1+\Delta_m} = (1-\lambda^*)\frac{2\Delta_m \pi_0^*  }{1+\Delta_m},
$$
by choosing $x=3\Delta_m \pi_0^*$.
Since $\Delta_m\leq 1$, the latter condition is satisfied if
$
\delta=  (1-\lambda^*)\Delta_m \pi_0^*,
$
which gives \eqref{concpi0hatindepboundary} in the case $\lambda^*\leq 1-\epsilon$ (for which $a=0$).

Now 
consider the case where $\lambda^*> 1-\epsilon$.
 Similarly to above, we have (by taking $\lambda=1-\epsilon$) for $x>0$ provided that $\pi_0^* \epsilon  > (1+\Delta_m)/m$,
\begin{align*}
\P(   \hat{\pi}^{\MS}_{0,\epsilon,\underline{\pi}_0} \geq \pi_0^* + x )
        &\leq \P\bigg(
     (1+\Delta_m) m^{-1}\sum_{i=1}^{m} \mathbf{1}\{p_i \geq 1-\epsilon\} \geq \pi_0^*\epsilon + x \epsilon)\bigg) 
\end{align*}
 Now 
by the DKW inequality (Theorem~\ref{thDKW}), for any $\delta>0$,
\begin{align*}
\P(   \hat{\pi}^{\MS}_{0,\epsilon,\underline{\pi}_0} \geq \pi_0^* + x )
        &\leq e^{-2 m \delta^2}
\end{align*}
provided that $\delta>0$ is chosen so that $
     (1+\Delta_m) (1-F(1-\epsilon) +\delta) \leq \pi_0^*\epsilon + x \epsilon$ that is
$$
\delta\leq \frac{\pi_0^*\epsilon + x \epsilon}{1+\Delta_m} - (1-F(1-\epsilon)).
$$
Now, we have by the mean value theorem that there exists $x_\epsilon\in (1-\epsilon,1)$ such that $1-F(1-\epsilon) =f(c_\epsilon) \epsilon$. 
Also, both when $\lambda^*<1$ and $\lambda^*=1$, there exists $x'_\epsilon\in (\lambda^*,1]\subset (1-\epsilon,1]$ such that $\pi_0^* =f(x'_\epsilon)$. This entails
$$
|\pi_0^*\epsilon - (1-F(1-\epsilon))| =| f(x'_\epsilon) - f(x_\epsilon)| \epsilon \leq L(f) |x'_\epsilon-x_\epsilon|\epsilon \leq L(f) \epsilon^2
$$
because $f$ is $L(f)$-Lipschitz. As a result, we can choose
$$
\delta= \epsilon(x/2 -\Delta_m\pi_0^*) - L(f) \epsilon^2 = \epsilon\Delta_m\pi_0^*  
$$
by choosing $x=4\Delta_m \pi_0^* + 2 L(f) \epsilon$, which gives \eqref{concpi0hatindepboundary} and point (i).

As for (ii), choosing $ \Delta^2_m = \log m / ( 2 m  (\pi_0^*)^2 \epsilon^2)$ gives $e^{-2 m (\epsilon\Delta_m\pi_0^*)^2}=1/m$. We have $\Delta_m\leq 1$ for large $m$ by the assumption on $\epsilon$. Also,  by Proposition~\ref{rateCD}, we have
$$
\Delta^2_m= \log m / ( 2 m  (\pi_0^*)^2 \epsilon^2) \geq 9^2/(m \underline{\pi}_0 \epsilon^2)\geq (C(m,\epsilon,\underline{\pi}_0)-1)^2,
$$
by using the condition on $\underline{\pi}_0$.
This 
gives 
\eqref{concpi0hatindepcorboundary} and (ii).
\end{proof}

\subsection{Proof of Lemma~\ref{lempi0starunif}}\label{proof:lempi0starunif}

Consider the alternative configuration
\begin{align}\label{eq:counter.exampleunif}
F^*_1:=\dfrac{F-\pi_0^*I}{1-\pi_0^*} ,
\end{align}
with the convention $F^*_1=I$ if $\pi_0^*=1$. 
It is clear that $\theta^*=(\pi^*_0,I,F^*_1)$ is in $\Theta_{U}$, because $F^*_1$ has the density 
$
f_1^*  = (f-\pi_0^*)/(1-\pi_0^*) .
$

\subsection{Proof of Proposition~\ref{propsufficientunifANDconformal}}\label{proofpropsufficientunifANDconformal}

We split the proof of Proposition~\ref{propsufficientunifANDconformal} in the two following propositions, that we prove separately.

\begin{proposition}\label{propsufficientunif}
 In the independent two-group model, by choosing  $\epsilon = m^{-1/4}$, $\underline{\pi}_0=0.5 \wedge (41/\log(2m)) $, we have 
for $m$ large enough,
\begin{equation}
\label{concpi0hatindepcorboundaryoptimuniform}
\P\bigg(  \hat{\pi}^{\IMS}_{0,\epsilon,\underline{\pi}_0,\kappa}  \geq  \bar{\pi}_0(\theta) +  5\frac{\sqrt{2\log (2m) }}{m^{1/4}}  \bigg)\leq 1/m, 
\end{equation}
with $\bar{\pi}_0(\theta):=\min_{x\in [\kappa,1]}f(x)$, which is such that $\bar{\pi}_0(\theta)=\pi_0^*(\theta)$ when $\kappa\leq \lambda^*$.
\end{proposition}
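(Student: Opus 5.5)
The proof mirrors Proposition~\ref{propi0hatindepinf}: we bound the infimum in \eqref{equ:pi0hatIMS} by the value at a single deterministic interval placed at the minimiser of $f$ on $[\kappa,1]$, then apply the DKW inequality.

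\textbf{Step 1 (reference interval).} Let $x_\kappa\in[\kappa,1]$ satisfy $f(x_\kappa)=\bar{\pi}_0(\theta)$. Choose a deterministic open interval $I^*=(a^*,a^*+\epsilon)\subset[\kappa,1]$ containing $x_\kappa$ in its closure. This is possible for $m$ large since $\kappa\le 1-\epsilon$. For every $x\in I^*$ we have $|x-x_\kappa|\le\epsilon$, so the Lipschitz property of $f$ (which follows from that of $f_1$) gives
$$F(a^*+\epsilon)-F(a^*)=\int_{I^*}f\le \epsilon\,\bar{\pi}_0(\theta)+L(f)\epsilon^2 .$$
Using $D=D(m,\epsilon,\underline{\pi}_0)\le 1+\Delta_m$ and $\underline{\pi}_0\le\bar\pi_0+x$, we then get
$$\P\big(\hat{\pi}^{\IMS}_{0,\epsilon,\underline{\pi}_0,\kappa}\ge\bar{\pi}_0+x\big)\le\P\Big((1+\Delta_m)\big(\hat F_m(a^*+\epsilon)-\hat F_m(a^*)\big)\ge\epsilon(\bar{\pi}_0+x)\Big),$$
where the case in which the ``$1\vee$'' is active is harmless once $\epsilon\bar\pi_0 m>1+\Delta_m$. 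Note that $\bar\pi_0\ge\pi_0\ge\underline{\pi}_0$ is not needed here, because the maximum with $\underline\pi_0$ is handled directly by $\underline\pi_0\le \bar\pi_0+x$.

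\textbf{Step 2 (DKW).} An increment of $\hat F_m$ deviates from the corresponding increment of $F$ by at most $2\|\hat F_m-F\|_\infty$. By Theorem~\ref{thDKW}, with probability at least $1-2e^{-2m\delta^2}$, the increment over $I^*$ is at most $\epsilon\bar\pi_0+L(f)\epsilon^2+2\delta$. It therefore suffices that
$$(1+\Delta_m)\big(\epsilon\bar\pi_0+L(f)\epsilon^2+2\delta\big)\le\epsilon(\bar\pi_0+x).$$
We take $\delta=\sqrt{\log(2m)/(2m)}$, so that the failure probability is $1/m$. We take $\Delta_m$ from the independent case of Proposition~\ref{rateCD}, namely $\Delta_m=9/(\sqrt{m\underline{\pi}_0}\,\epsilon)$. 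With $\epsilon=m^{-1/4}$ this gives the following sizes:
\begin{itemize}
\item $2\delta/\epsilon\asymp\sqrt{2\log(2m)}\,m^{-1/4}$;
\item $L(f)\epsilon=L(f)m^{-1/4}$, which is negligible relative to $\sqrt{\log m}\,m^{-1/4}$ for $m$ large;
\item $\Delta_m\bar\pi_0\le 9\,m^{-1/4}/\sqrt{\underline\pi_0}$.
\end{itemize}
The choice $\underline\pi_0\ge 41/\log(2m)$ makes $9/\sqrt{\underline\pi_0}\le (9/\sqrt{41})\sqrt{\log(2m)}$, which is comparable to $\sqrt{2\log(2m)}$. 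Summing the three contributions, and absorbing the factor $(1+\Delta_m)\to1$, gives $x=5\sqrt{2\log(2m)}\,m^{-1/4}$ for $m$ large.

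\textbf{Main obstacle.} The main work is the constant bookkeeping in Step 2: verifying that the three contributions, each multiplied by $(1+\Delta_m)$, fit under the constant $5$ for the stated $\underline\pi_0$. The applicability condition of Proposition~\ref{rateCD}, $\epsilon\gg\sqrt{\log m/(m\underline\pi_0)}$, holds since $m^{-1/4}\gg \log m\, m^{-1/2}$.

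The equality $\bar\pi_0=\pi_0^*$ when $\kappa\le\lambda^*$ is immediate. By Lemma~\ref{lempi0starunif} the global minimum of $f$ is attained at $\lambda^*\in[\kappa,1]$.
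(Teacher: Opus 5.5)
Your proposal is correct and follows essentially the paper's argument: bound the infimum by one deterministic interval of length $\epsilon$, bound $\int_{I^*}f\le\epsilon\bar\pi_0+L(f)\epsilon^2$ by Lipschitzness, apply DKW to the increment, and use Proposition~\ref{rateCD} for $D$. It differs only in that you keep the DKW deviation $\delta$ separate from $\Delta_m$ instead of setting $\delta=\epsilon\Delta_m\bar\pi_0$. You also centre $I^*$ at the minimiser of $f$ on $[\kappa,1]$ and keep it inside $[\kappa,1]$, which is cleaner than the paper's $[\lambda^*,\lambda^*+\epsilon]$ when $\kappa>\lambda^*$ or $\lambda^*>1-\epsilon$.

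One caveat is shared with the paper's proof, which silently replaces $\bar\pi_0/\pi_0^*$ by $1$. Your step $\Delta_m\bar\pi_0\le 9m^{-1/4}/\sqrt{\underline\pi_0}$ tacitly uses $\bar\pi_0\le 1$. This holds when $\kappa\le\lambda^*$, since then $\bar\pi_0=\pi_0^*\le\int_0^1 f=1$, and that is the case used downstream. In general, however, $\min_{[\kappa,1]}f$ can exceed $1$; it is only bounded by $1/(1-\kappa)$. For $\bar\pi_0$ beyond roughly $4$, your bookkeeping with Proposition~\ref{rateCD} no longer yields the constant $5$.
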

The proof of Proposition~\ref{propsufficientunif} is in Section~\ref{proofpropsufficientunif}.

\begin{proposition}\label{propsufficientconformal}
 In the conformal two-group model, by choosing  $\epsilon = (n\wedge m)^{-1/8}$, $\underline{\pi}_0=0.5 \wedge (1/\log(m)) $, we have 
for $n\wedge m$ large enough, and $n\gg \log m$,
\begin{equation}
\label{concpi0hatindepcorboundaryoptimuniform}
\P\bigg(  \hat{\pi}^{\IMS}_{0,\epsilon,\underline{\pi}_0,\kappa}  \geq  \bar{\pi}_0(\theta) +  10 (1+L(f))^{1/2}\frac{\sqrt{\log m }}{(n\wedge m)^{1/8}}  \bigg)\leq 1/m, 
\end{equation}
with $\bar{\pi}_0(\theta):=\min_{x\in [\kappa,1]}f(x)$, which is such that $\bar{\pi}_0(\theta)=\pi_0^*(\theta)$ when $\kappa\leq \lambda^*$.
\end{proposition}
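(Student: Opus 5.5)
We need to prove Proposition~\ref{propsufficientconformal}: in the conformal two-group model, $\hat{\pi}^{\IMS}_{0,\epsilon,\underline{\pi}_0,\kappa}$ concentrates below $\bar{\pi}_0(\theta)+\eta$.

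The plan mirrors the proof of Proposition~\ref{propi0hatindepinf}, with the DKW inequality replaced by a conformal DKW bound and the normalizing factor $D$ controlled by Proposition~\ref{rateCD}. Since the infimum in \eqref{equ:pi0hatIMS} is over intervals, it suffices to exhibit one deterministic interval $I^*\subset[\kappa,1]$ with $|I^*|=\epsilon$ and bound the corresponding single Storey ratio. Let $x^*\in\arg\min_{[\kappa,1]}f$, so $f(x^*)=\bar\pi_0$, and let $I^*$ be an interval of length $\epsilon$ inside $[\kappa,1]$ containing $x^*$; this requires $\kappa\le 1-\epsilon$, which holds for $t$ large. By the Lipschitz property of the mixture density $f$ of the $q_{n+i}=1-S_{n+i}$, the population mass satisfies $F(I^*)\le \epsilon(\bar\pi_0+L(f)\epsilon)$. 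The case where $\underline{\pi}_0$ is active in the maximum is handled by $\underline{\pi}_0\le \bar\pi_0$ for $m$ large, since $\underline{\pi}_0\to0$ and $\bar\pi_0\ge\pi_0>0$.

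Next I would link the conformal $p$-values to $F$. Conditionally on the calibration sample, $p_i$ is a deterministic function of $q_{n+i}$ through the calibration empirical CDF $\hat G_n$. Specifically, $p_i\in I^*=(a,b)$ means that $q_{n+i}$ falls roughly in $(\hat G_n^{-1}(a),\hat G_n^{-1}(b))$, up to $1/(n+1)$ discretization. On the event $\{\|\hat G_n-I\|_\infty\le r_n\}$ with $r_n=\sqrt{\log m/(2n)}$, which by DKW has probability at least $1-2/m^{\,}$ (adjusting constants), this set is contained in the enlarged interval $I^*_{+}$ of length $\epsilon+2r_n+2/(n+1)$. Conditionally on calibration, the count $\sum_i\ind{p_i\in I^*}$ is dominated by a Binomial$(m,F(I^*_+))$. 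The Lipschitz bound then gives $F(I^*_+)\le (\epsilon+2r_n')(\bar\pi_0+L(f)\epsilon')$, and Hoeffding adds $\sqrt{\log m/(2m)}$. Dividing by $m\epsilon$ yields
\[
\frac{1\vee\sum_i\ind{p_i\in I^*}}{m\epsilon}\le \bar\pi_0+L(f)\epsilon+O\Big(\frac{r_n}{\epsilon}+\frac{1}{n\epsilon}+\frac{\sqrt{\log m/m}}{\epsilon}+\frac{1}{m\epsilon}\Big)
\]
with probability at least $1-1/m$, after splitting $\delta$ between the two events.

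Finally, Proposition~\ref{rateCD} gives $D(m,\epsilon,\underline{\pi}_0)-1\le 34/(\sqrt{(n\wedge m)\underline{\pi}_0}\,\epsilon^2)$. The required conditions are checked as follows: $\epsilon=(n\wedge m)^{-1/8}\ll1$ and $\epsilon\gg\sqrt{\log m/((n\wedge m)\underline\pi_0)}$, which holds since $\underline\pi_0=1/\log m$, and $n\gg\log m$. Multiplying, the error is
\[
\bar\pi_0(D-1)+L(f)\epsilon+O\big(\sqrt{\log m}\,(n\wedge m)^{-1/2}\epsilon^{-1}\big).
\]
With $\epsilon=(n\wedge m)^{-1/8}$, the dominating terms are $(n\wedge m)^{-1/4}\sqrt{\log m}$ from $D$, $L(f)(n\wedge m)^{-1/8}$, and $(n\wedge m)^{-3/8}\sqrt{\log m}$. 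All of them are bounded by $10(1+L(f))^{1/2}\sqrt{\log m}\,(n\wedge m)^{-1/8}$ for large $n\wedge m$, since $L(f)\le (1+L(f))^{1/2}\cdot(1+L(f))^{1/2}$ and $L(f)\epsilon$ is eventually absorbed into the $\sqrt{\log m}$ factor.

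The main obstacle is the second step: rigorously transferring the interval $I^*$ in $p$-value scale to an interval in score scale. This is complicated by the discreteness of conformal $p$-values and the dependence through calibration. I would first try conditioning on the calibration sample and using the DKW inequality for $\hat G_n$. If that turns out to be messy, the fallback is the conformal DKW inequality of Theorem~\ref{thDKW} and Lemma~\ref{lem:changetildeconf}, applied directly to the process $t\mapsto m^{-1}\sum_i\ind{p_i\le t}$ compared with $F$ composed with the calibration quantiles.
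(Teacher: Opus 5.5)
Your argument is correct, but it reaches the bound by a different and more elementary route than the paper.

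\textbf{The paper's route.} The paper derives the statement from Proposition~\ref{propi0hatconformal}.
\begin{itemize}
\item Part (i) controls $\sum_i\ind{p_i\in I^*}$ via the conformal DKW inequality with covariate shift (third item of Theorem~\ref{thDKW}, taken from a work in preparation). That inequality compares the conformal $p$-value eCDF directly with $F$, at the effective calibration size $n'=n/(1+L(f))^2$. This is where the factor $(1+L(f))^{1/2}$ comes from.
\item Part (ii) uses a single $\Delta$ that must dominate both the sampling fluctuation and $D(m,\epsilon,\underline{\pi}_0)-1$. This forces $\bar{\pi}_0\Delta\asymp\sqrt{\log m}\,(n\wedge m)^{-1/4}/\epsilon$, which gives exactly the stated $\sqrt{\log m}\,(n\wedge m)^{-1/8}$ rate.
\end{itemize}

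\textbf{Your route.} You do by hand what that inequality packages.
\begin{itemize}
\item From $p_i=(1+n\hat G_n(q_{n+i}))/(n+1)$ you get $|p_i-q_{n+i}|\le\|\hat G_n-I\|_\infty+1/(n+1)$.
\item So on the classical DKW event for the uniform calibration sample, $\{p_i\in I^*\}\subset\{q_{n+i}\in I^*_+\}$ with $I^*_+$ deterministic.
\item The count $\sum_i\ind{q_{n+i}\in I^*_+}$ is then exactly $\mathcal{B}(m,F(I^*_+))$, so no conditioning on the calibration sample is needed, and Hoeffding concludes.
\end{itemize}

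\textbf{What each approach buys.}
\begin{itemize}
\item Your proof is self-contained, using only Massart's DKW and Hoeffding.
\item You place $I^*$ correctly inside $[\kappa,1]$. The paper's $[\lambda^*,\lambda^*+\epsilon]$ is not an admissible interval when $\lambda^*<\kappa$.
\item Because you keep $D-1$ separate from the sampling error, you get the sharper bound $\bar{\pi}_0+L(f)\epsilon+O(\sqrt{\log m}\,(n\wedge m)^{-1/4})$, which implies the claim.
\item The paper's uniform-in-$t$ tool is more reusable, but it is unnecessary here. Your stated ``fallback'' is essentially the paper's route.
\end{itemize}

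\textbf{One small correction.} Applying Proposition~\ref{rateCD} requires $\epsilon\gg\log m/\sqrt{n\wedge m}$, i.e.\ $(n\wedge m)^{3/8}\gg\log m$. This does not follow from $n\gg\log m$ alone, so it must be read into ``$n\wedge m$ large enough''. The paper's own condition $\epsilon\ge(2/\bar{\pi}_0)(\log m)^{1/2}(n\wedge m)^{-1/4}$ implicitly needs even more, namely $n\wedge m\gtrsim(\log m)^4$.
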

The proof of Proposition~\ref{propsufficientconformal} is in Section~\ref{proofpropsufficientconformal}.

\subsection{Proof of Proposition~\ref{propsufficientunif}}\label{proofpropsufficientunif}

Proposition~\ref{propsufficientunif} is a direct consequence of the following result.

\begin{proposition}[Concentration of $\hat{\pi}^{\IMS}_{0,\epsilon,\underline{\pi}_0,\kappa}$]\label{propi0hatindepinfuniform}
Consider $\theta\in \Theta_{U}$ with a corresponding $\lambda^*\in [0,1]$ as in \eqref{pi0starunif}. 
Then the following holds for $\bar{\pi}_0:=\min_{x\in [\kappa,1]}f(x)\geq \pi_0^*$:
\begin{itemize}
    \item[(i)] Assume $\pi_0^*> \underline{\pi}_0$. Then for any $\Delta_m\in [0,1]$ with $D(m,\epsilon,\underline{\pi}_0) -1\leq \Delta_m$ and $3\pi_0^*\Delta_m+2 L(f) \epsilon<\bar{\pi}_0-\underline{\pi}_0$, we have
    
\begin{equation}
\label{concpi0hatindepboundaryuniform}
\P(   \hat{\pi}^{\IMS}_{0,\epsilon,\underline{\pi}_0,\kappa} \geq  \bar{\pi}_0 + 4\bar{\pi}_0 \Delta_m+2 L(f) \epsilon )\leq 2 e^{- m \epsilon^2\Delta^2_m (\bar{\pi}_0 )^2/2  },
\end{equation}
where $L(f)$ denotes the Lipschitz constant of the mixture density $f=\pi_0 f_0 + \pi_1 f_1$.

\item[(ii)] By choosing the parameter $(\epsilon,\underline{\pi}_0,\kappa)$ such that $\epsilon \gg \sqrt{(\log m)/m}$, $0.5 \vee (41/\log(2m))\leq \underline{\pi}_0\ll 1 $, we have for $m$ large enough
\begin{equation}
\label{concpi0hatindepcorboundaryuniform}
\P\bigg(  \hat{\pi}^{\IMS}_{0,\epsilon,\underline{\pi}_0,\kappa}  \geq  \bar{\pi}_0+ 4 \sqrt{\frac{2\log (2m) }{ m   \epsilon^2}}+2 L(f) \epsilon  \bigg)\leq 1/m.
\end{equation}
\end{itemize}
 \end{proposition}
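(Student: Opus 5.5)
We follow the proof of Proposition~\ref{propi0hatindepinf}, now using an interval estimator near a minimizer of $f$ on $[\kappa,1]$. Let $x^*\in[\kappa,1]$ satisfy $f(x^*)=\bar{\pi}_0$. Choose a deterministic interval $I^*=(a^*,a^*+\epsilon)\subset[\kappa,1]$ that contains $x^*$ (possible since $\kappa\le 1-\epsilon$). Because $f$ is $L(f)$-Lipschitz, $|f(x)-\bar{\pi}_0|\le L(f)\epsilon$ on $I^*$. Hence
\[
|F(I^*)-\bar{\pi}_0\epsilon|\le L(f)\epsilon^2,
\]
where $F(I)$ denotes the $F$-mass of $I$.

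\textbf{Step 1 (reduction to one interval).} On the event $\hat{\pi}^{\IMS}_{0,\epsilon,\underline{\pi}_0,\kappa}\geq \bar{\pi}_0+x$, with $x=4\bar{\pi}_0\Delta_m+2L(f)\epsilon$, the term $\underline{\pi}_0$ cannot be the active part of the maximum. This is exactly what the assumption $3\pi_0^*\Delta_m+2L(f)\epsilon<\bar{\pi}_0-\underline{\pi}_0$ is for, since it gives $\underline{\pi}_0<\bar{\pi}_0+x$. So the event forces
\[
D\,\frac{1\vee\sum_i\ind{p_i\in I^*}}{m\epsilon}\ge \bar{\pi}_0+x,
\qquad D\le 1+\Delta_m .
\]
The "$1\vee$" can be dropped, because $(1+\Delta_m)/(m\epsilon)<\bar{\pi}_0+x$ for $m$ large under $\epsilon\gg\sqrt{\log m/m}$. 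We are left with the event
\[
\frac{1}{m}\sum_i\ind{p_i\in I^*}\ \ge\ \frac{(\bar{\pi}_0+x)\,\epsilon}{1+\Delta_m}.
\]

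\textbf{Step 2 (concentration).} The empirical mass of $I^*$ is a difference of two values of the eCDF $\hat F_m$. By DKW (Theorem~\ref{thDKW}), applied once to each endpoint or as one two-sided bound, the deviation exceeds $\delta$ with probability at most $2e^{-2m(\delta/2)^2}=2e^{-m\delta^2/2}$. It therefore suffices to pick $\delta$ with
\[
(1+\Delta_m)\bigl(F(I^*)+\delta\bigr)\le(\bar{\pi}_0+x)\epsilon .
\]
Using $F(I^*)\le \bar{\pi}_0\epsilon+L(f)\epsilon^2$ and $\Delta_m\le1$, the choice $\delta=\epsilon\Delta_m\bar{\pi}_0$ works. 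The required inequality becomes
\[
(1+\Delta_m)\bigl(\bar{\pi}_0+L(f)\epsilon+\Delta_m\bar{\pi}_0\bigr)\le \bar{\pi}_0+4\bar{\pi}_0\Delta_m+2L(f)\epsilon,
\]
which holds since $\Delta_m^2\le\Delta_m$. This gives \eqref{concpi0hatindepboundaryuniform}, with tail $2e^{-m\epsilon^2\Delta_m^2\bar{\pi}_0^2/2}$.

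\textbf{Step 3, part (ii).} Set $\Delta_m^2=2\log(2m)/(m\epsilon^2\bar{\pi}_0^2)$, so that the tail equals $1/m$. Then $4\bar{\pi}_0\Delta_m=4\sqrt{2\log(2m)/(m\epsilon^2)}$, and $\Delta_m\to0$ because $\epsilon\gg\sqrt{\log m/m}$.

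\textbf{Main obstacle: checking $D-1\le\Delta_m$.} By Proposition~\ref{rateCD}, $D-1\le 9/(\sqrt{m\underline{\pi}_0}\,\epsilon)$. This is at most $\Delta_m$ as soon as
\[
\underline{\pi}_0\ \ge\ 81\,\bar{\pi}_0^2/(2\log(2m)).
\]
Since $\bar{\pi}_0\le 1$, this follows from $\underline{\pi}_0\ge 41/\log(2m)$. The same bound gives the remaining hypothesis $\bar{\pi}_0-\underline{\pi}_0>3\bar{\pi}_0\Delta_m+2L(f)\epsilon$ for $m$ large, using $\underline{\pi}_0\ll1$ together with $\pi_0^*>\underline{\pi}_0$ eventually. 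One also needs $\epsilon\gg\sqrt{\log m/(m\underline{\pi}_0)}$ so that Proposition~\ref{rateCD} applies; this is satisfied by $\epsilon=m^{-1/4}$, and that choice yields Proposition~\ref{propsufficientunif}.
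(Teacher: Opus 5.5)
Your argument follows the paper's proof: reduce to one admissible interval of length $\epsilon$ near a minimizer of $f$, apply two-sided DKW with $\delta=\epsilon\Delta_m\bar{\pi}_0$, calibrate $\Delta_m$, and check $D-1\le\Delta_m$ via Proposition~\ref{rateCD}. Two of your choices are cleaner than the written proof.
\begin{itemize}
\item The paper takes $I^*=[\lambda^*,\lambda^*+\epsilon]$, which need not lie in $[\kappa,1]$ (or even in $[0,1]$). Your interval inside $[\kappa,1]$, whose closure contains a minimizer $x^*$ of $f$ on $[\kappa,1]$, is always admissible in the infimum.
\item You normalize $\Delta_m$ by $\bar{\pi}_0$, which is what the deviation term $4\sqrt{2\log(2m)/(m\epsilon^2)}$ in (ii) requires. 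The paper normalizes by $\pi_0^*$, which yields exactly that term only when $\bar{\pi}_0=\pi_0^*$.
\end{itemize}

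There is one false step: ``$\bar{\pi}_0\le 1$''. Since $\bar{\pi}_0$ is a minimum over $[\kappa,1]$ only, it can exceed $1$. For example, $\pi_0=1/2$ and $f_1(x)=2x$ give $f(x)=1/2+x$, and $\kappa=0.7$ gives $\bar{\pi}_0=1.2$. The check $9/(\sqrt{m\underline{\pi}_0}\,\epsilon)\le\Delta_m$ then needs $\underline{\pi}_0\ge 81\bar{\pi}_0^2/(2\log(2m))\approx 58/\log(2m)$, which $\underline{\pi}_0=41/\log(2m)$ does not provide. Your claim is true when $\kappa\le\lambda^*$, since then $\bar{\pi}_0=\pi_0^*=\min_{[0,1]}f\le\int_0^1 f=1$; this is the case used in Corollary~\ref{corgenunif}. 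In general you must do one of the following:
\begin{itemize}
\item restrict to $\kappa\le\lambda^*$;
\item require $\underline{\pi}_0\log m\to\infty$;
\item normalize $\Delta_m$ by $\pi_0^*\le 1$, as the paper does, and accept the deviation $4(\bar{\pi}_0/\pi_0^*)\sqrt{2\log(2m)/(m\epsilon^2)}$.
\end{itemize}

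Two smaller points.
\begin{itemize}
\item In (i) you cannot invoke ``$m$ large'' to drop the $1\vee$, but you do not need to. If $(1+\Delta_m)/(m\epsilon)\ge\bar{\pi}_0+x\ge\bar{\pi}_0(1+4\Delta_m)$, then $m\epsilon\bar{\pi}_0\Delta_m\le 2/5$. In that case the right-hand side $2e^{-m\epsilon^2\Delta_m^2\bar{\pi}_0^2/2}$ is at least $2e^{-2/25}>1$, so the bound holds trivially.
\item The hypothesis $3\pi_0^*\Delta_m+2L(f)\epsilon<\bar{\pi}_0-\underline{\pi}_0$ is not needed in Step 1, because $\underline{\pi}_0<\pi_0^*\le\bar{\pi}_0$ already suffices. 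This matters because your verification of it in (ii) implicitly needs $L(f)\epsilon\to0$, which $\epsilon\gg\sqrt{\log m/m}$ alone does not give.
\end{itemize}
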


\begin{proof}
        We have for $x>0$ provided that $\bar{\pi}_0 \epsilon> (1+\Delta_m)/m$ and by denoting $I^*=[\lambda^*,\lambda^*+\epsilon]$,
\begin{align*}
\P(  \hat{\pi}^{\IMS}_{0,\epsilon,\underline{\pi}_0,\kappa} \geq \bar{\pi}_0 + x )&= \P\bigg(\underline{\pi}_0 \vee \bigg(D(m,\epsilon,\underline{\pi}_0)  \inf_{I\subset [\kappa,1]: |I|\geq \epsilon}
      \frac{1 \vee \sum_{i=1}^{m} \mathbf{1}\{p_i \in I\}}{m|I|} \bigg)\geq \bar{\pi}_0 + x\bigg)  \\ 
      &\leq \P\bigg( 
     (1+\Delta_m) \frac{1 \vee \sum_{i=1}^{m} \mathbf{1}\{p_i \in I^*\}}{m\epsilon} \geq \bar{\pi}_0 + x\bigg) \\
     &\leq \P\bigg( 
     (1+\Delta_m) \frac{ \sum_{i=1}^{m} \mathbf{1}\{p_i \in I^*\}}{m\epsilon} \geq \bar{\pi}_0 + x\bigg) \\
        &\leq \P\bigg(  (1+\Delta_m) m^{-1}\sum_{i=1}^{m} \mathbf{1}\{p_i \in [\lambda^*,\lambda^*+\epsilon]\} \geq \bar{\pi}_0\epsilon + x \epsilon\bigg) .
\end{align*}
  Now by the DKW inequality (Theorem~\ref{thDKW}), for any $\delta>0$,
\begin{align*}
\P(   \hat{\pi}^{\IMS}_{0,\epsilon,\underline{\pi}_0,\kappa} \geq \bar{\pi}_0 + x )
        &\leq 2e^{- m \delta^2/2}
\end{align*}
provided that $\delta>0$ is chosen so that $
     (1+\Delta_m) (F(\lambda^*+\epsilon)-F(\lambda^*) +\delta) \leq \bar{\pi}_0\epsilon + x \epsilon$ that is
$$
\delta \leq \frac{\bar{\pi}_0\epsilon + x \epsilon}{1+\Delta_m} - (F(\lambda^*+\epsilon)-F(\lambda^*)).
$$
Now, by the mean value theorem, there exists $x_\epsilon \in [\lambda^*,\lambda^*+\epsilon]$ such that $F(\lambda^*+\epsilon)-F(\lambda^*)=f(x_\epsilon) \epsilon \leq \bar{\pi}_0\epsilon + (f(x_\epsilon)-f(\lambda^*)) \epsilon\leq \bar{\pi}_0\epsilon + L(f) \epsilon^2 $ by definition of $\bar{\pi}_0$ and because $f$ is $L(f)$-Lipschitz. 
As a result, we can choose
$$
\frac{\bar{\pi}_0\epsilon + x \epsilon}{1+\Delta_m} - (\bar{\pi}_0\epsilon + L(f) \epsilon^2)\geq \delta:= \epsilon(x/2 -\Delta_m\bar{\pi}_0) - L(f) \epsilon^2 = \epsilon\Delta_m\bar{\pi}_0  
$$
by letting $x=4\Delta_m \bar{\pi}_0 + 2 L(f) \epsilon$, which gives \eqref{concpi0hatindepboundaryuniform} and (i).

Now we choose 
$$ \Delta^2_m = 2\log (2m) / (  m  (\pi_0^*)^2 \epsilon^2)\geq 9^2/(m \underline{\pi}_0 \epsilon^2)\geq (D(m,\epsilon,\underline{\pi}_0)-1)^2,$$ by applying Proposition~\ref{rateCD} and for this choice of $\underline{\pi}_0$ (and with $\epsilon>0$ providing $\Delta_m\leq 1$). 
This leads to \eqref{concpi0hatindepcorboundary} and (ii). 
\end{proof}

\subsection{Proof of Proposition~\ref{propsufficientconformal}}\label{proofpropsufficientconformal}

Proposition~\ref{propsufficientconformal} is a direct consequence of the following result.

\begin{proposition}[Concentration of $\hat{\pi}^{\IMS}_{0,\epsilon,\underline{\pi}_0,\kappa}$ in the conformal case]\label{propi0hatconformal}
Consider $\theta\in \Theta_{U}$ with a corresponding $\lambda^*\in [0,1]$ as in \eqref{pi0starunif}. 
The the following holds for $\bar{\pi}_0:=\min_{x\in [\kappa,1]}f(x)\geq \pi_0^*$:
\begin{itemize}
    \item[(i)] Assume $\pi_0^*> \underline{\pi}_0$. Then for any $\Delta\in [0,1]$ with $D(m,\epsilon,\underline{\pi}_0) -1\leq \Delta$ and $3\pi_0^*\Delta+2 L(f) \epsilon<\bar{\pi}_0-\underline{\pi}_0$, we have
\begin{equation}
\label{concpi0hatindepboundaryuniform}
\P(   \hat{\pi}^{\IMS}_{0,\epsilon,\underline{\pi}_0,\kappa} \geq  \bar{\pi}_0 + 4\bar{\pi}_0 \Delta+2 L(f) \epsilon )\leq B(\tau^{1/2}_{n',m} \epsilon \Delta \bar\pi_0), 
\end{equation}
where $L(f)$ denotes the Lipschitz constant of the mixture density $f=\pi_0 f_0 + \pi_1 f_1$, $\tau_{n',m}=n'm/(n'+m)$ for $n'=n/(1+L(f))^2$ and $B(u)=2(1+\sqrt{2\pi}u)e^{-u^2/2}$ for $u>0$.
\item[(ii)] By choosing the parameter $(\epsilon,\underline{\pi}_0,\kappa)$ such that $1\gg \epsilon \geq 
 (2/\bar\pi_0) (\log m)^{1/2}(n\wedge m)^{-1/4}$, $ \underline{\pi}_0=1/\log(m) $, $n\gg \log m$, we have for $n\wedge m$ large enough
\begin{equation}
\label{concpi0hatindepcorboundaryuniform}
\P\bigg(  \hat{\pi}^{\IMS}_{0,\epsilon,\underline{\pi}_0,\kappa}  \geq  \bar{\pi}_0+ 9 (1+L(f))^{1/2} 
\frac{(\log m)^{1/2}}{(n\wedge m)^{1/4}\epsilon}  +2 L(f) \epsilon  \bigg)\leq 1/m.
\end{equation}
\end{itemize}
 \end{proposition}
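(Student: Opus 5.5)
The plan mirrors the proof of Proposition~\ref{propi0hatindepinfuniform}, replacing the classical DKW inequality by a conformal DKW-type bound, the one from \cite{gazin2023transductive} as adapted in Lemma~\ref{lem:changetildeconf}.

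\textbf{Reduction to a single interval.} For (i), take a fixed interval $I^*=[x^*,x^*+\epsilon]\subset[\kappa,1]$ with $x^*\in\arg\min_{[\kappa,1]}f$, shifted to the left if $x^*>1-\epsilon$. On the event $\{\hat{\pi}^{\IMS}_{0,\epsilon,\underline{\pi}_0,\kappa}\geq\bar\pi_0+x\}$, the floor $\underline\pi_0$ is not active because $\bar\pi_0+x>\underline\pi_0$. Bounding the infimum by $I^*$ and using $D\le 1+\Delta$, this event is contained in
\[
\Big\{(1+\Delta)\,m^{-1}\textstyle\sum_i\ind{p_i\in I^*}\geq (\bar\pi_0+x)\epsilon\Big\},
\]
provided $\bar\pi_0\epsilon>(1+\Delta)/m$, which removes the ``$1\vee$''.

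\textbf{Deterministic centering.} Conditionally on the calibration scores, the conformal $p$-values are iid. Their common CDF is $G(t)=F(\hat{U}^{(n)}_{\lceil(n+1)t\rceil})$ up to $O(1/n)$ rounding, where $\hat U^{(n)}$ are the calibration order statistics (uniform, since the null score is $U(0,1)$). The centering term is then $G(I^*)\approx F(I^*)$ plus the calibration fluctuation. Using the Lipschitz property of $f$, with $f\le$ some constant on $[0,1]$, the calibration fluctuation of $F$ on $I^*$ is at most $(1+L(f))$ times the uniform empirical-process deviation of the calibration sample. The mean value theorem then gives $F(I^*)\le\bar\pi_0\epsilon+L(f)\epsilon^2$.

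\textbf{Concentration.} Combine two deviations: the test sample's empirical measure of $I^*$ around $G(I^*)$ (a DKW deviation of order $m^{-1/2}$), and $G(I^*)-F(I^*)$ (a DKW deviation of order $n^{-1/2}$ scaled by $1+L(f)$). The scaling by $1+L(f)$ is why one sets $n'=n/(1+L(f))^2$. A two-sample DKW of the form in Theorem~\ref{thDKW}/Lemma~\ref{lem:changetildeconf}, with effective size $\tau_{n',m}$, bounds the probability that the total exceeds $\delta$ by $B(\tau_{n',m}^{1/2}\delta)$. Choosing $x=4\bar\pi_0\Delta+2L(f)\epsilon$ leaves slack $\delta=\epsilon\Delta\bar\pi_0$, exactly as in the independent case, which yields (i).

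\textbf{Main obstacle.} The delicate step is this coupling of the two sources of randomness into a single bound of the form $B(\cdot)$ with the modified $n'$. I would handle it by conditioning on the calibration scores, applying a Hoeffding/DKW bound to the test sample, and then integrating over the calibration DKW bound, splitting $\delta$ between the two halves. If the constants do not match exactly, the fallback is a cruder sum of two exponentials; this only affects constants.

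\textbf{Part (ii).} Choose $\Delta=c(1+L(f))^{1/2}(\log m)^{1/2}/((n\wedge m)^{1/4}\epsilon\,\bar\pi_0)$ so that $B(\tau^{1/2}_{n',m}\epsilon\Delta\bar\pi_0)\le 1/m$; this uses $\tau_{n',m}\gtrsim (n\wedge m)/(1+L(f))^2$ and absorbs the polynomial prefactor of $B$ via $\log m$. Then verify $D(m,\epsilon,\underline\pi_0)-1\le 34/(\sqrt{(n\wedge m)\underline\pi_0}\,\epsilon^2)\le\Delta$ by Proposition~\ref{rateCD}. This holds because $\underline\pi_0=1/\log m$ and $\epsilon\ge(2/\bar\pi_0)(\log m)^{1/2}(n\wedge m)^{-1/4}$. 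The remaining conditions ($\Delta\le1$, $3\pi_0^*\Delta+2L(f)\epsilon<\bar\pi_0-\underline\pi_0$, $n\gg\log m$) hold for $n\wedge m$ large since $\epsilon\to0$ and $\underline\pi_0\to0$. Substituting gives the stated bound with constant $9$.
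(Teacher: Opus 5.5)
Your reduction step matches the paper's. You bound the infimum by a single interval of length $\epsilon$ in $[\kappa,1]$, note that the floor $\underline{\pi}_0$ is inactive and the ``$1\vee$'' can be dropped, and use the Lipschitz/mean-value bound $F(I^*)\le\bar\pi_0\epsilon+L(f)\epsilon^2$. Taking $x=4\bar\pi_0\Delta+2L(f)\epsilon$ then leaves slack $\delta=\epsilon\Delta\bar\pi_0$. Your placement of $I^*$ inside $[\kappa,1]$ is actually cleaner than the paper's $[\lambda^*,\lambda^*+\epsilon]$.

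The gap is in the concentration step, which is the only nontrivial input of (i). The inequality you name cannot be used here. The conformal DKW of \cite{gazin2023transductive} and its variant in Lemma~\ref{lem:changetildeconf} require $\cH_0=[m]$ and are centered at a discrete uniform CDF. In the conformal two-group model the test sample contains alternatives, so you must center at the mixture CDF $F$. The paper instead applies the third item of Theorem~\ref{thDKW} (conformal DKW under covariate shift). It uses the density bound $f\le 1+L(f)$, which holds because $\int_0^1 f=1$ and $f$ is $L(f)$-Lipschitz, so $n'=n/(1+L(f))^2$. Splitting $\delta$ into two endpoint deviations of size $\delta/2$ and using $\tau_{n',m}\le n'+m$ gives $2\big(1+2\sqrt{2\pi}(\delta/2)\tau_{n',m}(n'+m)^{-1/2}\big)e^{-2\tau_{n',m}(\delta/2)^2}\le B(\tau^{1/2}_{n',m}\delta)$, which is exactly the tail in (i).

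Your fallback is a valid, self-contained route that avoids the in-preparation covariate-shift result. You condition on the calibration scores, so the test $p$-values are iid with CDF $G(t)=F(U_{(\lfloor (n+1)t\rfloor)})$, where $U_{(k)}$ are the ordered calibration variables. You then apply Hoeffding to the test part, apply the calibration DKW (pushed through $f\le 1+L(f)$) to $G-F$, and split $\delta=\delta_1+\delta_2$. However, this yields a tail $e^{-2m\delta_1^2}+2e^{-n'\delta_2^2/2}$, after also paying an $O((1+L(f))/n)$ rounding bias. That tail is not dominated by $B(\tau^{1/2}_{n',m}\delta)$ in general. Already for $n'=m$ the best split gives exponent $2n'\delta^2/9$, while $B$ decays like $e^{-\tau_{n',m}\delta^2/2}=e^{-n'\delta^2/4}$. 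When $n'\delta^2$ is large, the polynomial prefactor of $B$ cannot compensate a constant-factor loss in the exponent.

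So (i) as stated is not proved. Your remark that ``only the constants change'' is true for (ii) but not for (i), whose statement fixes the tail. For (ii) your route does suffice, since only a tail at most $1/m$ is needed. With $\delta$ of order $(\log m)^{1/2}(n\wedge m)^{-1/4}$, both exponents are at least of order $(\log m)(n\wedge m)^{1/2}/(1+L(f))^2$.

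In (ii), though, fix the choice of $\Delta$. With your factor $(1+L(f))^{1/2}$, the hypothesis on $\epsilon$ only yields $\Delta\le c(1+L(f))^{1/2}/2$. Appealing to $\epsilon\to 0$ does not help, since $\Delta$ increases as $\epsilon$ decreases. Take instead $\Delta=2(\log m)^{1/2}/((n\wedge m)^{1/4}\epsilon\bar\pi_0)$. This choice:
\begin{itemize}
\item gives $\Delta\le 1$ directly from the hypothesis on $\epsilon$;
\item still satisfies $D(m,\epsilon,\underline{\pi}_0)-1\le\Delta$ by Proposition~\ref{rateCD};
\item keeps the argument of $B$ above $\sqrt{2}(\log m)^{1/2}(n\wedge m)^{1/4}/(1+L(f))$;
\item produces the constant $8\le 9(1+L(f))^{1/2}$.
\end{itemize}
The condition $3\pi_0^*\Delta+2L(f)\epsilon<\bar\pi_0-\underline{\pi}_0$ may fail at the lower edge of the $\epsilon$-range. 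It is never used in the proof of (i), so it can simply be dropped.
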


\begin{proof}
The item (i) is obtained similarly to Proposition~\ref{propsufficientunif}: instead of using the classical DKW inequality, we should use the conformal DKW inequality with covariate shift in Theorem~\ref{thDKW} with $n'=n/(1+L(f))^2$. For this we use that the density $f$ is bounded by $1+L(f)$ and the fact that since $\tau_{n',m}\leq n'+m$, for all $x>0$,
$$
2\bigg(1+\frac{2\sqrt{2\pi} (x/2) \tau_{n',m}}{(n'+m)^{1/2}}\bigg)e^{-2 \tau_{n',m} (x/2)^2} \leq B(\tau^{1/2}_{n',m}x).
$$
Now prove (ii). Note that $B((3 \log u)^{1/2})\leq 1/u$ for $u>0$ large enough. Hence, Proposition~\ref{rateCD} suggests to choose 
$$
\Delta = \tau^{-\gamma}_{n',m} (3\log m)^{1/2} /(\epsilon \bar\pi_0) \geq \frac{34}{\sqrt{(n\wedge m)\underline{\pi}_0} \epsilon^2}, 
$$
for $n\wedge m$ large enough and some $\gamma<1/2$. 
Since 
$\tau^{-1/2}_{n',m} \Delta \epsilon \bar\pi_0\geq (3 \log m)^{1/2}$
, we have $B(\tau^{1/2}_{n',m} \epsilon \Delta \underline{\pi}_0  )\leq 1/m$ for  $n\wedge m$ large enough.
Choosing $\underline{\pi}_0=1/\log(m)$, this gives the following conditions for choosing $\epsilon$:
 $1\gg \epsilon \gg \sqrt{(\log m)/((n\wedge m)\underline{\pi}_0)}$, $n\gg \log m$ 
(Proposition~\ref{rateCD}), $\epsilon\geq \bar\pi_0^{-1}(3\log m)^{1/2}((n\wedge m)/2)^{-\gamma}$ ($\Delta\leq 1$, $ \tau_{n',m}\geq (n'\wedge m)/2\geq (n\wedge m)/2$) and $\epsilon \underline{\pi}^{1/2}_0 \geq 
34 (3 \log m)^{-1/2} (n\wedge m)^{-(1/2-\gamma)}$ that is $\epsilon  \geq (34/\sqrt{3})(n\wedge m)^{-(1/2-\gamma)}$ ($D(m,\epsilon,\underline{\pi}_0)-1\leq \Delta$). Hence, an appropriate choice is $\gamma=1/4$ and 
$\bar (2/\pi_0) (\log m)^{1/2}(n\wedge m)^{-1/4}$
is enough to satisfies all conditions. 
\end{proof}

\section{Auxiliary results}

\begin{lemma}[Variation of Lemma~D.6 in \cite{marandon2024adaptive}]\label{lemmaSU}
    Let $(\tau_k)_{k\geq 1}$ be any nondecreasing sequence of nonnegative values, a $p$-value family $\mbf{p}=(p_i)_{i\in [m]}$ and 
    $$
\hat{k}(\mbf{p}) = \max\{k\in [0,m]\::\: p_{(k)}\leq \tau_k\}
    $$
     be the rejection number of the corresponding step-up procedure, where $p_{(0)}:=0\leq p_{(1)}\leq \dots\leq p_{(m)}$ are the ordered $p$-values. Fix $i\in [m]$ and consider  $\mbf{p}'=(p'_i)_{i\in [m]}$ any $p$-value family with 
    $$
\forall j \in [m], \left\{ \begin{array}{cc}
     p'_j \leq p_j &  \mbox{ if $p_j\leq p_i$}\\
     p'_j = p_j& \mbox{ if $p_j> p_i$}
\end{array}\right. 
    $$
    Then $p_i\leq \tau_{\hat{k}(\mbf{p})}$ iff $p_i\leq \tau_{\hat{k}(\mbf{p}')}$ and both assertions  implies $\hat{k}(\mbf{p})=\hat{k}(\mbf{p}')$.
\end{lemma}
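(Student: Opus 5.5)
The plan is to compare the two step-up procedures directly through the characterisation $\hat{k}(\mbf{p})=\max\{k:\ \#\{j: p_j\le \tau_k\}\ge k\}$, which holds because $\tau$ is nondecreasing: $p_{(k)}\le\tau_k$ iff at least $k$ of the $p$-values are at most $\tau_k$. Write $N(t)=\#\{j: p_j\le t\}$ and $N'(t)=\#\{j:p'_j\le t\}$. The hypothesis on $\mbf{p}'$ says the $p$-values above $p_i$ are unchanged and those at or below $p_i$ can only decrease. A decrease keeps them at most $p_i$, and in particular $p'_i\le p_i$. Hence, for every $t\ge p_i$, the set $\{j:p'_j\le t\}$ equals $\{j: p_j\le p_i\}\cup\{j: p_i<p_j\le t\}$, so that $N'(t)=N(t)$. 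For all $t$ we have $N'(t)\ge N(t)$ since coordinates only decrease, which gives $\hat{k}(\mbf{p}')\ge \hat{k}(\mbf{p})$.

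The direction ($\Rightarrow$) goes as follows. Assume $p_i\le \tau_{\hat{k}(\mbf{p})}$, and write $k_0=\hat{k}(\mbf{p})$ and $k_1=\hat{k}(\mbf{p}')\ge k_0$. Since $\tau$ is nondecreasing, $\tau_{k_1}\ge\tau_{k_0}\ge p_i$. So $N(\tau_{k_1})=N'(\tau_{k_1})\ge k_1$, meaning $k_1$ is admissible for $\mbf{p}$, hence $k_1\le k_0$. Therefore $k_1=k_0$, and consequently $p_i\le\tau_{\hat{k}(\mbf{p}')}$.

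The converse ($\Leftarrow$) is the main obstacle. Assume $p_i\le\tau_{k_1}$. As before, $N(\tau_{k_1})=N'(\tau_{k_1})\ge k_1$, so $k_1$ is admissible for $\mbf{p}$ and $k_0\ge k_1$. Combined with $k_1\ge k_0$, this gives equality, and so $p_i\le\tau_{k_0}$. In both directions the key point is that equality of counts holds at any threshold $\ge p_i$, and monotonicity of $\tau$ transfers the condition from one index to the other. The convention $k=0$ with $\tau_0$ undefined needs a small check: if $k_1=0$, the condition $p_i\le\tau_0$ is read as false (no rejection). This is consistent with the intended use, since $k_0\le k_1=0$ makes both sides false. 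I would state this convention explicitly and then conclude $\hat{k}(\mbf{p})=\hat{k}(\mbf{p}')$.
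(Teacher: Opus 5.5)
Your proof is correct and is essentially the paper's argument. Coordinatewise decrease gives $\hat{k}(\mathbf{p})\le\hat{k}(\mathbf{p}')$, which together with monotonicity of $\tau$ yields the forward implication. The equality $\#\{j:p_j\le t\}=\#\{j:p'_j\le t\}$ for $t\ge p_i$, applied at $t=\tau_{\hat{k}(\mathbf{p}')}$, then makes $\hat{k}(\mathbf{p}')$ admissible for $\mathbf{p}$ and forces equality. Two small remarks: your explicit $k=0$ convention is a harmless addition that the paper leaves implicit, and the characterization $p_{(k)}\le t\iff\#\{j:p_j\le t\}\ge k$ holds for every $t$, so monotonicity of $\tau$ is only needed to get $\tau_{k_1}\ge\tau_{k_0}$.
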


\begin{proof}
    First, since $p'_j \leq p_j$ for all $j$, we clearly have $\hat{k}(\mbf{p})\leq\hat{k}(\mbf{p}')$. This gives that $p_i\leq \tau_{\hat{k}(\mbf{p})}$ implies $p_i\leq \tau_{\hat{k}(\mbf{p}')}$. Now prove the other direction by assuming $p_i\leq \tau_{\hat{k}(\mbf{p}')}$ and by showing $\hat{k}(\mbf{p})\geq\hat{k}(\mbf{p}')$. For this, we have
    \begin{align*}
        \sum_{j=1}^m \ind{p_j \leq \tau_{\hat{k}(\mbf{p}')}} &=  \sum_{j=1}^m \ind{p_j \leq \tau_{\hat{k}(\mbf{p}')}}\ind{p_j\leq p_i} +  \sum_{j=1}^m \ind{p_j \leq \tau_{\hat{k}(\mbf{p}')}}\ind{p_j> p_i}\\
        &= \sum_{j=1}^m \ind{p_j\leq p_i} +  \sum_{j=1}^m \ind{p'_j \leq \tau_{\hat{k}(\mbf{p}')}}\ind{p_j> p_i}\\
        &= \sum_{j=1}^m \ind{p_j\leq p_i}\ind{p'_j\leq \tau_{\hat{k}(\mbf{p}')}} +  \sum_{j=1}^m \ind{p'_j \leq \tau_{\hat{k}(\mbf{p}')}}\ind{p_j> p_i}\\
        &= \sum_{j=1}^m \ind{p'_j \leq \tau_{\hat{k}(\mbf{p}')}} \geq \hat{k}(\mbf{p}').
    \end{align*}
    Hence, by definition of $\hat{k}(\mbf{p})$ we have $\hat{k}(\mbf{p}')\leq \hat{k}(\mbf{p})$ and the result is proved.
\end{proof}

\begin{theorem}[DKW inequalities]\label{thDKW}
The following inequalities holds for the process $\hat{F}_{m}(t) =m^{-1}\sum_{i=1}^{m} \mathbf{1}\{p_i \leq t\}$, $t\in [0,1]$, and any  $x>0$:
    \begin{itemize}
        \item Classical \citep{Mass1990}: if $p_i$, $i\in [m]$, are iid of common CDF $ F$, we have 
        \begin{align}
             \P\bigg(\exists t\in [0,1]:  \hat{F}_{m}(t) -F(t)\leq -x\bigg)&\leq e^{-2m x^2}\label{equDKWleft}\\
             \P\bigg(\exists t\in [0,1]: \hat{F}_{m}(t) -  F(t)\geq x\bigg)&\leq e^{-2m x^2}\label{equDKWleft};
        \end{align}
        \item Conformal exchangeable \citep{gazin2023transductive}: in the conformal setting of Section~\ref{sec:settings}, when $\cH_0=[m]$,
         \begin{align}
             \P\bigg(\exists t\in [0,1]: \hat{F}_{m}(t)  -  \frac{\lceil (n+1)t \rceil}{n+1}\leq -x\bigg)&\leq \bigg(1+\frac{2\sqrt{2\pi}x \tau_{n,m}}{(n+m)^{1/2}}\bigg)e^{-2 \tau_{n,m} x^2}\label{equDKWleft}\\
             \P\bigg(\exists t\in [0,1]: \hat{F}_{m}(t)  - \frac{\lceil (n+1)t \rceil}{n+1}\geq x\bigg)&\leq \bigg(1+\frac{2\sqrt{2\pi}x \tau_{n,m}}{(n+m)^{1/2}}\bigg)e^{-2 \tau_{n,m} x^2}\label{equDKWleft};
        \end{align}
        for $\tau_{n,m}=nm/(n+m)\in [(n\wedge m)/2,n\wedge m]$.
        \item Conformal with covariate shift \citep{Ruellan2026}: in the conformal setting of Section~\ref{sec:optimality}, assuming that $F$ has a density $f$ uniformly bounded, that is, $f(x)\leq C<+\infty$, $x\in [0,1]$,
\begin{align}
             \P\bigg(\exists t\in [0,1]: \hat{F}_{m}(t) -  F(t)< -x\bigg)&\leq \bigg(1+\frac{2\sqrt{2\pi}x \tau_{n',m}}{(n'+m)^{1/2}}\bigg)e^{-2 \tau_{n',m} x^2}\label{equDKWleft}\\
             \P\bigg(\exists t\in [0,1]: \hat{F}_{m}(t) - F(t)> x\bigg)&\leq \bigg(1+\frac{2\sqrt{2\pi}x \tau_{n',m}}{(n'+m)^{1/2}}\bigg)e^{-2 \tau_{n',m} x^2}\label{equDKWright};
        \end{align}
        for $\tau_{n',m}=n'm/(n'+m)\in [(n'\wedge m)/2,n'\wedge m]$ and $n'=n/C^2$.
    \end{itemize}
\end{theorem}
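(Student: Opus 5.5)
The three items are of different natures, so I would treat them separately. The first two are stated results from the literature and can be taken as given. The classical bound is the one-sided Dvoretzky--Kiefer--Wolfowitz inequality with Massart's tight constant \cite{Mass1990}, applied to $\hat{F}_m$ for iid $p_i\sim F$. The conformal exchangeable bound is the transductive DKW inequality of \cite{gazin2023transductive}, read in our notation with $\cH_0=[m]$. The only checks needed are that their centering $\lceil (n+1)t\rceil/(n+1)$ and their effective sample size $\tau_{n,m}=nm/(n+m)$ match ours, and that $\tau_{n,m}\in[(n\wedge m)/2,n\wedge m]$, which is elementary. The real content is the covariate-shift item, attributed to \cite{Ruellan2026} (in preparation). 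For that item I would give a self-contained argument that reduces it to the two previous tools.

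The key representation is as follows. In the conformal two-group model, $S_k=1-q_k$ with calibration $q_1,\dots,q_n$ iid $U(0,1)$ and test $q_{n+i}$ iid with CDF $F$. Because there are no ties, $p_i\leq t$ holds iff $q_{n+i}\leq \hat{Q}_n(t)$, where $\hat{Q}_n(t)$ is the calibration order statistic of rank $\lfloor (n+1)t\rfloor$ (with $\hat{Q}_n(t)=0$ if this rank is $0$). Hence
\[
\hat{F}_m(t)=G_m(\hat{Q}_n(t)),
\]
with $G_m$ the eCDF of the test variables $(q_{n+i})_{i\in[m]}$, which is independent of $\hat{Q}_n$. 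I would then decompose
\[
\hat{F}_m(t)-F(t)=\big[G_m(u)-F(u)\big]_{u=\hat{Q}_n(t)}+\big[F(\hat{Q}_n(t))-F(t)\big].
\]
Conditionally on the calibration sample, the first term has supremum over $t$ at most $\sup_u |G_m(u)-F(u)|$ in the relevant direction, so Massart's bound gives $e^{-2my^2}$. For the second term, $f\leq C$ gives $|F(\hat{Q}_n(t))-F(t)|\leq C|\hat{Q}_n(t)-t|$. The quantity $\sup_t|\hat{Q}_n(t)-t|$ is controlled, up to an $O(1/n)$ rounding, by the classical DKW inequality for $n$ uniforms, since quantile and CDF deviations of uniforms coincide. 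Thus the second term exceeds $z$ with probability at most $e^{-2(n/C^2)z^2}=e^{-2n'z^2}$, which is where $n'=n/C^2$ enters.

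The main obstacle is combining the two pieces with the sharp constant $\tau_{n',m}$ and the stated prefactor. A naive union bound with the split $x=y+z$ loses a factor $2$ in the exponent. To avoid this, I would not split. Instead I would write $\P(\cdot)\leq \E\big[e^{-2m(x-CZ_n)_+^2}\big]$, where $Z_n=\sup_t|\hat{Q}_n(t)-t|$ (one-sided as needed), and integrate this against the sub-Gaussian tail $\P(CZ_n>z)\leq e^{-2n'z^2}$. This amounts to a convolution of two Gaussian-type kernels. Completing the square in $z$, the exponent is minimized at $z=mx/(m+n')$, which yields exactly $e^{-2\tau_{n',m}x^2}$. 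The remaining Gaussian integral produces the polynomial prefactor $1+2\sqrt{2\pi}\,x\,\tau_{n',m}/(n'+m)^{1/2}$, exactly as in the proof of \cite{gazin2023transductive}. The last step is to absorb the $O(1/n)$ rank-rounding error into the strict inequalities $<-x$ and $>x$; if it cannot be absorbed, I would settle for a slightly shifted threshold.
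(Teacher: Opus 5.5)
The paper does not prove this theorem: each item is imported from a reference, the third from a work in preparation. Citing the first two items therefore matches the paper, and for the third you give a genuinely self-contained route. The core of that route is right and reproduces the stated constants exactly.

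With $\hat F_m(t)=G_m(q_{(k)})$, $k=\lfloor (n+1)t\rfloor$, bound the supremum by $Y+W$. Here $Y=\sup_u(G_m(u)-F(u))$ satisfies $\P(Y>y)\le e^{-2my^2}$, and $W\ge 0$ is a calibration term, independent of $Y$, with $\P(W>z)\le e^{-2n'z^2}$. Conditioning on $W$ and completing the square as you describe gives
\[
\P(Y+W>x)\le \tfrac{n'}{m+n'}e^{-2mx^2}+\tfrac{m}{m+n'}e^{-2n'x^2}+\frac{2\sqrt{2\pi}\,x\,\tau_{n',m}}{(m+n')^{1/2}}\,e^{-2\tau_{n',m}x^2}.
\]
The first two terms sum to at most $e^{-2\tau_{n',m}x^2}$ because $\tau_{n',m}\le m\wedge n'$. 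Compared with the bare citation, this makes the third item checkable from Massart's inequality alone, and it also exposes the problem below.

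The gap is your last step: the rank-rounding cannot simply be absorbed into the strict inequalities.

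\emph{Upper tail.} Here the rounding can be removed, but by a different device. On each cell $[k/(n+1),(k+1)/(n+1))$ the supremum of $\hat F_m(t)-F(t)$ is attained at the left endpoint, so $W=C\max_k(q_{(k)}-k/(n+1))_+$. The event $q_{(k)}>k/(n+1)+w$ forces at least $n+1-k$ of the reflected points $1-q_j$ below $s=(n+1-k)/(n+1)-w$. Their empirical CDF then exceeds $s$ by more than $w$ at that point, and Massart gives $e^{-2nw^2}$ with no loss.

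\emph{Lower tail.} Here no argument can work, because the inequality is false as written. Conformal $p$-values are at least $1/(n+1)$, so $\hat F_m(t)=0$ for $t<1/(n+1)$. Hence the event $\{\exists t:\hat F_m(t)-F(t)<-x\}$ is certain whenever $x<F(1/(n+1))$. For instance, with $F=I$ (so $C=1$, $n'=n$), $n=1$ and $x=0.49$, the right-hand side tends to $e^{-2(0.49)^2}\approx 0.62$ as $m\to\infty$.

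So your fallback is forced. Your argument proves the lower tail with $F(t)$ replaced by $F(\lfloor (n+1)t\rfloor/(n+1))$, since the term $\max_k(k/(n+1)-q_{(k)})_+$ is controlled by Massart directly. This yields the stated lower tail with $x$ replaced by $x-C/(n+1)$. The shift is harmless where the paper uses the bound (Proposition~\ref{propi0hatconformal}), since the deviations there are of much larger order than $C/n$.

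The same first-cell argument shows that the check you planned for the second item fails. With centering $\lceil (n+1)t\rceil/(n+1)$, the lower tail is violated for $x\le 1/(n+1)$ and large $m$. The centering must be $\lfloor (n+1)t\rfloor/(n+1)$, the CDF of a null conformal $p$-value.
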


\begin{lemma}
    \label{lem:changetildeconf}
    Denote the joint distribution of conformal $p$-values $p_i,$ $i\in [m]$ \eqref{equ-confpvalues} by $P_{n,m}$. Consider    
    their modification $q_j=\tilde{p}^{0,i}_j$, $j\in [m]$ as in  \eqref{equ:configconf} for some arbitrary $i\in [m]$. Then the following holds:
    \begin{itemize}
        \item[(i)] the joint distribution of $(((n+1)q_j+1)/(n+2))_{j\in [m]\backslash\{i\}}$ is $P_{n+1,m-1}$;
        \item[(ii)] We have
         \begin{align*}
             &\P\bigg(\exists t\in [0,1]: (m-1)^{-1}\sum_{j\in [m]\backslash\{i\}} \ind{q_j\leq t}  -  \frac{\lceil (n+1)t \rceil+1}{n+2}\leq -x\bigg)\\
             &\leq \bigg(1+\frac{2\sqrt{2\pi}x \tau_{n+1,m-1}}{(n+m)^{1/2}}\bigg)e^{-2 \tau_{n+1,m-1} x^2}\\
             &\P\bigg(\exists t\in [0,1]: (m-1)^{-1}\sum_{j\in [m]\backslash\{i\}} \ind{q_j\leq t}  - \frac{\lceil (n+1)t \rceil+1}{n+2}\geq x\bigg)\\&\leq \bigg(1+\frac{2\sqrt{2\pi}x \tau_{n+1,m-1}}{(n+m)^{1/2}}\bigg)e^{-2 \tau_{n+1,m-1} x^2};
        \end{align*}
        for $\tau_{n,m}=nm/(n+m)\in [(n\wedge m)/2,n\wedge m]$.
    \end{itemize}
\end{lemma}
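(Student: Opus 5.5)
The plan is to identify the rescaled modified $p$-values with genuine conformal $p$-values built from an enlarged calibration sample, and then to transfer the conformal DKW inequality of Theorem~\ref{thDKW} by a change of variable. Throughout we are in the situation where the conformal DKW inequality is used, i.e.\ all scores are exchangeable ($\cH_0=[m]$) with no ties.

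For (i), fix $i\in[m]$ and set $r_j:=((n+1)q_j+1)/(n+2)$ for $j\in[m]\backslash\{i\}$. By \eqref{equ:configconf}, this gives
$$
r_j=\frac{1+\sum_{k\in [n]\cup\{n+i\}}\ind{S_k\geq S_{n+j}}}{n+2}.
$$
This is exactly formula \eqref{equ-confpvalues} with a calibration sample of size $n+1$, namely the scores indexed by $[n]\cup\{n+i\}$. The test sample has size $m-1$, namely the scores $S_{n+j}$ for $j\in[m]\backslash\{i\}$. The family $(S_k)_{k\in[n+m]}$ is exchangeable and has no ties. After relabelling, the first $n+1$ coordinates are the augmented calibration sample and the remaining $m-1$ are the test scores. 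The relabelled vector is still exchangeable and tie-free, so it fits the conformal setting of Section~\ref{sec:settings} with parameters $(n+1,m-1)$ and all hypotheses null. Hence $(r_j)_{j\neq i}\sim P_{n+1,m-1}$. The only point to double-check is that the normalisation matches: the formula for $P_{n+1,m-1}$ divides by $(n+1)+1=n+2$, which is the denominator above.

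For (ii), use the change of variable $t'=((n+1)t+1)/(n+2)$. Since $q_j$ takes values in $\{k/(n+1)\}$, we have $\{q_j\leq t\}=\{r_j\leq t'\}$. Moreover,
$$
\frac{\lceil (n+2)t'\rceil}{n+2}=\frac{\lceil (n+1)t+1\rceil}{n+2}=\frac{\lceil (n+1)t\rceil+1}{n+2}.
$$
Therefore the event in (ii) involving $t\in[0,1]$ is contained in the corresponding event for the empirical CDF of $(r_j)_{j\neq i}$ at some $t'\in[1/(n+2),1]\subset[0,1]$. That empirical CDF is centred at $\lceil (n+2)t'\rceil/(n+2)$.

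Applying the conformal exchangeable DKW bound of Theorem~\ref{thDKW} with $(n,m)$ replaced by $(n+1,m-1)$ then gives both inequalities. The relevant constant is $\tau_{n+1,m-1}$, and $(n+1)+(m-1)=n+m$ appears in the prefactor.

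The main (mild) obstacle is step (i): checking that the exchangeability assumption transfers to the augmented calibration set. With all scores exchangeable this is just a relabelling, so I expect no real difficulty. The rest is the ceiling identity, which is routine.
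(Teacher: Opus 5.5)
Your proposal is correct and follows the same route as the paper. For (i), you recognize $((n+1)q_j+1)/(n+2)$ as genuine conformal $p$-values built from the augmented calibration sample indexed by $[n]\cup\{n+i\}$ (size $n+1$) and the $m-1$ remaining test scores; for (ii), you use the strictly increasing affine change of variable $t\mapsto ((n+1)t+1)/(n+2)$ and then apply Theorem~\ref{thDKW} with $(n,m)$ replaced by $(n+1,m-1)$ (the discreteness of $q_j$ is not actually needed for $\{q_j\le t\}=\{r_j\le t'\}$). Your write-up also supplies details the paper leaves implicit: the identity $\lceil (n+1)t+1\rceil=\lceil (n+1)t\rceil+1$, the inclusion $t'\in[1/(n+2),1]\subset[0,1]$, and $(n+1)+(m-1)=n+m$ in the prefactor.
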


\begin{proof}
    Item (i) is clear from the definition  \eqref{equ:configconf}. Item (ii) comes from the fact that letting $p'_j=((n+1)q_j+1)/(n+2)$, $j\in [m]\backslash\{i\}$, we have  
    $$
(m-1)^{-1}\sum_{j\in [m]\backslash\{i\}} \ind{q_j\leq t} 
=(m-1)^{-1}\sum_{j\in [m]\backslash\{i\}} \ind{p'_j\leq (t(n+1)+1)/(n+2)}.
$$
We conclude  by applying Theorem~\ref{thDKW}. 
\end{proof}

\begin{lemma}[Lemma~27 in \cite{BR2009}]
    \label{lem:BR2009}
    Let $g:[0,1]\to (0,\infty)$ a nonincreasing function and $U$ be a super-uniform random variable, then for any $c>0$, we have
    $$
\E\bigg[\frac{\ind{U\leq c g(U)}}{g(U)}\bigg]\leq c.
    $$
\end{lemma}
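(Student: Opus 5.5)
We prove Lemma~\ref{lem:BR2009} directly, by discretising the event $\{U\le c\,g(U)\}$ through the monotonicity of $g$. Since $g$ is nonincreasing and $u\mapsto u$ is increasing, the set $A=\{u\in[0,1]: u\le c\,g(u)\}$ is an initial segment: if $u\in A$ and $u'<u$, then $u'<u\le c\,g(u)\le c\,g(u')$. So $A$ is either $[0,u_0]$ or $[0,u_0)$ with $u_0=\sup A$.

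The first step handles the case where $g$ takes finitely many values, that is, $g$ is a nonincreasing step function. Order its values as $v_1>v_2>\dots>v_K>0$, with $g=v_k$ on consecutive intervals $J_k$ listed from left to right. Then
\[
\E\big[\ind{U\le c g(U)}/g(U)\big]=\sum_k v_k^{-1}\,\P(U\in J_k\cap A).
\]
Next write $v_k^{-1}=\sum_{l\le k}(v_l^{-1}-v_{l-1}^{-1})$ with $v_0^{-1}:=0$; each increment is nonnegative because the $v_k$ decrease. Exchanging the sums gives
\[
\sum_{l}(v_l^{-1}-v_{l-1}^{-1})\,\P\big(U\in A\cap \textstyle\bigcup_{k\ge l}J_k\big).
\]
For $u$ in $A\cap\bigcup_{k\ge l}J_k$ we have $g(u)\le v_l$, hence $u\le c\,v_l$. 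This event is therefore contained in $\{U\le c v_l\}$, which has probability at most $c v_l$ by super-uniformity. The sum is then at most $\sum_l (v_l^{-1}-v_{l-1}^{-1})\,c\,v_l$.

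This bound is not yet the target $c$, so I would refine the decomposition. Write $\ind{U\le cg(U)}/g(U)=\int_0^\infty \ind{U\le c g(U),\ 1/g(U)> y}\,dy$ (layer-cake). For fixed $y$, the event requires $g(U)<1/y$ and $U\le c\,g(U)<c/y$, so its probability is at most $\P(U< c/y)\le c/y$. Integrating this alone diverges, so a further constraint is needed.

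The main obstacle is getting a convergent bound, and I would try a direct change of variables first. With $u_0=\sup A$, the expectation equals $\E[\ind{U\le u_0}/g(U)]$ up to the endpoint. On $[0,u_0]$ we have $u\le c g(u)$, i.e.\ $1/g(u)\le c/u$; but that gives $\int_0^{u_0} (c/u)\,dF_U(u)$, which can also diverge.

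The correct route uses a different inequality. Since $g$ is nonincreasing, for $U\le u_0$ we have $g(U)\ge g(u_0^-)$, and by definition of $u_0$ and monotonicity, $u_0\le c\,g(u_0^-)$. Hence
\[
\E\big[\ind{U\in A}/g(U)\big]\le \P(U\le u_0)/g(u_0^-)\le u_0/g(u_0^-)\le c,
\]
using super-uniformity at $u_0$. I expect the main care to be the endpoint bookkeeping: whether $u_0\in A$, taking $g(u_0^-)$ as a left limit, and the case $u_0=1$, where $\P(U\le 1)\le 1\le c\,g(1^-)$ still holds. A limit $u\uparrow u_0$ handles the open-endpoint case.
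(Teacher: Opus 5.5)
Your last argument is the right one: $A=\{u\in[0,1]: u\le c\,g(u)\}$ is an initial segment containing $0$, $1/g$ is largest on $A$ at its right end, and super-uniformity at $u_0=\sup A$ gives the bound. This is the standard argument for the lemma; the paper gives no proof and only cites \cite{BR2009}. Delete your three earlier attempts, which are dead ends. The step-function bound $\sum_l (v_l^{-1}-v_{l-1}^{-1})\,c\,v_l$ is strictly larger than $c$ whenever $K\ge 2$, and both the layer-cake bound and $1/g(u)\le c/u$ diverge.

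However, the final displayed chain is wrong as written when $u_0\in A$. The claim ``for $U\le u_0$, $g(U)\ge g(u_0^-)$'' fails at $U=u_0$ if $g$ jumps down there, i.e.\ $g(u_0)<g(u_0^-)$. If moreover $\P(U=u_0)>0$, then $\E[\ind{U\in A}/g(U)]\le \P(U\le u_0)/g(u_0^-)$ is false. For example, take $c=1$, $U$ uniform on $\{1/2,1\}$ (which is super-uniform), $g=1$ on $[0,1/2)$ and $g=1/2$ on $[1/2,1]$. Then $A=[0,1/2]$ and the left-hand side equals $1$, while $\P(U\le 1/2)/g(1/2^-)=1/2$. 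This case matters here: in the proof of Theorem~\ref{th-gen} the lemma is applied to discrete conformal $p$-values, with $g$ a step function of $p_i$.

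The fix is the case split you allude to but do not carry out.
\begin{itemize}
\item If $u_0\in A$, then $g(U)\ge g(u_0)$ on $A=[0,u_0]$. Hence $\E[\ind{U\in A}/g(U)]\le \P(U\le u_0)/g(u_0)\le u_0/g(u_0)\le c$, the last step because $u_0\in A$.
\item If $u_0\notin A$, then $u_0>0$, $A=[0,u_0)$, and $g(U)\ge g(u_0^-)\ge g(u_0)>0$ on $A$. Hence $\E[\ind{U\in A}/g(U)]\le \P(U<u_0)/g(u_0^-)\le u_0/g(u_0^-)\le c$. Here $u_0\le c\,g(u_0^-)$ follows by letting $u\uparrow u_0$ in $u\le c\,g(u)$.
\end{itemize}
With this split, the case $u_0=1$ needs no separate treatment.
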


\begin{lemma}[\cite{marandon2024adaptive}]\label{lem:conformal}
 In the conformal setting (including the no-ties assumption and exchangeability of $(S_k,k\in [n]\cup\{n+i\})$ given $(S_{n+j}, j \in [m]\backslash\{i\})$), let for any fixed $i\in \cH_{0}$, $W_i$ as in \eqref{equ:wi}.  Then we have
\begin{itemize}
    \item $(p_j)_{j\in\range{m}\backslash\{i\}}$ is equal to $(\Psi_{ij}(p_i,W_i))_{j\in\range{m}\backslash\{i\}}$ for some nondecreasing functions $u\in [0,1] \mapsto \Psi_{ij}(u,W_i)\in \R$, $j\in\range{m}\backslash\{i\}$ ;
    \item $(n+1)p_i $ is uniformly distributed on $\range{n+1}$ and independent of $W_i$.
\end{itemize}  
\end{lemma}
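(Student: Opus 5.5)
The plan is to condition on $W_i$ and read both claims off the explicit form of the conformal $p$-values. Write the unordered set $\{S_k, k\in [n]\cup\{n+i\}\}$ by its order statistics $s_{(1)}>s_{(2)}>\dots>s_{(n+1)}$. This is possible almost surely by the no-ties assumption, and these values are measurable with respect to $W_i$. From \eqref{equ-confpvalues}, $(n+1)p_i = 1+\sum_{j\in[n]}\ind{S_j\geq S_{n+i}}$ equals the rank of $S_{n+i}$ in this set, counted from the top. In other words, $S_{n+i}=s_{((n+1)p_i)}$. So, given $W_i$, the value of $p_i$ determines $S_{n+i}$, and conversely.

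For the distributional claim, the hypothesis gives exchangeability of $(S_k)_{k\in[n]\cup\{n+i\}}$ conditionally on $(S_{n+j})_{j\neq i}$. Hence, conditionally on the unordered set together with $(S_{n+j})_{j\neq i}$, that is, conditionally on $W_i$, the labelled vector is a uniformly random assignment of the $n+1$ distinct values to the $n+1$ labels. The position of label $n+i$ is therefore uniform on $[n+1]$. Since this conditional law does not depend on $W_i$, $(n+1)p_i$ is uniform on $[n+1]$ and independent of $W_i$. The step needing care is the passage from exchangeability given $(S_{n+j})_{j\neq i}$ to a uniform permutation given the unordered set as well. This is the standard fact that an exchangeable vector with no ties is, given its empirical measure, a uniform random ordering of it. I would state this as a one-line consequence of exchangeability: every permutation of the labels leaves the conditional law invariant, and the unordered set is permutation-invariant.

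For the monotonicity claim, fix $j\neq i$ and write
\[
(n+1)p_j = 1+\sum_{k\in[n]}\ind{S_k\geq S_{n+j}} = 1+\#\{\ell\in[n+1] : s_{(\ell)}\geq S_{n+j}\} - \ind{S_{n+i}\geq S_{n+j}}.
\]
The first two terms depend only on $W_i$. In the last term, $S_{n+i}=s_{((n+1)p_i)}$, so
\[
\Psi_{ij}(u,W_i) := \frac{1+\#\{\ell : s_{(\ell)}\geq S_{n+j}\} - \ind{s_{((n+1)u)}\geq S_{n+j}}}{n+1},
\]
defined for $u\in\{1/(n+1),\dots,1\}$ and extended to $[0,1]$ as a nondecreasing step function, for instance by evaluating at $\lceil (n+1)u\rceil\vee 1$.

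As $u$ increases, $s_{((n+1)u)}$ decreases, so the indicator $\ind{s_{((n+1)u)}\geq S_{n+j}}$ is nonincreasing in $u$. Consequently $\Psi_{ij}$ is nondecreasing in $u$, and it reproduces $p_j=\Psi_{ij}(p_i,W_i)$ exactly. The rest is routine bookkeeping.
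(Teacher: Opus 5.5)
Your proof is correct. The order-statistics representation $S_{n+i}=s_{((n+1)p_i)}$ and the decomposition $(n+1)p_j = 1+\#\{\ell: s_{(\ell)}\geq S_{n+j}\}-\ind{S_{n+i}\geq S_{n+j}}$ give the monotone $\Psi_{ij}$, and the uniform-rank consequence of conditional exchangeability gives uniformity and independence from $W_i$. The paper does not prove this lemma itself but cites \cite{marandon2024adaptive}; your argument is the natural direct one, and it matches the identity $p_j=\tilde{p}^{0,i}_j+\ind{S_{n+i}<S_{n+j}}/(n+1)$ the paper relies on in the proof of Theorem~\ref{th-gen}.
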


\begin{lemma}
    \label{lem:kappahat}
    The quantity $\hat{\kappa}=\hat{\kappa}(\mbf{p})$ defined by \eqref{kappahat} satisfies the following properties almost surely:
    \begin{itemize}
        \item[(i)] $\kappa\in [0,1]\mapsto \hat{\pi}^{\IMS}_{0,\epsilon,\underline{\pi}_0,\kappa}$ is non-decreasing anf left-continuous;
        \item[(ii)] $\hat{\kappa}$ in \eqref{kappahat} is a maximum, that is, $\hat{F}_m(\hat{\kappa})\geq \hat{\kappa}  \hat{\pi}^{\IMS}_{0,\epsilon,\underline{\pi}_0,\hat{\kappa}}/\alpha$;
        \item[(iii)] $\ABH_\alpha(\hat{\pi}^{\IMS}_{0,\epsilon,\underline{\pi}_0,\hat{\kappa}},\hat{\kappa})$ corresponds to the procedure rejecting $p$-values at most  $\hat{\kappa}$;
        \item[(iv)] All procedures $\ABH_\alpha(\hat{\pi}^{\IMS}_{0,\epsilon,\underline{\pi}_0,\kappa},\kappa)$, $\kappa\in [0,1]$, have rejection sets contained in the rejection set of $\ABH_\alpha(\hat{\pi}^{\IMS}_{0,\epsilon,\underline{\pi}_0,\hat{\kappa}},\hat{\kappa})$;
    \end{itemize}
\end{lemma}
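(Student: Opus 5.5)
The plan is to prove the four items in order, each using the previous ones. Throughout, $\hat{\pi}_\kappa$ is short for $\hat{\pi}^{\IMS}_{0,\epsilon,\underline{\pi}_0,\kappa}$. We work on the almost sure event where the $p$-values are pairwise distinct; in the conformal case we only use that there are finitely many values.

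\textbf{(i) Monotonicity and left-continuity.}
\begin{itemize}
\item Monotonicity is immediate. If $\kappa\le\kappa'$, every admissible interval $I\subset[\kappa',1]$ with $|I|\ge\epsilon$ is also admissible for $\kappa$. So the infimum in \eqref{equ:pi0hatIMS} can only increase with $\kappa$, and multiplying by $D(m,\epsilon,\underline{\pi}_0)$ and taking $\underline{\pi}_0\vee\cdot$ preserves this.
\item For left-continuity, take $\kappa_n\uparrow\kappa$. Monotonicity gives $\limsup_n\hat{\pi}_{\kappa_n}\le\hat{\pi}_\kappa$, so the task is to show $\liminf_n\hat{\pi}_{\kappa_n}\ge\hat{\pi}_\kappa$.
\item By \eqref{equ:simplifyinfIMS}, for each $n$ the infimum is attained at some $(a_n,b_n)$ with $a_n,b_n\in\{p_i\}\cup\{\kappa_n,1\}$. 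Since there are finitely many $p$-values, for $n$ large no $p$-value lies in $[\kappa_n,\kappa)$. Hence either $a_n\ge\kappa$, in which case $(a_n,b_n)$ is already admissible at level $\kappa$, or $a_n=\kappa_n$.
\item In the case $a_n=\kappa_n$, compare with $(\kappa,b_n)$, or with $(\kappa,\kappa+\epsilon)$ if $b_n-\kappa<\epsilon$. The count on $(\kappa_n,b_n)$ is at least the count on $(\kappa,b_n)$, and the lengths differ by $\kappa-\kappa_n\to0$.
\item Passing to a subsequence along which $b_n$ is constant (finitely many candidates) gives $\liminf_n\hat{\pi}_{\kappa_n}\ge\hat{\pi}_\kappa$.
\end{itemize}
This is the delicate step. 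The subtle points are the size constraint $|I|\ge\epsilon$, which only survives as $b-\kappa\ge\epsilon$ in the limit, and a possible $p$-value sitting exactly at $\kappa$. I expect to handle both by the finiteness argument above.

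\textbf{(ii) The supremum is attained.}
\begin{itemize}
\item The set in \eqref{kappahat} is nonempty, since $\kappa=0$ belongs to it.
\item If $\hat{\kappa}$ is not itself in the set, take $\kappa_n\uparrow\hat{\kappa}$ in the set, so that $\hat{F}_m(\kappa_n)\ge\kappa_n\hat{\pi}_{\kappa_n}/\alpha$.
\item As $n\to\infty$, the left side is at most $\hat{F}_m(\hat{\kappa})$, since $\hat{F}_m$ is nondecreasing. The right side tends to $\hat{\kappa}\hat{\pi}_{\hat{\kappa}}/\alpha$ by the left-continuity in (i).
\item Since the set is closed under the constraint $\kappa\le 1-\epsilon$, this shows $\hat{\kappa}$ belongs to the set.
\end{itemize}

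\textbf{(iii) The IMS-$\hat{\kappa}$ procedure rejects exactly the $p$-values at most $\hat{\kappa}$.}
\begin{itemize}
\item Let $k=m\hat{F}_m(\hat{\kappa})$. Then $p_{(k)}\le\hat{\kappa}$, and by (ii), $\hat{\kappa}\le\alpha k/(m\hat{\pi}_{\hat{\kappa}})$.
\item Therefore $p_{(k)}\le\hat{\kappa}\wedge(\alpha k/(m\hat{\pi}_{\hat{\kappa}}))$, which gives $\hat{k}\ge k$ in \eqref{equ-lhat}.
\item Consequently the threshold $\hat{t}=\hat{\kappa}\wedge(\alpha\hat{k}/(m\hat{\pi}_{\hat{\kappa}}))$ is at least $\hat{\kappa}$. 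It is trivially at most $\hat{\kappa}$ because of the capping, so $\hat{t}=\hat{\kappa}$.
\end{itemize}

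\textbf{(iv) Every capped IMS procedure is dominated.}
\begin{itemize}
\item Fix $\kappa\in[0,1-\epsilon]$, and let $t_\kappa$ and $\hat{k}_\kappa$ be the threshold and rejection number of $\ABH_\alpha(\hat{\pi}_\kappa,\kappa)$.
\item We have $\hat{F}_m(t_\kappa)\ge\hat{k}_\kappa/m$, since $p_{(\hat{k}_\kappa)}\le t_\kappa$.
\item Next, $\hat{k}_\kappa/m\ge t_\kappa\hat{\pi}_\kappa/\alpha$. Indeed, either $t_\kappa=\alpha\hat{k}_\kappa/(m\hat{\pi}_\kappa)$, giving equality, or $t_\kappa=\kappa<\alpha\hat{k}_\kappa/(m\hat{\pi}_\kappa)$.
\item Since $t_\kappa\le\kappa$, (i) gives $\hat{\pi}_{t_\kappa}\le\hat{\pi}_\kappa$. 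Combining, $\hat{F}_m(t_\kappa)\ge t_\kappa\hat{\pi}_{t_\kappa}/\alpha$.
\item Also $t_\kappa\le 1-\epsilon$, so $t_\kappa$ belongs to the set in \eqref{kappahat}, and therefore $t_\kappa\le\hat{\kappa}$.
\item By (iii), the rejection set $\{p_i\le t_\kappa\}$ is contained in $\{p_i\le\hat{\kappa}\}$, the rejection set of the IMS-$\hat{\kappa}$ procedure.
\end{itemize}
The reverse inclusion in \eqref{equkappahatthebest} then follows by taking $\kappa=\hat{\kappa}$.
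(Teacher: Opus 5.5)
Your proof is correct. For (ii) and (iii) it is essentially the paper's argument.

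\textbf{Item (iv).} You use the same key fact as the paper: the capped threshold $t_\kappa$ satisfies the inequality in \eqref{kappahat} at its own level, because $\kappa\mapsto\hat{\pi}_\kappa$ is nondecreasing. Your version is more economical, though. The paper splits into $\kappa\le\hat{\kappa}$ and $\kappa>\hat{\kappa}$, and in the latter case uses $\hat{F}_m(\kappa)<\kappa\hat{\pi}_\kappa/\alpha$ to show the cap is inactive. Your chain $\hat{F}_m(t_\kappa)\ge\hat{k}_\kappa/m\ge t_\kappa\hat{\pi}_\kappa/\alpha\ge t_\kappa\hat{\pi}_{t_\kappa}/\alpha$ holds whether or not the cap binds, so no case split is needed.

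\textbf{Item (i), left-continuity.} This is where your route genuinely differs.
\begin{itemize}
\item The paper argues locally. It declares continuity clear away from the $p$-values. At $\kappa_0=p_{i_0}$ it argues that moving the left end to $\kappa_0-\upsilon$ adds $p_{i_0}$ to the count but only $\upsilon$ to the length, so $\hat{\pi}$ is exactly constant just to the left of $\kappa_0$.
\item You instead use three facts: the minimum in \eqref{equ:simplifyinfIMS} is attained on a finite candidate set; eventually no $p$-value lies in $[\kappa_n,\kappa)$; and one can extract a subsequence with constant right endpoint $b$.
\end{itemize}
Your argument treats every $\kappa$ uniformly. A $p$-value sitting at $\kappa$ only raises counts, so it helps rather than needing special treatment. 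The constraint $|I|\ge\epsilon$ is handled explicitly, and the argument is insensitive to the truncation $1\vee\cdot$ in the numerator.

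The paper's displayed comparison drops that truncation. As a result, its exact-constancy claim can fail in degenerate cases, for example when $p_{i_0}$ is the largest $p$-value and the minimizing interval is $(p_{i_0},1)$. Left-continuity, which is all (ii) needs, still survives. Where the paper's local argument is valid, it gives the stronger information of constancy to the left of a $p$-value.

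\textbf{Two small cleanups.}
\begin{itemize}
\item The branch comparing with $(\kappa,\kappa+\epsilon)$ when $b_n-\kappa<\epsilon$ is never needed. If it were used, it would require the unstated fact that $[b_n,\kappa+\epsilon)$ contains no $p$-value. Once $b_n\equiv b$ along your subsequence, admissibility gives $b\ge\kappa_n+\epsilon$ for all $n$, which already forces $b\ge\kappa+\epsilon$, so $(\kappa,b)$ is admissible.
\item For $\kappa>1-\epsilon$ the infimum is over an empty set, so $\hat{\pi}_\kappa=+\infty$. This makes (i) trivial there. It also extends (iv) from $[0,1-\epsilon]$ to the lemma's full range $\kappa\in[0,1]$, since the threshold is then $0$. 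It is worth a sentence.
\end{itemize}
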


\begin{proof}
    For point (i), the monotonicity is obvious because $\hat{\pi}^{\IMS}_{0,\epsilon,\underline{\pi}_0,\kappa}$ is an infimum on $[\kappa,1]$. The left-continuity at some $\kappa_0$ comes from \eqref{equ:simplifyinfIMS}: if $\kappa_0$ is not a $p$-value, then it is clear that  $\kappa\in [0,1]\mapsto \hat{\pi}^{\IMS}_{0,\epsilon,\underline{\pi}_0,\kappa}$ is continuous at $\kappa_0$. Else, $\kappa_0$ is a $p$-value $p_{i_0}$ and the minimum in \eqref{equ:simplifyinfIMS} is attained at some $J_0=(p_{j_0},b)$. Denote also $I_0=(p_{i_0},b')$ the best interval starting at $p_{i_0}$ (which can be different than $J_0$). 
    Considering $\upsilon\in (0,\epsilon)$ smaller than the gap between two different $p$-values, we have
    $$
    \frac{ \sum_{i=1}^{m} \mathbf{1}\{p_i \in (\kappa_0-\upsilon,b)\}}{|I_0|+\upsilon} =\frac{ 1+\sum_{i=1}^{m} \mathbf{1}\{p_i \in I_0\}}{|I_0|+\upsilon}> \frac{ \sum_{i=1}^{m} \mathbf{1}\{p_i \in I_0\}}{|I_0|}\geq \frac{ \sum_{i=1}^{m} \mathbf{1}\{p_i \in J_0\}}{|J_0|},
    $$
    so the minimum in \eqref{equ:simplifyinfIMS} when $\kappa=\kappa_0-\upsilon$ still in $J_0$, and  $\hat{\pi}^{\IMS}_{0,\epsilon,\underline{\pi}_0,\kappa-\upsilon}=\hat{\pi}^{\IMS}_{0,\epsilon,\underline{\pi}_0,\kappa}$ which proves the left-continuity at $\kappa_0$.

    Point (ii) comes from the fact that, by definition of the supremum, for all $\upsilon>0$ there exists $\kappa_\upsilon\in (\hat{\kappa}-\upsilon,\hat{\kappa}]$ such that $\hat{F}_m(\kappa_\upsilon)\geq \kappa_\upsilon  \hat{\pi}^{\IMS}_{0,\epsilon,\underline{\pi}_0,\kappa_\upsilon}/\alpha$. This entails
    $$
\hat{F}_m(\hat{\kappa})\geq \hat{F}_m(\kappa_\upsilon)\geq \kappa_\upsilon  \hat{\pi}^{\IMS}_{0,\epsilon,\underline{\pi}_0,\kappa_\upsilon}/\alpha .
    $$
    Now,  the right-hand-side of the above relation tends to $\hat{\kappa}  \hat{\pi}^{\IMS}_{0,\epsilon,\underline{\pi}_0,\hat{\kappa}}/\alpha $ when $\upsilon$ tends to $0$ by the left continuity proved in point (i). This gives the result.

Let us prove point (iii). By point (ii), we have $\hat{F}_m(\hat{\kappa})\geq \hat{\kappa}  \hat{\pi}^{\IMS}_{0,\epsilon,\underline{\pi}_0,\hat{\kappa}}/\alpha$. Letting $k_0=m\hat{F}_m(\hat{\kappa})$. This gives 
$$
p_{(k_0)} \leq \hat{\kappa} = \hat{\kappa}\wedge (\alpha k_0/ (m \hat{\pi}^{\IMS}_{0,\epsilon,\underline{\pi}_0,\hat{\kappa}}))
,$$
because of the definition of $k_0$. By \eqref{equ-lhat}, the rejection number of $\ABH_\alpha(\hat{\pi}^{\IMS}_{0,\epsilon,\underline{\pi}_0,\hat{\kappa}},\hat{\kappa})$ is then $\hat{k}\geq k_0$ and its threshold is given by $\alpha \hat{k}/ (m \hat{\pi}^{\IMS}_{0,\epsilon,\underline{\pi}_0,\hat{\kappa}})$ with
$$
\alpha \hat{k}/ (m \hat{\pi}^{\IMS}_{0,\epsilon,\underline{\pi}_0,\hat{\kappa}}) \geq \alpha k_0/ (m \hat{\pi}^{\IMS}_{0,\epsilon,\underline{\pi}_0,\hat{\kappa}}) \geq \hat{\kappa},
$$
and thus is $\hat{\kappa}$. 

For (iv), let us consider any $\kappa\in [0,1]$, and the procedure $\ABH_\alpha(\hat{\pi}^{\IMS}_{0,\epsilon,\underline{\pi}_0,\kappa},\kappa)$.  If $\kappa\leq \hat{\kappa}$, it is clear that the threshold of $\ABH_\alpha(\hat{\pi}^{\IMS}_{0,\epsilon,\underline{\pi}_0,\kappa},\kappa)$ is at most $\kappa\leq \hat{\kappa}$ and thus of the one of $\ABH_\alpha(\hat{\pi}^{\IMS}_{0,\epsilon,\underline{\pi}_0,\hat{\kappa}},\hat{\kappa})$ by point (iii). If $\kappa> \hat{\kappa}$, we have $\hat{F}_m(\kappa)<\kappa \hat{\pi}^{\IMS}_{0,\epsilon,\underline{\pi}_0,\kappa}/\alpha$ by definition \eqref{kappahat} of $\hat{\kappa}$. 
Now consider the rejection threshold $\hat{t}=(\alpha \hat{k}/ (m\hat{\pi}^{\IMS}_{0,\epsilon,\underline{\pi}_0,\kappa}))\wedge \kappa$ of $\ABH_\alpha(\hat{\pi}^{\IMS}_{0,\epsilon,\underline{\pi}_0,\kappa},\kappa)$ as in \eqref{equ-lhat} and prove that $\hat{t}\leq \hat{\kappa}$. We have 
$
\hat{k}/m= \hat{F}_m(\hat{t}) \leq \hat{F}_m(\kappa) < \kappa \hat{\pi}^{\IMS}_{0,\epsilon,\underline{\pi}_0,\kappa}/\alpha,
$
which means $\alpha \hat{k}/ (m\hat{\pi}^{\IMS}_{0,\epsilon,\underline{\pi}_0,\kappa})\leq \kappa$ and thus 
$$
\hat{t}=\alpha \hat{k}/ (m\hat{\pi}^{\IMS}_{0,\epsilon,\underline{\pi}_0,\kappa})\leq \kappa.
$$
Since $\hat{\pi}^{\IMS}_{0,\epsilon,\underline{\pi}_0,\kappa}\geq \hat{\pi}^{\IMS}_{0,\epsilon,\underline{\pi}_0,\hat{t}}$ by point (i), this entails
$$
\hat{F}_m(\hat{t}) = \hat{k}/m = \hat{t} \hat{\pi}^{\IMS}_{0,\epsilon,\underline{\pi}_0,\kappa}/\alpha \geq \hat{t} \hat{\pi}^{\IMS}_{0,\epsilon,\underline{\pi}_0,\hat{t}}/\alpha,
$$
and thus $\hat{t}\leq \hat{\kappa}$ by  definition \eqref{kappahat}. This proves (iv).

\end{proof}

\begin{lemma}\label{lem:kappahat2}
Consider $\hat{\kappa}(\mbf{p})$ defined by \eqref{kappahat} for some $\alpha\in(0,1)$.
    Fix $i\in [m]$ and consider  $\mbf{p}'=(p'_i)_{i\in [m]}$ any $p$-value family with 
    $$
\forall j \in [m], \left\{ \begin{array}{cc}
     p'_j \leq p_j &  \mbox{ if $p_j\leq p_i$}\\
     p'_j = p_j& \mbox{ if $p_j> p_i$}
\end{array}\right. 
    $$
    Then $p_i\leq \hat{\kappa}(\mbf{p})$ implies $\hat{\kappa}(\mbf{p})=\hat{\kappa}(\mbf{p}')$ and $\hat{\pi}^{\IMS}_{0,\epsilon,\underline{\pi}_0,\hat{\kappa}(\mbf{p})}(\mbf{p})=\hat{\pi}^{\IMS}_{0,\epsilon,\underline{\pi}_0,\hat{\kappa}(\mbf{p}')}(\mbf{p}')$.
\end{lemma}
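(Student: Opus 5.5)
The plan is to show that, above the level $p_i$, the data seen by the construction \eqref{kappahat} are identical under $\mbf{p}$ and $\mbf{p}'$. Concretely, I will prove that for every $\kappa\in[p_i,1-\epsilon]$, both $\hat{F}_m(\kappa)$ and $\hat{\pi}^{\IMS}_{0,\epsilon,\underline{\pi}_0,\kappa}$ take the same value whether computed from $\mbf{p}$ or from $\mbf{p}'$. It then follows that the two sets over which the suprema in \eqref{kappahat} are taken coincide on $[p_i,1-\epsilon]$. Lemma~\ref{lem:kappahat} (ii) then pins down both suprema.

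\emph{Step 1 (the eCDF).} Fix $\kappa\geq p_i$ and look at the indices $j$ with $p_j\leq p_i$. For these, both $p_j\leq\kappa$ and $p'_j\leq p_j\leq \kappa$ hold, so they contribute $1$ to both $m\hat{F}_m(\kappa)(\mbf{p})$ and $m\hat{F}_m(\kappa)(\mbf{p}')$. The remaining indices have $p'_j=p_j$. Hence $\hat{F}_m(\kappa)(\mbf{p})=\hat{F}_m(\kappa)(\mbf{p}')$.

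\emph{Step 2 (the estimator).} The infimum in \eqref{equ:pi0hatIMS} runs over open intervals $I=(a,b)\subset[\kappa,1]$, so $a\geq\kappa\geq p_i$. An index with $p_j\leq p_i$ therefore satisfies $p_j\notin I$ and also $p'_j\leq p_j\leq a$, so $p'_j\notin I$. Consequently $\sum_j\ind{p_j\in I}=\sum_j\ind{p'_j\in I}$ for every admissible $I$. The infimum and the factor $D(m,\epsilon,\underline{\pi}_0)$ agree, so $\hat{\pi}^{\IMS}_{0,\epsilon,\underline{\pi}_0,\kappa}(\mbf{p})=\hat{\pi}^{\IMS}_{0,\epsilon,\underline{\pi}_0,\kappa}(\mbf{p}')$. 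This uses openness of $I$ in an essential way, including the boundary case $\kappa=p_i$.

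\emph{Step 3 (the suprema).} Let $A(\mbf{p})=\{\kappa\in[0,1-\epsilon]:\hat{F}_m(\kappa)\geq\kappa\hat{\pi}^{\IMS}_{0,\epsilon,\underline{\pi}_0,\kappa}/\alpha\}$. By Steps 1--2, $A(\mbf{p})\cap[p_i,1]=A(\mbf{p}')\cap[p_i,1]$.

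Lemma~\ref{lem:kappahat} (ii) gives $\hat{\kappa}(\mbf{p})\in A(\mbf{p})$, and the hypothesis gives $\hat{\kappa}(\mbf{p})\geq p_i$. So $\hat\kappa(\mbf{p})=\max\big(A(\mbf{p})\cap[p_i,1]\big)$, and this set is nonempty.

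Since the common set $A(\mbf{p}')\cap[p_i,1]$ is nonempty, the elements of $A(\mbf{p}')$ below $p_i$ cannot raise its supremum. Hence $\hat{\kappa}(\mbf{p}')=\sup\big(A(\mbf{p}')\cap[p_i,1]\big)=\hat{\kappa}(\mbf{p})$.

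Finally, $\hat\kappa(\mbf{p})=\hat\kappa(\mbf{p}')\geq p_i$, so Step 2 applied at $\kappa=\hat\kappa(\mbf{p})$ gives equality of the two estimators.

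The main obstacle is Step 3, specifically making sure that the points of $A(\mbf{p}')$ below $p_i$, which can differ from those of $A(\mbf{p})$, are harmless. This is handled by the attainment property of Lemma~\ref{lem:kappahat} (ii), which guarantees that $\hat{\kappa}(\mbf{p})$ itself lies in the common part above $p_i$.
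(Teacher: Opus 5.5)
Your proof is correct and uses the same ingredients as the paper's proof. These are the invariance of $\hat{F}_m(\kappa)$ and of $\hat{\pi}^{\IMS}_{0,\epsilon,\underline{\pi}_0,\kappa}$ under the change $\mbf{p}\to\mbf{p}'$ for $\kappa\geq p_i$ (which relies on the intervals being open), together with the attainment property of Lemma~\ref{lem:kappahat}~(ii).

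The difference is only organizational. The paper proves $\hat{\kappa}(\mbf{p})\leq\hat{\kappa}(\mbf{p}')$ and $\hat{\kappa}(\mbf{p}')\leq\hat{\kappa}(\mbf{p})$ separately and invokes attainment for both $\mbf{p}$ and $\mbf{p}'$. Your comparison of the admissible sets on $[p_i,1]$ needs attainment only for $\mbf{p}$, and it also states the final equality of the two estimators explicitly.
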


\begin{proof}
Let us first rewrite \eqref{kappahat} as 
$$\hat{\kappa}(\mbf{p}) = \max\{\kappa \in [0,1-\epsilon]\::\: \hat{F}_m(\kappa,\mbf{p})\geq \kappa  \hat{\pi}^{\IMS}_{0,\epsilon,\underline{\pi}_0,\kappa}(\mbf{p})/\alpha\},
$$
for which the dependence with regard to the $p$-value vector $\mbf{p}$ is made explicit in the notation (we also use Lemma~\ref{lem:kappahat} (ii) to set $\hat{\kappa}(\mbf{p})$ as a maximum). 

Now assume $p_i\leq \hat{\kappa}(\mbf{p})$. Let us establish the following
\begin{equation}
    \label{equ:usefulpi0hat}
    \forall \kappa\geq \hat{\kappa}(\mbf{p}), \:\: \hat{\pi}^{\IMS}_{0,\epsilon,\underline{\pi}_0,\kappa}(\mbf{p}')=\hat{\pi}^{\IMS}_{0,\epsilon,\underline{\pi}_0,\kappa}(\mbf{p}).  
\end{equation}
Indeed, by definition
\begin{align*}
\hat{\pi}^{\IMS}_{0,\epsilon,\underline{\pi}_0,\kappa}(\mbf{p}') &= \underline{\pi}_0 \vee \bigg(D(m,\epsilon,\underline{\pi}_0)  \inf_{I\subset [\kappa,1]: |I|\geq \epsilon}
      \frac{1 \vee \sum_{j=1}^{m} \mathbf{1}\{p'_j \in I\}}{m|I|} \bigg)\\
      &=\underline{\pi}_0 \vee \bigg(D(m,\epsilon,\underline{\pi}_0)  \inf_{I\subset [\kappa,1]: |I|\geq \epsilon}
      \frac{1 \vee \sum_{j=1}^{m} \mathbf{1}\{p_j \in I\}}{m|I|} \bigg) \\
      &= \hat{\pi}^{\IMS}_{0,\epsilon,\underline{\pi}_0,\kappa}(\mbf{p}),
     \end{align*}
because for all $j\in [m]$, $\mathbf{1}\{p_j \in I\}=\mathbf{1}\{p'_j \in I\}$ for all intervals $I\subset [\kappa,1]$: if $p_j \in I$ this means that $p_j>\kappa\geq p_i$ and thus $p'_j=p_j$ and $p'_j \in I$. Conversely, if $p'_j \in I$ then $p'_j>\kappa\geq p_i$ which means also $p'_j=p_j$ and $p_j \in I$. This gives \eqref{equ:usefulpi0hat}.

Now we prove $\hat{\kappa}(\mbf{p})\leq\hat{\kappa}(\mbf{p}')$. For this, we note that
\begin{align*}
    \hat{F}_m(\hat{\kappa}(\mbf{p}),\mbf{p}') &\geq    \hat{F}_m(\hat{\kappa}(\mbf{p}),\mbf{p}) \geq \hat{\kappa}(\mbf{p})  \hat{\pi}^{\IMS}_{0,\epsilon,\underline{\pi}_0,\hat{\kappa}(\mbf{p})}(\mbf{p})/\alpha =\hat{\kappa}(\mbf{p})  \hat{\pi}^{\IMS}_{0,\epsilon,\underline{\pi}_0,\hat{\kappa}(\mbf{p})}(\mbf{p}')/\alpha ,
\end{align*}
by definition of $\hat{\kappa}(\mbf{p})$ as a maximum (Lemma~\ref{lem:kappahat} (ii)) and 
by using \eqref{equ:usefulpi0hat} in the last step.
By definition of $\hat{\kappa}(\mbf{p}')$, this means $\hat{\kappa}(\mbf{p})\leq\hat{\kappa}(\mbf{p}')$.

Conversely, let us prove $\hat{\kappa}(\mbf{p})\geq\hat{\kappa}(\mbf{p}')$. For this, we have
    \begin{align*}
    \hat{F}_m(\hat{\kappa}(\mbf{p}'),\mbf{p})  
    &= m^{-1} \sum_{j=1}^m \ind{p_j \leq \hat{\kappa}(\mbf{p}')}\ind{p_j\leq p_i} + m^{-1} \sum_{j=1}^m \ind{p_j \leq \hat{\kappa}(\mbf{p}')}\ind{p_j> p_i}\\
        &=m^{-1} \sum_{j=1}^m \ind{p_j\leq p_i} + m^{-1} \sum_{j=1}^m \ind{p'_j \leq \hat{\kappa}(\mbf{p}')}\ind{p_j> p_i}\\
        &=m^{-1} \sum_{j=1}^m \ind{p_j\leq p_i}\ind{p'_j\leq \hat{\kappa}(\mbf{p}')} + m^{-1} \sum_{j=1}^m \ind{p'_j \leq \hat{\kappa}(\mbf{p}')}\ind{p_j> p_i}\\
        &=  \hat{F}_m(\hat{\kappa}(\mbf{p}'),\mbf{p}') \geq \hat{\kappa}(\mbf{p}')  \hat{\pi}^{\IMS}_{0,\epsilon,\underline{\pi}_0,\hat{\kappa}(\mbf{p}')}(\mbf{p}')/\alpha=\hat{\kappa}(\mbf{p}')  \hat{\pi}^{\IMS}_{0,\epsilon,\underline{\pi}_0,\hat{\kappa}(\mbf{p}')}(\mbf{p})/\alpha,
    \end{align*}
    by definition of $\hat{\kappa}(\mbf{p}')$ as a maximum (Lemma~\ref{lem:kappahat} (ii)) and 
by using \eqref{equ:usefulpi0hat} in the last step (because $\hat{\kappa}(\mbf{p}')\geq\hat{\kappa}(\mbf{p})$).
By definition of $\hat{\kappa}(\mbf{p})$, this leads to $\hat{\kappa}(\mbf{p})\geq \hat{\kappa}(\mbf{p}')$ and the result is proved.
\end{proof}

\section{Additional figures}\label{appe:sec:simulation}

\subsection{Numerical evaluation of $c(s, \epsilon)$ and $d(s, \epsilon)$}\label{appe:sec:constant}

In Figure~\ref{fig:adjustment.constant}, we numerically evaluate $c(s, \epsilon)$ in \eqref{equ:gMS} and $d(s, \epsilon)$  in \eqref{equ:gIMS} based on independent $U(0,1)$ $p$-values for a variety of sample sizes $s$.
We also compute the constants for conformal $p$-values with validation sample size $n \in \{500,1000,1500, 2000,2500\}$.

We observe that both $c(s, \epsilon)$ and $d(s, \epsilon)$ decrease as $s$ increases.
For the same $p$-value distribution and tuning parameters, $c(s, \epsilon)$ is smaller than $d(s, \epsilon)$. Moreover, both constants are larger for conformal $p$-values than for independent uniform $p$-values.
When $s \ge 500$, the constant $c(s, \epsilon)$ is below $1.1$ for independent $p$-values and conformal $p$-values ($n \ge 1000$). For $d(s, \epsilon)$, the corresponding values are below $1.3$ for independent $p$-values and conformal $p$-values ($n \ge 1000$). 

\begin{figure}[tbp]
        \centering
        \begin{minipage}{0.43\textwidth}
                \centering
                \includegraphics[clip, trim = 0cm 0cm 0cm 0cm, height = 0.8\textwidth]{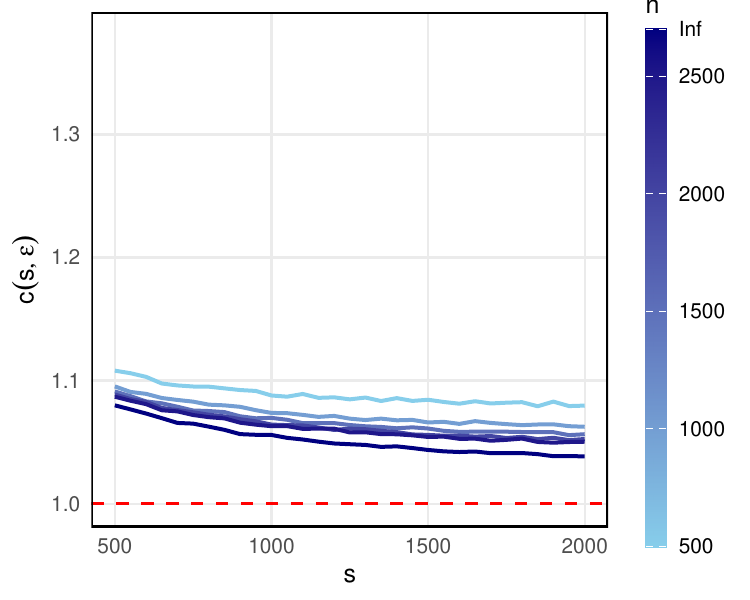}
                \subcaption{Numerical evaluation of  $c(s,\epsilon)$}
        \end{minipage}
         \begin{minipage}{0.43\textwidth}
                \centering
                \includegraphics[clip, trim = 0cm 0cm 0cm 0cm, height = 0.8\textwidth]{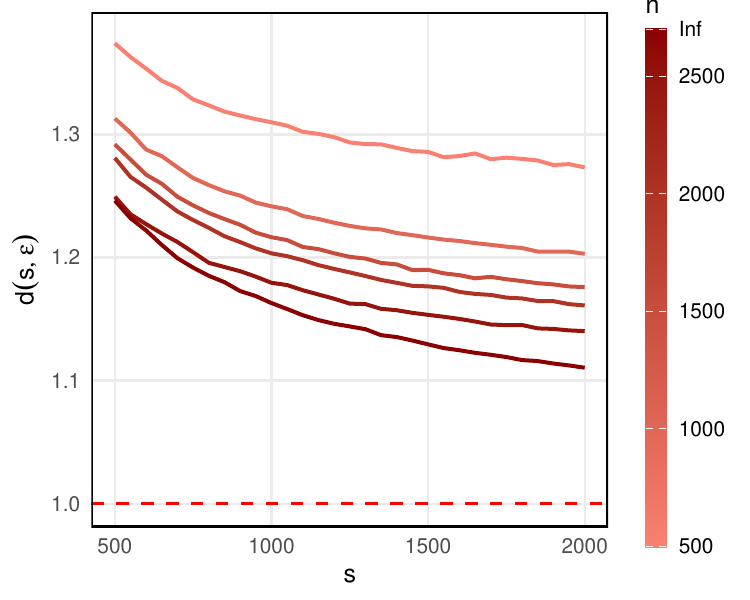}
                \subcaption{Numerical evaluation of  $d(s,\epsilon)$}
        \end{minipage}
            \caption{Numerical evaluation of $c(s,\epsilon)$ and $d(s,\epsilon)$ for $s \in \{500,550,\ldots,2000\}$, $\epsilon = 0.2$, and $n \in \{500,1000,1500,2000,2500, \infty\}$. Here $U(0,1)$ corresponds to the conformal setting in the limit $n=\infty$. Both $c(s,\epsilon)$ and $d(s,\epsilon)$ appear to decrease with $s$ and the validation sample size $n$. For fixed $s$, $\epsilon$, $n$, $c(s,\epsilon) \le d(s,\epsilon)$. Each value is estimated using $4000$ Monte Carlo samples.}
        \label{fig:adjustment.constant}
\end{figure}

\subsection{Simulation}

\subsubsection{Conformal $p$-values}\label{appe:sec:simulation.conformal}

We replicate settings (a.1), (a.2), and (a.3) in Section~\ref{sec:simulation}, replacing the independent $p$-values with conformal $p$-values constructed using a validation dataset of size $n = 1000$. 
The power comparison is similar to that for independent $p$-values in Figure~\ref{fig:simulation.uniform}, except that the correction constants for IMS are more conservative (see panel (c.3)).

\begin{figure}[tbp]
        \centering
        \begin{minipage}{0.8\textwidth}
                \centering
                \includegraphics[clip, trim = 0cm 1cm 0cm 0cm, width = 1\textwidth]{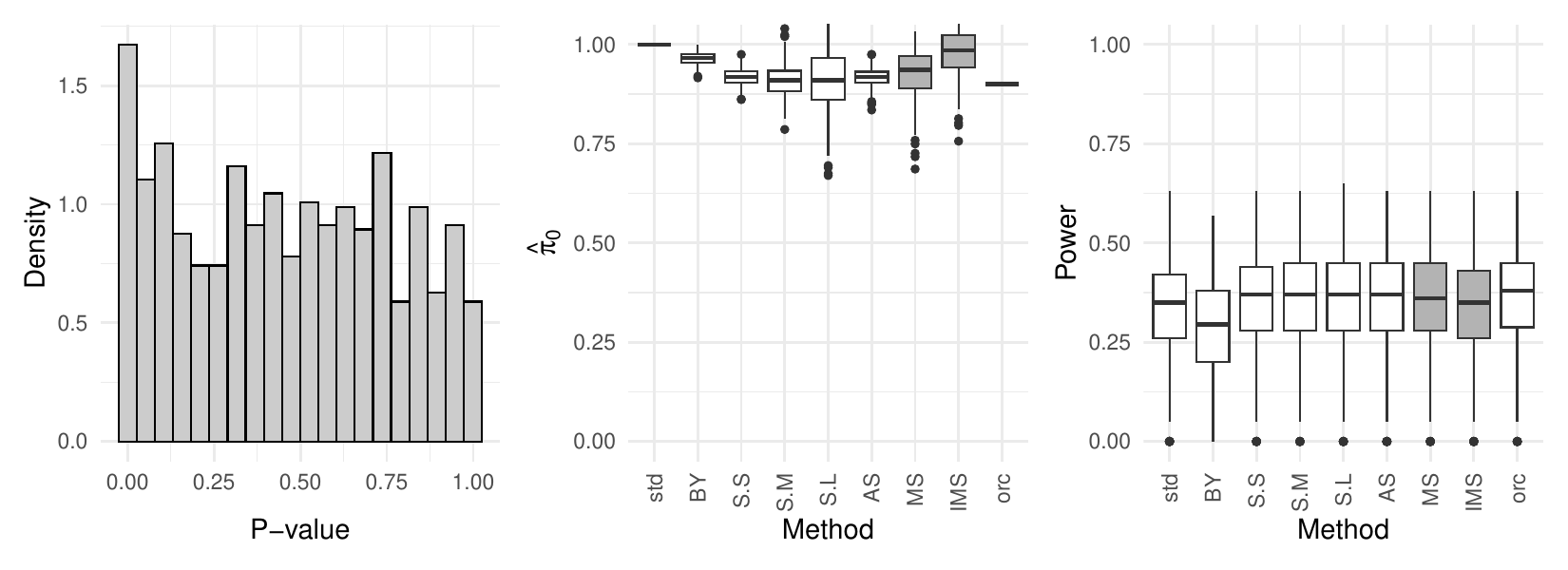}
        \subcaption*{(c.1) Few strong signals}
        \end{minipage}
        \begin{minipage}{0.8\textwidth}
                \centering
                \includegraphics[clip, trim = 0cm 1cm 0cm 0cm, width = 1\textwidth]{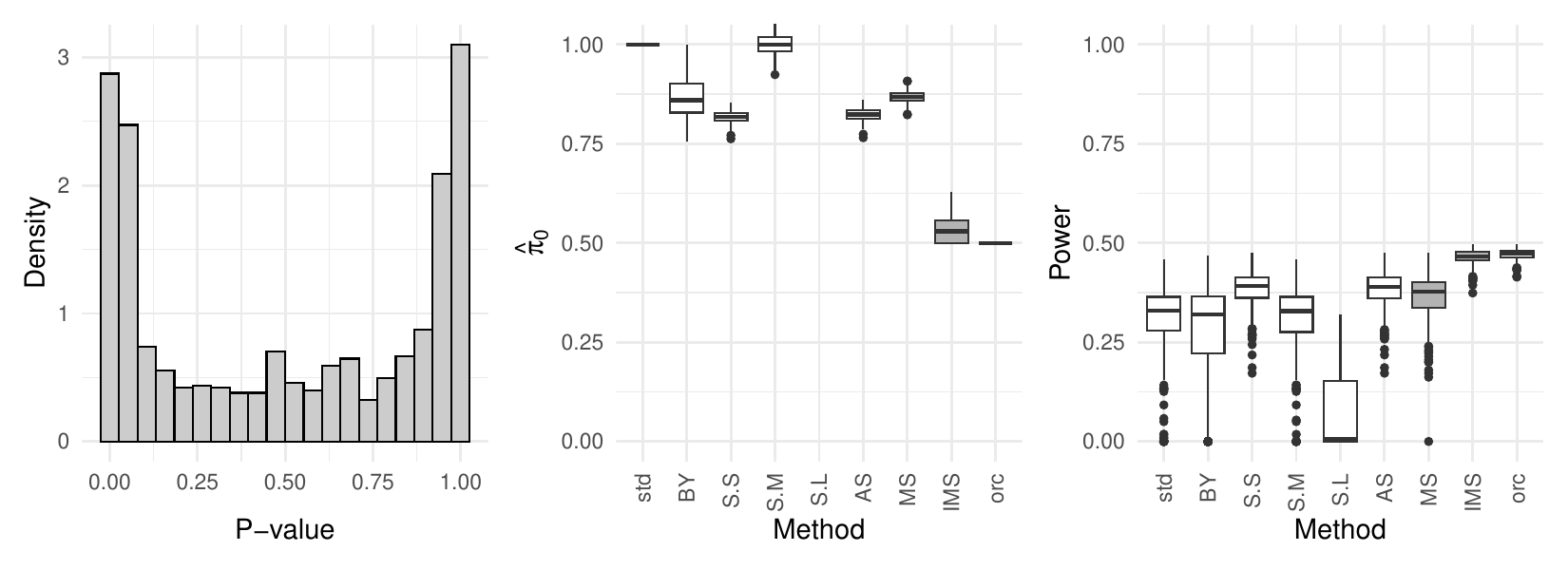}
        \subcaption*{ (c.2) Close-to-one alternatives}
        \end{minipage}    
        \begin{minipage}{0.8\textwidth}
                \centering
                \includegraphics[clip, trim = 0cm 1cm 0cm 0cm, width = 1\textwidth]{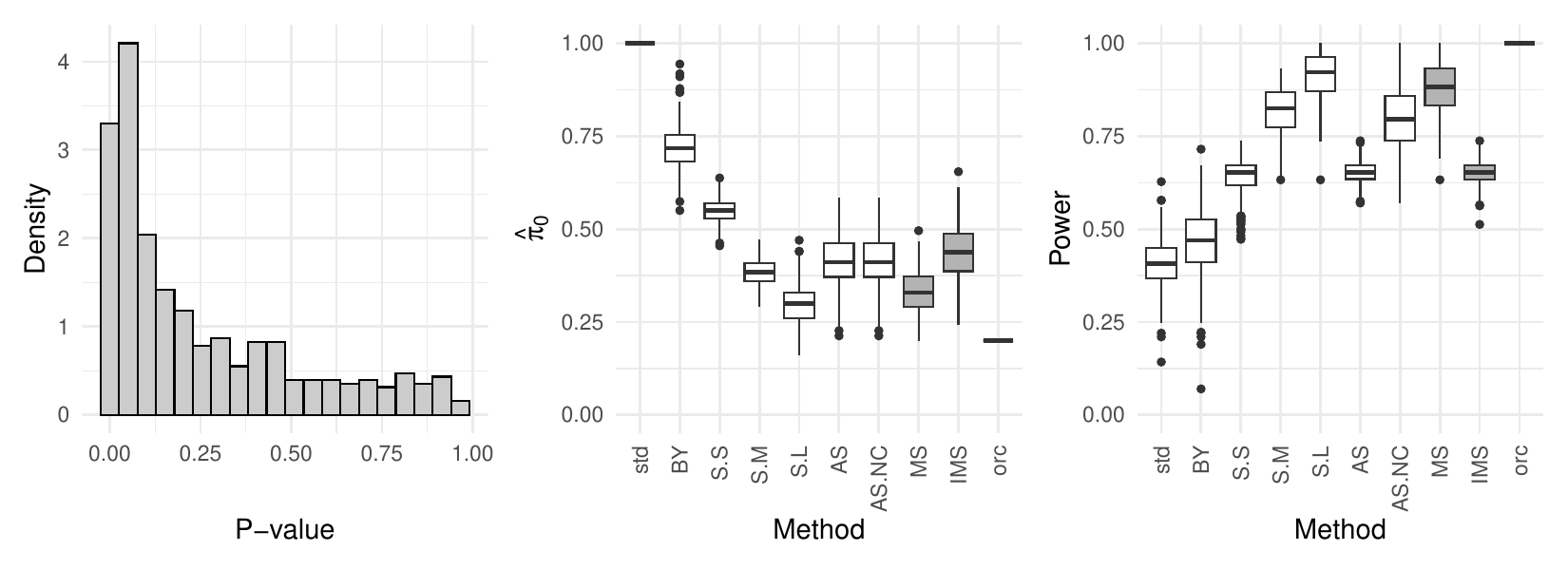}
        \subcaption*{ (c.3) Many strong signals}
        \end{minipage}    

            \caption{Comparison of null proportion estimators with conformal $p$-values, validation sample size $n = 1000$. For more details, see the caption of Figure~\ref{fig:simulation.uniform}. }
        \label{fig:simulation.conformal.1000}
\end{figure}

\subsubsection{Conservative $\underline{\pi}_0$}\label{appe:sec:simulation.conservative}
Figure~\ref{appe:fig:simulation.conservative.underline.pi0} follows setting (a.3) but MS and IMS use a conservative $\underline{\pi}_0 = 0.5 > \pi_0 = 0.2$. 
By Definitions~\ref{def:MS} and~\ref{def:IMS}, the conservativeness of $\underline{\pi}_0 = 0.5$ yields that the MS and IMS null proportion estimators equal $\underline{\pi}_0 = 0.5$ with high probability. As a result, their power is lower than those of the oracle procedure and Storey's estimator with a moderately large $\lambda = 0.5$ or $\lambda = 0.8$. 
AS is less powerful due to the capping.

\begin{figure}[tbp]
        \centering
        \begin{minipage}{0.8\textwidth}
                \centering
                \includegraphics[clip, trim = 0cm 1cm 0cm 0cm, width = 1\textwidth]{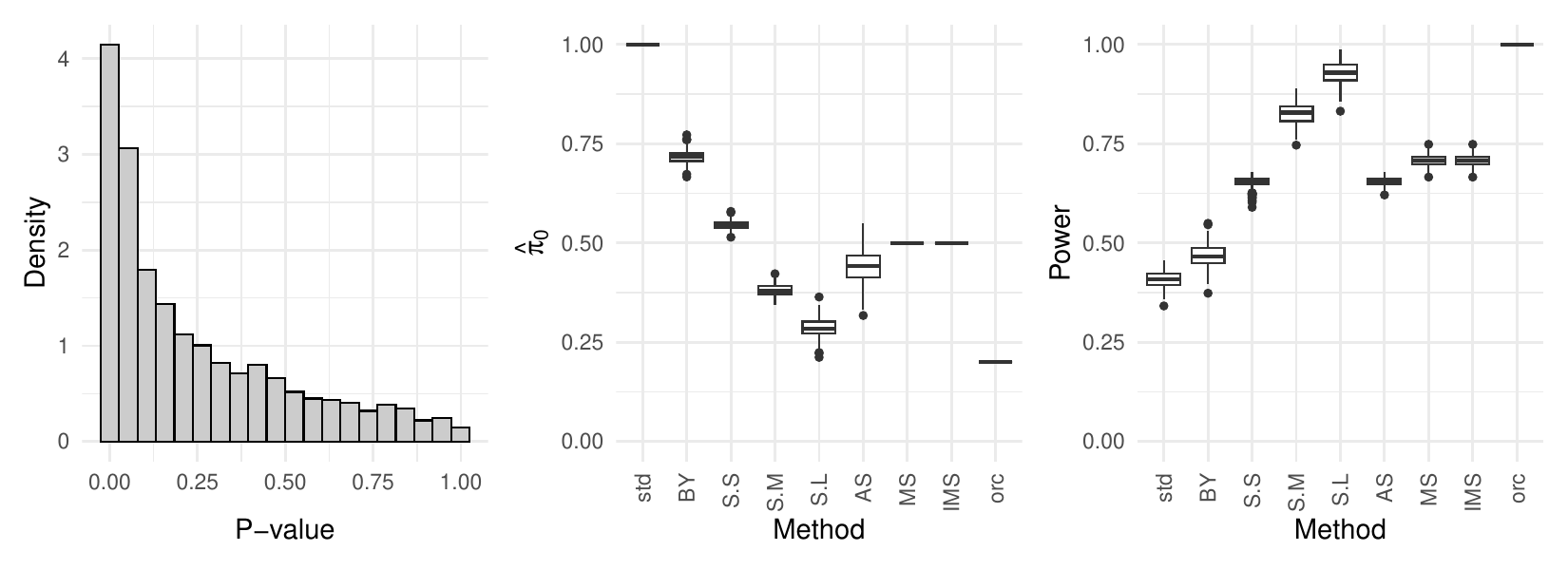}
        \end{minipage}   

            \caption{Comparison of null proportion estimators with \emph{independent} $p$-values under the many signal setting (a.3) in Section~\ref{sec:simulation}. For MS and IMS, we use a conservative $\underline{\pi}_0 = 0.5 > \pi_0 =0.2$.}
        \label{appe:fig:simulation.conservative.underline.pi0}
\end{figure}

\subsubsection{IMS with prefixed $\kappa$}\label{appe:sec:simulation.IMS.kappa}

We compare the IMS with pre-fixed $\kappa \in \{0.05, 0.1, \ldots, 0.5\}$ and data-driven $\hat \kappa$.
For MS and IMS, we use the same hyperparameter settings as in Section~\ref{sec:simulation} without further specification.

\begin{enumerate}
        \item [(e.1)] $f_1$ with a small minimizer. 
        We consider $n = 1000$, $\pi_0 = 0.5$. 
        Null $p$-values follow $U(0,1)$. 
        Alternative $p$-values follow the mixture $0.3\Phi(\mathcal{N}(-2,0.25)) + 0.7\Phi(\mathcal{N}(0.75,0.25))$, and $f_1$ takes the minimal value around $0.25$. For MS and IMS, we use $\underline{\pi}_0 = 0.5$.

    \item [(e.2)] Many signals. 
    The same as (a.3) in Section~\ref{sec:simulation}. For MS and IMS, we use $\underline{\pi}_0 = 0.2$.
\end{enumerate}

In panel (e.1) of Figure~\ref{appe:fig:IMS.kappa}, the minimizer $\lambda^*$ \eqref{pi0starunif} of $f_1$ is approximately $0.25$. Hence, when $\kappa > 0.25$, the IMS estimator of the null proportion becomes conservative, since the population minimizer is no longer contained in $[\kappa,1]$ (see the second plot of panel (e.1)).
In this setting, the ideal $\kappa$ should be smaller than $0.25$.
In panel (e.2) of Figure~\ref{appe:fig:IMS.kappa}, the proportion of alternatives is large, so the rejection threshold is also large (the rejection threshold of the oracle BH is one). The second plot of panel (e.2) shows that the resulting $\hat{\pi}_0$ values are nearly constant across $\kappa$; the third plot shows that the rejection threshold decreases as $\kappa$ becomes smaller, due to more severe capping and hence yielding lower power.
In this setting, a larger pre-fixed $\kappa$ performs better.
Across both settings, $\hat{\kappa}$ achieves the highest power by adaptively selecting a favorable $\kappa$, which is well expected from \eqref{equkappahatthebest}.
\begin{figure}[tbp]
        \centering
        \begin{minipage}{1\textwidth}
                \centering
                \includegraphics[clip, trim = 0cm 0cm 0cm 0cm, width = 1\textwidth]{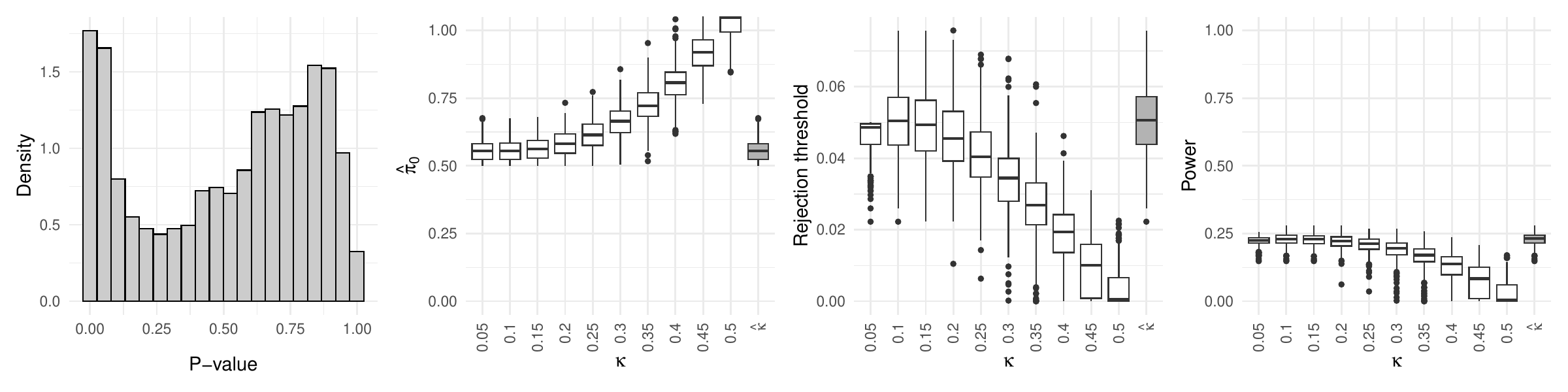}
        \subcaption*{(e.1) $f_1$ with a small minimizer}
        \end{minipage}   

         \begin{minipage}{1\textwidth}
                \centering
                \includegraphics[clip, trim = 0cm 0cm 0cm 0cm, width = 1\textwidth]{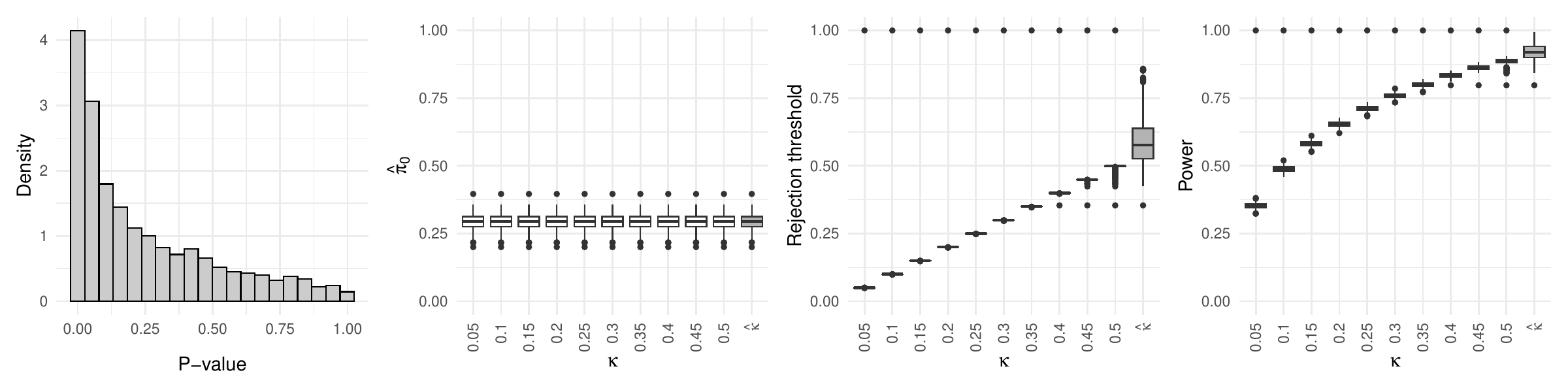}
        \subcaption*{(e.2) Many signals}
        \end{minipage}   

            \caption{Comparison of IMS with prefixed $\kappa \in \{0.05, 0.1, \ldots, 0.5\}$ and the data-driven $\hat \kappa$. }
        \label{appe:fig:IMS.kappa}
\end{figure}

\end{document}